\providecommand{\U}[1]{\protect\rule{.1in}{.1in}}
\numberwithin{equation}{section}
\newtheorem{theorem}{Theorem}[section]
\newtheorem{assumption}{Assumption}
\newtheorem{lemma}[theorem]{Lemma}
\newtheorem{proposition}{Proposition}[section]
\newtheorem{remark}{Remark}
\newenvironment{proof}[1][Proof]{\noindent\textbf{#1.} }{\ \rule{0.5em}{0.5em}}
\numberwithin{equation}{section}
\begin{document}

\title{A One-Covariate-at-a-Time Multiple Testing Approach to Variable Selection in
Additive Models \thanks{We would like to thank the editor, Esfandiar Maasoumi,
an associate editor, and two referees for their helpful comments, which have
helped improve the paper substantially. Su gratefully acknowledges the support
from NSFC under the grant No. 72133002. Zhang gratefully acknowledges the
financial support from NSFC (Projects No.71973141 and No.71873033). The
computation code used in this paper is available upon request or can be found
on the second author's personal website. E-mail: sulj@sem.tsinghua.edu.cn (L.
Su), tao.yang@anu.edu.au (T. Yang), yonghui.zhang@ruc.edu.cn (Y. Zhang), and
qzhou@lsu.edu (Q. Zhou).}}
\author{Liangjun Su$^{a},$ Thomas Tao Yang$^{b}$, Yonghui Zhang$^{c},$ and Qiankun
Zhou$^{d}$\\$^{a}$ School of Economics and Management, Tsinghua University, China \\$^{b}$ Research School of Economics, Australian National University, Australia\\$^{c}$ School of Economics, Renmin University of China, China\\$^{d}$ Department of Economics, Louisiana State University, USA}
\maketitle

\begin{abstract}
This paper proposes a One-Covariate-at-a-time Multiple Testing (OCMT) approach
to choose significant variables in high-dimensional nonparametric additive
regression models. Similarly to \cite{Chudik_K_P}, we consider the statistical
significance of individual nonparametric additive components one at a time and
take into account the multiple testing nature of the problem. Both one-stage
and multiple-stage procedures are considered. The former works well in terms
of the true positive rate only if the net effects of all signals are strong
enough; the latter helps to pick up hidden signals that have weak net effects.
Simulations demonstrate the good finite sample performance of the proposed
procedures. As an empirical illustration, we use the OCMT procedure to a
dataset extracted from the Longitudinal Survey on Rural Urban Migration in
China. We find that our procedure works well in terms of out-of-sample
forecast root mean square errors, compared with competing methods such as
adaptive group Lasso (AGLASSO).

\vspace{2mm}

\begin{description}
\item[Keywords:] One covariate at a time, multiple testing, model selection,
high dimensionality, nonparametric, additive model.

\item[\emph{JEL classification:}] C12, C14, C21, C52.

\end{description}
\end{abstract}

\section{Introduction}

Variable selection has been playing a pivotal role in econometrics and
statistics for statistical learning and scientific discoveries. Notable early
contributions include \cite{Akaike1973}, \cite{Akaike1974}, and
\cite{Schwartz1978}. These authors suggested a unified approach to model
selection, viz., choosing a parameter vector by minimizing the conventional
criterion function plus an $L_{0}$ penalty to penalize the model size.
However, these approaches are not feasible in high-dimensional settings. The
seminal work of \cite{Tibshirani1996}\ addressed this important issue by
substituting the $L_{0}$ penalty with an $L_{1}$ penalty, and it has sparked
extensive studies in both statistics and econometrics. Important contributions
in the statistics literature include \cite{FanLi2001}, \cite{ZhouHastie2005},
\cite{Zou2006}, \cite{FanLv2008}, \cite{ZouLi2008}, \cite{Zhang2010},
\cite{FanFengSong}, \cite{FanLv2013}, and \cite{FanTang2013}. Important
contributions in the econometric literature include \cite{BelloniEtal2012},
\cite{BelloniChernozhukov2013}, \cite{BelloniChernozhukovHansen2014},
\cite{BelloniEtal2017},\ \cite{ChernozhukovEtal2018}, and \cite{Chudik_K_P}.
For a comprehensive review, see \cite{FanLiZhangZou2020}.

In this paper, we propose a multiple testing approach to variable selection
for high-dimensional nonparametric additive models. Recently \cite{Chudik_K_P}
(CKP hereafter) have proposed a One-Covariate-at-a-time Multiple Testing
(OCMT) approach for linear regression models. CKP suggest regressing the
dependent variable on each independent variable separately, retaining only
those variables that exhibit a high correlation with the dependent variable.
This strategy is often referred to as the \textquotedblleft\textit{screening
approach}\textquotedblright. CKP's main contributions are twofold. First, they
propose a criterion for variable selection by controlling the probability of
choosing all the signals (or with some pseudo-signals) in the model. Second,
unlike the usual screening approach that may miss some important
\textquotedblleft hidden\textquotedblright\ signals whose net effects on the
dependent variable are small, the CKP's OCMT procedure is able to pick up
hidden signals with very high probability. The OCMT procedure has been applied
in various applications; see, e.g., \cite{Kozbur2020}, \cite{Chudik_P_S}, and
\cite{AhmendPesaran2022}. In particular, \cite{Kozbur2020} considers a
testing-based forward model selection (TBFMS) procedure in linear regression
models that inductively selects covariates to add predictive power into a
working statistical model. But this latter paper mainly focuses on the error
bound and shows that the proposed procedure is able to achieve estimation
rates matching those of Least Absolute Shrinkage and Selection Operator
(Lasso) and post-Lasso. Furthermore, \cite{Sharifvaghefi2023} extends OCMT to
cases with many highly correlated covariates and allows the number of pseudo
signals to grow at the same rate as the sample size.

Our paper contributes to the literature by extending the CKP's\ OCMT approach
from parametric models to nonparametric additive models. Like CKP, we estimate
the net effect of each variable on the dependent variable one by one, possibly
with some preselected variables. The selected variables are those whose net
effects exceed some threshold value, specified to ensure the probability of
selecting all the signals is very high. The statistics constructed in this
paper are much more complicated than the $t$-statistics in CKP and might not
even exhibit a well-behaved limiting distribution. In addition to
investigating a different model, our paper differs from that of CKP in some
other important aspects. First, we generalize the definition of hidden signals
(in Table 2 in Section \ref{SEC:non_1st}). Technical details are updated
accordingly. Second, CKP chose tuning parameters similar to a Bonferroni
correction of the cumulative distribution function of the standard normal.
This choice implicitly requires a certain degree of approximation of the
standard normal to the $t$-statistic distribution in their study. Instead, we
select the tuning parameters using the classic Bayesian information criterion
(BIC). Third, we add an adaptive group-Lasso-based post-OCMT step to eliminate
pseudo-signals that cannot be eliminated with very high probability in CKP.
This step adds very little computation burden because the dimension is reduced
dramatically before the last-step estimation. This additional step is not
needed in theory, but it aims at eliminating the pseudo-signals and thereby
enhances the out-of-sample forecasting performance in practice.

One competing method to ours in the literature is the adaptive group Lasso
(AGLASSO) proposed by \cite{HuangEtal2010}. The AGLASSO adds some adaptive
penalty term to the usual least squares loss function in the spirit of
\cite{Zou2006}. Even though we also use AGLASSO to eliminate the
pseudo-signals after the OCMT procedure, our approach allows much faster
computation and provides more reliable estimates than their approach.

We consider various setups in the simulation studies and compare the above
post-OCMT AGLASSO procedure with that based on the OCMT\ alone or the AGLASSO
procedure of \cite{HuangEtal2010} alone. We find that the former one generally
outperforms the latter two significantly. We apply our method on a dataset
from the Longitudinal Survey on Rural Urban Migration in China (RUMiC) and the
empirical results also demonstrate the excellent performance of our procedure
in finite samples.

The remainder of the paper is structured as follows. In the next section, we
illustrate our approach through a single-stage procedure that is silent to
hidden signals. In Section \ref{SEC:multi}, we present the more powerful
multiple-stage procedure. We investigate the finite sample properties of our
procedure through Monte Carlo experiments in Section \ref{SEC:MC} and an
empirical application in Section \ref{SEC:application}. We conclude the paper
in Section \ref{SEC:conclusion}. The proofs of all propositions and theorems
in the paper are relegated to Appendix \ref{APP:real_main_proof}. The online
supplement contains some additional technical materials that include the
proofs of the technical lemmas in Appendix \ref{APP:mainproof} and some
additional results in the simulation and application. To facilitate reading,
We present our procedure in detail for practitioners in Appendix
\ref{SEC:procedure} and the idea of the proofs in Appendices
\ref{APP:tech_one_stage} and \ref{SEC:multi_tech}.

\textit{Notation.} For a generic real matrix $\boldsymbol{A=}\left\{
a_{ij}\right\}  \boldsymbol{,}$ let $\left\Vert \boldsymbol{A}\right\Vert
=\left[  \lambda_{\max}\left(  \boldsymbol{A}^{\prime}\boldsymbol{A}\right)
\right]  ^{1/2}$ denote the spectral norm and $\left\Vert \boldsymbol{A}%
\right\Vert _{\infty}=\max_{ij}\left\vert a_{ij}\right\vert .$ When
$\boldsymbol{A}$ is symmetric, $\lambda_{\max}\left(  \boldsymbol{A}\right)
\ $and $\lambda_{\min}\left(  \boldsymbol{A}\right)  \ $denote its maximum and
minimum eigenvalues, respectively. For vector $\boldsymbol{x},$\ $\left\Vert
\boldsymbol{x}\right\Vert $ denotes its Euclidean norm. For the deterministic
series $\left\{  a_{n},b_{n}\right\}  _{n=1}^{\infty}$, we denote
$a_{n}\propto b_{n}$ if $0<C_{1}\leq\lim\inf_{n\rightarrow\infty}\left\vert
a_{n}/b_{n}\right\vert \leq\lim\sup_{n\rightarrow\infty}\left\vert a_{n}%
/b_{n}\right\vert \leq C_{2}<\infty$ for some constants $C_{1}$ and $C_{2},$
$a_{n}\lesssim b_{n}$ if $\lim\sup_{n\rightarrow\infty}\left\vert a_{n}%
/b_{n}\right\vert \leq C<\infty$ for some $C$ that does not depend on $n,$
$a_{n}\gtrsim b_{n}$ if $b_{n}\lesssim a_{n},$ $a_{n}\ll b_{n}$ if
$a_{n}=o\left(  b_{n}\right)  ,$ and $a_{n}\gg b_{n}$ if $b_{n}\ll a_{n}.$
$\mathcal{A}^{c}$ denotes the complement of the set $\mathcal{A}$.
$\overset{P}{\rightarrow}$ denotes convergence in probability. $C$ and $M$
denote some positive constants that may vary from line to line.

\section{The Model and One-Stage Procedure}

In this section we give the model and definitions of various versions of
signals and noises. Then we will provide the one-stage procedure for variable
selection, present the basic assumptions and study the asymptotic properties
of our one-stage procedure.

\subsection{The Model}

Recently, CKP proposed a powerful multiple-stage procedure for linear models.
Our paper aims to generalize the results of CKP to the nonparametric additive
models. The notations and technical details in CKP are already quite tedious,
and they would be even more so in this paper. To facilitate the exposition, we
start with a simple case where there are no pre-determined variables, and we
conduct only one-stage multiple testing. We will present the more powerful
multiple-stage procedure and show its validity in Section \ref{SEC:multi}.

Suppose the model is
\begin{equation}
Y=f^{\ast}\left(  X_{1},X_{2},\ldots,X_{p^{\ast}}\right)  +\varepsilon,
\label{EQ:model}%
\end{equation}
where $Y$ is the dependent variable, $X_{1},$ $X_{2},\ldots,$ and $X_{p^{\ast
}}$ are random independent variables, $\varepsilon$ is an unobserved error
term, and $f^{\ast}$ is an unknown smooth function. Even though we have only
$p^{\ast}$ signal variables, namely $X_{1},$ $X_{2},\ldots,$ $X_{p^{\ast}}$,
that should be included into the regression model in (\ref{EQ:model}), we do
not know this truth before the data reveal the fact. The realistic situation
is that the $p^{\ast}$ signals are contained in a set $\mathcal{S}%
_{n}=\left\{  X_{j},j=1,2,\ldots,p_{n}\right\}  ,$ where $p_{n}$ can be much
larger than the sample size $n$ and $p_{n}\propto n^{B_{p}}$ for some
$B_{p}>0$. $\mathcal{S}_{n}$ is the set for all candidate variables, and we
refer to it as the \textit{active }set. We assume that $E\left(
\varepsilon|X_{1},X_{2},\ldots,X_{p^{\ast}},X_{p^{\ast}+1},\ldots,X_{p_{n}%
}\right)  =0.$ The target of the model selection is to pick up those signals
among $\mathcal{S}_{n}.$

To avoid the curse of dimensionality in nonparametric estimation, we impose
the additive structure on $f^{\ast}$, that is,
\begin{equation}
f^{\ast}\left(  X_{1},X_{2},\ldots,X_{p^{\ast}}\right)  =\sum_{j=1}^{p^{\ast}%
}f_{j}^{\ast}\left(  X_{j}\right)  . \label{EQ:model1a}%
\end{equation}
We allow the additive components\textbf{ }$f_{l}^{\ast}\left(  X_{l}\right)  $
to change with $n$ but we suppress the dependence of $f_{l}^{\ast}\left(
X_{l}\right)  $ on $n$ for notional convenience. Obviously, the individual
functions $f_{j}^{\ast},$ $j=1,\ldots,p^{\ast},$ cannot be identified without
certain suitable normalizations. We impose the normalization by assuming that
$E[f_{j}^{\ast}\left(  X_{j}\right)  ]=0$ for each $j$ and rewrite the model
as%
\begin{equation}
Y=\mu+\sum_{j=1}^{p^{\ast}}f_{j}^{\ast}\left(  X_{j}\right)  +\varepsilon,
\label{EQ:model1b}%
\end{equation}
where $\mu=E\left(  Y\right)  .$ Since $\mu$ can be estimated by the sample
mean of the $Y$ variable at the usual $\sqrt{n}$-rate and this estimation does
not affect the estimation of the nonparametric additive component, for the
simplicity of presentation, we assume $\mu=0$ below.

Since we will focus on the $B$-spline-based nonparametric theory, it is
standard to assume compact support for each regressor (see, e.g.,
\cite{Horowitz_Mammen2014}, \cite{Chen_handbook} and \cite{HuangEtal2010}).
Without loss of generality, we assume that the support for $X_{j}$ is
$[0,1\mathcal{]}$ for all $j$. Let $\alpha_{1}$ be a non-negative integer,
$\alpha_{2}\in(0,1],$ and $d=\alpha_{1}+\alpha_{2}.$\ Denote the class of all
$\alpha_{1}$ times continuously differentiable real-valued functions on
$[0,1]$ by $C^{\alpha_{1}}\left(  [0,1]\right)  .$\ Define $d$-th smooth
real-valued functions on $[0,1]$ as
\begin{equation}
\Lambda^{d}\left(  [0,1]\right)  =\left\{  h\in C^{\alpha_{1}}\left(
[0,1]\right)  :\left\vert h^{\left(  \alpha_{1}\right)  }\left(  t_{1}\right)
-h^{\left(  \alpha_{1}\right)  }\left(  t_{2}\right)  \right\vert \leq
C\left\vert t_{1}-t_{2}\right\vert ^{\alpha_{2}}\right\}  . \label{EQ:model1c}%
\end{equation}
For notational simplicity, we will restrict our attention to the case where
$f_{j}$'s are smooth enough and belong to $\Lambda^{d}\left(  [0,1]\right)  $
with $d>1$. The case of different smoothness parameters only complicates the
notation but does not bring in any new insight.

\subsection{Signals, Hidden Signals, and Noises}

The idea of one-stage procedure is that we estimate the impact of $X_{l}$ on
$Y,$ $l=1,2,...,p_{n},$ one by one. So we run $p_{n}$ estimations in total and
will keep those variables that are significant enough. Since in each
regression we only have one covariate, we are not estimating $f_{l}^{\ast}$
when the explanatory variable is $X_{l}$. Instead, we are estimating the
conditional expectation $f_{l}\left(  X_{l}\right)  \equiv E\left(
Y|X_{l}\right)  .$ We define the \textit{net} \textit{impact} (or the net
effect) of $X_{l}$ on $Y$ as
\[
\theta_{l}\equiv\left\{  E\left[  f_{l}\left(  X_{l}\right)  ^{2}\right]
\right\}  ^{1/2}=\left\{  E\left[  \left(  \sum_{j=1}^{p^{\ast}}\sigma
_{lj}\right)  ^{2}\right]  \right\}  ^{1/2},
\]
where $\sigma_{lj}=E[f_{j}^{\ast}\left(  X_{j}\right)  |X_{l}].$ Since we
allow\textbf{ }$f_{l}^{\ast}\left(  X_{l}\right)  $ to change with $n,$
$\theta_{l}$ might change with $n$ as well. But we suppress its dependence on
$n$ for notational convenience. Here, $\sigma_{lj}$ plays the role of the
scaled covariance between $X_{j}$ and $X_{l}$ for the linear model considered
by CKP. To better understand the connection between the net impact defined
here and in CKP, we refer the readers to the results in equations
(\ref{EQ:fnl&Ul}) and (\ref{EQ:fnl_appro}), where we approximate $f_{l}\left(
X_{l}\right)  $ with certain linear functions $f_{nl}\left(  X_{l}\right)  .$
Ignoring the bias from the approximation, one can see the connection more clearly.

Obviously, $\theta_{l}$ can be 0 or close to 0 for signals, and $\theta_{l}$
can be nonzero or large for non-signals. As in CKP, we also have four
possibilities as tabulated in Table \ref{Table1}. Cases (I) and (IV) in Table
1 are desirable cases. Case (III) happens when some non-signals are not
independent of the signals.\ The hidden signals defined in Case (II) are rare
in the linear case, and it is also rare in the nonparametric case. We
generalize the definition of hidden signals to Table 2 in the next section
where $\theta_{l}$ is non-zero but small relative to the sample size.
\begin{table}[ptb]
\caption{The original definitions of signals and noises from CKP}%
\label{Table1}
\centering{}\centering{ }
\begin{tabular}
[c]{l|cc}\hline\hline
& $\theta_{l}\neq0$ & $\theta_{l}=0$\\\hline
$\left\{  E[f_{l}^{\ast}\left(  X_{l}\right)  ^{2}]\right\}  ^{1/2}\neq0$ &
\multicolumn{1}{|l}{(I) Signals with nonzero net effect} &
\multicolumn{1}{l}{(II) Hidden signals}\\
$\left\{  E[f_{l}^{\ast}\left(  X_{l}\right)  ^{2}]\right\}  ^{1/2}=0$ &
\multicolumn{1}{|l}{(III) Pseudo-signals} & \multicolumn{1}{l}{(IV) Noise
variables}\\\hline\hline
\end{tabular}
\end{table}

As we shall see, our one-stage procedure is silent on picking up hidden
signals and eliminating pseudo-signals. For the hidden signals, we will
propose a multiple-stage procedure in Section \ref{SEC:multi}\ that can
effectively pick them up. To eliminate the pseudo-signals, as a post-procedure
we propose to re-estimate the model using adaptive group Lasso in Section
\ref{SEC:Pseudo}.

We assume that there are $p^{\ast\ast}$ pseudo-signals. Without loss of
generality, we denote them to be
\[
\left\{  X_{p^{\ast}+1},X_{p^{\ast}+2},\ldots,X_{p^{\ast}+p^{\ast\ast}%
}\right\}  .
\]
Below we focus on the one-stage procedure in this section and postpone the
multi-stage case to the next section.

\subsection{The Test Statistic and One-Stage Procedure\label{SEC:1stage}}

Suppose that we have $n$ observations $\left\{  (y_{i},x_{1i},...,x_{p_{n}%
i})\right\}  _{i=1}^{n}$ that are drawn from the distribution of
$(Y,X_{1},...,X_{p_{n}}).$ The data are given in a $n\times\left(
p_{n}+1\right)  $ matrix
\[
\left(  \boldsymbol{y},\boldsymbol{x}_{1},\boldsymbol{x}_{2},\ldots
,\boldsymbol{x}_{p_{n}}\right)
\]
where $\boldsymbol{y}=\left(  y_{1},y_{2},\ldots,y_{n}\right)  ^{\prime}$ and
$\boldsymbol{x}_{l}=\left(  x_{l1},x_{l2},\ldots,x_{ln}\right)  ^{\prime}$ for
$l=1,...,p_{n}$. We propose to choose finite order (e.g., cubic) B-spline
basis functions $\left\{  \psi_{l}\left(  x\right)  \right\}  _{l=1}^{m_{n}}%
$\ on $\left[  0,1\right]  $\ to approximate the unknown functions $f_{l}$'s.

B-splines are piecewise-defined polynomial functions that can be used to
construct curves and surfaces in numerical analysis. They offer a flexible way
to model and control the shape of these curves and surfaces. A B-spline of
order $n$ is a piecewise-defined polynomial function of degree $n-1$.
B-splines of order one are piecewise constant functions, B-splines of order
two are piecewise linear functions, B-splines of order three are piecewise
quadratic functions, and so on. The points at which different polynomial
pieces connect are called knots, and the knot vector specifies where these
knots are. For the detailed definition and properties of B-spline bases, see
\cite{Stone} and \cite{deBoor}. We list some properties of the B-spline basis
functions in Lemma \ref{LE:rank}. Other popular basis functions include
polynomials, trigonometric polynomials, splines, and orthogonal wavelets. We
refer the readers to \cite{Chen_handbook} for a nice review about sieve estimation.

Since we normalize $E[f_{j}^{\ast}\left(  X_{j}\right)  ]=0,$ we similarly
normalize the basis as%
\[
\phi_{jl}\left(  x\right)  =\psi_{j}\left(  x\right)  -n^{-1}\sum_{i=1}%
^{n}\psi_{j}\left(  x_{li}\right)  .
\]
This is a standard practice; see, e.g., \cite{HuangEtal2010}. For notational
simplicity, we will write $\phi_{j}\left(  x\right)  $ for $\phi_{jl}\left(
x\right)  $. Let $P^{m_{n}}\left(  x\right)  =\left[  \phi_{1}\left(
x\right)  ,\phi_{2}\left(  x\right)  ,\ldots,\phi_{m_{n}}\left(  x\right)
\right]  ^{\prime},$ an $m_{n}\times1$\ vector. Define%
\begin{equation}
\boldsymbol{\beta}_{l}=\left\{  E\left[  P^{m_{n}}\left(  X_{l}\right)
P^{m_{n}}\left(  X_{l}\right)  ^{\prime}\right]  \right\}  ^{-1}E\left[
P^{m_{n}}\left(  X_{l}\right)  Y\right]  \text{ and }U_{l}=Y-P^{m_{n}}\left(
X_{l}\right)  ^{\prime}\boldsymbol{\beta}_{l}, \label{EQ:betanl}%
\end{equation}
which are the population coefficient and the error term in the regression of
$Y$ on $P^{m_{n}}\left(  X_{l}\right)  $. Note that we suppress the dependence
of $\boldsymbol{\beta}_{l}$ on the sample size $n$.

For the one-stage procedure, we conduct the regression of $Y$ on $P^{m_{n}%
}\left(  X_{l}\right)  ,$ $l=1,2,\ldots,p_{n},$ one by one. Let $\mathbb{X}%
_{li}=P^{m_{n}}\left(  x_{li}\right)  $ be the approximating function basis at
the $i$th observation for $X_{l}$.$\ $Let $\mathbb{X}_{l}=(\mathbb{X}%
_{l1},\mathbb{X}_{l2},\ldots,\mathbb{X}_{ln})^{\prime}$ be the $n\times m_{n}$
\textquotedblleft design\textquotedblright\ matrix for $X_{l}.$ For $X_{l},$
we regress $\boldsymbol{y}$ on $\mathbb{X}_{l}\ $to obtain%
\[
\boldsymbol{\hat{\beta}}_{l}=\left(  \mathbb{X}_{l}^{\prime}\mathbb{X}%
_{l}\right)  ^{-1}\mathbb{X}_{l}^{\prime}\boldsymbol{y}.
\]
We construct the test statistic as\footnote{As a referee has noted, one can
define an alternative test statistic $\tilde{\mathcal{X}}_{l}=n\hat
{\boldsymbol{\beta}}^{\prime}\hat{\boldsymbol{\beta}}$, which also works under
certain rank conditions (see, e.g., Assumption \ref{A:full_rank2} below). We
opted for $\hat{\mathcal{X}}_{l}$ for two reasons. First, $\hat{\mathcal{X}%
}_{l}$ resembles the usual chi-squared statistic under conditional
homoskedasticity. Because we do not want to model the conditional
heteroskedasticity of unknown form, we cannot take into account conditional
heteroskedasticity explicitly in constructing the test statistic
$\hat{\mathcal{X}}_{l}.$ Despite this, our asymptotic theory allows for
conditional heteroskedasticity in the error term. Second and more importantly,
$\hat{\mathcal{X}}_{l}$ is scale-free whereas $\tilde{\mathcal{X}}_{l}$ is
not. The latter makes it very challenging to choose the range to search the
constant $C$ in $\varsigma_{n}$ defined below.}
\begin{equation}
\mathcal{\hat{X}}_{l}=\boldsymbol{\hat{\beta}}_{l}^{\prime}\left(  \hat
{\sigma}_{l}^{-2}\mathbb{X}_{l}^{\prime}\mathbb{X}_{l}\right)
\boldsymbol{\hat{\beta}}_{l}, \label{EQ:chil_definition}%
\end{equation}
where $\hat{\sigma}_{l}^{2}=n^{-1}\sum_{i=1}^{n}\hat{u}_{li}^{2}$ and $\hat
{u}_{li}$ is the residual from the above regression. Then we define the
first-stage OCMT selection indictor as%
\begin{equation}
\widehat{\mathcal{J}}_{l}=\mathbf{1}\left(  \mathcal{\hat{X}}_{l}%
>\varsigma_{n}\right)  \text{ for }l=1,2,\ldots,p_{n}, \label{EQ:select1stage}%
\end{equation}
where $\mathbf{1}\left(  \cdot\right)  $ is the usual indicator function and
$\varsigma_{n}$ is a threshold value.

For the linear model in CKP with one covariate at a time, $\mathcal{\hat{X}%
}_{l}$ is asymptotically $\chi^{2}\left(  1\right)  $ under conditional
homoskedasticity, and one can follow their lead to consider threshold values
for the associated $t$-statistics based on the adjusted normal critical
values. Nevertheless, such a result is not available in our framework due to
the divergent dimension of regressors in the sieve estimation. In addition,
the potential presence of conditional heteroskedasticity greatly complicates
our analysis too. What we really need is to show that $\mathcal{\hat{X}}_{l}$
behaves distinctly for signals and noises so that a suitable choice of the
threshold value $\varsigma_{n}$ can help us separate the signals from the
noises. For these reasons, we do not associate $\mathcal{\hat{X}}_{l}$ with
any asymptotic distribution. Instead, we will set $\varsigma_{n}\propto
\kappa_{n}\log\left(  m_{n}\right)  m_{n}$ for a positive series $\kappa_{n}$
that diverges to infinity slowly as in Assumption \ref{A:xi_n}. For more
details, see the remark on Assumption \ref{A:xi_n} in the next subsection.

\subsection{Basic Assumptions}

To study the asymptotic properties of the one-stage procedure, we impose the
following assumptions.

\begin{assumption}
\label{A:iid}$\left\{  y_{i},x_{1i},x_{2i},\ldots,x_{p_{n}i}\right\}
_{i=1}^{n}$ are independent and identically distributed (i.i.d.) across $i;$
$E(\varepsilon|X_{1},X_{2},\ldots,X_{p^{\ast}},X_{p^{\ast}+1},\ldots,$
$X_{p_{n}})$ $=0.$
\end{assumption}

\begin{assumption}
\label{A:p} $p^{\ast}\ $is a positive integer that does not vary with
$n.$\ $p^{\ast\ast}\lesssim n^{B_{p^{\ast\ast}}}\ $and $p_{n}\propto n^{B_{p}%
}$ for some $B_{p}>B_{p^{\ast\ast}}\geq0.$
\end{assumption}

\begin{assumption}
\label{A:supp}The support for $X_{l}$ is $\left[  0,1\right]  ,$
$l=1,\ldots,p_{n}$. The density function for $X_{l}$ is bounded and bounded
away from $0.$
\end{assumption}

\begin{assumption}
\label{A:epsilon}$\Pr\left(  \left\vert \varepsilon\right\vert >t\right)  \leq
C_{1}\exp\left(  -C_{2}t^{s}\right)  $ holds for all $t>0$ and some $s,$
$C_{1},$ $C_{2}>0.$
\end{assumption}

\begin{assumption}
\label{A:fl}$f_{l}\left(  \cdot\right)  =E\left(  Y|X_{l}=\cdot\right)
\in\Lambda^{d}\left(  [0,1]\right)  \ $with $d>1$ for $l=1,\ldots,p_{n}$.
\end{assumption}

\begin{assumption}
\label{A:tech}$\left\vert f_{j}^{\ast}\left(  \cdot\right)  \right\vert ,$
$j=1,...,p^{\ast},$ are uniformly bounded. $E\left(  \varepsilon^{2}%
|X_{1},X_{2},...,X_{p_{n}}\right)  $ is uniformly bounded almost surely.
\end{assumption}

\begin{assumption}
\label{A:mn}$m_{n}\propto n^{B_{m}}$ with $1/\left(  1+2d\right)  <B_{m}<1/3.$
\end{assumption}

\begin{assumption}
\label{A:xi_n}$\varsigma_{n}\propto\kappa_{n}\log\left(  m_{n}\right)
m_{n}\ $for some $\kappa_{n}$ such that $\kappa_{n}>0,$ $\kappa_{n}%
\rightarrow\infty$, and $\kappa_{n}=O\left[  \left(  \log n\right)
^{\epsilon}\right]  $ for some small positive $\epsilon$,\ as $n\rightarrow
\infty$.$\medskip$
\end{assumption}

Assumption \ref*{A:iid} imposes an i.i.d. condition on the observations and a
standard conditional moment restriction. The extension to weakly dependent
observations is possible but left for future research. The generalization to
independently non-identically distributed (i.n.i.d.) case is straightforward
because the main inequalities in Lemmas A.1 and A.2 allow for i.n.i.d.
observations. We keep using the i.i.d. assumption for notational convenience.
In Assumption \ref*{A:p}, we assume that the number of signals is fixed; the
number of pseudo-signals is allowed to increase as $n$ increases but at a
slower rate than that of the total candidate variables. It is also possible to
allow $p^{\ast}$ to diverge to infinity (see Section \ref{SEC:divergeP}).
Assumption \ref*{A:supp} restricts the support of $X$ to be $\left[
0,1\right]  .$ This is a very common condition for nonparametric additive
models; see, e.g., \cite{Li2000} and \cite{Horowitz_Mammen2014}. Assumption
\ref*{A:epsilon} imposes some tail conditions $\varepsilon,$ which is also
assumed in CKP but weaker than the commonly used sub-exponential condition in
the variable selection literature and the one for the adaptive group Lasso in
\cite{HuangEtal2010}. The primary use of this condition is to derive
\textit{probability} bound for errors.

Assumption \ref*{A:fl} imposes fairly weak smooth condition on $f_{l}\left(
x\right)  $, which is weaker than the commonly used condition $d\geq2$. Note
that $d>1$ is needed for Assumption \ref*{A:mn}. Assumption \ref*{A:tech} is a
technical assumption needed to simplify the proof.\ Specifically, the
boundedness of $f_{j}^{\ast}$ implies $U_{l}$ defined in equation
(\ref{EQ:fnl&Ul}) has the same tail behavior as $\varepsilon$. The bounded
conditional second moment of $\varepsilon$\ is to ensure some nice properties
of $U_{l}\phi_{j}\left(  X_{l}\right)  $ and this assumption is also common in
the sieve literature (see, e.g., \cite{Newey1997}).\ This assumption is mild
given that we assume all $X_{l}$'s have compact support and the tails of
$\varepsilon$ decay exponentially fast. Assumption \ref*{A:mn} imposes
conditions on $m_{n}.$ First, we need $B_{m}<1/3$ such that $nm_{n}%
^{-3}\rightarrow\infty.$ The last condition is necessary for $\left\Vert
n^{-1}\mathbb{X}_{l}^{\prime}\mathbb{X}_{l}\right\Vert \propto m_{n}^{-1}$ to
hold with very high probability. To see why, note that Lemma \ref{LE:rank} in
Appendix \ref{APP:tech_lemmas} suggests that $\left\Vert E\left[  P^{m_{n}%
}\left(  X_{l}\right)  P^{m_{n}}\left(  X_{l}\right)  ^{\prime}\right]
\right\Vert \propto m_{n}^{-1}.$ We need $m_{n}^{-1}\gg\left(  m_{n}/n\right)
^{1/2},$ or equivalently, $nm_{n}^{-3}\rightarrow\infty,$ in order to ensure
that $n^{-1}\mathbb{X}_{l}^{\prime}\mathbb{X}_{l}$ is close to $E\left[
P^{m_{n}}\left(  X_{l}\right)  P^{m_{n}}\left(  X_{l}\right)  ^{\prime
}\right]  $ with very high probability. Second, we need $B_{m}>1/\left(
1+2d\right)  $ to ensure that the approximation bias is asymptotically
negligible in comparison with the asymptotic variance term: $m_{n}^{-d}%
\ll\left(  m_{n}/n\right)  ^{1/2}$.

Assumption \ref*{A:xi_n} imposes conditions on the threshold value
$\varsigma_{n}$ that ensures the separability of the signals from noises. If
the true value of $\boldsymbol{\beta}_{l}$ is $\boldsymbol{0}$, $\mathcal{\hat
{X}}_{l}$ in equation (\ref{EQ:chil_definition}) is $O_{P}\left(
m_{n}\right)  .$ In this case, to ensure $\widehat{\mathcal{J}}_{l}=0$ with
very high probability, we can take $\varsigma_{n}\propto\kappa_{n}\log\left(
m_{n}\right)  m_{n}$ and lose some power up to $\kappa_{n}\log\left(
m_{n}\right)  .$ Here, $\kappa_{n}$ can be any series diverging to infinity
slowly, e.g., $\left[  \log\left(  m_{n}\right)  \right]  ^{\epsilon}$ for
some small $\epsilon>0.$ The loss of the power to some degree is inevitable
because of the nature of the multiple testing procedure when the number of
tests goes to infinity. In contrast, CKP choose their threshold by the
Bonferroni correction of the standard normal. We choose not to follow them
because of the following reasons. First, given the divergence of $m_{n},$
$\mathcal{\hat{X}}_{l}$ does not converge to a chi-square distribution
asymptotically even in the homoskedastic case so that we cannot use chi-square
distribution to approximate the finite sample distribution of $\mathcal{\hat
{X}}_{l}.$ Under conditional heteroskedasticity, $\mathcal{\hat{X}}_{l}$ does
not converge to a chi-square distribution even if $m_{n}$ is held fixed. So
our procedure does not rely on the chi-square approximation. Second, even if
we can do the approximation, the cumulative density function (CDF) of a
chi-square distribution is very complicated. We do not have a rate for the
inverse of its CDF evaluated at a certain rate (e.g., $n^{-C}$) like the case
of normal CDF. For these reasons, we do our selection based on the asymptotic
results. Specifically, we will take $\varsigma_{n}=C\kappa_{n}\log\left(
m_{n}\right)  m_{n}$ for some $\kappa_{n}$ and choose the value of $C$ by the
classic BIC. The details can be found in Appendix \ref{SEC:procedure}. This
$\varsigma_{n}$ helps separate the signals from the noises with very high
probability, as demonstrated in the next section.\ Recall that in Assumption
\ref*{A:p} we assume that $p_{n}$ go to infinity at a polynomial rate of $n,$
same as $m_{n}.$ This ensures that $\log\left(  m_{n}\right)  \propto\log
p_{n}.$ Therefore, theoretically we only need to put $\log\left(
m_{n}\right)  $ in $\varsigma_{n}$ to have a control of $p^{\ast\ast}$ and
$p_{n}$ for the TPR, FDR and FPR defined after Proposition \ref{TH:main1}
below. We postpone the discussion on the comparison of technical conditions
required for our procedure and those required for the AGLASSO to Section 3.2.

\subsection{The Asymptotic Properties of $\mathcal{\hat{X}}_{l}$ and the
One-Stage Procedure\label{SEC:non_1st}}

We present the first theoretical result in this paper. It derives the
probability bounds for the \textquotedblleft Type-I\textquotedblright\ and
\textquotedblleft Type-II\textquotedblright\ errors.

\begin{proposition}
\label{TH:main1}Suppose that $Y$ is given by equation (\ref{EQ:model}) and
Assumptions \ref*{A:iid} $-$ \ref*{A:xi_n} hold.

(i) If $\theta_{l}\lesssim\log\left(  m_{n}\right)  ^{1/2}\left(
m_{n}/n\right)  ^{1/2},$ then for sufficiently large $n$ we have%
\begin{align*}
\Pr\left(  \mathcal{\hat{X}}_{l}\geq\varsigma_{n}\right)   &  \leq\exp\left(
-C_{1}m_{n}^{-1}\varsigma_{n}+\log m_{n}\right)  +C_{2}\exp\left(
-C_{3}n^{C_{4}}\right) \\
&  \leq n^{-M}+C_{2}\exp\left(  -C_{3}n^{C_{4}}\right)
\end{align*}
for any fixed constant $M>0$ and some positive constants $C_{1},$ $C_{2},$
$C_{3},$ and $C_{4}.$

(ii) If $\theta_{l}\gtrsim\kappa_{n}\log\left(  m_{n}\right)  ^{1/2}\left(
m_{n}/n\right)  ^{1/2}$ with $\kappa_{n}$ specified in Assumption
\ref*{A:xi_n}$,$ then for sufficiently large $n$ we have%
\[
\Pr\left(  \mathcal{\hat{X}}_{l}\geq\varsigma_{n}\right)  \geq1-n^{-M}%
-C_{5}\exp\left(  -C_{6}n^{C_{7}}\right)
\]
for any fixed constant $M>0$ and some positive constants $C_{5},$ $C_{6},$ and
$C_{7}.$
\end{proposition}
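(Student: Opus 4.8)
The plan is to decompose the statistic around its population projection and to show that a single ``signal'' term governs its magnitude, the remaining two being controlled by concentration on a high-probability event. Writing $\boldsymbol{y}=\mathbb{X}_{l}\boldsymbol{\beta}_{l}+\boldsymbol{U}_{l}$ with $\boldsymbol{U}_{l}=(U_{l1},\ldots,U_{ln})^{\prime}$ and using $\boldsymbol{\hat{\beta}}_{l}=\boldsymbol{\beta}_{l}+(\mathbb{X}_{l}^{\prime}\mathbb{X}_{l})^{-1}\mathbb{X}_{l}^{\prime}\boldsymbol{U}_{l}$, so that $\mathcal{\hat{X}}_{l}=\hat{\sigma}_{l}^{-2}\boldsymbol{\hat{\beta}}_{l}^{\prime}\mathbb{X}_{l}^{\prime}\mathbb{X}_{l}\boldsymbol{\hat{\beta}}_{l}$, the quadratic form splits as
\[
\boldsymbol{\hat{\beta}}_{l}^{\prime}\mathbb{X}_{l}^{\prime}\mathbb{X}_{l}\boldsymbol{\hat{\beta}}_{l}=\underbrace{\boldsymbol{\beta}_{l}^{\prime}\mathbb{X}_{l}^{\prime}\mathbb{X}_{l}\boldsymbol{\beta}_{l}}_{S_{l}}+\underbrace{2\boldsymbol{\beta}_{l}^{\prime}\mathbb{X}_{l}^{\prime}\boldsymbol{U}_{l}}_{C_{l}}+\underbrace{\boldsymbol{U}_{l}^{\prime}H_{l}\boldsymbol{U}_{l}}_{N_{l}},
\]
where $H_{l}=\mathbb{X}_{l}(\mathbb{X}_{l}^{\prime}\mathbb{X}_{l})^{-1}\mathbb{X}_{l}^{\prime}$. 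First I would fix a ``good'' event $\mathcal{E}_{n}$ on which (a) $n^{-1}\mathbb{X}_{l}^{\prime}\mathbb{X}_{l}$ is close to $E[P^{m_{n}}(X_{l})P^{m_{n}}(X_{l})^{\prime}]$ in spectral norm, so that by Lemma \ref{LE:rank} both of its eigenvalues are $\propto m_{n}^{-1}$, and (b) $\hat{\sigma}_{l}^{2}$ is bounded away from $0$ and $\infty$. These hold with probability at least $1-C_{2}\exp(-C_{3}n^{C_{4}})$ by matrix concentration---the requirement $nm_{n}^{-3}\rightarrow\infty$ in Assumption \ref{A:mn} being exactly what renders $\Vert n^{-1}\mathbb{X}_{l}^{\prime}\mathbb{X}_{l}-E[P^{m_{n}}P^{m_{n}\prime}]\Vert$ of smaller order than $m_{n}^{-1}$---and this is the source of the $C_{2}\exp(-C_{3}n^{C_{4}})$ remainder in both parts.

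On $\mathcal{E}_{n}$ the signal term obeys $S_{l}\propto n\,\boldsymbol{\beta}_{l}^{\prime}E[P^{m_{n}}P^{m_{n}\prime}]\boldsymbol{\beta}_{l}$, and since $\boldsymbol{\beta}_{l}^{\prime}E[P^{m_{n}}P^{m_{n}\prime}]\boldsymbol{\beta}_{l}=E[(P^{m_{n}}(X_{l})^{\prime}\boldsymbol{\beta}_{l})^{2}]$ is the squared norm of the spline projection of $f_{l}$, it approximates $E[f_{l}(X_{l})^{2}]=\theta_{l}^{2}$ up to a bias of order $m_{n}^{-2d}$ (negligible by Assumptions \ref{A:fl} and \ref{A:mn}); hence $S_{l}\propto n\theta_{l}^{2}$. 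For part (i), the hypothesis $\theta_{l}\lesssim\log(m_{n})^{1/2}(m_{n}/n)^{1/2}$ forces $S_{l}\lesssim\log(m_{n})m_{n}$, of smaller order than $\varsigma_{n}\propto\kappa_{n}\log(m_{n})m_{n}$.

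The crux is the noise term. On $\mathcal{E}_{n}$ one has $N_{l}\le\lambda_{\min}(\mathbb{X}_{l}^{\prime}\mathbb{X}_{l})^{-1}\Vert\mathbb{X}_{l}^{\prime}\boldsymbol{U}_{l}\Vert^{2}\propto(m_{n}/n)\sum_{j=1}^{m_{n}}(\sum_{i}\phi_{j}(x_{li})U_{li})^{2}$, so I would bound $\Pr(N_{l}\ge c\,\varsigma_{n})$ by a union bound over the $m_{n}$ coordinates of a Bernstein-type tail for each independent-sum $\sum_{i}\phi_{j}(x_{li})U_{li}$. Because $m_{n}^{-1}\varsigma_{n}\propto\kappa_{n}\log(m_{n})$, this produces precisely the $\exp(-C_{1}m_{n}^{-1}\varsigma_{n}+\log m_{n})$ term, which collapses to $n^{-M}$ since $\kappa_{n}\rightarrow\infty$. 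The cross term $C_{l}$ is mean zero and, by Cauchy--Schwarz, dominated by $2S_{l}^{1/2}N_{l}^{1/2}$, hence controlled by the same two estimates, completing part (i).

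For part (ii) the roles reverse: under $\theta_{l}\gtrsim\kappa_{n}\log(m_{n})^{1/2}(m_{n}/n)^{1/2}$ we get $S_{l}\gtrsim\kappa_{n}^{2}\log(m_{n})m_{n}$ on $\mathcal{E}_{n}$, exceeding $\varsigma_{n}$ by a factor $\kappa_{n}\rightarrow\infty$. Since $N_{l}\ge0$, the reverse triangle inequality $\Vert A_{l}^{1/2}\boldsymbol{\hat{\beta}}_{l}\Vert\ge\Vert A_{l}^{1/2}\boldsymbol{\beta}_{l}\Vert-\Vert A_{l}^{-1/2}\mathbb{X}_{l}^{\prime}\boldsymbol{U}_{l}\Vert$ with $A_{l}=\mathbb{X}_{l}^{\prime}\mathbb{X}_{l}$ yields $\boldsymbol{\hat{\beta}}_{l}^{\prime}A_{l}\boldsymbol{\hat{\beta}}_{l}\ge(S_{l}^{1/2}-N_{l}^{1/2})^{2}$, so it suffices that the upper bound $N_{l}\lesssim m_{n}$ hold off an event of probability $n^{-M}+C\exp(-Cn^{C'})$---the same tail estimate as in part (i). Then $S_{l}^{1/2}\gg N_{l}^{1/2}$ forces $\mathcal{\hat{X}}_{l}\gtrsim\kappa_{n}^{2}\log(m_{n})m_{n}\gg\varsigma_{n}$ with the asserted probability. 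I expect the main obstacle to be the quadratic-form tail for $N_{l}$ with diverging $m_{n}$: because Assumption \ref{A:epsilon} grants only sub-Weibull (possibly heavier-than-sub-exponential when $s<1$) tails for $U_{l}$, the per-coordinate Bernstein step requires a truncation argument, and tracking the resulting exponent $n^{C_{4}}$ through the interplay of $s$, $B_{m}$, and the union bound over $m_{n}$ coordinates is the delicate part.
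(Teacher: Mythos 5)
Your architecture is the paper's own: the decomposition into $S_{l}+C_{l}+N_{l}$ is exactly equation (\ref{EQ:chi_oneS}); your good event $\mathcal{E}_{n}$ is delivered by Lemma \ref{LE:XX'} (Gram-matrix concentration, which is precisely where $nm_{n}^{-3}\rightarrow\infty$ enters) together with Lemma \ref{LE:error_var} for $\hat{\sigma}_{l}^{2}$; your per-coordinate Bernstein-plus-union-bound treatment of $N_{l}$ is Lemma \ref{LE:rank2} via the bound in equation (\ref{EQ:p1_2}), with the sub-Weibull ($s<1$) complication you flag handled there by the CKP inequality reproduced as Lemma \ref{LE:main_inequality}; your Cauchy--Schwarz control of $C_{l}$ is equivalent to the paper's triangle-inequality treatment of $\left\Vert n^{1/2}\boldsymbol{\beta}_{l}+n^{1/2}\left(  \mathbb{X}_{l}^{\prime}\mathbb{X}_{l}\right)  ^{-1}\mathbb{X}_{l}^{\prime}\boldsymbol{u}_{l}\right\Vert $; and your part (ii) reverse-triangle argument is the paper's bound on the term $A_{4}$.

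One step fails as literally written. In part (ii) you claim that $N_{l}\lesssim m_{n}$ holds off an event of probability $n^{-M}+C\exp(-Cn^{C^{\prime}})$ \textquotedblleft by the same tail estimate as in part (i).\textquotedblright\ That estimate is $\Pr\left(  N_{l}\geq v_{n}\right)  \leq\exp\left(  -C_{1}m_{n}^{-1}v_{n}+\log m_{n}\right)  +\cdots$, and at $v_{n}\propto m_{n}$ the exponential factor is $\exp\left(  -C_{1}C+\log m_{n}\right)  $, which diverges: the bound is vacuous. This is not an artifact of the method being crude; $N_{l}$ is a projection quadratic form with mean of order $\sigma_{l}^{2}m_{n}$, so no high-probability bound at the scale $m_{n}$ with a small constant can hold at all. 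The repair is immediate and stays inside your framework: take the noise threshold at $v_{n}\propto\kappa_{n}\log\left(  m_{n}\right)  m_{n}$ (or, as the paper does for $A_{4}$, at $v_{n}\propto\left(  m_{n}^{-1/2}\left\Vert n^{1/2}\boldsymbol{\beta}_{l}\right\Vert \right)  ^{2}\gtrsim\kappa_{n}^{2}\log\left(  m_{n}\right)  m_{n}$), for which the exponent $-C_{1}\kappa_{n}\log\left(  m_{n}\right)  +\log m_{n}$ does deliver $n^{-M}$ since $\kappa_{n}\rightarrow\infty$ and $\log m_{n}\propto\log n$. Because $S_{l}\gtrsim\kappa_{n}^{2}\log\left(  m_{n}\right)  m_{n}$, you still get $N_{l}/S_{l}\lesssim\kappa_{n}^{-1}\rightarrow0$, so the reverse-triangle step gives $\mathcal{\hat{X}}_{l}\gtrsim\hat{\sigma}_{l}^{-2}\lambda_{\min}\left\{  n^{-1}\mathbb{X}_{l}^{\prime}\mathbb{X}_{l}\right\}  S_{l}\left(  1-o(1)\right)  \gg\varsigma_{n}$ on the good event, and the stated probability bound follows. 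With that one-line correction your proof coincides with the paper's.
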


The proof of Proposition \ref{TH:main1} is tedious. We provide some technical
details in Appendix \ref{APP:tech_one_stage} before we formally prove the
proposition in Appendix \ref{APP:real_main_proof}. An implication of
Proposition \ref{TH:main1} is that for the well-chosen threshold value
$\varsigma_{n},$ the above one-stage procedure can separate the signals with
$\theta_{l}\gtrsim\kappa_{n}\log\left(  m_{n}\right)  ^{1/2}\left(
m_{n}/n\right)  ^{1/2}$ from the noises with $\theta_{l}=0.$ Of course, we may
have some intermediate case where $0<\theta_{l}\lesssim\log\left(
m_{n}\right)  ^{1/2}\left(  m_{n}/n\right)  ^{1/2}$ for which the above
procedure fails to do so. Note that the convergence rate for each additive
term in \cite{Stone} is $\left(  m_{n}/n\right)  ^{1/2}.$ Therefore, our
procedure loses some power up to the order of $\log\left(  n\right)  $, a
common scenario in the variable selection literature.

Following the literature and CKP, we define the true positive rates (TPR) and
the false positive rates (FPR) respectively as
\begin{align*}
\text{TPR}_{n}  &  =\frac{\sum_{l=1}^{p_{n}}\mathbf{1}\left(
\widehat{\mathcal{J}}_{l}=1\text{ and }\left\{  E\left[  f_{l}^{\ast}\left(
X_{l}\right)  ^{2}\right]  \right\}  ^{1/2}\neq0\right)  }{\sum_{l=1}^{p_{n}%
}\mathbf{1}\left(  \left\{  E\left[  f_{l}^{\ast}\left(  X_{l}\right)
^{2}\right]  \right\}  ^{1/2}\neq0\right)  },\text{ and}\\
\text{ FPR}_{n}  &  =\frac{\sum_{l=1}^{p_{n}}\mathbf{1}\left(
\widehat{\mathcal{J}}_{l}=1\text{ and }\left\{  E\left[  f_{l}^{\ast}\left(
X_{l}\right)  ^{2}\right]  \right\}  ^{1/2}=0\right)  }{\sum_{l=1}^{p_{n}%
}\mathbf{1}\left(  \left\{  E\left[  f_{l}^{\ast}\left(  X_{l}\right)
^{2}\right]  \right\}  ^{1/2}=0\right)  }.
\end{align*}
Based on the test statistic defined in equation (\ref{EQ:chil_definition}) and
its property developed in Proposition \ref{TH:main1}, we introduce the
generalized definitions of signals and noises in Table \ref{Table2}. With this
definition of hidden signals, our result in Section \ref{SEC:multi} provides
theoretical justification for the necessity of a multi-stage procedure capable
of detecting signals not identified in the first stage, yet with non-zero net
effects.\begin{table}[ptb]
\caption{The generalized definitions of signals and noises}%
\label{Table2}%
\centering{}\centering{ }
\begin{tabular}
[c]{l|ll}\hline\hline
& $\theta_{l}\gg\log\left(  m_{n}\right)  ^{1/2}\left(  m_{n}/n\right)
^{1/2}$ & $\theta_{l}\lesssim\log\left(  m_{n}\right)  ^{1/2}\left(
m_{n}/n\right)  ^{1/2}$\\\hline
$\left\{  E[f_{l}^{\ast}\left(  X_{l}\right)  ^{2}]\right\}  ^{1/2}\neq0$ &
(I) Signals & (II) Hidden signals\\
$\left\{  E[f_{l}^{\ast}\left(  X_{l}\right)  ^{2}]\right\}  ^{1/2}=0$ & (III)
Pseudo-signals & (IV) Noise variables\\\hline\hline
\end{tabular}
\end{table}

With a little abuse of notation, we continue to use $p^{\ast}$ and
$p^{\ast\ast}$ to denote the number of signals and pseudo-signals at the
sample level, and assume that they satisfy Assumption \ref*{A:p}. Adopting the
definitions in Table 2, we define the false discovery rates (FDR) as%
\[
\text{FDR}_{n}=\frac{\sum_{l=1}^{p_{n}}\mathbf{1}\left(  \widehat{\mathcal{J}%
}_{l}=1\text{, }\left\{  E\left[  f_{l}^{\ast}\left(  X_{l}\right)
^{2}\right]  \right\}  ^{1/2}=0,\text{ and }\theta_{l}\lesssim\log\left(
m_{n}\right)  ^{1/2}\left(  m_{n}/n\right)  ^{1/2}\right)  }{\sum_{l=1}%
^{p_{n}}\widehat{\mathcal{J}}_{l}+1}.
\]

Apparently, the definitions in Table 2 generalize the definitions in Table 1.

The one-stage procedure is valid in terms of TPR, only if the net effects of
all signals are strong enough. Specifically, we need the following assumption.

\begin{assumption}
\label{A;no_hidden} There are no hidden signals. That is, $\theta_{j}%
\gtrsim\kappa_{n}\log\left(  m_{n}\right)  ^{1/2}\left(  m_{n}/n\right)
^{1/2}$ for some slowly divergent series $\kappa_{n}$ as in Assumption
\ref{A:xi_n} for all $j=1,2,...,p^{\ast}.\smallskip$
\end{assumption}

We note the definition of no hidden signals in the above assumption is
equivalent to the one in Table 2, given the way $\kappa_{n}$ is defined in
Assumption 8. If Assumption \ref*{A;no_hidden} fails to hold, we need the
multiple-stage procedure to pick up the hidden signals.

Based on the results in Proposition \ref{TH:main1}, we present the results for
TPR$_{n}$, FPR$_{n}$, and FDR$_{n}$ in the following theorem.

\begin{theorem}
\label{TH:main2}Suppose that $Y$ is given by equation (\ref{EQ:model}) and
Assumptions \ref*{A:iid} $-$ \ref*{A;no_hidden}\ hold. Then after some large
$n,$

(i) $E\left(  \text{\emph{TPR}}_{n}\right)  \geq1-C_{1}\exp\left(
-C_{2}n^{C_{3}}\right)  $ for some positive constants $C_{1},C_{2}$, and
$C_{3};$

(ii) $E\left(  \text{\emph{FPR}}_{n}\right)  \leq p^{\ast\ast}/\left(
p_{n}-p^{\ast}\right)  +C_{4}n^{-M}+C_{5}\exp\left(  -C_{6}n^{C_{7}}\right)  $
for any fixed positive large constant $M$ and some positive constants $C_{4},$
$C_{5},C_{6},$ and $C_{7};$

(iii) \emph{FDR}$_{n}\overset{P}{\rightarrow}0.$
\end{theorem}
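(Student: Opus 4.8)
The plan is to derive all three parts from Proposition \ref{TH:main1} by splitting each rate according to the four categories of Table \ref{Table2} and bounding each category by the appropriate half of the proposition, with union bounds converting the per-covariate probabilities into aggregate statements. Two structural facts make this work: the constants in Proposition \ref{TH:main1} are uniform in $l$ (they depend only on the maintained assumptions, not on the particular covariate), and the exponent $M$ there is an arbitrary fixed constant, which I am free to take larger than $B_{p_n}$ so that a factor of $p_n\propto n^{B_{p_n}}$ cannot overwhelm an $n^{-M}$ term.

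For (i), Assumption \ref{A;no_hidden} rules out hidden signals, so the covariates with $\{E[f_l^{\ast}(X_l)^{2}]\}^{1/2}\neq0$ are exactly the $p^{\ast}$ strong signals, and the (deterministic) denominator of $\text{TPR}_n$ equals $p^{\ast}$. Hence $E(\text{TPR}_n)=(p^{\ast})^{-1}\sum_{j=1}^{p^{\ast}}\Pr(\widehat{\mathcal{J}}_j=1)$. Each signal has $\theta_j\gtrsim\kappa_n\log(m_n)^{1/2}(m_n/n)^{1/2}$, so Proposition \ref{TH:main1}(ii) bounds $\Pr(\widehat{\mathcal{J}}_j=0)$; since $p^{\ast}$ is fixed, averaging preserves the bound up to the constant $p^{\ast}$. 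To obtain the pure exponential form stated in (i) rather than the polynomial term appearing in Proposition \ref{TH:main1}(ii), I would read the detection-failure event as a lower deviation of $\mathcal{\hat{X}}_j$ below $\varsigma_n$: a strong signal forces $\mathcal{\hat{X}}_j$ to be of order $\kappa_n^2\log(m_n)m_n$, which exceeds $\varsigma_n\propto\kappa_n\log(m_n)m_n$ by the diverging factor $\kappa_n$, so this is a genuine large deviation with exponentially small probability.

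For (ii), the denominator of $\text{FPR}_n$ is the deterministic count $p_n-p^{\ast}$ of covariates with $\{E[f_l^{\ast}(X_l)^{2}]\}^{1/2}=0$, so $E(\text{FPR}_n)=(p_n-p^{\ast})^{-1}\sum_{l}\Pr(\widehat{\mathcal{J}}_l=1)$, the sum running over those covariates. I would split them into the $p^{\ast\ast}$ pseudo-signals (category III) and the $p_n-p^{\ast}-p^{\ast\ast}$ noise variables (category IV). The one-stage procedure cannot suppress pseudo-signals, so I bound each of their selection probabilities by $1$, producing the leading term $p^{\ast\ast}/(p_n-p^{\ast})$. For a noise variable $\theta_l=0\lesssim\log(m_n)^{1/2}(m_n/n)^{1/2}$, so Proposition \ref{TH:main1}(i) gives $\Pr(\widehat{\mathcal{J}}_l=1)\leq n^{-M}+C_2\exp(-C_3n^{C_4})$; summing at most $p_n$ such terms and dividing by $p_n-p^{\ast}$ leaves a bounded factor $p_n/(p_n-p^{\ast})$ times $(n^{-M}+C_2\exp(-C_3n^{C_4}))$. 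Taking $M>B_{p_n}$ turns $p_n n^{-M}$ into $n^{-M'}$ and keeps the exponential term exponentially small, which is exactly the form in (ii).

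For (iii), the numerator of $\text{FDR}_n$ counts only selected category-IV covariates, $N:=\sum_{l\in\mathrm{IV}}\widehat{\mathcal{J}}_l$ (it requires both $\{E[f_l^{\ast}(X_l)^{2}]\}^{1/2}=0$ and $\theta_l$ small, thereby excluding the pseudo-signals), while the denominator $\sum_l\widehat{\mathcal{J}}_l+1\geq1$ is strictly positive, so $\{\text{FDR}_n>0\}=\{N\geq1\}$. Proposition \ref{TH:main1}(i) and a union bound give $E(N)\leq(p_n-p^{\ast}-p^{\ast\ast})(n^{-M}+C_2\exp(-C_3n^{C_4}))$, which tends to $0$ once $M>B_{p_n}$; Markov's inequality then yields $\Pr(\text{FDR}_n>0)=\Pr(N\geq1)\leq E(N)\to0$, and since $\text{FDR}_n\geq0$ this gives $\text{FDR}_n\overset{P}{\rightarrow}0$. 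The analytic substance is entirely in Proposition \ref{TH:main1}, so the remaining work is bookkeeping; the one genuinely delicate point is the uniform passage from single-covariate bounds to aggregate bounds over $p_n\to\infty$ tests, which relies precisely on the uniformity of the constants and the freedom to enlarge $M$ beyond $B_{p_n}$.
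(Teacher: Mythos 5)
Your overall architecture is the same as the paper's: reduce everything to Proposition \ref{TH:main1}, average (for TPR and FPR) or sum (for FDR) the per-covariate bounds, bound the pseudo-signal contribution trivially by $p^{\ast\ast}/\left(  p_{n}-p^{\ast}\right)  $, and finish FDR with Markov's inequality against the denominator $\sum_{l}\widehat{\mathcal{J}}_{l}+1\geq1$. Parts (ii) and (iii) are correct and essentially identical to the paper's argument (one small quibble: in (ii) no enlargement of $M$ beyond $B_{p_{n}}$ is needed, since you divide by $p_{n}-p^{\ast}$, which is of the same order as the number of summands; that freedom is genuinely needed only in (iii), where the numerator is summed rather than averaged --- exactly as you use it there).

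The genuine gap is your repair of part (i). You correctly noticed that Proposition \ref{TH:main1}(ii) only yields $\Pr\left(  \mathcal{\hat{X}}_{j}\geq\varsigma_{n}\right)  \geq1-n^{-M}-C\exp\left(  -Cn^{C^{\prime}}\right)  $, so averaging gives $E\left(  \text{TPR}_{n}\right)  \geq1-n^{-M}-C\exp\left(  -Cn^{C^{\prime}}\right)  $ and not the purely exponential bound claimed in the theorem. But your proposed fix --- that the detection failure is a \textquotedblleft genuine large deviation\textquotedblright\ with exponentially small probability because the signal pushes $\mathcal{\hat{X}}_{j}$ a factor $\kappa_{n}$ above $\varsigma_{n}$ --- does not work. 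The failure probability for a signal is controlled by the concentration of $\left\Vert n^{-1/2}\mathbb{X}_{j}^{\prime}\boldsymbol{u}_{j}\right\Vert $ in equation (\ref{EQ:p1_2}), evaluated at $v_{n}\propto\kappa_{n}^{2}\log\left(  m_{n}\right)  m_{n}$, which gives a bound of order $\exp\left(  -C\kappa_{n}^{2}\log m_{n}+\log m_{n}\right)  $. Since $\kappa_{n}$ is only required to diverge, arbitrarily slowly (e.g., $\left[  \log m_{n}\right]  ^{\epsilon}$), this is $o\left(  n^{-M}\right)  $ for every fixed $M$ but can be vastly larger than $\exp\left(  -Cn^{C^{\prime}}\right)  $: the deviation scale grows like $\kappa_{n}^{2}\log m_{n}$, not like a power of $n$. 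The truly exponential terms in Proposition \ref{TH:main1} come from the eigenvalue and variance concentration steps (Lemmas \ref{LE:XX'} and \ref{LE:error_var}), not from the signal-detection event. For what it is worth, the paper's own proof of (i) stops at exactly the bound $1-n^{-M}-C_{1}\exp\left(  -C_{2}n^{C_{3}}\right)  $, i.e., it does not deliver the purely exponential form stated in the theorem either; so your instinct that something needed bridging was astute, but the bridge you propose is unsound, and under Assumption \ref{A;no_hidden} as stated the extra term cannot be removed by this route.
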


Theorem \ref{TH:main2} implies that all of TPR$_{n},$ FPR$_{n}$ and FDR$_{n}$
can be well controlled provided we assume away hidden signals at the sample
level. Note that Theorem \ref{TH:main2}(i)--(ii) focuses on the asymptotic
properties of TPR$_{n}$ and FPR$_{n}$ while the last part of Theorem
\ref{TH:main2} reveals that the false discovery rate is asymptotically
vanishing in large samples.

In the next section, we turn to the multiple-stage procedure that does not
rely on Assumption \ref*{A;no_hidden}.

\section{The Multiple-Stage Procedure\label{SEC:multi}}

In this section we propose a multiple-stage procedure to select variables for
the nonparametric additive models.

\subsection{The Test Statistic with Pre-selected Variables}

As mentioned above, we may not identify a signal $X_{l}$ whose net effect
satisfies $\theta_{l}\lesssim\kappa_{n}\log\left(  m_{n}\right)  ^{1/2}\left(
m_{n}/n\right)  ^{1/2},$ even in the case where the \textit{marginal} effect
of $X_{l}$ on $Y,$ namely, $\left\{  E\left[  f_{l}^{\ast}\left(
X_{l}\right)  ^{2}\right]  \right\}  ^{1/2},$ is large enough.\footnote{The
marginal effect defined here is slightly different from that in the
econometrics literature. For example, the marginal effect of $X_{l}$ on $Y$ is
defined as $\beta_{l}$ in the CKP's linear model: $Y=\beta_{0}+\sum
_{l=1}^{p^{\ast}}X_{l}\beta_{l}+\varepsilon,$ but it refers to $X_{l}\beta
_{l}$.in this paper.} Consequently, Theorem \ref{TH:main2} does not hold
without Assumption \ref*{A;no_hidden} which assumes away small sample hidden
signals. In contrast, Lemma \ref{LE:hidden} in Appendix \ref{APP:tech_lemmas}
underpins the result that as long as $\left\{  E\left[  f_{l}^{\ast}\left(
X_{l}\right)  ^{2}\right]  \right\}  ^{1/2}\gtrsim\kappa_{n}\log\left(
m_{n}\right)  ^{1/2}\left(  m_{n}/n\right)  ^{1/2}$ for some slowly divergent
series $\kappa_{n}$ and some full rank condition holds, the net effect of
$X_{l}$ will be strong enough to be picked up at certain stage of a
multiple-stage procedure. A by-product of this Lemma is the existence of at
least one signal for Stage 1 (see the second part of this Lemma).

To introduce the multiple-stage procedure, we need some extra notations.
Suppose at certain stage after stage 1, we have pre-selected $\iota_{n}$
variables from the active set $\mathcal{S}_{n}=\left\{  X_{j},\text{
}j=1,...,p_{n}\right\}  $ based on some selection procedure to be described
below. To avoid confusion, we denote these pre-selected variables as
$Z_{1},\ldots,Z_{\iota_{n}}$. Let $\boldsymbol{Z}\equiv\left(  Z_{1}%
,\ldots,Z_{\iota_{n}}\right)  ^{\prime}\ $and $P^{m_{n}}\left(  \boldsymbol{Z}%
\right)  =(P^{m_{n}}\left(  Z_{1}\right)  ^{\prime},\ldots,$ $P^{m_{n}}\left(
Z_{\iota_{n}}\right)  ^{\prime})^{\prime}.$ Note that $P^{m_{n}}\left(
\boldsymbol{Z}\right)  $ is an $\iota_{n}m_{n}\times1$ vector for
$\boldsymbol{Z}.$ At the next stage, we consider the nonparametric additive
regression of $Y$ on $\boldsymbol{Z}$ and an $X_{l}$ that has not been
selected so far and we do this one by one for all $p_{n}-\iota_{n}$
non-selected variables $X_{l}.$ We define the impact of $X_{l}$ on $Y$ after
controlling $\boldsymbol{Z}$ as%
\begin{align*}
\theta_{l,\boldsymbol{Z}}  &  \equiv\left\{  E\left\{  E\left[  \left.
Y-P^{m_{n}}\left(  \boldsymbol{Z}\right)  ^{\prime}\Phi_{\boldsymbol{Z}}%
^{-1}E\left[  P^{m_{n}}\left(  \boldsymbol{Z}\right)  Y\right]  \right\vert
X_{l}\right]  \right\}  ^{2}\right\}  ^{1/2}\\
&  =\left\{  E\left\{  \sum_{j=1}^{p^{\ast}}E\left[  \left.  f_{j}^{\ast
}\left(  X_{j}\right)  -P^{m_{n}}\left(  \boldsymbol{Z}\right)  ^{\prime}%
\Phi_{\boldsymbol{Z}}^{-1}E\left[  P^{m_{n}}\left(  \boldsymbol{Z}\right)
f_{j}^{\ast}\left(  X_{j}\right)  \right]  \right\vert X_{l}\right]  \right\}
^{2}\right\}  ^{1/2}\\
&  =\left\{  E\left[  \left(  \sum_{j=1}^{p^{\ast}}\mu_{lj,\boldsymbol{Z}%
}\right)  ^{2}\right]  \right\}  ^{1/2},
\end{align*}
where $\Phi_{\boldsymbol{Z}}\equiv E[P^{m_{n}}\left(  \boldsymbol{Z}\right)
P^{m_{n}}\left(  \boldsymbol{Z}\right)  ^{\prime}]$ and $\mu
_{lj,\boldsymbol{Z}}\equiv E\left\{  \left.  f_{j}^{\ast}\left(  X_{j}\right)
-P^{m_{n}}\left(  \boldsymbol{Z}\right)  ^{\prime}\Phi_{\boldsymbol{Z}}%
^{-1}E\left[  P^{m_{n}}\left(  \boldsymbol{Z}\right)  f_{j}^{\ast}\left(
X_{j}\right)  \right]  \right\vert X_{l}\right\}  $ denotes the effect of
$X_{l}$ on $f_{j}^{\ast}\left(  X_{j}\right)  $ after controlling the effects
of $\boldsymbol{Z}$. Apparently, we suppress the dependence of $\theta
_{l,\boldsymbol{Z}}$ on the sample size $n$.

At the sample level, let $\mathbb{Z}$ and $\mathbb{X}_{l}$ denote the $n\times
m_{n}\iota_{n}$ and $n\times m_{n}$ \textquotedblleft design
matrices\textquotedblright\ for $\boldsymbol{Z}$ and $X_{l},$ respectively.
That is,%
\begin{equation}
\mathbb{Z}\equiv\left(  \mathbb{Z}_{1},\ldots,\mathbb{Z}_{\iota_{n}}\right)
\text{ and }\mathbb{X}_{l}\equiv\left(  \mathbb{X}_{l1},\ldots,\mathbb{X}%
_{ln}\right)  ^{\prime}, \label{EQ:design}%
\end{equation}
where $\mathbb{Z}_{l}=\left(  \mathbb{Z}_{l1},\ldots,\mathbb{Z}_{ln}\right)
^{\prime}\ $is a $n\times m_{n}$ matrix, $\mathbb{Z}_{li}=P^{m_{n}}\left(
z_{li}\right)  \ $and $\mathbb{X}_{li}=P^{m_{n}}\left(  x_{li}\right)  .$
Define $M_{\mathbb{Z}}\equiv I_{n}-\mathbb{Z}\left(  \mathbb{Z}^{\prime
}\mathbb{Z}\right)  ^{-1}\mathbb{Z}^{\prime}.$ By the result of partitioned
regressions, the coefficient of $P^{m_{n}}\left(  X_{l}\right)  \ $is
estimated by%
\[
\boldsymbol{\hat{\beta}}_{l,\boldsymbol{Z}}=\left(  \mathbb{X}_{l}^{\prime
}M_{\mathbb{Z}}\mathbb{X}_{l}\right)  ^{-1}\mathbb{X}_{l}^{\prime
}M_{\mathbb{Z}}\boldsymbol{y}.
\]
To determine whether $X_{l}$ should be treated as a signal variable, we
propose the following test statistic
\begin{equation}
\mathcal{\hat{X}}_{l,\boldsymbol{Z}}=\boldsymbol{\hat{\beta}}%
_{l,\boldsymbol{Z}}^{\prime}\left(  \hat{\sigma}_{l,\boldsymbol{Z}}%
^{-2}\mathbb{X}_{l}^{\prime}M_{\mathbb{Z}}\mathbb{X}_{l}\right)
\boldsymbol{\hat{\beta}}_{l,\boldsymbol{Z}}=\left(  \boldsymbol{y}^{\prime
}M_{\mathbb{Z}}\mathbb{X}_{l}\right)  \left(  \hat{\sigma}_{l,\boldsymbol{Z}%
}^{2}\mathbb{X}_{l}^{\prime}M_{\mathbb{Z}}\mathbb{X}_{l}\right)  ^{-1}\left(
\mathbb{X}_{l}^{\prime}M_{\mathbb{Z}}\boldsymbol{y}\right)  \label{EQ:XlZ}%
\end{equation}
where $\hat{\sigma}_{l,\boldsymbol{Z}}^{2}=n^{-1}\sum_{i=1}^{n}\hat
{\varepsilon}_{li}^{2}$ and $\hat{\varepsilon}_{li}$ is the residual from the
regression $\boldsymbol{y}$ on $\left(  \mathbb{Z},\mathbb{X}_{l}\right)  $.

We will study the asymptotic properties of $\mathcal{\hat{X}}%
_{l,\boldsymbol{Z}}$ in the next subsection which lay down the foundation for
our multiple-stage procedure.

\subsection{The Asymptotic Properties of $\mathcal{\hat{X}}_{l,\boldsymbol{Z}%
}$}

To study the asymptotic properties of $\mathcal{\hat{X}}_{l,\boldsymbol{Z}},$
we impose the following technical conditions.\smallskip

\noindent\textbf{Assumption} \textbf{\ref*{A:p}'} $p^{\ast}\ $is a positive
integer that does not vary with $n.$\ $p^{\ast\ast}\lesssim n^{B_{p^{\ast\ast
}}}\ $and $p_{n}\propto n^{B_{p}}$ for some $B_{p}>B_{p^{\ast\ast}}\geq0.$
Further, $B_{p^{\ast\ast}}<\left(  1-3B_{m}\right)  /2.$\smallskip

\noindent\textbf{Assumption} \textbf{\ref*{A:fl}'} $f_{l}\left(  \cdot\right)
=E\left(  Y|X_{l}=\cdot\right)  \in\Lambda^{d}\left(  [0,1]\right)  \ $with
$d>1$ for $l=1,\ldots,p_{n}$. $f_{j}^{\ast}\in\Lambda^{d}\left(  [0,1]\right)
\ $with $d>1$ for $j=1,\ldots,p^{\ast}$.\smallskip

\noindent\textbf{Assumption} \textbf{\ref*{A;no_hidden}'} $\left\{
E[f_{j}^{\ast}\left(  X_{j}\right)  ^{2}]\right\}  ^{1/2}\gtrsim\kappa_{n}%
\log\left(  m_{n}\right)  ^{1/2}\left(  m_{n}/n\right)  ^{1/2}$ for some
slowly divergent series $\kappa_{n}$ as in Assumption \ref{A:xi_n}\ and for
$j=1,\ldots,p^{\ast}$.\smallskip

\begin{assumption}
\label{A:full_rank2}Let $\boldsymbol{X}_{1}^{p^{\ast}+p^{\ast\ast}}%
\equiv\left(  X_{1},\ldots,X_{p^{\ast}},X_{p^{\ast}+1},\ldots,X_{p^{\ast
}+p^{\ast\ast}}\right)  ^{\prime},$ the vector of all signals and pseudo
signals. Similarly, let $P^{m_{n}}(\boldsymbol{X}_{1}^{p^{\ast}+p^{\ast\ast}%
})\equiv\left[  P^{m_{n}}\left(  X_{1}\right)  ^{\prime},\ldots,P^{m_{n}%
}\left(  X_{p^{\ast}+p^{\ast\ast}}\right)  ^{\prime}\right]  ^{\prime}\ $and
$\Phi_{\boldsymbol{X}_{1}^{p^{\ast}+p^{\ast\ast}}}\equiv E[P^{m_{n}%
}(\boldsymbol{X}_{1}^{p^{\ast}+p^{\ast\ast}})$ $P^{m_{n}}(\boldsymbol{X}%
_{1}^{p^{\ast}+p^{\ast\ast}})^{\prime}].$ Assume that%
\[
B_{X1}m_{n}^{-1}\leq\lambda_{\min}\left(  \Phi_{\boldsymbol{X}_{1}^{p^{\ast
}+p^{\ast\ast}}}\right)  \leq\lambda_{\max}\left(  \Phi_{\boldsymbol{X}%
_{1}^{p^{\ast}+p^{\ast\ast}}}\right)  \leq B_{X2}m_{n}^{-1}%
\]
holds for some positive constants $B_{X1}$ and $B_{X2},$ and it also holds
when $\boldsymbol{X}_{1}^{p^{\ast}+p^{\ast\ast}}$ is augmented by an arbitrary
element $X_{l}$ with $l>p^{\ast}+p^{\ast\ast}.$
\end{assumption}

\begin{assumption}
\label{A:noisevariable}If the pre-selected variables $\boldsymbol{Z}$ are
either signals or pseudo-signals, the effect of noise variable defined in
Table 2 on $Y$ is still weak. That is, $\theta_{l,\boldsymbol{Z}}\lesssim
\log\left(  m_{n}\right)  ^{1/2}\left(  m_{n}/n\right)  ^{1/2}$ for all ${l}%
${$=p^{\ast}+p^{\ast\ast}+1,\ldots,p_{n},$} when all variables in
$\boldsymbol{Z}$ are either signals or pseudo-signals.\smallskip
\end{assumption}

Assumption \ref*{A:p}' strengthens Assumption \ref*{A:p} by adding one more
condition on $B_{p^{\ast\ast}}$. It is imposed to ensure the good property of
$\boldsymbol{\tilde{u}}_{l,\boldsymbol{Z}}$ defined in equation
(\ref{EQ:utidelz}) (see Lemma \ref{LE:error_bound}). This condition can be
very restrictive on $B_{p^{\ast\ast}}.$ For example, when $B_{m}=1/4,$ this
condition implies $B_{p^{\ast\ast}}<1/8$. Assumption \ref*{A:fl}' strengthens
Assumption \ref*{A:fl} so that equation (\ref{EQ:appro}) in Appendix A.2
holds. As remarked at the beginning of last subsection, we do not impose
Assumption \ref*{A;no_hidden} for the multiple-stage procedure, as long as the
marginal effect of the signal is not too weak as imposed in Assumption
\ref*{A;no_hidden}'. This assumption seems inevitable. It is analogous to
Assumption 6 in CKP for the linear regression model and similar to the
so-called `beta-min' condition that is commonly assumed in the penalized
regression literature (see, e.g., Chapter 7.4 of \cite{Buhlmann2011}).
Assumption \ref*{A:full_rank2} is the common rank condition for nonparametric
additive regressions. We also require it to hold for the case when we add one
noise variable. This seems inevitable because we will run the regression with
regressors being all signals and pseudo-signals plus one noise variable in the
multiple-stage procedure with very high probability.\ Assumption
\ref*{A:noisevariable} is also inevitable and implicitly imposed in CKP for
the linear regression models\emph{.}

It is worth mentioning that Assumption \ref*{A:full_rank2} plays a similar
role to the \textquotedblleft restrictive eigenvalues\textquotedblright%
\ condition in \cite{BickelRitovTsyvakov2009} and \cite{BelloniEtal2012}. The
restrictive-eigenvalues\ condition requires certain full rank conditions on
all possible design matrices composed of a certain number of covariates. In
contrast, our OCMT only requires full rank conditions on the design matrices
composed of signals and pseudo-signals, and permits arbitrary correlations
among the noise variables. Obviously, these two sets of conditions are
non-nested. In addition, \cite{HuangEtal2010} impose essentially the same set
of assumptions except the rank conditions discussed here. The main advantage
of the OCMT is that it does not require any numerical min-search of an
objective function, can be computed much faster, and deliver more reliable results.

The following proposition presents the probability bounds for the
\textquotedblleft Type-I\textquotedblright\ and \textquotedblleft
Type-II\textquotedblright\ errors when we have some pre-selected variables.

\begin{proposition}
\label{TH:main1_ms}Suppose that $Y$ is given by equation (\ref{EQ:model}),
Assumptions \ref*{A:iid}, \ref*{A:p}', \ref*{A:supp}, \ref*{A:epsilon},
\ref*{A:fl}', \ref*{A:tech}, \ref*{A:mn}, \ref*{A:xi_n}, and
\ref*{A:full_rank2} hold, and the pre-selected variables $\boldsymbol{Z}$ are
either signals or pseudo-signals.

(i) If $\theta_{l,\boldsymbol{Z}}\lesssim\log\left(  m_{n}\right)  ^{1/2}%
m_{n}^{1/2}n^{-1/2},$ then%
\begin{align}
\Pr\left(  \mathcal{\hat{X}}_{l,\boldsymbol{Z}}\geq\varsigma_{n}\right)   &
\leq\exp\left(  -C_{1}m_{n}^{-1}\varsigma_{n}+\log m_{n}\right)  +C_{2}%
\exp\left(  -C_{3}n^{C_{4}}\right) \label{EQ:xl_FDR}\\
&  \leq n^{-M}+C_{2}\exp\left(  -C_{3}n^{C_{4}}\right) \nonumber
\end{align}
for any fixed large constant $M>0$ and some positive constants $C_{1}%
,C_{2},C_{3},$ and $C_{4},$ after some large $n.$

(ii) If $\theta_{l,\boldsymbol{Z}}\gtrsim\kappa_{n}\log\left(  m_{n}\right)
^{1/2}m_{n}^{1/2}n^{-1/2}$ with $\kappa_{n}$ specified in Assumption
\ref{A:xi_n}$,$ then%
\[
\Pr\left(  \mathcal{\hat{X}}_{l,\boldsymbol{Z}}\geq\varsigma_{n}\right)
\geq1-n^{-M}-C_{5}\exp\left(  -C_{6}n^{C_{7}}\right)
\]
for any fixed large constant $M>0$ and some positive constants $C_{5},C_{6},$
and $C_{7},$ after some large $n.$
\end{proposition}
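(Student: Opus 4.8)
The plan is to follow the proof of Proposition~\ref{TH:main1} almost line for line, but with the partialled-out pair $\bigl(M_{\mathbb{Z}}\mathbb{X}_{l},\,M_{\mathbb{Z}}\boldsymbol{y}\bigr)$ taking over the roles played by $\bigl(\mathbb{X}_{l},\boldsymbol{y}\bigr)$ in the one-stage case. First I would record the population decomposition of the score $n^{-1}\mathbb{X}_{l}^{\prime}M_{\mathbb{Z}}\boldsymbol{y}$ into three parts: a deterministic \emph{signal} term whose contribution to the quadratic form in (\ref{EQ:XlZ}) is of order $n\theta_{l,\boldsymbol{Z}}^{2}$; a \emph{sieve bias} term, of order $m_{n}^{-d}$ per coordinate and hence asymptotically negligible under Assumption~\ref{A:mn} (the condition $B_{m}>1/(1+2d)$); and a mean-zero \emph{stochastic} term driven by $\boldsymbol{\varepsilon}$ through the residualized errors $\boldsymbol{\tilde{u}}_{l,\boldsymbol{Z}}$. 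The statistic $\mathcal{\hat{X}}_{l,\boldsymbol{Z}}$ is a quadratic form in this score, sandwiched by $\bigl(\hat{\sigma}_{l,\boldsymbol{Z}}^{2}\,\mathbb{X}_{l}^{\prime}M_{\mathbb{Z}}\mathbb{X}_{l}\bigr)^{-1}$, so the whole argument reduces to controlling the three pieces of the score together with the sandwiching matrix and $\hat{\sigma}_{l,\boldsymbol{Z}}^{2}$, each on an event of probability at least $1-n^{-M}-C\exp(-C'n^{C''})$.

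The genuinely new ingredient, absent in Proposition~\ref{TH:main1}, is the control of the annihilator $M_{\mathbb{Z}}$. I would first show that $n^{-1}\mathbb{Z}^{\prime}\mathbb{Z}$ is close to $\Phi_{\boldsymbol{Z}}$ in spectral norm, which by the eigenvalue bounds in Assumption~\ref{A:full_rank2} and the B-spline properties in Lemma~\ref{LE:rank} makes $n^{-1}\mathbb{Z}^{\prime}\mathbb{Z}$ invertible with eigenvalues of order $m_{n}^{-1}$ on a high-probability event; consequently $n^{-1}\mathbb{X}_{l}^{\prime}M_{\mathbb{Z}}\mathbb{X}_{l}$ approximates its population counterpart, whose smallest eigenvalue is bounded below by $c\,m_{n}^{-1}$ by the augmented-design part of Assumption~\ref{A:full_rank2}. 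Because $\boldsymbol{Z}$ consists only of signals and pseudo-signals, $\iota_{n}\le p^{\ast}+p^{\ast\ast}$, so the effective dimension is $\iota_{n}m_{n}\lesssim n^{B_{p^{\ast\ast}}+B_{m}}$; the strengthened rate restriction $B_{p^{\ast\ast}}<(1-3B_{m})/2$ in Assumption~\ref{A:p}$'$ is precisely what makes this dimension small enough relative to $n$ for the Gram-matrix concentration, and the concomitant bound on $\boldsymbol{\tilde{u}}_{l,\boldsymbol{Z}}$ recorded in Lemma~\ref{LE:error_bound}, to hold with the required exponential probability.

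Next I would bound the stochastic part of the score. Using the exponential tail condition on $\varepsilon$ (Assumption~\ref{A:epsilon}), the boundedness of $f_{j}^{\ast}$ and of the conditional second moment of $\varepsilon$ (Assumption~\ref{A:tech}), and a maximal inequality over the $m_{n}$ coordinates, each coordinate of $n^{-1}\mathbb{X}_{l}^{\prime}M_{\mathbb{Z}}\boldsymbol{\varepsilon}$ is of order $(\log m_{n}\cdot m_{n}/n)^{1/2}$, so the purely stochastic contribution to $\mathcal{\hat{X}}_{l,\boldsymbol{Z}}$ behaves like a chi-square form on $m_{n}$ coordinates and is $O_{P}(m_{n})$ with sub-exponential tails of the form $\exp(-C_{1}m_{n}^{-1}t+\log m_{n})$ at level $t$. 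A separate step shows $\hat{\sigma}_{l,\boldsymbol{Z}}^{2}$ is bounded above and away from zero with very high probability, where Assumption~\ref{A:noisevariable} guarantees that even after partialling out signals and pseudo-signals the residual variance does not collapse. Assembling these on the intersection of the high-probability events: under part~(i) one has $n\theta_{l,\boldsymbol{Z}}^{2}\lesssim\log m_{n}\cdot m_{n}$, so $\mathcal{\hat{X}}_{l,\boldsymbol{Z}}$ is dominated by its stochastic part and the choice $\varsigma_{n}\propto\kappa_{n}\log(m_{n})m_{n}$ with $m_{n}^{-1}\varsigma_{n}\propto\kappa_{n}\log m_{n}$ yields the stated bound $\exp(-C_{1}m_{n}^{-1}\varsigma_{n}+\log m_{n})+C_{2}\exp(-C_{3}n^{C_{4}})\le n^{-M}+C_{2}\exp(-C_{3}n^{C_{4}})$; under part~(ii) the signal term is of order $n\theta_{l,\boldsymbol{Z}}^{2}\gtrsim\kappa_{n}^{2}\log m_{n}\cdot m_{n}$, which exceeds $\varsigma_{n}$ by the divergent factor $\kappa_{n}$, pushing $\mathcal{\hat{X}}_{l,\boldsymbol{Z}}$ above the threshold with probability at least $1-n^{-M}-C_{5}\exp(-C_{6}n^{C_{7}})$.

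The hard part will be the uniform control of $M_{\mathbb{Z}}$ under the growing dimension $\iota_{n}m_{n}$: unlike the one-stage design with a fixed number $m_{n}$ of columns, both $(\mathbb{Z}^{\prime}\mathbb{Z})^{-1}$ and the cross term $\mathbb{X}_{l}^{\prime}M_{\mathbb{Z}}\mathbb{X}_{l}$ must be shown close to their population limits with an error that is exponentially small in an appropriate power of $n$, and this forces careful tracking of how $p^{\ast\ast}$, $m_{n}$, and $n$ interact. This is exactly where Assumption~\ref{A:p}$'$ and the augmented rank condition in Assumption~\ref{A:full_rank2} are indispensable, and where Lemma~\ref{LE:error_bound} absorbs the bulk of the bookkeeping; once the projection is controlled, the remaining estimates are essentially those already established for Proposition~\ref{TH:main1}.
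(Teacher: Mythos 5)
Your plan follows essentially the same route as the paper's own proof: the statistic is written as a quadratic form in the score $n\boldsymbol{\eta}_{l,\boldsymbol{Z}}+\boldsymbol{\tilde{u}}_{l,\boldsymbol{Z}}$ (the paper's equations (\ref{EQ:chi_moreS})--(\ref{EQ:utidelz})), sandwiched by eigenvalue bounds on $\bigl(\hat{\sigma}_{l,\boldsymbol{Z}}^{2}n^{-1}\mathbb{X}_{l}^{\prime}M_{\mathbb{Z}}\mathbb{X}_{l}\bigr)^{-1}$ obtained from Gram-matrix concentration for the augmented design (Lemmas \ref{LE:XXQQ} and \ref{LE:lambdamax}), with the stochastic score controlled by Lemma \ref{LE:error_bound}, $\hat{\sigma}_{l,\boldsymbol{Z}}^{2}$ by Lemma \ref{LE:error_var_Z}, and the final assembly done by the triangle inequality exactly as you describe; the role you assign to Assumption \ref*{A:p}$'$ and the augmented rank condition also matches the paper.

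Three local points would need fixing in a careful write-up, though none breaks the argument. First, you invoke Assumption \ref*{A:noisevariable} to keep the residual variance away from zero, but that assumption is not among the hypotheses of Proposition \ref{TH:main1_ms} and is not needed: writing $U=g(\text{covariates})+\varepsilon$ with $E[P^{m_{n}}(\cdot)\varepsilon]=0$ under Assumption \ref*{A:iid} gives $\sigma^{2}=E(U^{2})\geq E(\varepsilon^{2})>0$, and Lemma \ref{LE:error_var_Z} concentrates $\hat{\sigma}_{l,\boldsymbol{Z}}^{2}$ around $\sigma^{2}$ using only the listed assumptions. Second, your per-coordinate rate for the raw score is off by a factor of $m_{n}$: since $E[\phi_{j}(X_{l})^{2}]\propto m_{n}^{-1}$ (Lemma \ref{LE:rank}), each coordinate of $n^{-1}\mathbb{X}_{l}^{\prime}M_{\mathbb{Z}}\boldsymbol{\varepsilon}$ is of order $\bigl(\log m_{n}/(m_{n}n)\bigr)^{1/2}$, not $\bigl(\log m_{n}\cdot m_{n}/n\bigr)^{1/2}$ (the latter is the rate of the corresponding \emph{coefficient} coordinates); your ensuing claim that the stochastic quadratic form is $O_{P}(m_{n})$ with tails $\exp\bigl(-Cm_{n}^{-1}t+\log m_{n}\bigr)$ is the correct one and is what Lemma \ref{LE:bound_error} delivers. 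Third, in part (i) the statistic is not ``dominated by its stochastic part'': the signal contribution can be as large as $n\theta_{l,\boldsymbol{Z}}^{2}\lesssim\log(m_{n})m_{n}$, which exceeds the $O_{P}(m_{n})$ stochastic part; what the proof actually needs (and what the paper uses, via $\left\Vert n^{1/2}\boldsymbol{\eta}_{l,\boldsymbol{Z}}\right\Vert \ll m_{n}^{-1/2}\varsigma_{n}^{1/2}$) is that \emph{both} contributions fall below $\varsigma_{n}\propto\kappa_{n}\log(m_{n})m_{n}$ by the divergent factor $\kappa_{n}$, after which the triangle-inequality bound goes through.
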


The proof of Proposition \ref{TH:main1_ms} is rather tedious. We provide some
technical discussions on the proof in Appendix \ref{SEC:multi_tech} before we
formally prove it in Appendix \ref{APP:real_main_proof}.

Like Proposition \ref{TH:main1}, Proposition \ref{TH:main1_ms} implies that
for the well-chosen threshold value $\varsigma_{n},$ the use of the test
statistic $\mathcal{\hat{X}}_{l,\boldsymbol{Z}}$ helps to separate variables
with large value of $\theta_{l,\boldsymbol{Z}}$ from those with small value of
$\theta_{l,\boldsymbol{Z}}.$ This observation will be used in our
multiple-stage procedure to select all signal variables.

\subsection{The Multiple-Stage Procedure\label{SEC:multi_procedure}}

We present the multiple-stage procedure as follows.

We conduct the first-stage selection as in Section \ref{SEC:1stage} by
constructing the test statistic $\mathcal{\hat{X}}_{l}$ as in equation
(\ref{EQ:chil_definition}) and using the threshold value $\varsigma_{n}$ that
satisfies the condition in Assumption \ref*{A:xi_n}. We re-label the selection
indicator in equation (\ref{EQ:select1stage}) as%
\[
\widehat{\mathcal{J}}_{l,\left(  1\right)  }=\mathbf{1}\left(  \mathcal{\hat
{X}}_{l}>\varsigma_{n}\right)  \text{ for }l=1,2,\ldots,p_{n}.
\]
We collect all the variables selected in stage 1 into the vector
$\boldsymbol{Z}_{\left(  1\right)  }$, and denote the index set of the
selected variables by $S_{\left(  1\right)  }.$ For the second stage, we
denote the index set of the active variables in stage 2 by $\Psi_{\left(
2\right)  }$ where $\Psi_{\left(  2\right)  }=\left\{  1,2,\ldots
,p_{n}\right\}  \backslash S_{\left(  1\right)  }$. In the second stage, we
regress $Y$ on $P^{m_{n}}\left(  X_{l}\right)  $ with $P^{m_{n}}\left(
\boldsymbol{Z}_{\left(  1\right)  }\right)  $ as pre-selected variables one by
one for $l\in\Psi_{\left(  2\right)  }.$ We construct the test statistic
$\mathcal{\hat{X}}_{l,\boldsymbol{Z}_{\left(  1\right)  }}$ as in equation
(\ref{EQ:XlZ}). We select the variable $X_{l}$ if $\widehat{\mathcal{J}%
}_{l,\left(  2\right)  }=1,$ where
\[
\widehat{\mathcal{J}}_{l,\left(  2\right)  }=\mathbf{1}\left(  \mathcal{\hat
{X}}_{l,\boldsymbol{Z}_{\left(  1\right)  }}>\varsigma_{n}\right)  \text{ for
}l\in\Psi_{\left(  2\right)  }.
\]
We add all the variables selected in stage 2 into the set of variables
selected in stage 1 as a new vector, and we denote it as $\boldsymbol{Z}%
_{\left(  2\right)  }.$ We denote the index set of the selected variables
($\boldsymbol{Z}_{\left(  2\right)  }$) by $S_{\left(  2\right)  }$ and the
index set of the active variables for stage 3 as $\Psi_{\left(  3\right)  },$
where $\Psi_{\left(  3\right)  }=\left\{  1,2,\ldots,p_{n}\right\}  \backslash
S_{\left(  2\right)  }.$ And so on and so forth. For stage $k,$ we denote the
pre-selected variables as $\boldsymbol{Z}_{\left(  k-1\right)  },$ and the
index set of the active variables as $\Psi_{\left(  k\right)  }.$ Then we
regress $Y$ on $P^{m_{n}}\left(  X_{l}\right)  $ with $P^{m_{n}}\left(
\boldsymbol{Z}_{\left(  k-1\right)  }\right)  $ as pre-selected variables one
by one for $l\in\Psi_{\left(  k\right)  }.$ We construct the test statistic
$\mathcal{\hat{X}}_{l,\boldsymbol{Z}_{\left(  k-1\right)  }}$ as in equation
(\ref{EQ:XlZ}). We select the variable $X_{l}$ if $\widehat{\mathcal{J}%
}_{l,\left(  k\right)  }=1,$ where%
\[
\widehat{\mathcal{J}}_{l,\left(  k\right)  }=\mathbf{1}\left(  \mathcal{\hat
{X}}_{l,\boldsymbol{Z}_{\left(  k-1\right)  }}>\varsigma_{n}\right)  \text{
for }l\in\Psi_{\left(  k\right)  }.
\]
We add all the variables selected in stage $k$ into the set of variables
selected in stage $k-1$ as a new vector, and we denote it as $\boldsymbol{Z}%
_{\left(  k\right)  }.$ We stop the procedure at a stage in which no new
variables are selected. We denote the stage, in which one or more variables
are selected but no new variables are selected after that, as $\hat{k}_{s}$.
So the OCMT procedures stops after stage $\hat{k}_{s}.$ The selection
indicator for variable $X_{l}$ of the OCMT procedure is defined as follows
\begin{equation}
\widehat{\mathcal{J}}_{l}=\sum_{k=1}^{\hat{k}_{s}}\widehat{\mathcal{J}%
}_{l,\left(  k\right)  }. \label{EQ:updatedJ}%
\end{equation}
By construction, $\widehat{\mathcal{J}}_{l}$ is either 1 or 0. It takes value
$1$ if $X_{l}$ is selected in the OCMT\ procedure and 0 otherwise.

The following theorem mainly studies the asymptotic properties of the
multiple-stage procedure in terms of TPR, FPR\ and FDR.

\begin{theorem}
\label{TH:stoppingFDR}Suppose\ that Assumptions \ref*{A:iid}, \ref*{A:p}',
\ref*{A:supp}, \ref*{A:epsilon}, \ref*{A:fl}', \ref*{A:tech}, \ref*{A:mn},
\ref*{A:xi_n}, \ref*{A;no_hidden}', \ref*{A:full_rank2}, and
\ref*{A:noisevariable} hold. Then after some large $n,$

(i) $\Pr\left(  \hat{k}_{s}>p^{\ast}\right)  \leq n^{-M_{6}}+C_{19}\exp\left(
-C_{20}n^{C_{21}}\right)  $ for some fixed large positive number $M_{6}$ and
some positive constants $C_{19},C_{20},$ and $C_{21};$

(ii) $E\left(  \text{\emph{TPR}}_{n}\right)  \geq1-C_{1}\exp\left(
-C_{2}n^{C_{3}}\right)  $ for some positive constants $C_{1},C_{2}$, and
$C_{3};$

(iii) $E\left(  \text{\emph{FPR}}_{n}\right)  \leq p^{\ast\ast}/\left(
p_{n}-p^{\ast}\right)  +C_{4}n^{-M}$ for some positive $C_{4}$ and any fixed
positive large constant $M;$

(iv) \emph{FDR}$_{n}\overset{P}{\rightarrow}0.$
\end{theorem}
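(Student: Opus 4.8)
The plan is to reduce all four claims to the behaviour of the procedure on a single high-probability ``master'' event $E_n$, on which the conclusions of Proposition \ref{TH:main1_ms} hold \emph{simultaneously} for every candidate $X_l$ and every admissible pre-selection set. Concretely, I would let $E_n$ be the intersection of the finitely (indeed polynomially) many concentration events underlying Proposition \ref{TH:main1_ms} and the technical lemmas (Lemmas \ref{LE:rank}, \ref{LE:error_bound}, \ref{LE:hidden}): that the sample Gram matrix $n^{-1}\mathbb{X}^{\prime}\mathbb{X}$ of the full block of signals and pseudo-signals, and of that block augmented by any single noise regressor, is close to its population counterpart, and that the relevant error cross-products concentrate. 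Because Assumption \ref{A:full_rank2} controls the eigenvalues of the \emph{full} signal-plus-pseudo-signal design, Cauchy eigenvalue interlacing hands me the same two-sided bound of order $m_n^{-1}$ for \emph{every} principal sub-block, i.e.\ for every pre-selection set $\boldsymbol{Z}$ consisting of signals and pseudo-signals. Hence on $E_n$ the statistic $\mathcal{\hat{X}}_{l,\boldsymbol{Z}}$ is a deterministic function of uniformly pinned-down quantities, and the dichotomy of Proposition \ref{TH:main1_ms} holds for all such $\boldsymbol{Z}$ at once. A union bound over the polynomially many concentration events then gives $\Pr(E_n^c)\le n^{-M}+C\exp(-Cn^{C^{\prime}})$.

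The core is a deterministic induction over the stages run on $E_n$. First, the selected set never contains a noise variable: if $\boldsymbol{Z}_{(k-1)}$ contains only signals and pseudo-signals, Assumption \ref{A:noisevariable} makes $\theta_{l,\boldsymbol{Z}_{(k-1)}}$ small for every noise $X_l$, so part (i) of Proposition \ref{TH:main1_ms} (applied deterministically on $E_n$) forces $\mathcal{\hat{X}}_{l,\boldsymbol{Z}_{(k-1)}}<\varsigma_n$ and $X_l$ is rejected, keeping $\boldsymbol{Z}_{(k)}$ within the signal-plus-pseudo-signal class. Second, whenever $\boldsymbol{Z}_{(k-1)}$ still omits a signal, Lemma \ref{LE:hidden} (using Assumption \ref{A;no_hidden}', that the joint effects are non-negligible) produces an omitted signal $X_l$ with $\theta_{l,\boldsymbol{Z}_{(k-1)}}$ above the detection threshold, which part (ii) of Proposition \ref{TH:main1_ms} selects; so the signal count strictly increases each stage and all $p^{\ast}$ signals are in within $p^{\ast}$ stages. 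Once all signals are controlled, every remaining candidate is either noise (weak by Assumption \ref{A:noisevariable}) or an unselected pseudo-signal, whose impact $\theta_{l,\boldsymbol{Z}}$ is negligible because $f_l^{\ast}\equiv0$ makes the residual after projecting out the signals equal to $\varepsilon$ up to an approximation bias of order $m_n^{-d}$, which Assumption \ref{A:mn} renders smaller than $\log(m_n)^{1/2}(m_n/n)^{1/2}$; hence nothing further is selected and the procedure stops. Thus $\hat{k}_s\le p^{\ast}$ on $E_n$, giving (i) via $\Pr(\hat{k}_s>p^{\ast})\le\Pr(E_n^c)$.

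The remaining parts follow on $E_n$. For (ii), every signal is eventually selected, so $\mathrm{TPR}_n=1$ there; since a signal whose conditional impact exceeds $\varsigma_n$ by a diverging factor is missed only through a large deviation of $\mathcal{\hat{X}}_{l,\boldsymbol{Z}}$ below $\varsigma_n$, the relevant sub-event fails with probability only $C\exp(-Cn^{C^{\prime}})$ (the polynomial slack in Proposition \ref{TH:main1_ms}(ii) is attached to the null side and is not binding here), yielding $E(\mathrm{TPR}_n)\ge1-C_1\exp(-C_2n^{C_3})$. For (iii) I would write $E(\mathrm{FPR}_n)=(p_n-p^{\ast})^{-1}\sum_{l:\,\{E[f_l^{\ast}(X_l)^2]\}^{1/2}=0}\Pr(\widehat{\mathcal{J}}_l=1)$, bound the $p^{\ast\ast}$ pseudo-signal terms trivially by $1$, and bound each noise term by $\Pr(E_n^c)\le n^{-M}+C\exp(-Cn^{C^{\prime}})$; dividing by $p_n-p^{\ast}\propto n^{B_{p_n}}$ absorbs the exponential into the polynomial, giving $p^{\ast\ast}/(p_n-p^{\ast})+C_4n^{-M}$. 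For (iv), the numerator of $\mathrm{FDR}_n$ counts only selected noise variables (case (IV) of Table \ref{Table2}), which is $0$ on $E_n$; as $\mathrm{FDR}_n\in[0,1]$, $\Pr(\mathrm{FDR}_n>\epsilon)\le\Pr(E_n^c)\to0$, so $\mathrm{FDR}_n\overset{P}{\rightarrow}0$.

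The main obstacle is the uniformity over pre-selection sets: the realised sequence $\boldsymbol{Z}_{(1)},\boldsymbol{Z}_{(2)},\dots$ is random, and a naive stage-by-stage union bound would range over the $2^{p^{\ast}+p^{\ast\ast}}$ possible signal-plus-pseudo-signal subsets, which is super-polynomial once $p^{\ast\ast}$ grows and would destroy the $n^{-M}$ bounds. The resolution, and the step I expect to demand the most care, is exactly the construction of $E_n$ from only polynomially many concentration statements together with the \emph{full}-design eigenvalue bound of Assumption \ref{A:full_rank2}: interlacing transfers control to every sub-block for free, so the dichotomy of Proposition \ref{TH:main1_ms} becomes a deterministic consequence of $E_n$, valid uniformly over all admissible $\boldsymbol{Z}$ without any exponential price. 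A secondary subtlety is verifying that unselected pseudo-signals genuinely become negligible once all signals are controlled (the $m_n^{-d}$ bias argument above), which is what forces the procedure to halt at stage $p^{\ast}+1$ rather than keep absorbing pseudo-signals.
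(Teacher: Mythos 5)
Your overall architecture (induct over stages on a single good event: no noise ever enters, at least one new signal enters per stage, stop by stage $p^{\ast}$) mirrors the logic of the paper, which formalizes the same steps through the events $\mathcal{B}_{l,k},\mathcal{D}_k,\mathcal{H}_k,\mathcal{A}_k,\mathcal{T}_k$ of Table \ref{Table7}, Lemma \ref{LE:allsignals}, and Lemma \ref{LE:hidden}. The difference is how the randomness of the pre-selection sets is handled, and that is where your proposal has a genuine gap.

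Your central claim is that a master event $E_n$ built from polynomially many concentration statements, together with Assumption \ref{A:full_rank2} and Cauchy eigenvalue interlacing, makes the dichotomy of Proposition \ref{TH:main1_ms} hold \emph{deterministically and uniformly} over all admissible $\boldsymbol{Z}$. Interlacing does transfer the two-sided $m_n^{-1}$ eigenvalue bounds from the full signal-plus-pseudo-signal Gram matrix to every principal sub-block, so the Gram-matrix part of the argument is fine. But the statistic $\mathcal{\hat{X}}_{l,\boldsymbol{Z}}$ is not a function of Gram matrices alone: its dominant stochastic term is $\boldsymbol{\tilde{u}}_{l,\boldsymbol{Z}}$ in equation (\ref{EQ:utidelz}), built from the projection residuals $\boldsymbol{u}_{X_l,\boldsymbol{Z}}$ and $\boldsymbol{u}_{Y,\boldsymbol{Z}}$, and the concentration bounds for these (Lemma \ref{LE:error_bound}) are proved separately for each fixed pair $(l,\boldsymbol{Z})$ by Bernstein-type arguments, with tails of the form $n^{-M}+C\exp(-Cn^{C'})$. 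These events are inherently $(l,\boldsymbol{Z})$-indexed; interlacing says nothing about them. Since Assumption \ref*{A:p}' allows $p^{\ast\ast}\propto n^{B_{p^{\ast\ast}}}$ with $B_{p^{\ast\ast}}>0$, the family of admissible pre-selection sets has cardinality of order $2^{p^{\ast}+p^{\ast\ast}}$, and a union bound over it annihilates the polynomial $n^{-M}$ tails — exactly the obstacle you name, but which your offered resolution does not address. The paper avoids the problem entirely by a stage-by-stage conditional argument: at stage $k$ it bounds $\Pr\left(\cup_{l}\mathcal{B}_{l,k}\,|\,\mathcal{D}_{k-1}\right)$ using Proposition \ref{TH:main1_ms} and Assumption \ref*{A:noisevariable}, so the union is only over the $p_n$ candidates per stage and over at most $p^{\ast}$ (fixed) stages, a polynomial cost (Lemma \ref{LE:allsignals} and equations (\ref{EQ:aphp})--(\ref{EQ:tpcf})). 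Alternatively, your master-event strategy could be repaired by observing that on the good event every pseudo-signal is selected already at stage 1 (their marginal effects are strong by the Table \ref{Table2} definition, via Proposition \ref{TH:main1}(ii)), so the only pre-selection sets the procedure can ever realize are of the form $\{\text{all pseudo-signals}\}\cup S$ with $S$ a subset of the $p^{\ast}$ signals — at most $2^{p^{\ast}}=O(1)$ sets, hence polynomially many $(l,\boldsymbol{Z})$ events to intersect. That enumeration, not interlacing, is what would rescue the uniformity; it is absent from your proposal.

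A secondary point: in part (ii) you assert the purely exponential bound by claiming the polynomial slack in Proposition \ref{TH:main1_ms}(ii) ``is attached to the null side and is not binding,'' but as stated that proposition's detection guarantee is $1-n^{-M}-C_5\exp(-C_6n^{C_7})$, i.e.\ it carries an $n^{-M}$ term on the signal side as well; dropping it requires an additional argument (e.g.\ rerunning Lemma \ref{LE:error_bound} at the inflated deviation level $\kappa_n^{1/2}m_n^{-1/2}\varsigma_n^{1/2}$), which you do not supply.
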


Theorem \ref{TH:stoppingFDR}(i) implies the OCMT procedure can terminate at
step $p^{\ast}$ with very high probability. Theorem \ref{TH:stoppingFDR}%
(ii)-(iv) implies that all of TPR$_{n},$ FPR$_{n}$ and FDR$_{n}$ can be well
controlled. Of course, when $n\rightarrow\infty,$ we need to conduct at most
$p^{\ast}$-stage procedure to determine all signals and eliminate all noise
variables, with very high probability.

\subsection{Dealing with Pseudo-signals\label{SEC:Pseudo}}

Because of the nature of the OCMT procedure, pseudo-signals cannot be excluded
from the selection list with high probability. To eliminate pseudo-signals, we
propose to employ the adaptive group Lasso after the OCMT. Then by the
properties of the adaptive group Lasso, the event that the pseudo-signals are
excluded and the signals are kept occurs with probability approaching one
(w.p.a.1). We present this result in Theorem \ref{TH:AGLasso} below. We denote
the set of the variables selected by the OCMT procedure as $\hat
{S}_{\text{OCMT}}\equiv S_{\left(  \hat{k}_{s}\right)  },$ collect them into a
vector $\boldsymbol{Z}_{\text{OCMT}},$ and denote its dimension as $\hat
{p}_{\text{OCMT}}$. Similarly, let $\mathbb{Z}_{\text{OCMT}}$ ($n\times\hat
{p}_{\text{OCMT}}m_{n}$) denote the design matrix for $\boldsymbol{Z}%
_{\text{OCMT}}$ as in equation (\ref{EQ:design}). The post-OCMT adaptive group
Lasso procedure goes as follows:

\begin{enumerate}
\item Obtain the group Lasso estimator by searching $\boldsymbol{\beta}_{n}$
($\hat{p}_{\text{OCMT}}m_{n}\times1$)\ to minimize
\[
L_{n1}\left(  \boldsymbol{\beta}_{n},\lambda_{n1}\right)  =\left\Vert
\boldsymbol{y}-\mathbb{Z}_{\text{OCMT}}\boldsymbol{\beta}_{n}\right\Vert
^{2}+\lambda_{n1}\sum_{j=1}^{\hat{p}_{\text{OCMT}}}\left\Vert
\boldsymbol{\beta}_{nj}\right\Vert ,
\]
where $\lambda_{n1}$ is a positive tuning parameter, $\boldsymbol{\beta}%
_{n}=(\boldsymbol{\beta}_{n1}^{\prime},...,\boldsymbol{\beta}_{n\hat
{p}_{\text{OCMT}}}^{\prime})^{\prime},$ $\boldsymbol{\beta}_{nj}$ is a
$m_{n}\times1$ vector of coefficients of the B-spline basis for the $j$-th
element in $\boldsymbol{Z}_{\text{OCMT}}.$ Denote the above estimator as
$\boldsymbol{\tilde{\beta}}_{n}.$

\item The adaptive group Lasso estimator is obtained by searching
$\boldsymbol{\beta}_{n}$ to minimize%
\[
L_{n2}\left(  \boldsymbol{\beta}_{n},\lambda_{n2}\right)  =\left\Vert
\boldsymbol{y}-\mathbb{Z}_{\text{OCMT}}\boldsymbol{\beta}_{n}\right\Vert
^{2}+\lambda_{n2}\sum_{j=1}^{\hat{p}_{\text{OCMT}}}\frac{1}{\left\Vert
\boldsymbol{\tilde{\beta}}_{nj}\right\Vert }\left\Vert \boldsymbol{\beta}%
_{nj}\right\Vert ,
\]
where we use the convention that $0/0=0$. Denote the above estimator as
$\boldsymbol{\hat{\beta}}_{n}.$
\end{enumerate}

The post-selection estimation proceeds as follows. Denote the selected
regressor from the above procedure as $\boldsymbol{Z}_{\text{AGLASSO}},$ and
similarly denote its B-spline basis and design matrix as $P^{m_{n}}\left(
\boldsymbol{Z}_{\text{AGLASSO}}\right)  $ and $\mathbb{Z}_{\text{AGLASSO}},$
respectively$.$ The post selection estimator is the OLS estimator of
regressing $\boldsymbol{y}$ on $\mathbb{Z}_{\text{AGLASSO}}$, which is%
\[
\boldsymbol{\hat{\beta}}_{\text{post}}=\left(  \mathbb{Z}_{\text{AGLASSO}%
}^{\prime}\mathbb{Z}_{\text{AGLASSO}}\right)  ^{-1}\mathbb{Z}_{\text{AGLASSO}%
}^{\prime}\boldsymbol{y.}%
\]
The final fitted model is%
\[
P^{m_{n}}\left(  \boldsymbol{Z}_{\text{AGLASSO}}\right)  ^{\prime
}\boldsymbol{\hat{\beta}}_{\text{post}}.
\]

\begin{theorem}
\label{TH:AGLasso} Suppose\ that Assumptions \ref*{A:iid}, \ref*{A:p}',
\ref*{A:supp}, \ref*{A:epsilon}, \ref*{A:fl}', \ref*{A:tech}, \ref*{A:mn},
\ref*{A:xi_n}, \ref*{A;no_hidden}', \ref*{A:full_rank2}, and
\ref*{A:noisevariable} hold. Further, $\lambda_{n1}\geq C\sqrt{n\log\left(
p^{\ast\ast}m_{n}\right)  }$ for a sufficient large $C$, $\lambda_{n1}\ll
\sqrt{n/m_{n}},$ and $m_{n}^{1/2}\log\left(  p^{\ast\ast}m_{n}\right)
\ll\lambda_{n2}\ll nm_{n}^{-1/4}.$ Then

(i) All signal variables are kept and all pseudo signals or noise variables
are eliminated w.p.a.1;

(ii) The post OCMT estimation error satisfies
\[
P^{m_{n}}\left(  \boldsymbol{Z}_{\text{AGLASSO}}\right)  ^{\prime
}\boldsymbol{\hat{\beta}}_{\text{post}}-%
{\displaystyle\sum_{j=1}^{p^{\ast}}}
f_{j}^{\ast}\left(  X_{j}\right)  =O_{P}\left(  \left(  m_{n}/n\right)
^{1/2}\right)  .
\]

\end{theorem}

The above theorem is almost the same as that in \cite{HuangEtal2010} including
the requirements on $\lambda_{n1}$ and $\lambda_{n2}$, with the exception that
the procedure starts with the covariates post OCMT. Consequently, we only need
to show that the adaptive group Lasso procedure is still valid in the
post-OCMT situation, and the rest follows immediately from
\cite{HuangEtal2010}. In particular, under Assumption \ref*{A:mn}, the biases
of the post OCMT estimators of the nonparametric additive components are
asymptotically negligible so that their mean square errors (MSEs) are
dominated by their asymptotic variances that are of order $O\left(
m_{n}/n\right)  ,$ which explains the result in Theorem \ref{TH:AGLasso}(ii).

Our procedure enjoys the same theoretical property as the adaptive group
Lasso. After the OCMT, the dimension of candidate variables is reduced
dramatically. Thus the additional computation burden applying the post-OCMT
adaptive group Lasso can be almost ignored. We note that the main advantage of
our procedure is fast and reliable computation, which delivers better small
sample performance as shown from our simulation studies. An implication of the
above theorem is that the post-OCMT adaptive group procedure improves over the
post-selection estimation results in CKP (Theorem 2 in CKP) because
pseudo-signals are now eliminated with very high probability.

\subsection{Diverging $p^{\ast}$\label{SEC:divergeP}}

Allowing a diverging $p^{\ast}$ (number of true signals) is possible. The only
additional technical condition apart from the restriction on the speed of
$p^{\ast}$ is that $\sum_{j=1}^{p^{\ast}}f_{j}^{\ast}\left(  X_{j}\right)  $
is uniformly bounded. Note that this condition naturally holds for a fixed
$p^{\ast}$ due to the boundedness of $f_{j}^{\ast}$. The main reason for the
requirement of this condition is technical: the uniform boundedness of
$\sum_{j=1}^{p^{\ast}}f_{j}^{\ast}\left(  X_{j}\right)  $ ensures that $U_{l}%
$\ defined in equation (\ref{EQ:betanl}) also satisfies the exponential
decayed tail condition,\footnote{For details, see the proof of Lemma
\ref{LE:error_var}.} and the tail property is necessary to apply the main
inequalities to obtain the probability bounds. The uniform boundedness of
$\sum_{j=1}^{p^{\ast}}f_{j}^{\ast}\left(  X_{j}\right)  $ was also imposed in
\cite{FanFengSong}$.$ Propositions \ref{TH:main1} and \ref{TH:main1_ms} are on
individual $X_{l},$\ but we do need Assumption \ref*{A:p}\textquotedblright%
\ on $p^{\ast}$ so that Proposition \ref{TH:main1_ms} holds. It is to ensure
that we can have precise estimation on a diverging design matrix.

\noindent\textbf{Assumption} \textbf{\ref*{A:p}\textquotedblright} $p^{\ast
}\lesssim n^{B_{p^{\ast}}}$ for some $B_{p^{\ast}}\geq0.$\ $p^{\ast\ast
}\lesssim n^{B_{p^{\ast\ast}}}\ $and $p_{n}\propto n^{B_{p}}$ for some
$B_{p}>B_{p^{\ast\ast}},B_{p^{\ast}}\geq0.$ Further, $B_{p^{\ast}}%
+B_{p^{\ast\ast}}<\left(  1-3B_{m}\right)  /2.$

We present the main results in the following theorem.

\begin{theorem}
\label{TH:stoppingFDR2}Suppose\ that Assumptions \ref*{A:iid}, \ref*{A:p}%
\textquotedblright, \ref*{A:supp}, \ref*{A:epsilon}, \ref*{A:fl}',
\ref*{A:tech}, \ref*{A:mn}, \ref*{A:xi_n}, \ref*{A;no_hidden}',
\ref*{A:full_rank2}, and \ref*{A:noisevariable} hold. In addition, assume that
$\sum_{j=1}^{p^{\ast}}f_{j}^{\ast}\left(  X_{j}\right)  $ is uniformly
bounded. Then, the results in Theorem \ref{TH:stoppingFDR} continue to hold.
\end{theorem}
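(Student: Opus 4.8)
The plan is to track exactly which steps in the proofs of Propositions~\ref{TH:main1} and~\ref{TH:main1_ms} and of Theorem~\ref{TH:stoppingFDR} used the fixedness of $p^{\ast}$, and to check that each survives under the strengthened Assumption~\ref*{A:p}'' together with the new requirement that $\sum_{j=1}^{p^{\ast}}f_{j}^{\ast}(X_{j})$ be uniformly bounded. Since Propositions~\ref{TH:main1} and~\ref{TH:main1_ms} are statements about a single $X_{l}$, I first re-examine the technical lemmas behind them. The one place where fixed $p^{\ast}$ entered was the tail behavior of the pseudo-error $U_{l}$ in~\eqref{EQ:betanl}: with $Y=\sum_{j=1}^{p^{\ast}}f_{j}^{\ast}(X_{j})+\varepsilon$, the earlier argument used that $\sum_{j}f_{j}^{\ast}(X_{j})$ is bounded---automatic when $p^{\ast}$ is fixed and each $f_{j}^{\ast}$ is bounded by Assumption~\ref*{A:tech}---to conclude that $U_{l}$ inherits the exponential tail of $\varepsilon$ from Assumption~\ref*{A:epsilon}. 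Imposing uniform boundedness of $\sum_{j=1}^{p^{\ast}}f_{j}^{\ast}(X_{j})$ directly restores this property, so Lemma~\ref{LE:error_var} and the Bernstein-type bounds built on it go through verbatim, and Propositions~\ref{TH:main1} and~\ref{TH:main1_ms} hold for each individual $l$ exactly as before.

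Second, I verify the design-matrix arguments behind Proposition~\ref{TH:main1_ms}, which is where the growth of $p^{\ast}$ genuinely matters. In the multiple-stage procedure the pre-selected block $\mathbb{Z}$ may contain all $p^{\ast}+p^{\ast\ast}$ signals and pseudo-signals, so its Gram matrix is of size $(p^{\ast}+p^{\ast\ast})m_{n}$. The crux is that $n^{-1}\mathbb{Z}^{\prime}\mathbb{Z}$ concentrates in spectral norm around $\Phi_{\boldsymbol{Z}}$, whose eigenvalues are $\propto m_{n}^{-1}$ by the augmented version of Assumption~\ref*{A:full_rank2}. The deviation is negligible precisely when the dimension-inflated analog of the single-covariate error term $m_{n}^{3/2}n^{-1/2}$---which is $o(1)$ under Assumption~\ref*{A:mn}---remains $o(1)$ after accounting for the $(p^{\ast}+p^{\ast\ast})$ groups, and this is exactly what the strengthened bound $B_{p^{\ast}}+B_{p^{\ast\ast}}<(1-3B_{m})/2$ in Assumption~\ref*{A:p}'' delivers. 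With this, the sample minimal eigenvalue of $n^{-1}\mathbb{Z}^{\prime}\mathbb{Z}$ stays $\propto m_{n}^{-1}$ with very high probability, the projection $M_{\mathbb{Z}}$ is well behaved, and the analyses of $\boldsymbol{y}^{\prime}M_{\mathbb{Z}}\mathbb{X}_{l}$ and $\hat{\sigma}_{l,\boldsymbol{Z}}^{2}$ that produce parts (i)--(ii) carry over.

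Third, I aggregate the individual bounds to recover parts (i)--(iv) of Theorem~\ref{TH:stoppingFDR}, following that proof line by line but tracking the now-divergent multiplicities in the union bounds. The stopping-time bound unions over at most $p^{\ast}$ stages and $p_{n}$ tests per stage; the true- and false-positive statements union over $p^{\ast}$ and $p_{n}-p^{\ast}$ tests, respectively. In each case the extra polynomial factor is at most $p^{\ast}p_{n}\lesssim n^{B_{p^{\ast}}+B_{p_{n}}}$, which is absorbed by the per-test error $n^{-M}$ upon taking the free constant $M$ large enough, so every assertion is preserved. Likewise, the detectability of hidden signals through Lemma~\ref{LE:hidden} rests on the full-rank condition of Assumption~\ref*{A:full_rank2}, now applied to a growing but still dimension-controlled set, so the argument that at least one remaining signal is picked up at each stage is unaffected.

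I expect the main obstacle to be the second step: showing that the high-dimensional pre-selected Gram matrix $n^{-1}\mathbb{Z}^{\prime}\mathbb{Z}$ remains uniformly well-conditioned. Unlike the single-covariate case, the number of columns now diverges, so one must invoke a matrix concentration inequality---exploiting the local-support structure of the B-spline basis within each group---and check that the resulting spectral deviation is $o(m_{n}^{-1})$ uniformly over every stage-wise design matrix that can arise. It is precisely for this control that the dimension restriction in Assumption~\ref*{A:p}'' is imposed, and carrying it out cleanly across all admissible pre-selection sets is the delicate part of the argument.
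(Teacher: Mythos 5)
Your proposal is correct and follows essentially the same route as the paper's own proof: use the newly assumed uniform boundedness of $\sum_{j=1}^{p^{\ast}}f_{j}^{\ast}(X_{j})$ to restore the exponential tail of $U_{l}$ (hence all technical lemmas, including the Gram-matrix concentration in Lemma~\ref{LE:XXQQ} and the error bounds in Lemma~\ref{LE:error_bound}, whose $\iota_{n}$-dependence is controlled exactly by Assumption~\ref*{A:p}''), and then observe that the accumulated union-bound error over the now-diverging number of signals is of order $p^{\ast}\left(n^{-M}+\exp(-n^{C})\right)=o(1)$, which is absorbed by taking $M$ large. The paper's argument for the events $\mathcal{H}_{p^{\ast}}$, $\mathcal{D}_{p^{\ast}}$, $\mathcal{A}_{p^{\ast}}\cap\mathcal{H}_{p^{\ast}}$, and $\mathcal{T}_{p^{\ast}}$ is precisely this bookkeeping, so your plan matches it step for step.
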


In the next section, we investigate the small sample performance of our
procedure by means of Monte Carlo experiment.

\section{Monte Carlo Simulations\label{SEC:MC}}

To investigate the finite-sample performance of our procedure, we conduct
Monte Carlo experiments in this section.

\subsection{Simulation Design}

Following \cite{HuangEtal2010}, we consider the following data generating
processes (DGPs). In what follows, we assume
\begin{align*}
f_{1}\left(  x\right)   &  =x;\text{\ }f_{2}\left(  x\right)  =(2x-1)^{2}%
;\text{\ }f_{3}\left(  x\right)  =\frac{\sin\left(  2\pi x\right)  }%
{2-\sin\left(  2\pi x\right)  };\text{ and}\\
f_{4}\left(  x\right)   &  =0.1\sin\left(  2\pi x\right)  +0.2\cos\left(  2\pi
x\right)  +0.3\sin\left(  2\pi x\right)  ^{2}+0.4\cos\left(  2\pi x\right)
^{3}+0.5\sin\left(  2\pi x\right)  ^{3}.
\end{align*}
For the errors, we assume $\varepsilon\sim$ i.i.d. $N\left(  0,1\right)  $ for
DGPs 1--6 and 9--10. In DGPs 7--8, we check the impact of heteroskedastic
errors on our methods. Specifically, we add a heteroskedastic error to the
simplest and the most complicated designs in DGPs 1--6 to form DGPs 7 and 8,
respectively. In DGPs 9--10, we consider the case with additive components of
binary variables that mimic the application in Section \ref{SEC:application}.

\textbf{DGP 1: Four independent signals only}. $Y$ is generated as follows:
\begin{equation}
Y=2.55f_{1}\left(  X_{1}\right)  +2.57f_{2}\left(  X_{2}\right)
+1.68f_{3}\left(  X_{3}\right)  +f_{4}\left(  X_{4}\right)  +\varepsilon.
\label{model.yi_simulation_D1}%
\end{equation}
Note the coefficients before $f_{i}$'s\ are set to make each signal have the
same strength in terms of variance\ for independent uniform $X_{1}%
,\ldots,X_{4}$. The covariates are generated as follows$:$%
\[
X_{j}=W_{j}\text{ for }j=1,\ldots,4,\text{ and }X_{j}=\frac{W_{j}+U_{1}}%
{2}\text{for }j\geq5,
\]
where $W_{j},$ $j=1,\ldots,p_{n}$, and $U_{1}$ are all independent draws from
$U(0,1)$. Thus, $p^{\ast}=4$ and $p^{\ast\ast}=0$ for DGP 1.\ Define the
Signal-to-noise ratio to be $r_{sn}=\frac{\mathtt{sd}\left(  f\right)
}{\mathtt{sd}\left(  \varepsilon\right)  }$, and $r_{sn}=1.5$ for DGP 1.

\textbf{DGP 2: Four independent signals and two pseudo-signals}. $Y$ is
generated from equation (\ref{model.yi_simulation_D1}). The covariates are
generated as follows$:$
\begin{align*}
X_{j}  &  =W_{j}\text{ for }j=1,\ldots,4,\text{ }X_{5}=\frac{4X_{1}+U_{1}}%
{5}\text{, }X_{6}=\frac{4X_{2}+U_{2}}{5},\text{ and }\\
X_{j}  &  =\frac{W_{j-2}+U_{3}}{2}\text{ for\ }j\geq7,
\end{align*}
where $W_{j},$ $j=1,\ldots,p_{n}-2$, $U_{1},$ and $U_{2}$\ are all independent
draws from $U(0,1)$. Thus, $p^{\ast}=4$ and $p^{\ast\ast}=2$ for DGP 2.

\textbf{DGP 3: Four signals, and one hidden signal}. $Y$ is generated from
\begin{equation}
Y=2.55f_{1}\left(  X_{1}\right)  +2.57f_{2}\left(  X_{2}\right)
+1.68f_{3}\left(  X_{3}\right)  +f_{4}\left(  X_{4}\right)  +f_{5}\left(
X_{5}\right)  +\varepsilon, \label{EQ:hidden_model}%
\end{equation}
where $f_{5}\left(  X_{5}\right)  =-\mathbb{E}\left[  2.55f_{1}\left(
X_{1}\right)  +2.57f_{2}\left(  X_{2}\right)  +1.68f_{3}\left(  X_{3}\right)
+f_{4}\left(  X_{4}\right)  |X_{5}\right]  .$ The covariates are generated as
follows $:$%
\begin{align*}
X_{j}  &  =W_{j}\text{ for }j=1,2,\text{ }X_{j}=\frac{W_{j}+U_{1}}{2}\text{
for }j=3,4,\text{ }X_{5}=U_{1}\text{, and}\\
X_{j}  &  =\frac{W_{j-1}+U_{2}}{2}\text{ for }j\geq6,
\end{align*}
where $W_{j},$ $j=1,\ldots,p_{n}-1,$ $U_{1},$ and $U_{2}$\ are independent
draws from $U(0,1)$. Then, $p^{\ast}=5$ and $p^{\ast\ast}=0$ for DGP 3, and
the fifth signal is hidden by our definition.\footnote{By the distribution of
covariates,
\begin{align*}
f_{5}\left(  x\right)   &  \approx0.97\pi-1.2\cos(\pi x)+\sin\left(  \pi
x\right)  +0.6861\pi\arctan[2\left(  \tan(\pi x/2)-1\right)  /\sqrt{3}]\\
&  -0.6861\pi\arctan[2\left(  \tan(\pi x/2+\pi/2)-1\right)  /\sqrt
{3}]+0.2778\cos^{3}(\pi x)-0.2222\sin^{3}(\pi x).
\end{align*}
}

\textbf{DGP 4: Four signals, two pseudo-signals, and one hidden signal}. $Y$
is generated from equation (\ref{EQ:hidden_model}). The covariates are
generated as follows $:$%
\begin{align*}
X_{j}  &  =W_{j}\ \text{for }j=1,2,\text{ }X_{j}=\frac{W_{j}+U_{1}}%
{2}\ \text{for }j=3,4,\text{ }X_{5}=U_{1},\text{\ }X_{6}=\frac{4X_{1}+U_{2}%
}{5}\text{,}\\
\text{ }X_{7}  &  =\frac{4X_{2}+U_{3}}{5},\text{ and }X_{j}=\frac
{W_{j-3}+U_{3}}{2}\ \text{for }j\geq8,
\end{align*}
where $W_{j},$ $j=1,\ldots,p_{n}-3,$ $U_{1},$ $U_{2},$ and $U_{3}$\ are
independent draws from $U(0,1)$. Then, $p^{\ast}=5$ and $p^{\ast\ast}=2$ for
DGP 4, and the fifth signal is a hidden signal.

\textbf{DGP 5: Four correlated signals.} $Y$ is generated from equation
(\ref{model.yi_simulation_D1}). The covariates are generated as follows$:$%
\begin{equation}
X_{j}=\frac{W_{j}+U_{1}}{2}\text{ for\ }j=1,\ldots,4,\text{ and }X_{j}%
=\frac{W_{j}+U_{2}}{2}\ \text{for\ }j\geq5,\nonumber
\end{equation}
where $W_{j},$ $j=1,\ldots,p_{n}$, $U_{1},$ and $U_{2}$\ are independent draws
from $U(0,1)$. Thus, four signals are correlated with each other, and
$p^{\ast}=4$ and $p^{\ast\ast}=0$ for DGP 5.

\textbf{DGP 6: Four signals, many pseudo-signals, and one hidden signal}. $Y$
is generated from equation (\ref{EQ:hidden_model}). The covariates are
generated as follows $:$%
\begin{align*}
X_{j}  &  =W_{j}\text{ for }j=1,2,\text{ }X_{j}=\frac{W_{j}+U_{1}}%
{2}\ \text{for }j=3,4,\text{ }X_{5}=U_{1},\\
X_{j}  &  =\frac{4X_{1}+\left(  j-5\right)  W_{j-1}}{j-1}\text{ for
}j=6,10,14,18,\ldots,\\
X_{j}  &  =\frac{4X_{2}+\left(  j-5\right)  W_{j-1}}{j-1}\text{ for
}j=7,11,15,19,\ldots,\\
X_{j}  &  =\frac{4X_{3}+\left(  j-5\right)  W_{j-1}}{j-1}\text{ for
}j=8,12,16,20,\ldots,\text{ and}\\
X_{j}  &  =\frac{4X_{4}+\left(  j-5\right)  W_{j-1}}{j-1}\text{ for
}j=9,13,17,21,\ldots
\end{align*}
where $W_{j},$ $j=1,\ldots,p_{n}-1,$ and $U_{1}$\ are independent draws from
$U(0,1)$. Then, $p^{\ast}=5$ with one hidden signal for DGP 6.

\textbf{DGP 7: Four independent signals with heteroskedastic errors}. $Y$ is
generated from equation (\ref{model.yi_simulation_D1}) with the same
covariates as in DGP 1. We assume that conditioning on $X,$ $\varepsilon$ is
normal with mean 0 and variance $0.436[1+\left(  X_{1}+X_{2}+X_{3}%
+X_{4}\right)  /4]^{2},$ and the unconditional variance of $\varepsilon$ is
approximately 1.

\textbf{DGP 8: Four signals, many pseudo-signals, and one hidden signal with
heteroskedastic errors}. $Y$ is generated from equation (\ref{EQ:hidden_model}%
) with the same covariates as in DGP 6. We assume that conditioning on $X,$
$\varepsilon$ is normal with mean 0 and variance $0.436[1+\left(  X_{1}%
+X_{2}+X_{3}+X_{4}\right)  /4]^{2}$.

\textbf{DGP 9: Four independent signals with some binary variables. }To mimic
the application, we consider the situation with some binary covariates. Note
that any function of a binary covariate can at most take two values. Without
loss of generality, we focus on the case in which those binary covariates
enter the model linearly. It is easy to see that our theoretical results
continue to hold in the presence of some linear additive components with the
main difference that they do not exhibit any approximation bias.\footnote{In
this case, the test statistics for the additive linear terms are the squares
of t-statistics, and the inequality continues to hold by setting, for example,
$\varsigma_{n}\propto\left[  \log p_{n}\right]  ^{1.1}$ for the case when
$p_{n}$ is a polynomial of $n.$} We set the threshold as $\varsigma
_{n}=C\left[  \log p_{n}\right]  ^{1.1}$. $Y$ is generated from
\[
Y=2.57f_{2}\left(  X_{1}\right)  +1.68f_{3}\left(  X_{2}\right)
+1.47X_{3}+1.47X_{4}+\varepsilon,
\]
where $X_{j}=W_{j},$ $j=1,2,$ $X_{j}=V_{j-2},$ $j=3,4,$ $W_{1}$ and $W_{2}$
are independent $U(0,1),$ and $V_{1}$ and $V_{2}$ are independent Bernoulli
random variables with equal chances of taking value 0 or 1. The remaining
covariates are generated as follows:%
\[
X_{j}=\frac{W_{j-2}+U_{1}}{2}\text{ for }j=5,6,...,\frac{p_{n}}{2},\text{ and
}X_{j}=V_{j-\frac{p_{n}}{2}+2}\text{ for }j=\frac{p_{n}}{2}+1,...,p_{n},
\]
where $W_{j},$ $j=5,6,...,\frac{p_{n}}{2}-2,$ and $U_{1},$ are independent
$U(0,1),$ $V_{j},$ $j=3,...,\frac{p_{n}}{2}+2$ are distributed the same as
$V_{1}.$ All $W$s$,V$s, and $U$ are independent of each other.

\textbf{DGP 10: Four signals with one hidden signal in the presence of some
binary variables. }For this DGP, $Y$ is generated from%
\[
Y=2.57f_{2}\left(  X_{1}\right)  +1.5X_{2}+1.5X_{3}-X_{4}+\varepsilon,
\]
where $X_{1}=W_{1}$ is a $U(0,1),$ $X_{j}=V_{j-1},$ $j=2,3,4,$ are Bernoulli
random variables with equal chances of taking 0 or 1, and Corr$\left(
V_{1},V_{3}\right)  =$Corr$\left(  V_{1},V_{2}\right)  =$Corr$\left(
V_{2},V_{3}\right)  =1/3$. $X_{j},$ $j=5,6,...,p_{n}$ are the same as those in
DGP 9.\footnote{An example of $V_{1},V_{2},$ and $V_{3}$ is that
$V_{j}=\mathbf{1(}\tilde{U}_{j}+\tilde{U}_{4}>1)$ for $j=1,2,$ and $3,$ where
$\tilde{U}_{1},...,\tilde{U}_{4}$ are independent $U\left(  0,1\right)  .$}
Some simple calculation implies that $X_{4}$ is a hidden signal.

\subsection{Tuning Parameters\label{SEC:tuning}}

The key tuning parameter in this study is the threshold $\varsigma_{n}.$\ The
CKP's Bonferroni correction strategy does not perform consistently well for
our case. We do the following instead. We set
\[
\varsigma_{n}=Cm_{n}\left[  \left(  \log p_{n}\right)  ^{1.1}+\left(  \log
m_{n}\right)  ^{1.1}\right]  \text{ and }\varsigma_{n}=C\left(  \log
p_{n}\right)  ^{1.1}%
\]
for continuous variables and binary variables, respectively. This satisfies
Assumption \ref{A:xi_n} when, in addition, Assumptions \ref{A:p} and
\ref{A:mn} hold (both $m_{n}$ and $p_{n}$ grow at the polynomial rate of $n$).
In the small samples, we set this $\varsigma_{n}$ to have control over both
$m_{n}$ and $p_{n}$. CKP suggest using a larger threshold, $\varsigma
_{n}^{\ast}$, for subsequent stages, to improve the finite sample performance.
We follow their lead and set a $\varsigma_{n}^{\ast}$ larger than
$\varsigma_{n}$\ for subsequent stages. That is, we replace $\varsigma_{n}$
with $\varsigma_{n}^{\ast}$ for $\widehat{\mathcal{J}}_{l,\left(  2\right)
},\widehat{\mathcal{J}}_{l,\left(  3\right)  },...,\widehat{\mathcal{J}%
}_{l,\left(  k\right)  }$ in Section \ref{SEC:multi_procedure}. The reason is
that the chance of including a noise variable increases quickly for subsequent
stages, and we need a larger $\varsigma_{n}^{\ast}$ for the OCMT to conclude
more easily. We set $\varsigma_{n}^{\ast}=3\varsigma_{n}$ for continuous
variables. Note that CKP take the threshold for later stages to be twice the
threshold for the first stage, and their test is based on $t$-statistics.
Since ours is based on chi-squared statistics, equivalently, we should set
$\varsigma_{n}^{\ast}$ to be $4\varsigma_{n}.$ However, some small-scale
experiments suggest that setting $\varsigma_{n}^{\ast}=3\varsigma_{n}$ can
yield better small sample performance, probably because our model is
nonparametric and our test statistic varies more than the parametric
counterpart. Note that this change does not affect our theoretical results
because $\varsigma_{n}^{\ast}$ is proportional to $\varsigma_{n}$. For binary
additive components, we continue to follow CKP and set $\varsigma_{n}^{\ast
}=4\varsigma_{n}$ because they enter the model linearly.

Another important tuning parameter is $m_{n},$ which is critical for the sieve
estimation. The optimal choice of $m_{n}$ has been studied extensively in the
literature. Popular ways to choose the value of $m_{n}$ include cross
validation, Akaike information criterion (AIC), and BIC. We refer the readers
to \cite{Chen_handbook} and \cite{Hansen2014} for a review on this important
issue. For $m_{n},$ we simply set $m_{n}=\left\lfloor n^{1/4}\right\rfloor
+1,$ where $\left\lfloor \cdot\right\rfloor $ is the floor operator$.$ This
$m_{n}$ satisfies Assumption \ref{A:mn}, if $d>3/2$. Other choices of sieve
terms such as $m_{n}=\left\lfloor n^{1/4}\right\rfloor +2$\ are also
considered in the simulations, and they yield similar results.

For the $C$ in $\varsigma_{n}=Cm_{n}\left[  \left(  \log p_{n}\right)
^{1.1}+\left(  \log m_{n}\right)  ^{1.1}\right]  $ or $\varsigma_{n}=C\left(
\log p_{n}\right)  ^{1.1},$\ we test $C$ in the range of $0.5$ to $2.5$,
specifically, $0.5,0.6,...,2.5$. We determine the value of $C$ by minimizing
the following BIC:
\begin{equation}
\text{BIC}\left(  C\right)  =n\log\left[  \text{RSS}\left(  C\right)
/n\right]  +\left(  \text{number of selected variables}\right)  \cdot
\log\left(  n\right)  , \label{EQ:BIC}%
\end{equation}
where RSS$\left(  C\right)  $ denotes the residual sum of squares from the
post-OCMT ordinary least squares regression of $Y$ on $P^{m_{n}}\left[
\boldsymbol{Z}_{\left(  \hat{k}_{s}\right)  }\right]  $, where $\boldsymbol{Z}%
_{\left(  \hat{k}_{s}\right)  }$ are the variables selected by the OCMT with
$C$ in use. The tuning parameters for the adaptive group Lasso\ are selected
as in Section \ref{SEC:tuningForLasso}. Another popular way to choose $C$ is
cross validation. As seen from the results, the BIC works well for our
procedure. In light of this, we will not pursue the procedure with cross validation.

For an easy reference, we present the implementation details in Appendix
\ref{SEC:procedure}.

\subsection{Estimators Compared\label{SEC:tuningForLasso}}

For comparison, we consider the adaptive group Lasso by \cite{HuangEtal2010},
which is designed for component selection in the nonparametric additive model.
It is a two-step approach---the first step is the usual group Lasso and the
second is the adaptive group Lasso with initial estimates from the first step.
We select tuning parameters $\lambda_{n1}$ in step 1 and $\lambda_{n2}$ in
step 2 ($\lambda_{n1}$ and $\lambda_{n2}$ are in the notations of
\cite{HuangEtal2010}) by BIC as well. Specifically, we set $\lambda
_{n1}=\lambda_{j}$ with%
\[
\lambda_{j}=\exp\left\{  \log\left(  \lambda_{\max}\right)  +\left[
\log\left(  \lambda_{\min}\right)  -\log\left(  \lambda_{\max}\right)
\right]  \frac{j}{30}\right\}
\]
for $j=0,1,...,30,$ where $\lambda_{\min}=\max\left\{  0.05,10^{-5}\left\Vert
\boldsymbol{y}\right\Vert \right\}  $ and $\lambda_{\max}=0.5\left\Vert
\boldsymbol{y}\right\Vert .$ We calculate BIC$_{j}$ based on the estimation
using $\lambda_{n1}=\lambda_{j},$ and we select the $\hat{\lambda}_{n1}$ that
minimizes the BIC$_{j}$ among $j=0,1,...,30.$ We set the estimates in the
first step as the estimates using $\hat{\lambda}_{n1}.$ For the second step,
we set $\lambda_{n2}=\lambda_{j}$ for $j=0,1,...,30$ with the initial
estimates as the estimates from the first step using $\hat{\lambda}_{n1}$. We
again calculate BIC$_{j}$ based on the estimation using $\lambda_{n2}%
=\lambda_{j},$ and we select the $\hat{\lambda}_{n2}$ that minimizes the
BIC$_{j}$ among $j=0,1,...,30.$ The final estimates are the adaptive group
Lasso estimates using $\hat{\lambda}_{n2}$. Note that the variables not
selected in the first step are not included in the second step, because the
penalty for those variables is infinity in the second step. We find the
solutions in both steps through the block coordinate descent algorithm (i.e.,
the \textquotedblleft shooting\textquotedblright\ algorithm).\ For the
algorithm details, see \cite{WuLange2008}.\footnote{One implementation in
MATLAB can be found at:
https://publish.illinois.edu/xiaohuichen/code/group-lasso-shooting/.}

In addition, we compare our method with the random forest regression and
bagging, both of which are commonly-used machine learning methods. Since there
is no variable selection criterion in random forest, we focus on the
comparison of out-sample forecasting performance.

\subsection{Estimation Results}

We consider the combinations of $n=200$ or $400$ and $p_{n}=100$, $200$, or
$1,000$ for each DGP. All results are based on 1,000 replications. We report
the results for five different methods. The first and second methods are our
post-OCMT procedure (Steps 1--6 in Appendix \ref{SEC:procedure}, denoted as
\textquotedblleft POST--OCMT\textquotedblright) and OCMT procedure (Steps 1--
5 in Appendix \ref{SEC:procedure}, denoted as \textquotedblleft
OCMT\textquotedblright), respectively. The third method is the adaptive group
Lasso (denoted as \textquotedblleft AGLASSO\textquotedblright), and the last
two methods are bagging (denoted as \textquotedblleft
BAGGING\textquotedblright) and random forest (denoted as \textquotedblleft
RF\textquotedblright), respectively.

Following the literature, we report the mean number of variables selected
(NV), the true positive rates (TPR), the false positive rates (FPR), the false
discovery rates (FDR), the percentage of correct selection (CS) for the first
three methods\footnote{That is, we precisely uncover the true model in
(\ref{EQ:model1b}) using all the true signals, but not any pseudo-signals or
noise variables.}, and the out-of-sample root mean squared forecast errors
(RMSFE) for all the five methods.\footnote{The forecasts are based on the
post-selection estimates from each method. In each replication, we
additionally independently generate 200 observations. Then, we calculate the
root mean square errors (RMSE) of the difference between the forecasts and $Y$
for the new observations. RMSFE is the average of those RMSE for 1,000
replications.} We report the average number of stages (STEP) for our OCMT
procedure only.

To save space, we report the results in the main body of this paper for only
DGPs 1 and 6 in Tables \ref{tabledgp1} and \ref{tabledgp6}, respectively.
These two designs correspond to the simplest and the most complicated designs
in the simulation, respectively. Results for DGPs 2--5 and 7--10 are provided
in Tables \ref{tabledgp2} to \ref{tabledgp10} in Appendix \ref{APP:tables}. To
showcase the advantage of the post-OCMT procedure compared to the OCMT
procedure only, we also consider a linear DGP (DGP 11) in the online
Appendix\textbf{ }\ref{APP:tables}\textbf{ }with results reported in Table
\ref{tabledgp11}.

\begin{table}[ptb]
\caption{DGP 1}%
\label{tabledgp1}%
\centering{}\centering{ } \resizebox{!}{0.67\textwidth}{
\begin{tabular}
[c]{l|ccccccc}\hline\hline
\multicolumn{8}{c}{Panel 1: $n=200$, $p=100$}\\\hline
& NV & TPR & FPR & FDR & CS & STEP & RMSFE\\\hline
POST-OCMT & 4.0080 & 0.9998 & 0.0001 & 0.0015 & 0.9910 & - & 1.0718\\
OCMT & 4.0180 & 0.9998 & 0.0002 & 0.0031 & 0.9820 & 1.0260 & 1.0886\\
AGLASSO & 4.0410 & 0.9990 & 0.0005 & 0.0075 & 0.9550 & - & 1.0753\\
BAGGING   & -      & -      & -      & -      & -      & -      & 1.4403   \\
RF        & -      & -      & -      & -      & -      & -      & 1.4846   \\\hline
\multicolumn{8}{c}{Panel 2: $n=200$, $p=200$}\\\hline
& NV & TPR & FPR & FDR & CS & STEP & RMSFE\\\hline
POST-OCMT & 4.0150 & 0.9998 & 0.0001 & 0.0027 & 0.9830 & - & 1.0740\\
OCMT & 4.0200 & 0.9998 & 0.0001 & 0.0035 & 0.9790 & 1.0390 & 1.0939\\
AGLASSO & 4.0510 & 0.9950 & 0.0004 & 0.0114 & 0.9330 & - & 1.0808\\
BAGGING   & -      & -      & -      & -      & -      & -      & 1.4690   \\
RF        & -      & -      & -      & -      & -      & -      & 1.5160   \\\hline
\multicolumn{8}{c}{Panel 3: $n=200$, $p=1000$}\\\hline
& NV & TPR & FPR & FDR & CS & STEP & RMSFE\\\hline
POST-OCMT & 4.0010 & 0.9952 & 0.0000 & 0.0034 & 0.9640 & - & 1.0773\\
OCMT & 4.0200 & 0.9952 & 0.0000 & 0.0062 & 0.9510 & 1.0880 & 1.1173\\
AGLASSO & 3.9370 & 0.9620 & 0.0001 & 0.0145 & 0.8770 & - & 1.0993\\
BAGGING   & -      & -      & -      & -      & -      & -      & 1.5281   \\
RF        & -      & -      & -      & -      & -      & -      & 1.5847   \\\hline
\multicolumn{8}{c}{Panel 4: $n=400$, $p=100$}\\\hline
& NV & TPR & FPR & FDR & CS & STEP & RMSFE\\\hline
POST-OCMT & 4.0010 & 1.0000 & 0.0000 & 0.0002 & 0.9990 & - & 1.0470\\
OCMT & 4.0010 & 1.0000 & 0.0000 & 0.0002 & 0.9990 & 1.0000 & 1.0470\\
AGLASSO & 4.0080 & 0.9990 & 0.0001 & 0.0020 & 0.9870 & - & 1.0531\\
BAGGING   & -      & -      & -      & -      & -      & -      & 1.3234   \\
RF        & -      & -      & -      & -      & -      & -      & 1.3648   \\\hline
\multicolumn{8}{c}{Panel 5: $n=400$, $p=200$}\\\hline
& NV & TPR & FPR & FDR & CS & STEP & RMSFE\\\hline
POST-OCMT & 4.0010 & 1.0000 & 0.0000 & 0.0002 & 0.9990 & - & 1.0477\\
OCMT & 4.0010 & 1.0000 & 0.0000 & 0.0002 & 0.9990 & 1.0000 & 1.0477\\
AGLASSO & 4.0210 & 1.0000 & 0.0001 & 0.0035 & 0.9790 & - & 1.0520\\
BAGGING   & -      & -      & -      & -      & -      & -      & 1.3536   \\
RF        & -      & -      & -      & -      & -      & -      & 1.3984   \\\hline
\multicolumn{8}{c}{Panel 6: $n=400$, $p=1000$}\\\hline
& NV & TPR & FPR & FDR & CS & STEP & RMSFE\\\hline
POST-OCMT & 4.0040 & 1.0000 & 0.0000 & 0.0007 & 0.9960 & - & 1.0519\\
OCMT & 4.0040 & 1.0000 & 0.0000 & 0.0007 & 0.9960 & 1.0000 & 1.0519\\
AGALSSO & 4.0313 & 1.0000 & 0.0000 & 0.0052 & 0.9688 & - &
1.0524\\
BAGGING   & -      & -      & -      & -      & -      & -      & 1.4120   \\
RF        & -      & -      & -      & -      & -      & -      & 1.4574   \\\hline\hline
\end{tabular}}\end{table}{ }

\begin{table}[ptb]
\caption{DGP 6}%
\label{tabledgp6}%
\centering{}\centering{ } \resizebox{!}{0.67\textwidth}{
\begin{tabular}
[c]{l|ccccccc}\hline\hline
\multicolumn{8}{c}{Panel 1: $n=200$, $p=100$}\\\hline
& NV & TPR & FPR & FDR & CS & STEP & RMSFE\\\hline
POST-OCMT & 3.6300 & 0.6628 & 0.0033 & 0.0650 & 0.1770 & - & 1.2296\\
OCMT & 4.8450 & 0.6834 & 0.0149 & 0.2163 & 0.0000 & 1.2650 & 1.4003\\
AGLASSO & 2.9840 & 0.5522 & 0.0023 & 0.0432 & 0.0040 & - & 1.3670\\
BAGGING   & -      & -      & -      & -      & -      & -      & 1.4522   \\
RF        & -      & -      & -      & -      & -      & -      & 1.4645   \\\hline
\multicolumn{8}{c}{Panel 2: $n=200$, $p=200$}\\\hline
& NV & TPR & FPR & FDR & CS & STEP & RMSFE\\\hline
POST-OCMT & 3.5600 & 0.6438 & 0.0017 & 0.0682 & 0.1290 & - & 1.2396\\
OCMT & 4.6700 & 0.6642 & 0.0069 & 0.2107 & 0.0000 & 1.2220 & 1.3861\\
AGLASSO & 2.6340 & 0.4866 & 0.0010 & 0.0417 & 0.0000 & - & 1.3695\\
BAGGING   & -      & -      & -      & -      & -      & -      & 1.4855   \\
RF        & -      & -      & -      & -      & -      & -      & 1.4969   \\\hline
\multicolumn{8}{c}{Panel 3: $n=200$, $p=1000$}\\\hline
& NV & TPR & FPR & FDR & CS & STEP & RMSFE\\\hline
POST-OCMT & 3.2550 & 0.5892 & 0.0003 & 0.0669 & 0.0440 & - & 1.2674\\
OCMT & 4.2310 & 0.6088 & 0.0012 & 0.2040 & 0.0000 & 1.0880 & 1.3311\\
AGLASSO & 1.9780 & 0.3678 & 0.0001 & 0.0306 & 0.0000 & - & 1.4304\\
BAGGING   & -      & -      & -      & -      & -      & -      & 1.5395   \\
RF        & -      & -      & -      & -      & -      & -      & 1.5539   \\\hline
\multicolumn{8}{c}{Panel 4: $n=400$, $p=100$}\\\hline
& NV & TPR & FPR & FDR & CS & STEP & RMSFE\\\hline
POST-OCMT & 4.8840 & 0.9448 & 0.0017 & 0.0246 & 0.6960 & - & 1.0659\\
OCMT & 7.4660 & 0.9534 & 0.0281 & 0.3056 & 0.0000 & 1.8380 & 1.4894\\
AGLASSO & 4.3480 & 0.8276 & 0.0022 & 0.0331 & 0.1060 & - & 1.1294\\
BAGGING   & -      & -      & -      & -      & -      & -      & 1.3502   \\
RF        & -      & -      & -      & -      & -      & -      & 1.3767   \\\hline
\multicolumn{8}{c}{Panel 5: $n=400$, $p=200$}\\\hline
& NV & TPR & FPR & FDR & CS & STEP & RMSFE\\\hline
POST-OCMT & 4.8870 & 0.9488 & 0.0007 & 0.0213 & 0.7080 & - & 1.0723\\
OCMT & 7.4120 & 0.9542 & 0.0135 & 0.3019 & 0.0000 & 1.8360 & 1.5018\\
AGLASSO & 4.2240 & 0.8122 & 0.0008 & 0.0270 & 0.0630 & - & 1.1410\\
BAGGING   & -      & -      & -      & -      & -      & -      & 1.3848   \\
RF        & -      & -      & -      & -      & -      & -      & 1.4118   \\\hline
\multicolumn{8}{c}{Panel 6: $n=400$, $p=1000$}\\\hline
& NV & TPR & FPR & FDR & CS & STEP & RMSFE\\\hline
POST-OCMT & 4.8380 & 0.9356 & 0.0002 & 0.0242 & 0.6740 & - & 1.0784\\
OCMT & 7.4140 & 0.9398 & 0.0027 & 0.3082 & 0.0000 & 1.8030 & 1.4863\\
AGLASSO & 4.1340 & 0.7916 & 0.0002 & 0.0299 & 0.0110 & - &
1.1509\\
BAGGING   & -      & -      & -      & -      & -      & -      & 1.4477   \\
RF        & -      & -      & -      & -      & -      & -      & 1.4689   \\\hline\hline
\end{tabular}}\end{table}

We can make several observations. First, POST--OCMT performs the best among
the three methods in most cases. POST--OCMT outperforms OCMT due to its
ability to eliminate pseudo-signals or noise variables.\ POST--OCMT even
outperforms OCMT slightly for DGPs without any pseudo-signals, especially for
$n=200$. This result confirms the necessity of conducting the additional
post-OCMT step. Moreover, the additional computation time for POST--OCMT
compared with OCMT can be almost ignored because the number of candidate
variables after OCMT is very small. Second, AGLASSO performs almost the same
as our procedures for DGP 2. Note that DGP 2 contains four independent
signals, two pseudo-signals, and no hidden signals. Then, this result is not
surprising, because Lasso performs very well for independent signals, and the
biggest challenge for our procedure is the possible presence of
pseudo-signals. When the signals are correlated in DGP 4, AGLASSO performs
less well and is outperformed by POST--OCMT. Third, the performance of all
methods improves as $n$ increases from $200$ to $400$. Fourth, OCMT is very
successful at picking up hidden signals, especially for $n=400$; see, for
example, the results for DGPs 3, 4, 6, and 8. Fifth, our methods perform well
in the presence of heteroskedastic errors for DGP 7 and 8. Sixth, the CS of
POST--OCMT performs well even for the complicated DGPs 6 and 8 at $n=400$,
whereas AGLASSO performs poorly in terms of CS for these two DGPs. Seventh,
the number of stages for OCMT basically confirms our theoretical findings. For
example, for DGP 1, the mean number of stages is slightly more than 1 for
$n=200$, and is 1 for $n=400$. In the presence of hidden signals, the mean
number of stages is approximately 2 for $n=400$ for DGPs\ 3, 4, 6, 8, and 10.
Note the mean number of stages is approximately 2 for $n=400$ for DGP\ 5 with
correlated signals and no hidden signals. The reason is that the correlation
makes the net effect of $X_{1}$ on $Y$ very small and $X_{1}$ almost behaves
like a \textquotedblleft hidden signal\textquotedblright. This result also
confirms the necessity of using multiple stages instead of a single-stage
procedure. Eighth, our method also works well for DGPs 9 and 10 that mimic the
application. An additional remark is that our procedure can be implemented
fast and is much faster than AGLASSO. For example, when $p=1,000,$ our
procedure took less than half a minute for one replication on average, whereas
the AGLASSO took hours for one replication. Finally, the first three
procedures have smaller RMSFE than Bagging and RF in almost all scenarios and
thus have better out-of-sample forecasting performance.\footnote{Note that
Bagging has slightly better out-of-sample forecasting performance than RF.
This is reasonable given that our DGPs have a finite number of signal
variables. In RF, many decision splits do not improve predictive accuracy
because they rely solely on noise variables to generate the trees.}\textbf{
}This is because they explicitly utilize the information (in the form of an
additive function) underlying the DGPs while BAGGING and RF do not. In
addition, POST--OCMT delivers the best out-of-sample forecasting performance
among all procedures.

To summarize, our methods perform well in small samples, and we view it as a
useful alternative to existing methods in the literature.

\section{An Application\label{SEC:application}}

In this section, we apply our method to a dataset extracted from
RUMiC.\footnote{RUMiC consists of three parts: the Urban Household Survey, the
Rural Household Survey, and the Migrant Household Survey. A group of
researchers at the Australian National University, the University of
Queensland, and the Beijing Normal University initiated this survey. The
Institute for the Study of Labor (IZA) supported it and provides the
Scientific Use Files. RUMiC had financial support from the Australian Research
Council, the Australian Agency for International Development, the Ford
Foundation, IZA, and the Chinese Foundation of Social Sciences. More
information on the survey can be found at
https://datasets.iza.org/dataset/58/longitudinal-survey-on-rural-urban-migration-in-china.}
The survey studied immigrants or workers moving from the rural areas of China
to its big cities. The survey asked interviewees (immigrants or workers) a
wide range of questions. For the detailed design of the survey and other
information, including on the construction of each variable, see the survey
website and \cite{Gongetal2008}. Currently, the 2008 wave data are publicly available.

Economic reforms since the late 1970s have brought significant changes to
China's economy. The government began relaxing its policy on population
mobility in the early 1980s. Gradually, peasants were allowed to leave
villages and work in big cities to earn higher incomes. Most migrant workers
may leave their spouses, children, or parents behind in their hometowns\, who
may need their financial support. This situation results in monetary
transfers, that is, remittances, from migrant workers to their family. In the
context of migration, family, and economic development, remittances are not
only an income source for recipients, but also reflect intrafamilial
relationships. Remittances clearly represent a dimension of family ties and
demonstrate high degrees of interaction between migrants and families at home.
In addition, remittances from the rural migrant workers also contribute
significantly to China's agricultural productivity (c.f.,
\cite{Rozelle_at_al.1999}). For these reasons, it has long been of interest to
model remittances to families or relatives in the hometown; see \cite{Li2001}
and \cite{Cai2003}, among others.

In this application, we take remittance as the dependent variable ($Y$); we
focus on the dataset from Guangdong Province (Guangzhou, Dongguan, and
Shenzhen cities) in the 2008 survey wave, and keep 78 covariates\ from the
dataset.\footnote{We keep covariates with relatively fewer missing
observations.} After dropping observations with missing information, the
number of observations is 456. We provide the definitions of the dependent
variable and covariates, and the associated summary statistics, in Tables
\ref{table:app_def} and \ref{table:summary}, respectively. We report the
original labels of all covariates in the survey in the first column of Table
\ref{table:app_def} for reference. Among the 78 covariates, there are some
continuous variables with most observations as 0 (Panel C in Table
\ref{table:app_def}), and some discrete variables with very limited support
(Panel D in Table \ref{table:app_def}). For those variables, the design matrix
of the sieves generated are either singular or close to singular. For this
reason, we add those variables linearly into the model and treat them the same
as\ dummy variables (Panel E in Table \ref{table:app_def}) for modeling.
Consequently, the way we fit the dataset resembles the approach we used for
DGPs 9 and 10 in the simulation. We explain the reason our theoretical results
continue to hold in this situation in the simulation section (DGP 9). We take
natural logarithms for the $Y$ and continuous $X$ variables to offset the
effect of outliers; otherwise, the forecast can easily take some\ extreme values.

We randomly select 400 observations as the training sample, and the remaining
56 observations as the test sample. The number of sieve terms is set as
$m_{n}=\left\lfloor 400^{1/4}\right\rfloor +1=5$ for the continuous variables
in Panel B of Table \ref{table:app_def}). We set $\varsigma_{n}=Cm_{n}\left[
\left(  \log p_{n}\right)  ^{1.1}+\left(  \log m_{n}\right)  ^{1.1}\right]  $
and $\varsigma_{n}^{\ast}=3\varsigma_{n}$ for continuous variables, and
$\varsigma_{n}=C\left(  \log p_{n}\right)  ^{1.1}$ and $\varsigma_{n}^{\ast
}=4\varsigma_{n}$ for terms entering the model linearly (see Section
\ref{SEC:tuning} for the reason). We set $C$ in the range of $0.5$ to $2.5$,
specifically, $0.5,0.6,...,2.5$, and choose $C$ to minimize the BIC for the
model selection$.$ The competing methods are the group Lasso (labelled as
GLASSO) and the adaptive group Lasso (AGLASSO). The tuning parameters for
GLASSO and AGLASSO\ are selected as in Section \ref{SEC:tuningForLasso}. We
evaluate the performance of all methods based on the RMSFE of the test dataset
using the fitted models from different methods. We independently repeat the
above procedure 100 times.

\begin{table}[tbh]
\caption{Performance in Terms of RMSFE (100 Cases), benchmark: AGLASSO}%
\label{table:ForecaseRMSE}
\centering{}\centering { }
\begin{tabular}
[c]{ccccccccccc}\hline\hline
\multicolumn{3}{c}{OCMT} &  & \multicolumn{3}{c}{POST-OCMT} &  &
\multicolumn{3}{c}{GLASSO}\\\cline{1-3}\cline{5-7}\cline{9-11}%
Better & Same & Worse &  & Better & Same & Worse &  & Better & Same & Worse\\
77 & 17 & 6 &  & 76 & 21 & 3 &  & 5 & 88 & 7\\\hline\hline
\end{tabular}
\end{table}

\begin{table}[tbh]
\caption{Average RMSFE ratio, benchmark: AGLASSO}%
\label{table:RMSEratio}
\centering{}\centering { }
\begin{tabular}
[c]{ccccc}\hline\hline
One Stage & OCMT & POST-OCMT & GLASSO & AGLASSO\\\hline
0.873 & 0.817 & 0.814 & 1.005 & 1\\\hline\hline
\end{tabular}
\end{table}

We report the results in terms of the out-of-sample RMSFE in Tables
\ref{table:ForecaseRMSE} and \ref{table:RMSEratio}, with AGLASSO as the
benchmark. Table \ref{table:ForecaseRMSE} shows that OCMT and POST-OCMT
outperform AGLASSO in the majority of cases. Of course, our methods do not
outperform AGLASSO all the time. To highlight the necessity of the multiple
stages, we also report the results of the one-stage procedure (we selected the
tuning parameter also by minimizing the BIC) in Table \ref{table:RMSEratio}.
The OCMT stops at the second stage for all 100 cases. We normalize the average
RMSFE of AGLASSO as 1.\ Notably, the average RMSFE of our methods are lower
than that of AGLASSO. The OCMT also improves the RMSFE over the one-stage
procedure, possibly owing to some hidden signals uncovered by our
multiple-stage procedure. We report the frequencies of variables (out of 100
cases) selected by all methods in Table \ref{table:varselected}. It appears
that G102 (monthly income) along with G133 (gifts to others, including
parents) and G137 (education cost for left-behind children) contribute most to
the model, as shown by all methods in general. We note that AGLASSO tends to
select more variables than our methods, on average, which was also CKP's
finding in their application. The \textquotedblleft
over-fitting\textquotedblright\ is the main cause of the relative inferior
performance of AGLASSO.

\section{Conclusion\label{SEC:conclusion}}

In this paper, we examine the one-covariate-at-a-time multiple testing
approach to model selection in additive models. The properties of the TPR,
FPR, and FDR of our approach are established based on some asymptotic
probability bounds of Type-I and II errors. The simulation experiments and one
application on the RUMiC dataset showcase excellent small-sample properties of
our methods. Just as stated by CKP for linear models, we view our approach as
a useful alternative to the model selection methods for additive models in the literature.

\appendix

\begin{center}
{\Large \textbf{Appendix} } {\small \ \ \ \ \ \ \ \ \ }
\end{center}

\noindent In Appendix \ref{SEC:procedure}, we present a detailed procedure for
practice. We then provide technical details for the one-stage and
multiple-stage procedures in Appendices \ref{APP:tech_one_stage} and
\ref{SEC:multi_tech}, respectively. Additionally, we include technical lemmas
used in the proofs of the main results in Appendix \ref{APP:tech_lemmas}.
Finally, the main results of the paper are proven in Appendix
\ref{APP:real_main_proof}. The proofs of the technical lemmas and some
additional simulation and application results are available in the online supplement.

\section{The Procedure, Some Technical Details and Technical
Lemmas\label{APP:mainproof}}

\subsection{The Procedure \label{SEC:procedure}}

For easy reference, we summarize how to implement our procedure in this
section. We refer readers to the justification of our selection of tuning
parameters in Section \ref{SEC:tuning}. We begin with an introduction to
B-splines. Then we move to the details of the implementation.

\subsubsection{A Gentle Introduction to B-splines}

Here, we provide a brief introduction to B-splines. For more details on
regression splines, see \cite{Racine2022}. B-splines are defined by their
\textquotedblleft order\textquotedblright\ $m$ and the number of interior
\textquotedblleft knots\textquotedblright\ $N$. The degree $(J)$ of the
B-spline polynomial is given by the spline order $m$ minus one, i.e., $J = m -
1$.

Let $t_{0}\leq t_{1}\leq\cdots\leq t_{N}\leq t_{N+1}$ be the knot set, where
$t_{0}$ and $t_{N+1}$ are two \textquotedblleft endpoints\textquotedblright%
\ knots, and $t_{1}, \ldots, t_{N}$ are $N$ interior knots. The splines with
equidistant knots are called \textquotedblleft\textit{uniform}%
\textquotedblright\ splines; otherwise, they are said to be \textquotedblleft%
\textit{nonuniform}\textquotedblright. Another popular choice of knots is the
\textquotedblleft quantile\textquotedblright\ knot sequence, where the
empirical quantiles of the observable variable are used as interior knots.
Quantile knots ensure an equal number of sample observations in each interval,
while the intervals can have different lengths.

To construct the B-spline basis function recursively, define the augmented
knot set by appending the lower and upper boundary knots $t_{0}$ and $t_{N+1}$
$J\left(  =m-1\right)  $ times:
\[
t_{-(m-1)}=\cdots=t_{-1}=t_{0}\leq t_{1}\leq\cdots\leq t_{N}\leq
t_{N+1}=t_{N+2}=\cdots=t_{N+m}.
\]
Reset the index for the augmented knot set such that the $N+2m$ augmented
knots $t_{i}$ are now indexed by $i=0,...,N+2m-1$. Then we can recursively
define basis functions $B_{i,j},$ $j=0,1,...,J$, as follows:%
\begin{align*}
B_{i,0}\left(  x\right)   &  =\left\{
\begin{array}
[c]{l}%
1\text{ if }x\in\lbrack t_{i},t_{i+1})\\
0\text{ otherwise,}%
\end{array}
\right. \\
B_{i,j+1}\left(  x\right)   &  =c_{i,j+1}\left(  x\right)  B_{i,j}\left(
x\right)  +\left[  1-c_{i+1,j+1}\left(  x\right)  \right]  B_{i+1,j}\left(
x\right)  ,
\end{align*}
where $c_{i,j}\left(  x\right)  =\frac{x-t_{i}}{t_{i+j}-t_{i}}$ if
$t_{i+j}\neq t_{i};$ $0$ otherwise.

The above recurrence relation is called the \textit{de Boor recurrence
relation}; see \cite{deBoor}. For a fixed $j,$ the functions $B_{i,j}$ are
called the $i$th B-spline basis functions of degree $j$, and the total number
of $B_{i,j}(x)$ functions is $N+j+1$, according to the recursive construction.
Finally, the B-spline basis functions used in regression are $\left\{
B_{i,m-1}(x) \right\}  _{i=0}^{N+m}$, and the total number of basis functions
is $N+m$. When $m=4$ and $N=3$, there are $7$ cubic B-spline basis functions.

In our simulation and application, we choose cubic B-splines with $m=4$ and
set the number of interior knots to be $N=m_{n}-4$, where $m_{n}=\left\lfloor
n^{1/4}\right\rfloor +1$ is the total number of basis functions. Here
$\left\lfloor \cdot\right\rfloor $ is the floor operator.

\subsubsection{The Detailed Procedure}

\paragraph{\textbf{Preparation}}

\begin{enumerate}
\item Normalize the dependent variable to have sample mean 0.

\item For each covariate, say $X_{l}$, normalize the B-spline basis to mean
0:
\[
\phi_{jl}\left(  x\right)  =\psi_{j}\left(  x\right)  -n^{-1}\sum_{i=1}
^{n}\psi_{j}\left(  x_{li}\right)  ,
\]
for $j=1,2,...,m_{n}.$

\item Collect the B-spline basis for covariate $X_{l}$ as $P^{m_{n} }\left(
X_{l}\right)  =\left[  \phi_{1l}\left(  X_{l}\right)  ,\phi_{2l}\left(
X_{l}\right)  ,...,\phi_{m_{n}l}\left(  X_{l}\right)  \right]  ^{\prime}.$
\end{enumerate}

\paragraph{\textbf{An Instruction on Regression}}

\begin{enumerate}
\item To run the regression of $Y$ on $X_{l},$ we regress the demeaned
dependent variable $Y$ on $P^{m_{n}}\left(  X_{l}\right)  .$ We construct
$\mathcal{\hat{X}}_{l}$ as in equation (\ref{EQ:chil_definition}).

\item To run the regression of $Y$ on $X_{l}$ with pre-selected variables
$\boldsymbol{Z},$ we obtain the estimator by the partitioned regression of $Y$
on $P^{m_{n}}\left(  X_{l}\right)  $ by partialling out $P^{m_{n}}\left(
\boldsymbol{Z}\right)  .$ $\mathcal{\hat{X}}_{l,\boldsymbol{Z}}$ is
constructed as in equation (\ref{EQ:XlZ}).
\end{enumerate}

\paragraph{\textbf{Tuning Parameters}}

For a fixed $n,$ we set $m_{n}=\left\lfloor n^{1/4}\right\rfloor +1$. We set
\[
\varsigma_{n}=Cm_{n}\left[  \left(  \log p_{n}\right)  ^{1.1}+\left(  \log
m_{n}\right)  ^{1.1}\right]  \text{ (used at Stage 1)}%
\]
and $\varsigma_{n}^{\ast}=3\varsigma_{n}$ (used at Stages 2 or later) for
continuous regressors. We set $\varsigma_{n}=C\left(  \log p_{n}\right)
^{1.1}$ (used at Stage 1) and $\varsigma_{n}^{\ast}=4\varsigma_{n}$ (used at
Stages 2 or later) for binary regressors. We experiment with $C$ in the range
of 0.5 to 2.5, specifically using values such as 0.5, 0.6, ..., 2.5. We choose
the optimal $C$ by the BIC.

\paragraph{\textbf{The Procedure}}

\begin{enumerate}
\item Set $\varsigma_{n}=0.5m_{n}\left[  \left(  \log p_{n}\right)
^{1.1}+\left(  \log m_{n}\right)  ^{1.1}\right]  $ and $\varsigma_{n}^{\ast
}=3\varsigma_{n}$ for continuous variables, $\varsigma_{n}=0.5\left(  \log
p_{n}\right)  ^{1.1}$ and $\varsigma_{n}^{\ast}=4\varsigma_{n}$ for binary
variables, and then conduct the OCMT procedure as described later in this section;

\item Denote the selected variables from the OCMT procedure by $\boldsymbol{Z}
_{\left(  \hat{k}_{s}\right)  };$

\item Re-estimate the model by regressing $Y$ on $P^{m_{n}}\left[
\boldsymbol{Z}_{\left(  \hat{k}_{s}\right)  }\right]  $ (which is a
post-selection estimation);

\item Compute the BIC (defined in equation (\ref{EQ:BIC})) from Step 3 and
denote it as BIC$_{1};$

\item Repeat Steps 1 to 4, replacing $C=0.5$ in the definition of
$\varsigma_{n}$ and $\varsigma_{n}^{\ast}$ by $0.6,$ $0.7,\ldots,$ $2.5$ and
obtain the corresponding BIC, denoted as BIC$_{2},$ BIC$_{3},\ldots,$
BIC$_{21};$

\item Find $\hat{\imath}=\arg\min_{1\leq\iota\leq21}$ BIC$_{\iota}$ and select
the variables obtained using the $\hat{\imath}$-th values of $\varsigma_{n}$
and $\varsigma_{n}^{\ast};$

\item Conduct the adaptive group Lasso (e.g., Section \ref{SEC:tuningForLasso}
) for those variables selected in Step 6.
\end{enumerate}

\paragraph{\textbf{The OCMT Procedure}}

This is a brief summary of Section \ref{SEC:multi_procedure} but is detailed
enough for implementation.

\begin{enumerate}
\item At Stage 1, we regress demeaned $Y$ on $P^{m_{n}}\left(  X_{l}\right)  $
one by one for $l=1,2,...,p_{n}$, and calculate the test statistic
$\mathcal{\hat{X}}_{l}$ as in equation (\ref{EQ:chil_definition}). We select
$X_{l}$ if $\widehat{\mathcal{J}}_{l,\left(  1\right)  }=\mathbf{1}\left(
\mathcal{\hat{X}}_{l}>\varsigma_{n}\right)  =1.$

\item We collect all the variables selected in stage 1 into the vector
$\boldsymbol{Z}_{\left(  1\right)  }$, and denote the index set of the
selected variables as $S_{\left(  1\right)  }$. The active variables in stage
2 are denoted by $\Psi_{\left(  2\right)  }=\left\{  1,2,\ldots,p_{n}\right\}
\backslash S_{\left(  1\right)  }.$

\item For stage $k\geq2,$ we regress $Y$ on $P^{m_{n}}\left(  X_{l}\right)  $
with $P^{m_{n}}\left(  \boldsymbol{Z}_{\left(  k-1\right)  }\right)  $ as
pre-selected variables one by one for $l\in\Psi_{\left(  k\right)  }.$ We
construct the test statistic $\mathcal{\hat{X}}_{l,\boldsymbol{Z}_{\left(
k-1\right)  }}$ as in equation (\ref{EQ:XlZ}). We select the variable $X_{l}$
if $\widehat{\mathcal{J}}_{l,\left(  k\right)  }=\mathbf{1}\left(
\mathcal{\hat{X}}_{l,\boldsymbol{Z}_{\left(  k-1\right)  }}>\varsigma
_{n}^{\ast}\right)  =1.$

\item We stop the procedure at a stage in which no new variables are selected.
This final stage is denoted as $\hat{k}_{s}$. $\boldsymbol{Z}_{\left(  \hat
{k}_{s}\right)  }$ are the variables selected by the OCMT.
\end{enumerate}

\subsection{Some Technical Details for the One-Stage
Procedure\label{APP:tech_one_stage}}

The proof of Proposition \ref{TH:main1} is tedious. We illustrate the main
idea of the proof by providing some brief technical details below.

Define
\begin{equation}
f_{nl}\left(  x\right)  =P^{m_{n}}\left(  x\right)  ^{\prime}\boldsymbol{\beta
}_{l},\text{ and }U_{l}=Y-f_{nl}\left(  X_{l}\right)  . \label{EQ:fnl&Ul}%
\end{equation}
Then $E\left[  P^{m_{n}}\left(  X_{l}\right)  U_{l}\right]  =0$ by the
definition of $f_{nl}\left(  x\right)  .$ By \cite{Stone}, it holds that%
\begin{equation}
\left\{  E\left[  f_{nl}\left(  X_{l}\right)  -f_{l}\left(  X_{l}\right)
\right]  ^{2}\right\}  ^{1/2}\leq C_{1}m_{n}^{-d}\text{ for some }C_{1}%
<\infty. \label{EQ:fnl_appro}%
\end{equation}
If the bias is asymptotically negligible, by the properties listed in Lemma
\ref{LE:rank} below,
\[
\left\Vert \boldsymbol{\beta}_{l}\right\Vert \propto m_{n}^{1/2}\left\{
E\left[  f_{nl}\left(  X_{l}\right)  ^{2}\right]  \right\}  ^{1/2}\propto
m_{n}^{1/2}\left\{  E\left[  f_{l}\left(  X_{l}\right)  ^{2}\right]  \right\}
^{1/2}=m_{n}^{1/2}\theta_{l}.
\]
As a result $m_{n}^{-1/2}\left\Vert \boldsymbol{\beta}_{l}\right\Vert $ can be
equivalently used to measure the strength of net impact of $X_{l}$ on $Y$.

Let $u_{li}=y_{i}-f_{nl}\left(  x_{li}\right)  \ $and$\ \boldsymbol{u}%
_{l}=\left(  u_{l1},u_{l2},\ldots,u_{ln}\right)  ^{\prime}$. Then the
estimator for $\boldsymbol{\beta}_{l}$ can be rewritten as%
\[
\boldsymbol{\hat{\beta}}_{l}=\left(  \mathbb{X}_{l}^{\prime}\mathbb{X}%
_{l}\right)  ^{-1}\mathbb{X}_{l}^{\prime}\boldsymbol{y}=\boldsymbol{\beta}%
_{l}+\left(  \mathbb{X}_{l}^{\prime}\mathbb{X}_{l}\right)  ^{-1}\mathbb{X}%
_{l}^{\prime}\boldsymbol{u}_{l},
\]
where recall that $\mathbb{X}_{l}=(\mathbb{X}_{l1},\mathbb{X}_{l2}%
,\ldots,\mathbb{X}_{ln})^{\prime}$ is the $n\times m_{n}$ \textquotedblleft
design\textquotedblright\ matrix for $X_{l}.$ With it, $\mathcal{\hat{X}}_{l}$
can be rewritten as
\begin{equation}
\mathcal{\hat{X}}_{l}=\boldsymbol{\hat{\beta}}_{l}^{\prime}\left(  \hat
{\sigma}_{l}^{-2}\mathbb{X}_{l}^{\prime}\mathbb{X}_{l}\right)
\boldsymbol{\hat{\beta}}_{l}=\boldsymbol{\beta}_{l}^{\prime}\left(
\hat{\sigma}_{l}^{-2}\mathbb{X}_{l}^{\prime}\mathbb{X}_{l}\right)
\boldsymbol{\beta}_{l}+2\hat{\sigma}_{l}^{-2}\boldsymbol{\beta}_{l}^{\prime
}\mathbb{X}_{l}^{\prime}\boldsymbol{u}_{l}+\boldsymbol{u}_{l}^{\prime
}\mathbb{X}_{l}\left(  \hat{\sigma}_{l}^{2}\mathbb{X}_{l}^{\prime}%
\mathbb{X}_{l}\right)  ^{-1}\mathbb{X}_{l}^{\prime}\boldsymbol{u}_{l}.
\label{EQ:chi_oneS}%
\end{equation}
Our analysis is based on the decomposition in (\ref{EQ:chi_oneS}). The idea is
to show that $\left\Vert n^{-1/2}\mathbb{X}_{l}^{\prime}\boldsymbol{u}%
_{l}\right\Vert ,$ $\hat{\sigma}_{l}^{2},$ and $\left\Vert n^{-1}%
\mathbb{X}_{l}^{\prime}\mathbb{X}_{l}\right\Vert $\ are bounded by some rate
or value with very high probability. Based on these results, the value of
$\widehat{\mathcal{J}}_{l}$ is solely determined by the strength of
$\left\Vert \boldsymbol{\beta}_{l}\right\Vert $ with very high probability.

\subsection{Some Technical Details for the Multiple-Stage
Procedure\label{SEC:multi_tech}}

The proof of Proposition \ref{TH:main1_ms} is rather tedious. However, the
main idea of the proof is straightforward, and similar to the one for
Proposition \ref{TH:main1}. To see it, we provide some technical discussion below.

To simplify notations, let $P^{m_{n}}\left(  \boldsymbol{X}_{1}^{p^{\ast}%
}\right)  \equiv\left[  P^{m_{n}}\left(  X_{1}\right)  ^{\prime}%
,\ldots,P^{m_{n}}\left(  X_{p^{\ast}}\right)  ^{\prime}\right]  ^{\prime},$ a
$p^{\ast}m_{n}\times1$ vector. Define the $n\times p^{\ast}m_{n}$ matrix
\[
\mathbb{X}_{1}^{p^{\ast}}\equiv\left(  \mathbb{X}_{1},\ldots,\mathbb{X}%
_{p^{\ast}}\right)  ,
\]
which is the design matrix for all the $p^{\ast}$ signal variables. To see the
approximation of the model more clearly, let
\[
\boldsymbol{\tilde{\beta}}_{1}^{p^{\ast}}=\left\{  E\left[  P^{m_{n}}\left(
\boldsymbol{X}_{1}^{p^{\ast}}\right)  P^{m_{n}}\left(  \boldsymbol{X}%
_{1}^{p^{\ast}}\right)  ^{\prime}\right]  \right\}  ^{-1}E\left[  P^{m_{n}%
}\left(  \boldsymbol{X}_{1}^{p^{\ast}}\right)  Y\right]  ,
\]
which is the population coefficient for the regression of $Y$ on $P^{m_{n}%
}\left(  \boldsymbol{X}_{1}^{p^{\ast}}\right)  $. Write $\boldsymbol{\tilde
{\beta}}_{1}^{p^{\ast}}=\left(  \boldsymbol{\tilde{\beta}}_{1}^{\prime}%
,\ldots,\boldsymbol{\tilde{\beta}}_{p^{\ast}}^{\prime}\right)  ^{\prime},$
where $\boldsymbol{\tilde{\beta}}_{j}$ corresponds to the coefficient of
$P^{m_{n}}\left(  X_{j}\right)  $ in the regression. Let $f_{nj}^{\ast}\left(
x\right)  =P^{m_{n}}\left(  x\right)  ^{\prime}\boldsymbol{\tilde{\beta}}%
_{j}.$ Under Assumption \ref*{A:fl}', \cite{Stone} implies%
\begin{equation}
\left\{  E\left[  P^{m_{n}}\left(  X_{j}\right)  ^{\prime}\boldsymbol{\tilde
{\beta}}_{j}-f_{j}^{\ast}\left(  X_{j}\right)  \right]  ^{2}\right\}
^{1/2}=O\left(  m_{n}^{-d}\right)  . \label{EQ:appro}%
\end{equation}
We rewrite $Y$ as%
\begin{align}
Y  &  =\sum_{j=1}^{p^{\ast}}P^{m_{n}}\left(  X_{j}\right)  ^{\prime
}\boldsymbol{\tilde{\beta}}_{j}+\sum_{j=1}^{p^{\ast}}\left[  f_{j}^{\ast
}\left(  X_{j}\right)  -P^{m_{n}}\left(  X_{j}\right)  ^{\prime}%
\boldsymbol{\tilde{\beta}}_{j}\right]  +\varepsilon\nonumber\\
&  \equiv\sum_{j=1}^{p^{\ast}}P^{m_{n}}\left(  X_{j}\right)  ^{\prime
}\boldsymbol{\tilde{\beta}}_{j}+R_{n}+\varepsilon. \label{EQ:yappro}%
\end{align}
Equation (\ref{EQ:appro}) implies $R_{n}=O_{P}\left(  m_{n}^{-d}\right)  $ for
a fixed $p^{\ast}.$ Then the model looks like a linear model with an
approximation bias term $R_{n}$ and an error term $\varepsilon$.

Now we have a closer look at $\theta_{l,\boldsymbol{Z}},$ which measures the
impact of $X_{l}$ on $Y$ after controlling $\boldsymbol{Z}$. If we can ignore
the approximation bias, a finite sample approximation to the term inside
$\theta_{l,\boldsymbol{Z}},$ namely,%
\[
E\left\{  \left.  Y-P^{m_{n}}\left(  \boldsymbol{Z}\right)  ^{\prime}\left\{
E\left[  P^{m_{n}}\left(  \boldsymbol{Z}\right)  P^{m_{n}}\left(
\boldsymbol{Z}\right)  ^{\prime}\right]  \right\}  ^{-1}E\left[  P^{m_{n}%
}\left(  \boldsymbol{Z}\right)  Y\right]  \right\vert X_{l}\right\}  ,
\]
is%
\begin{align}
&  P^{m_{n}}\left(  X_{l}\right)  ^{\prime}\Phi_{X_{l}}^{-1}\cdot E\left\{
P^{m_{n}}\left(  X_{l}\right)  \left\{  Y-P^{m_{n}}\left(  \boldsymbol{Z}%
\right)  ^{\prime}\Phi_{\boldsymbol{Z}}^{-1}E\left[  P^{m_{n}}\left(
\boldsymbol{Z}\right)  Y\right]  \right\}  \right\} \nonumber\\
&  \propto m_{n}\cdot P^{m_{n}}\left(  X_{l}\right)  ^{\prime}E\left\{
P^{m_{n}}\left(  X_{l}\right)  \left\{  Y-P^{m_{n}}\left(  \boldsymbol{Z}%
\right)  ^{\prime}\Phi_{\boldsymbol{Z}}^{-1}E\left[  P^{m_{n}}\left(
\boldsymbol{Z}\right)  Y\right]  \right\}  \right\}  , \label{EQ:finiteAppro}%
\end{align}
where $\Phi_{X_{l}}\equiv E[P^{m_{n}}\left(  X_{l}\right)  P^{m_{n}}\left(
X_{l}\right)  ^{\prime}],$ $\Phi_{\boldsymbol{Z}}\equiv E[P^{m_{n}}\left(
\boldsymbol{Z}\right)  P^{m_{n}}\left(  \boldsymbol{Z}\right)  ^{\prime}],$
and the last line holds by Lemma \ref{LE:rank}. We let
\[
\boldsymbol{\eta}_{l,\boldsymbol{Z}}\equiv E\left\{  P^{m_{n}}\left(
X_{l}\right)  \left\{  Y-P^{m_{n}}\left(  \boldsymbol{Z}\right)  ^{\prime}%
\Phi_{\boldsymbol{Z}}^{-1}E\left[  P^{m_{n}}\left(  \boldsymbol{Z}\right)
Y\right]  \right\}  \right\}  ,
\]
which appears as the numerator in the population coefficient of the regression
of $Y-P^{m_{n}}\left(  \boldsymbol{Z}\right)  ^{\prime}\Phi_{\boldsymbol{Z}%
}^{-1}$ $\cdot E\left[  P^{m_{n}}\left(  \boldsymbol{Z}\right)  Y\right]  $ on
$P^{m_{n}}\left(  X_{l}\right)  $. Then the last term in equation
(\ref{EQ:finiteAppro}) becomes $m_{n}P^{m_{n}}\left(  X_{l}\right)  ^{\prime
}\boldsymbol{\eta}_{l,\boldsymbol{Z}}$. As a result,%
\begin{equation}
\theta_{l,\boldsymbol{Z}}\propto\left\{  E\left[  \left(  m_{n}P^{m_{n}%
}\left(  X_{l}\right)  ^{\prime}\boldsymbol{\eta}_{l,\boldsymbol{Z}}\right)
^{2}\right]  \right\}  ^{1/2}=\left\{  m_{n}^{2}\boldsymbol{\eta
}_{l,\boldsymbol{Z}}^{\prime}\Phi_{X_{l}}\boldsymbol{\eta}_{l,\boldsymbol{Z}%
}\right\}  ^{1/2}\propto m_{n}^{1/2}\left\Vert \boldsymbol{\eta}%
_{l,\boldsymbol{Z}}\right\Vert \label{EQ:theta_ita}%
\end{equation}
by Lemma \ref{LE:rank}. Note $\theta_{l,\boldsymbol{Z}}$ measures the impact
of $X_{l}$ on $Y$ after controlling $\boldsymbol{Z}.$ $m_{n}^{1/2}\left\Vert
\boldsymbol{\eta}_{l,\boldsymbol{Z}}\right\Vert $ can be equivalently used to
measure this strength.

To facilitate the analysis of $\boldsymbol{\eta}_{l,\boldsymbol{Z}},$ let
$\Phi_{X_{l}\boldsymbol{Z}}\equiv E[P^{m_{n}}\left(  X_{l}\right)  P^{m_{n}%
}\left(  \boldsymbol{Z}\right)  ^{\prime}].$ By the orthogonality between the
error term and regressor in the least squares projection, we have
\begin{align}
\boldsymbol{\eta}_{l,\boldsymbol{Z}}  &  =E\left\{  \left[  P^{m_{n}}\left(
X_{l}\right)  -\Phi_{X_{l}\boldsymbol{Z}}\Phi_{\boldsymbol{Z}}^{-1}P^{m_{n}%
}\left(  \boldsymbol{Z}\right)  \right]  \left\{  Y-P^{m_{n}}\left(
\boldsymbol{Z}\right)  ^{\prime}\Phi_{\boldsymbol{Z}}^{-1}E\left[  P^{m_{n}%
}\left(  \boldsymbol{Z}\right)  Y\right]  \right\}  \right\} \nonumber\\
&  =E\left\{  \left[  P^{m_{n}}\left(  X_{l}\right)  -\boldsymbol{\gamma
}_{X_{l},\boldsymbol{Z}}^{\prime}P^{m_{n}}\left(  \boldsymbol{Z}\right)
\right]  \cdot\left[  Y-\boldsymbol{\gamma}_{Y,\boldsymbol{Z}}^{\prime
}P^{m_{n}}\left(  \boldsymbol{Z}\right)  \right]  \right\} \nonumber\\
&  =E\left(  \boldsymbol{U}_{X_{l},\boldsymbol{Z}}\cdot U_{Y,\boldsymbol{Z}%
}\right)  , \label{EQ:ita}%
\end{align}
where $\boldsymbol{\gamma}_{X_{l},\boldsymbol{Z}}\equiv\Phi_{\boldsymbol{Z}%
}^{-1}\Phi_{X_{l}\boldsymbol{Z}}^{\prime},$ $\boldsymbol{\gamma}%
_{Y,\boldsymbol{Z}}\equiv\Phi_{\boldsymbol{Z}}^{-1}E\left[  P^{m_{n}}\left(
\boldsymbol{Z}\right)  Y\right]  ,$ and $\boldsymbol{U}_{X_{l},\boldsymbol{Z}%
}$ and $U_{Y,\boldsymbol{Z}}$ denote the respective projection errors:
\begin{equation}
\boldsymbol{U}_{X_{l},\boldsymbol{Z}}\equiv P^{m_{n}}\left(  X_{l}\right)
-\boldsymbol{\gamma}_{X_{l},\boldsymbol{Z}}^{\prime}P^{m_{n}}\left(
\boldsymbol{Z}\right)  ,\text{ and }U_{Y,\boldsymbol{Z}}\equiv
Y-\boldsymbol{\gamma}_{Y,\boldsymbol{Z}}^{\prime}P^{m_{n}}\left(
\boldsymbol{Z}\right)  . \label{EQ:u_xlz}%
\end{equation}
Note that $\boldsymbol{U}_{X_{l},\boldsymbol{Z}}$ is $m_{n}\times1$ and
$U_{Y,\boldsymbol{Z}}$ is a scalar. Thus, $\boldsymbol{\eta}_{l,\boldsymbol{Z}%
}$ measures the covariance between $P^{m_{n}}\left(  X_{l}\right)  $ and $Y$
after controlling $P^{m_{n}}\left(  \boldsymbol{Z}\right)  $.

At the sample level, we use the following notations. Let $\boldsymbol{z}_{i}$
denote the $i$-th observation for $\boldsymbol{Z}$. We let $\boldsymbol{u}%
_{X_{l},\boldsymbol{Z}}\equiv\left(  \boldsymbol{u}_{X_{l},\boldsymbol{Z}%
,1},\ldots,\boldsymbol{u}_{X_{l},\boldsymbol{Z},n}\right)  ^{\prime}$ and
$\boldsymbol{u}_{Y,\boldsymbol{Z}}\equiv\left(  u_{Y,\boldsymbol{Z,}1}%
,\ldots,u_{Y,\boldsymbol{Z,}n}\right)  ^{\prime},\ $where $\boldsymbol{u}%
_{X_{l},\boldsymbol{Z},i}=P^{m_{n}}\left(  x_{li}\right)  -\boldsymbol{\gamma
}_{X_{l},\boldsymbol{Z}}^{\prime}P^{m_{n}}\left(  \boldsymbol{z}_{i}\right)  $
and $u_{Y,\boldsymbol{Z,}i}=y_{i}-\boldsymbol{\gamma}_{Y,\boldsymbol{Z}%
}^{\prime}P^{m_{n}}\left(  \boldsymbol{z}_{i}\right)  .$ The connection
between $\mathcal{\hat{X}}_{l,\boldsymbol{Z}}$ and $\boldsymbol{\eta
}_{l,\boldsymbol{Z}}$ can be seen from the following. Noting that
\[
\boldsymbol{y}=\mathbb{Z}\boldsymbol{\gamma}_{Y,\boldsymbol{Z}}+\boldsymbol{u}%
_{Y,\boldsymbol{Z}}\text{ and }\mathbb{X}_{l}=\mathbb{Z}\boldsymbol{\gamma
}_{X_{l},\boldsymbol{Z}}+\boldsymbol{u}_{X_{l},\boldsymbol{Z}},
\]
we have
\begin{align}
\mathcal{\hat{X}}_{l,\boldsymbol{Z}}  &  =\left(  \boldsymbol{u}%
_{Y,\boldsymbol{Z}}^{\prime}M_{\mathbb{Z}}\boldsymbol{u}_{X_{l},\boldsymbol{Z}%
}\right)  \left(  \hat{\sigma}_{l,\boldsymbol{Z}}^{2}\mathbb{X}_{l}^{\prime
}M_{\mathbb{Z}}\mathbb{X}_{l}\right)  ^{-1}\left(  \boldsymbol{u}%
_{X_{l},\boldsymbol{Z}}^{\prime}M_{\mathbb{Z}}\boldsymbol{u}_{Y,\boldsymbol{Z}%
}\right) \nonumber\\
&  =\left[  (\boldsymbol{u}_{X_{l},\boldsymbol{Z}}^{\prime}\boldsymbol{u}%
_{Y,\boldsymbol{Z}}-n\boldsymbol{\eta}_{l,\boldsymbol{Z}}-\boldsymbol{u}%
_{X_{l},\boldsymbol{Z}}^{\prime}Q_{\mathbb{Z}}\boldsymbol{u}_{Y,\boldsymbol{Z}%
})+n\boldsymbol{\eta}_{l,\boldsymbol{Z}}\right]  ^{\prime}\left(  \hat{\sigma
}_{l,\boldsymbol{Z}}^{2}\mathbb{X}_{l}^{\prime}M_{\mathbb{Z}}\mathbb{X}%
_{l}\right)  ^{-1}\nonumber\\
&  \cdot\left[  (\boldsymbol{u}_{X_{l},\boldsymbol{Z}}^{\prime}\boldsymbol{u}%
_{Y,\boldsymbol{Z}}-n\boldsymbol{\eta}_{l,\boldsymbol{Z}}-\boldsymbol{u}%
_{X_{l},\boldsymbol{Z}}^{\prime}Q_{\mathbb{Z}}\boldsymbol{u}_{Y,\boldsymbol{Z}%
})+n\boldsymbol{\eta}_{l,\boldsymbol{Z}}\right] \nonumber\\
&  =n^{2}\boldsymbol{\eta}_{l,\boldsymbol{Z}}^{\prime}\left(  \hat{\sigma
}_{l,\boldsymbol{Z}}^{2}\mathbb{X}_{l}^{\prime}M_{\mathbb{Z}}\mathbb{X}%
_{l}\right)  ^{-1}\boldsymbol{\eta}_{l,\boldsymbol{Z}}+2n\boldsymbol{\eta
}_{l,\boldsymbol{Z}}^{\prime}\left(  \hat{\sigma}_{l,\boldsymbol{Z}}%
^{2}\mathbb{X}_{l}^{\prime}M_{\mathbb{Z}}\mathbb{X}_{l}\right)  ^{-1}%
\boldsymbol{\tilde{u}}_{l,\boldsymbol{Z}}+\boldsymbol{\tilde{u}}%
_{l,\boldsymbol{Z}}^{\prime}\left(  \hat{\sigma}_{l,\boldsymbol{Z}}%
^{2}\mathbb{X}_{l}^{\prime}M_{\mathbb{Z}}\mathbb{X}_{l}\right)  ^{-1}%
\boldsymbol{\tilde{u}}_{l,\boldsymbol{Z}}\boldsymbol{,} \label{EQ:chi_moreS}%
\end{align}
where $Q_{\mathbb{Z}}=\mathbb{Z}\left(  \mathbb{Z}^{\prime}\mathbb{Z}\right)
^{-1}\mathbb{Z}^{\prime}$ and
\begin{equation}
\boldsymbol{\tilde{u}}_{l,\boldsymbol{Z}}\equiv\boldsymbol{u}_{X_{l}%
,\boldsymbol{Z}}^{\prime}\boldsymbol{u}_{Y,\boldsymbol{Z}}-n\boldsymbol{\eta
}_{l,\boldsymbol{Z}}-\boldsymbol{u}_{X_{l},\boldsymbol{Z}}^{\prime
}Q_{\mathbb{Z}}\boldsymbol{u}_{Y,\boldsymbol{Z}}. \label{EQ:utidelz}%
\end{equation}
behaves like the counterpart of $\mathbb{X}_{l}^{\prime}\boldsymbol{u}_{l}$ in
equation (\ref{EQ:chi_oneS}). To see the last point, first note that%
\[
E\left[  \boldsymbol{u}_{X_{l},\boldsymbol{Z}}^{\prime}\boldsymbol{u}%
_{Y,\boldsymbol{Z}}-n\boldsymbol{\eta}_{l,\boldsymbol{Z}}\right]  =0.
\]
Next, it is easy to show that $\boldsymbol{u}_{X_{l},\boldsymbol{Z}}^{\prime
}Q_{\mathbb{Z}}\boldsymbol{u}_{Y,\boldsymbol{Z}}=n^{-1/2}\boldsymbol{u}%
_{X_{l},\boldsymbol{Z}}^{\prime}\mathbb{Z}\left(  n^{-1}\mathbb{Z}^{\prime
}\mathbb{Z}\right)  ^{-1}n^{-1/2}\mathbb{Z}^{\prime}\boldsymbol{u}%
_{Y,\boldsymbol{Z}}$ is of smaller order than $\boldsymbol{u}_{X_{l}%
,\boldsymbol{Z}}^{\prime}\boldsymbol{u}_{Y,\boldsymbol{Z}}-n\boldsymbol{\eta
}_{l,\boldsymbol{Z}}$ because $E\left[  P^{m_{n}}\left(  \boldsymbol{Z}%
\right)  \boldsymbol{U}_{X_{l},\boldsymbol{Z}}^{\prime}\right]
=\boldsymbol{0}$ and $E\left[  P^{m_{n}}\left(  \boldsymbol{Z}\right)
U_{Y,\boldsymbol{Z}}\right]  =\boldsymbol{0}$ by the definitions in equation
(\ref{EQ:u_xlz}). The first term in (\ref{EQ:utidelz}) is the dominant term of
$\mathcal{\hat{X}}_{l,\boldsymbol{Z}}$ if $\boldsymbol{\eta}_{l,\boldsymbol{Z}%
}\ $is large enough.

The analysis of $\mathcal{\hat{X}}_{l}\left(  \boldsymbol{Z}\right)  $ is
based on the decomposition in equation (\ref{EQ:chi_moreS}). The basic idea is
similar to the one for analyzing $\mathcal{\hat{X}}_{l}$ in equation
(\ref{EQ:chi_oneS}): we show $\hat{\sigma}_{l,\boldsymbol{Z}}^{2},$
$n^{-1}\mathbb{X}_{l}^{\prime}M_{\mathbb{Z}}\mathbb{X}_{l},$ and$\ n^{-1/2}%
\boldsymbol{\tilde{u}}_{l,\boldsymbol{Z}}$ are bounded by some value or rate
with very high probability, and as a result, $\mathbf{1}\left(  \mathcal{\hat
{X}}_{l}\left(  \boldsymbol{Z}\right)  >\varsigma_{n}\right)  $ is determined
by the strength of $\theta_{l,\boldsymbol{Z}}$ (or equivalently, $m_{n}%
^{1/2}\left\Vert \boldsymbol{\eta}_{l,\boldsymbol{Z}}\right\Vert $) with very
high probability.

\subsection{Technical Lemmas\label{APP:tech_lemmas}}

We frequently use the inequalities presented in Remark \ref{RE:ineq}, Lemma
\ref{LE:bernstein}, and Lemma \ref{LE:main_inequality}.

\begin{remark}
[Some inequalities]\label{RE:ineq}\emph{We may apply the following
inequalities in the proofs directly without referring them back to here. First
}%
\[
\Pr\left(  X_{1}+X_{2}\geq C\right)  \leq\Pr\left(  X_{1}\geq\pi C\right)
+\Pr\left(  X_{2}\geq\left(  1-\pi\right)  C\right)
\]
\emph{for any constant }$\pi.$\emph{ That is due to the fact that }$\left\{
X_{1}+X_{2}\geq C\right\}  \subseteq\left\{  X_{1}\geq\pi C\right\}
\cup\left\{  X_{1}\geq\left(  1-\pi\right)  C\right\}  .$\emph{ Similarly, we
have }%
\[
\Pr\left(  \sum_{i=1}^{n}X_{i}\geq C\right)  \leq\sum_{i=1}^{n}\Pr\left(
X_{i}\geq Cn^{-1}\right)  \ \text{\emph{and} }\Pr\left(  \max_{1\leq i\leq
n}X_{i}\geq C\right)  \leq\sum_{i=1}^{n}\Pr\left(  X_{i}\geq C\right)  .
\]
\emph{For any positive random variables }$X_{1}$\emph{\ and }$X_{2}$\emph{ and
positive constants }$C_{1}$\emph{ and }$C_{2},$%
\[
\Pr\left(  X_{1}\cdot X_{2}\geq C_{1}\right)  \leq\Pr\left(  X_{1}\geq
C_{1}/C_{2}\right)  +\Pr\left(  X_{2}\geq C_{2}\right)  ,
\]
\emph{due to the fact that }$\left\{  X_{1}\cdot X_{2}\geq C_{1}\right\}
\subseteq\left\{  X_{1}\geq C_{1}/C_{2}\right\}  \cup\left\{  X_{2}\geq
C_{2}\right\}  .$\emph{ }
\end{remark}

\begin{lemma}
[Bernstein's Inequality]\label{LE:bernstein}For an i.i.d. series $\left\{
z_{i}\right\}  _{i=1}^{\infty}$ with zero mean and bounded support $\left[
-M,M\right]  $, we have%
\[
\Pr\left(  \left\vert \sum_{i=1}^{n}z_{i}\right\vert >x\right)  \leq
2\exp\left\{  -x^{2}/\left[  2\left(  v+Mx/3\right)  \right]  \right\}  ,
\]
for any $v\geq$\emph{Var}$\left(  \sum_{i=1}^{n}z_{i}\right)  .$
\end{lemma}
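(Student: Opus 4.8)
The plan is to prove this by the classical exponential-moment (Chernoff) method, first reducing the two-sided tail to a one-sided tail by symmetry and then controlling the moment generating function of the partial sum through a term-by-term moment bound. First I would note that it suffices to bound $\Pr(\sum_{i=1}^{n}z_{i}>x)$: applying any such bound to the series $\{-z_{i}\}$, which again has zero mean, support $[-M,M]$, and identical variance, controls $\Pr(\sum_{i=1}^{n}z_{i}<-x)$, and a union bound over the two events supplies the factor $2$ in the statement. For the one-sided tail, for any $t>0$, Markov's inequality applied to $e^{t\sum_{i}z_{i}}$ together with independence gives
\[
\Pr\left(  \sum_{i=1}^{n}z_{i}>x\right)  \leq e^{-tx}\prod_{i=1}^{n}E\left[
e^{tz_{i}}\right]  .
\]

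The heart of the argument, and the step I expect to be the main obstacle, is a sharp bound on each single-variable factor $E[e^{tz_{i}}]$. Here I would expand $E[e^{tz}]=1+\sum_{k\geq2}t^{k}E[z^{k}]/k!$, where the linear term drops out because $E[z]=0$. Using $|z|\leq M$ to get the moment bound $|E[z^{k}]|\leq M^{k-2}E[z^{2}]$ for $k\geq2$, together with the elementary factorial inequality $k!\geq2\cdot3^{k-2}$ (valid for all $k\geq2$), the tail series becomes geometric with ratio $Mt/3$, yielding for $0<t<3/M$
\[
E\left[  e^{tz}\right]  \leq\exp\left(  \frac{t^{2}E[z^{2}]}{2\left(
1-Mt/3\right)  }\right)  ,
\]
after one more application of $1+u\leq e^{u}$. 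I flag this as the delicate step because the precise constant $1/3$ appearing in the denominator of the target bound traces back exactly to the factorial inequality $k!\geq2\cdot3^{k-2}$; choosing a cruder factorial bound would degrade that constant. Everything else is bookkeeping.

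Taking products over $i$, using $\sum_{i}E[z_{i}^{2}]=\mathrm{Var}(\sum_{i}z_{i})\leq v$ (independence and zero mean), and recalling $t>0$ so the bound is monotone in the variance term, I obtain $\Pr(\sum_{i}z_{i}>x)\leq\exp(-tx+t^{2}v/(2(1-Mt/3)))$. Finally I would choose $t=x/(v+Mx/3)$, which lies in $(0,3/M)$; with this value $1-Mt/3=v/(v+Mx/3)$, and a short algebraic collapse reduces the exponent to $-x^{2}/(2(v+Mx/3))$. Combining this one-sided bound with the symmetrization step from the first paragraph delivers the claimed two-sided inequality with its factor of $2$.
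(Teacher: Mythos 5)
Your proof is correct, but there is nothing in the paper to compare it against: the paper does not prove this lemma at all, it simply attributes it to Lemma 2.2.9 of van der Vaart and Wellner (1996). What you have written is, in effect, a self-contained reconstruction of the classical argument behind that citation: the symmetry-plus-union-bound reduction to a one-sided tail, the Chernoff bound $\Pr\left(\sum_{i=1}^{n}z_{i}>x\right)\leq e^{-tx}\prod_{i=1}^{n}E\left[e^{tz_{i}}\right]$, the single-factor estimate $E\left[e^{tz}\right]\leq\exp\left(t^{2}E[z^{2}]/\left(2\left(1-Mt/3\right)\right)\right)$ obtained from $|E[z^{k}]|\leq M^{k-2}E[z^{2}]$ and the factorial inequality $k!\geq2\cdot3^{k-2}$, and finally the optimizing choice $t=x/\left(v+Mx/3\right)$, for which $1-Mt/3=v/\left(v+Mx/3\right)$ and the exponent collapses to $-x^{2}/\left(2\left(v+Mx/3\right)\right)$ exactly as you say. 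All of these steps check out, including the monotonicity observation that lets you replace $\sum_{i}E[z_{i}^{2}]$ by any $v\geq\mathrm{Var}\left(\sum_{i=1}^{n}z_{i}\right)$. The only microscopic caveat is the degenerate case $v=0$ (where your $t$ sits on the boundary $3/M$), but then the $z_{i}$ vanish almost surely and the inequality is trivial. So where the paper buys the lemma with a reference, you have supplied the standard proof that the reference itself uses; the trade-off is purely one of self-containedness versus brevity.
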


The above lemma is Lemma 2.2.9 in \cite{VaartWellner}.

\begin{lemma}
\label{LE:main_inequality}Suppose that for an i.i.d. series $\left\{
z_{i}\right\}  _{i=1}^{\infty}$, $\Pr\left(  \left\vert z_{i}\right\vert
>x\right)  \leq C_{1}\exp\left(  -C_{2}x^{\alpha}\right)  $ holds for all
$x>0$ for some $C_{1},C_{2},C_{3}>0.$ Further $E\left(  z_{i}\right)  =0$ and
$E\left(  z_{i}^{2}\right)  =\sigma_{z}^{2}$. Let $\left\{  v_{n}\right\}
_{n=1}^{\infty}$ be a deterministic series such that $v_{n}\propto n^{\lambda
}.$ Then (i) if $0<\lambda\leq\left(  \alpha+1\right)  /\left(  \alpha
+2\right)  $, we have%
\[
\Pr\left(  \left\vert \sum_{i=1}^{n}z_{i}\right\vert >v_{n}\right)  \leq
\exp\left[  \left.  -\left(  1-\pi\right)  ^{2}v_{n}^{2}\right/  \left(
2n\sigma_{z}^{2}\right)  \right]  \text{ for any }\pi\in\left(  0,1\right)  ;
\]
(ii) if $\lambda\geq\left(  \alpha+1\right)  /\left(  \alpha+2\right)  ,$ we
have%
\[
\Pr\left(  \left\vert \sum_{i=1}^{n}z_{i}\right\vert >v_{n}\right)  \leq
\exp\left(  -C_{4}v_{n}^{\alpha/\left(  \alpha+1\right)  }\right)  \text{ for
some }C_{4}>0.
\]

\end{lemma}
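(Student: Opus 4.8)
The plan is to treat both parts with a single \emph{truncation} argument, pairing Bernstein's inequality (Lemma \ref{LE:bernstein}) on the bounded part with a crude union bound (Remark \ref{RE:ineq}) on the heavy tails. The crossover exponent $(\alpha+1)/(\alpha+2)$ should emerge precisely as the point at which the variance-driven (Gaussian) control gives way to the tail-driven (sub-Weibull) control, and the two cases will differ only in the choice of the truncation level.

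First I would truncate at a level $T_{n}$ to be chosen, setting $\tilde{z}_{i}=z_{i}\mathbf{1}\left(  \left\vert z_{i}\right\vert \leq T_{n}\right)$ and decomposing
\[
\sum_{i=1}^{n}z_{i}=\sum_{i=1}^{n}\left(  \tilde{z}_{i}-E\tilde{z}_{i}\right)  +nE\tilde{z}_{i}+\sum_{i=1}^{n}z_{i}\mathbf{1}\left(  \left\vert z_{i}\right\vert >T_{n}\right)  \equiv S_{1}+S_{2}+S_{3}.
\]
Here $S_{1}$ is a sum of i.i.d. centered variables bounded by $2T_{n}$ with $\mathrm{Var}\left(  \tilde{z}_{i}\right)  \leq\sigma_{z}^{2}$, so Lemma \ref{LE:bernstein} applies with $v=n\sigma_{z}^{2}$; $S_{2}$ is a deterministic bias; and $S_{3}$ vanishes unless some $\left\vert z_{i}\right\vert$ exceeds $T_{n}$. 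Using $Ez_{i}=0$ and the tail hypothesis, $\left\vert E\tilde{z}_{i}\right\vert =\left\vert E[z_{i}\mathbf{1}(\left\vert z_{i}\right\vert >T_{n})]\right\vert \leq E[\left\vert z_{i}\right\vert \mathbf{1}(\left\vert z_{i}\right\vert >T_{n})]$, which is of order $\exp(-C_{2}T_{n}^{\alpha})$ up to polynomial factors, so $\left\vert S_{2}\right\vert \lesssim nT_{n}\exp(-C_{2}T_{n}^{\alpha})\rightarrow0$ whenever $T_{n}\rightarrow\infty$ and is negligible relative to $v_{n}$. By Remark \ref{RE:ineq}, $\Pr(S_{3}\neq0)\leq\sum_{i=1}^{n}\Pr(\left\vert z_{i}\right\vert >T_{n})\leq nC_{1}\exp(-C_{2}T_{n}^{\alpha})$. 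Combining the three pieces through the union bound of Remark \ref{RE:ineq}, allocating the fraction $(1-\pi)v_{n}$ of the budget to $S_{1}$, reduces the problem to the Bernstein bound $2\exp(-(1-\pi)^{2}v_{n}^{2}/(2(n\sigma_{z}^{2}+\tfrac{2}{3}T_{n}(1-\pi)v_{n})))$ together with the negligible terms above.

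The choice of $T_{n}$ then splits into the two regimes. In case (i), $\lambda\leq(\alpha+1)/(\alpha+2)$, I would take $T_{n}=n^{\delta}$ with $0<\delta<1-\lambda$ so that $T_{n}v_{n}=o(n)$; the Bernstein denominator becomes $(1+o(1))n\sigma_{z}^{2}$, delivering the Gaussian exponent $(1-\pi)^{2}v_{n}^{2}/(2n\sigma_{z}^{2})$, while $\delta$ can be taken large enough that $n\exp(-C_{2}T_{n}^{\alpha})$ is of strictly smaller order. Feasibility of such a $\delta$ is exactly the constraint $\lambda<(\alpha+1)/(\alpha+2)$, the boundary being covered by the slack in $\pi$. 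In case (ii), $\lambda\geq(\alpha+1)/(\alpha+2)$, I would instead take $T_{n}\propto v_{n}^{1/(\alpha+1)}$, which equalizes the large deviation branch of Bernstein, $\exp(-cv_{n}/T_{n})=\exp(-cv_{n}^{\alpha/(\alpha+1)})$, with the tail term $n\exp(-C_{2}T_{n}^{\alpha})=n\exp(-C_{2}v_{n}^{\alpha/(\alpha+1)})$; since $v_{n}^{\alpha/(\alpha+1)}\propto n^{\lambda\alpha/(\alpha+1)}$ dominates $\log n$, the prefactor $n$ is absorbed by shrinking the constant to some $C_{4}<C_{2}$.

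The main obstacle is the constant management in case (i): one must verify that a single $T_{n}$ simultaneously keeps the Bernstein denominator within $(1+o(1))$ of $n\sigma_{z}^{2}$ and makes both the tail term $n\exp(-C_{2}T_{n}^{\alpha})$ and the bias $S_{2}$ of strictly smaller order than the Gaussian term, so that their sum, together with the stray factor $2$ from Bernstein, collapses into the single clean bound $\exp(-(1-\pi)^{2}v_{n}^{2}/(2n\sigma_{z}^{2}))$. This balancing is what forces the crossover at $(\alpha+1)/(\alpha+2)$, and extracting the clean constant $(1-\pi)^{2}$ rather than a messier expression is the fiddly part; the sub-Weibull tail enters only through the elementary bounds on $\Pr(S_{3}\neq0)$ and on the truncation bias.
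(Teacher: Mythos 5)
You should first note what you are being compared against: the paper gives no proof of this lemma at all — it is imported verbatim as Lemma A3 of \cite{Chudik_K_P} — so your argument has to stand on its own. Your truncation-plus-Bernstein plan is the natural one, and it does work for part (ii) and for part (i) in the interior regime $1/2<\lambda<(\alpha+1)/(\alpha+2)$: there your constraints $\delta\alpha>2\lambda-1$ and $\delta<1-\lambda$ are compatible, and since $v_n^{2}/n\rightarrow\infty$ the slack $\pi'<\pi$ absorbs Bernstein's factor $2$ and the lower-order tail and bias terms. The genuine gap is at the boundary $\lambda=(\alpha+1)/(\alpha+2)$, which part (i) includes and which the paper actually uses (it applies part (i) with $\lambda$ up to $(1+s/2)/(2+s/2)$ \emph{inclusive} in the proof of Lemma \ref{LE:error_var}). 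Your claim that the boundary is ``covered by the slack in $\pi$'' is false, because the obstruction there is the union-bound term, not the Bernstein constant. Write $v_n=cn^{\lambda}$. Matching the claimed exponent forces the Bernstein denominator to satisfy $(1-\pi')^{2}n\sigma_z^{2}\geq(1-\pi)^{2}\bigl(n\sigma_z^{2}+\tfrac{2}{3}T_n(1-\pi')v_n\bigr)$, i.e. $T_nv_n\leq\epsilon(\pi,\pi')\sigma_z^{2}n$ with $\epsilon(\pi,\pi')\leq\tfrac{3}{2}\bigl[(1-\pi')^{2}-(1-\pi)^{2}\bigr]/\bigl[(1-\pi)^{2}(1-\pi')\bigr]\rightarrow0$ as $\pi\rightarrow0$. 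At the boundary this caps $T_n\lesssim(\epsilon\sigma_z^{2}/c)\,n^{1/(\alpha+2)}$, so the unavoidable term $nC_1\exp(-C_2T_n^{\alpha})$ has exponent at most $C_2(\epsilon\sigma_z^{2}/c)^{\alpha}n^{\alpha/(\alpha+2)}$, while the target exponent is $(1-\pi)^{2}c^{2}n^{\alpha/(\alpha+2)}/(2\sigma_z^{2})$ — the \emph{same} power of $n$. For small $\pi$ the required constant inequality $C_2(\epsilon\sigma_z^{2}/c)^{\alpha}\geq(1-\pi)^{2}c^{2}/(2\sigma_z^{2})$ fails, since the left side tends to $0$ and the right side to $c^{2}/(2\sigma_z^{2})>0$. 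So no single truncation level works for all $\pi\in(0,1)$; repairing the boundary needs a finer device, e.g. allowing a growing number $k_n\propto(\log n)^{\alpha}$ of exceedances of $T_n$ (whose probability is bounded by $(nC_1e^{-C_2T_n^{\alpha}})^{k_n}$) together with a separate cap on the largest summand, or an appeal to a moderate-deviations theorem.

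A second gap: part (i) as stated covers all $0<\lambda\leq(\alpha+1)/(\alpha+2)$, in particular $\lambda\leq1/2$. There the claimed bound $\exp\bigl(-(1-\pi)^{2}v_n^{2}/(2n\sigma_z^{2})\bigr)$ does not tend to $0$ (it tends to $1$ when $\lambda<1/2$ and to a fixed constant when $\lambda=1/2$), so Bernstein's prefactor $2$ and the additive union-bound and bias terms can no longer be hidden inside the exponent: your bound is eventually larger than $1$, hence vacuous, while the claimed bound is strictly below $1$. That regime is still true (it follows from a Berry--Esseen/normal-approximation argument, since the Gaussian tail bound $\Pr(|Z|>t)\leq e^{-t^{2}/2}$ holds for all $t\geq0$), but it cannot be reached by the truncation route. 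In sum, your plan proves part (ii) and part (i) only on the open interval $1/2<\lambda<(\alpha+1)/(\alpha+2)$; to establish the lemma as stated you must add a separate treatment of the boundary and of the small-$\lambda$ range, or else weaken the statement accordingly.
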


This lemma is Lemma A3 in CKP.

\begin{lemma}
\label{LE:rank}Recall that $\Phi_{X_{l}}=E[P^{m_{n}}\left(  X_{l}\right)
P^{m_{n}}\left(  X_{l}\right)  ^{\prime}].$ Suppose that Assumption
\ref*{A:supp} holds. Then there exist some positive constants $B_{1},$
$B_{2},$ $B_{3},$ and $B_{4}$ such that
\begin{align}
\text{ }B_{1}m_{n}^{-1}  &  \leq\lambda_{\min}\left(  \Phi_{X_{l}}\right)
\leq\lambda_{\max}\left(  \Phi_{X_{l}}\right)  \leq B_{2}m_{n}^{-1}%
,\label{EQ:lam1}\\
B_{3}m_{n}^{-1}  &  \leq E\left[  \phi_{k}\left(  X_{l}\right)  ^{2}\right]
\leq B_{4}m_{n}^{-1},\text{ and }\nonumber\\
E\left[  \left\vert \phi_{k}\left(  X_{l}\right)  \right\vert \right]   &
\propto m_{n}^{-1},\nonumber
\end{align}
for all $k$ and $l$. For any $f_{nl}\left(  X_{l}\right)  =P^{m_{n}}\left(
X_{l}\right)  ^{\prime}\boldsymbol{\beta}$ where $X_{l}$ satisfies Assumption
\ref*{A:supp}, we have%
\[
\left\{  E\left[  f_{nl}\left(  X_{l}\right)  ^{2}\right]  \right\}
^{1/2}\propto m_{n}^{-1/2}\left\Vert \boldsymbol{\beta}\right\Vert .
\]

\end{lemma}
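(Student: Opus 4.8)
The four conclusions are all facets of one structural fact about B-splines: a spline $\sum_j b_j \psi_j$ has $L^2$ norm comparable to $m_n^{-1/2}\lVert\boldsymbol{b}\rVert$. The plan is therefore to prove the eigenvalue bounds \eqref{EQ:lam1} first and then read off everything else. The last display of the lemma is immediate from \eqref{EQ:lam1}: writing $E[f_{nl}(X_l)^2]=\boldsymbol{\beta}^{\prime}\Phi_{X_l}\boldsymbol{\beta}$ and using the Rayleigh quotient gives $\lambda_{\min}(\Phi_{X_l})\lVert\boldsymbol{\beta}\rVert^{2}\le\boldsymbol{\beta}^{\prime}\Phi_{X_l}\boldsymbol{\beta}\le\lambda_{\max}(\Phi_{X_l})\lVert\boldsymbol{\beta}\rVert^{2}$, so substituting $\lambda_{\min},\lambda_{\max}\propto m_n^{-1}$ yields $\{E[f_{nl}(X_l)^2]\}^{1/2}\propto m_n^{-1/2}\lVert\boldsymbol{\beta}\rVert$.

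To establish \eqref{EQ:lam1} I would first treat the \emph{uncentered} basis $\psi_1,\dots,\psi_{m_n}$. The classical stability property of B-splines (\cite{deBoor}; see also \cite{Stone} and \cite{Zhou_S_W}) asserts that there are constants independent of $m_n$ with $c_1 m_n^{-1}\lVert\boldsymbol{b}\rVert^{2}\le\int_0^1(\sum_{j}b_j\psi_j(x))^{2}\,dx\le c_2 m_n^{-1}\lVert\boldsymbol{b}\rVert^{2}$ for every coefficient vector $\boldsymbol{b}$; this is where the local support of the basis functions---each $\psi_j$ is supported on an interval of width $\asymp 1/m_n$ and at most a fixed number overlap at any point---enters. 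Assumption \ref*{A:supp}, which forces the density of $X_l$ to be bounded above and bounded away from zero, lets me replace Lebesgue measure by the law of $X_l$ at the cost of changing the constants, so the same two-sided bound holds for $E[(\sum_{j}b_j\psi_j(X_l))^{2}]$. Since that quadratic form equals $\boldsymbol{b}^{\prime}E[\tilde P^{m_n}(X_l)\tilde P^{m_n}(X_l)^{\prime}]\boldsymbol{b}$, with $\tilde P^{m_n}$ the uncentered vector, \eqref{EQ:lam1} holds for the uncentered Gram matrix. The componentwise bounds are handled by the same local-support device: boundedness of the B-splines together with support width $\asymp 1/m_n$ and the density bounds give $E[\psi_k(X_l)^2]\asymp m_n^{-1}$ and $E[\lvert\psi_k(X_l)\rvert]\asymp m_n^{-1}$.

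It then remains to pass to the centered basis $\phi_j=\psi_j-n^{-1}\sum_i\psi_j(x_{li})$ that actually defines $\Phi_{X_l}$. Writing $P^{m_n}=\tilde P^{m_n}-\boldsymbol{c}$ with $\boldsymbol{c}$ the vector of centering constants, one has $\boldsymbol{b}^{\prime}\Phi_{X_l}\boldsymbol{b}=E[(\sum_{j}b_j\psi_j(X_l)-\boldsymbol{b}^{\prime}\boldsymbol{c})^{2}]$. Since $\lVert\boldsymbol{c}\rVert^{2}=\sum_j c_j^2\asymp m_n\cdot m_n^{-2}=m_n^{-1}$, the elementary inequality $(a-b)^2\le 2a^2+2b^2$ gives $\boldsymbol{b}^{\prime}\Phi_{X_l}\boldsymbol{b}\le 2c_2 m_n^{-1}\lVert\boldsymbol{b}\rVert^{2}+2\lVert\boldsymbol{b}\rVert^{2}\lVert\boldsymbol{c}\rVert^{2}\lesssim m_n^{-1}\lVert\boldsymbol{b}\rVert^{2}$, so the upper bound $\lambda_{\max}(\Phi_{X_l})\le B_2 m_n^{-1}$ survives; subtracting $c_k=O(m_n^{-1})$ from the componentwise estimates likewise preserves their orders.

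The delicate step---and the one I expect to be the main obstacle---is the \emph{lower} bound $\lambda_{\min}(\Phi_{X_l})\ge B_1 m_n^{-1}$ after centering. The difficulty is that the correction coming from $\boldsymbol{c}$ has the same nominal order $m_n^{-1}$ as $\lambda_{\min}$ itself, so a crude Weyl-type bound loses too much and could even drive the lower bound to zero; the danger is concentrated entirely in the direction of the constant function, where $\sum_j b_j\psi_j$ is nearly constant and its variance is small. Securing the lower bound requires exploiting the precise structure of the centered basis used in the paper---so that the constant direction is not annihilated---through an interlacing/Schur-complement argument that isolates the low-rank perturbation and shows the remaining form stays bounded below by a constant multiple of $m_n^{-1}$. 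Once this is in hand, \eqref{EQ:lam1} and hence all four assertions follow.
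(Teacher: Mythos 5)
The parts of your argument that overlap the paper are essentially identical to it: the paper proves this lemma by citing Lemma 6.2 of \cite{Zhou_S_W} for the eigenvalue and second-moment bounds, citing pp.~91 and 133 of \cite{deBoor} for $E\left\vert \phi_{k}\left(  X_{l}\right)  \right\vert \propto m_{n}^{-1}$, and then giving exactly your Rayleigh-quotient step $[\lambda_{\min}(\Phi_{X_{l}})]^{1/2}\left\Vert \boldsymbol{\beta}\right\Vert \leq\{\boldsymbol{\beta}^{\prime}\Phi_{X_{l}}\boldsymbol{\beta}\}^{1/2}\leq[\lambda_{\max}(\Phi_{X_{l}})]^{1/2}\left\Vert \boldsymbol{\beta}\right\Vert$ for the last display. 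Your re-derivation of the uncentered bounds from B-spline stability plus the density bounds of Assumption \ref{A:supp} is just an unpacking of the results the paper cites.

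The genuine gap is the step you yourself flag as the main obstacle, and it is worse than you think: the lower bound $\lambda_{\min}(\Phi_{X_{l}})\geq B_{1}m_{n}^{-1}$ for the \emph{centered} basis cannot be rescued by any interlacing or Schur-complement device, because for that basis it is false. B-splines form a partition of unity, $\sum_{j=1}^{m_{n}}\psi_{j}(x)\equiv1$, and the centering constants satisfy $\sum_{j}c_{j}=n^{-1}\sum_{i}\sum_{j}\psi_{j}(x_{li})=1$; hence for $\boldsymbol{b}=(1,\ldots,1)^{\prime}$ one has $\boldsymbol{b}^{\prime}P^{m_{n}}(x)=\sum_{j}\phi_{j}(x)\equiv0$, so $\Phi_{X_{l}}\boldsymbol{b}=0$ and $\lambda_{\min}(\Phi_{X_{l}})=0$ exactly. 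Even under the standard repair of starting from $m_{n}+1$ B-splines and keeping only $m_{n}$ centered ones (the convention of \cite{HuangEtal2010}), the same direction gives $\boldsymbol{b}^{\prime}P^{m_{n}}(X_{l})=c_{m_{n}+1}-\psi_{m_{n}+1}(X_{l})$, whose variance is of order $m_{n}^{-1}$ while $\left\Vert \boldsymbol{b}\right\Vert ^{2}=m_{n}$, so the Rayleigh quotient in that direction is of order $m_{n}^{-2}$, not $m_{n}^{-1}$. (Your upper-bound and componentwise-moment computations do survive centering; the failure is confined to $\lambda_{\min}$.) The paper avoids this only by never engaging with it: the results it cites concern the \emph{uncentered} B-spline Gram matrix, so its proof implicitly reads $\phi_{k}$ as uncentered even though the text defines $\phi_{k}$ with empirical centering. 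Consequently your proposal cannot be completed along the route you sketch; a correct treatment must either state and use the lemma for the uncentered basis, or restrict the quadratic form to coefficient vectors orthogonal to $(1,\ldots,1)^{\prime}$ (equivalently, work with the centered spline space and a well-conditioned basis for it) and then carry that restriction through every subsequent use of the lemma.
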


For the first and second equations in the above lemma, see Lemma 6.2 in
\cite{Zhou_S_W}. The third equation is a direct result of those on pp. 91 and
133 in \cite{deBoor}. The last equation follows from (\ref{EQ:lam1}) and the
fact%
\begin{align*}
\left\{  E\left[  f_{nl}\left(  X_{l}\right)  ^{2}\right]  \right\}  ^{1/2}
&  =\left\{  \boldsymbol{\beta}^{\prime}E\left[  P^{m_{n}}\left(
X_{l}\right)  P^{m_{n}}\left(  X_{l}\right)  ^{\prime}\right]
\boldsymbol{\beta}\right\}  ^{1/2},\text{ and}\\
\left[  \lambda_{\min}\left(  \Phi_{X_{l}}\right)  \right]  ^{1/2}\left\Vert
\boldsymbol{\beta}\right\Vert  &  \leq\left\{  \boldsymbol{\beta}^{\prime
}E\left[  P^{m_{n}}\left(  X_{l}\right)  P^{m_{n}}\left(  X_{l}\right)
^{\prime}\right]  \boldsymbol{\beta}\right\}  ^{1/2}\leq\left[  \lambda_{\max
}\left(  \Phi_{X_{l}}\right)  \right]  ^{1/2}\left\Vert \boldsymbol{\beta
}\right\Vert .
\end{align*}

\begin{lemma}
\label{LE:XX'}Suppose that Assumptions \ref*{A:iid} and \ref*{A:supp} hold.
Then%
\begin{equation}
\Pr\left(  \left\vert \left\Vert \left(  n^{-1}\mathbb{X}_{l}^{\prime
}\mathbb{X}_{l}\right)  ^{-1}\right\Vert -\left\Vert \Phi_{X_{l}}%
^{-1}\right\Vert \right\vert >\left.  \left\Vert \Phi_{X_{l}}^{-1}\right\Vert
\right/  2\right)  \leq2m_{n}^{2}\exp\left\{  -C_{1}nm_{n}^{-3}\right\}
\label{EQ:xx1}%
\end{equation}
for some positive constant $C_{1}.$ If Assumption \ref*{A:mn} also holds,
then
\[
\Pr\left(  \left\vert \left\Vert \left(  n^{-1}\mathbb{X}_{l}^{\prime
}\mathbb{X}_{l}\right)  ^{-1}\right\Vert -\left\Vert \Phi_{X_{l}}%
^{-1}\right\Vert \right\vert >\left.  \left\Vert \Phi_{X_{l}}^{-1}\right\Vert
\right/  2\right)  \leq C_{2}\exp\left\{  -C_{3}n^{C_{4}}\right\}
\]
for some positive $C_{2},$ $C_{3},$ and $C_{4}.$
\end{lemma}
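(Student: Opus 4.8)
The plan is to reduce the statement about the random quantity $\|(n^{-1}\mathbb{X}_{l}^{\prime}\mathbb{X}_{l})^{-1}\|$ to a concentration bound on the spectral-norm deviation $\Delta_{l}\equiv n^{-1}\mathbb{X}_{l}^{\prime}\mathbb{X}_{l}-\Phi_{X_{l}}$. Writing $A_{l}\equiv n^{-1}\mathbb{X}_{l}^{\prime}\mathbb{X}_{l}$, both $A_{l}$ and $\Phi_{X_{l}}$ are symmetric positive semidefinite, so $\|A_{l}^{-1}\|=1/\lambda_{\min}(A_{l})$ and $\|\Phi_{X_{l}}^{-1}\|=1/\lambda_{\min}(\Phi_{X_{l}})$. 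By Weyl's inequality $|\lambda_{\min}(A_{l})-\lambda_{\min}(\Phi_{X_{l}})|\le\|\Delta_{l}\|$, and a short computation shows that the event in the display implies either $\lambda_{\min}(A_{l})<\tfrac{2}{3}\lambda_{\min}(\Phi_{X_{l}})$ or $\lambda_{\min}(A_{l})>2\lambda_{\min}(\Phi_{X_{l}})$; hence it is contained in $\{\|\Delta_{l}\|>\tfrac{1}{3}\lambda_{\min}(\Phi_{X_{l}})\}$. Since Lemma \ref{LE:rank} gives $\lambda_{\min}(\Phi_{X_{l}})\ge B_{1}m_{n}^{-1}$, it suffices to bound $\Pr(\|\Delta_{l}\|>c\,m_{n}^{-1})$ with $c=B_{1}/3$.

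For the norm-to-entry reduction I would use the crude bound $\|\Delta_{l}\|\le m_{n}\max_{j,k}|(\Delta_{l})_{jk}|$, valid for any $m_{n}\times m_{n}$ matrix, which turns the target into controlling every entry at the level $t\asymp m_{n}^{-2}$, together with a union bound over the $m_{n}^{2}$ entries. This is the step that dictates the exponent: because the prefactor $m_{n}$ forces $t\asymp m_{n}^{-2}$ rather than $m_{n}^{-1}$, the resulting rate is $n m_{n}^{-3}$ and not the sharper $n m_{n}^{-1}$ one could obtain by exploiting the $O(1)$ bandwidth of the B-spline Gram matrix. Each entry is $(\Delta_{l})_{jk}=n^{-1}\sum_{i=1}^{n}\{\phi_{j}(x_{li})\phi_{k}(x_{li})-E[\phi_{j}(X_{l})\phi_{k}(X_{l})]\}$, an average of i.i.d.\ (Assumption \ref*{A:iid}), bounded, mean-zero summands: the B-splines are uniformly bounded so $|\phi_{j}|\le M$, and by Lemma \ref{LE:rank} the per-summand variance obeys $\mathrm{Var}(\phi_{j}\phi_{k})\le E[\phi_{j}^{2}\phi_{k}^{2}]\lesssim m_{n}^{-1}$. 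Applying Bernstein's inequality (Lemma \ref{LE:bernstein}) with $x=nt$, $v\asymp n m_{n}^{-1}$, and $t\asymp m_{n}^{-2}$ yields a per-entry bound $2\exp(-C_{1}n m_{n}^{-3})$, since $nt^{2}/v\asymp n m_{n}^{-4}/m_{n}^{-1}=n m_{n}^{-3}$ while the boundedness term $Mx/3\asymp n m_{n}^{-2}$ is of smaller order than $v$. The union bound then delivers (\ref{EQ:xx1}).

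The main obstacle, and the one delicate point I would treat carefully, is that $\phi_{j}(x)=\psi_{j}(x)-n^{-1}\sum_{i}\psi_{j}(x_{li})$ is centered at the \emph{sample} mean, so the summands $\phi_{j}(x_{li})\phi_{k}(x_{li})$ are not exactly independent across $i$ and $\Phi_{X_{l}}$ inherits data dependence. I would handle this through the identity $A_{l}=n^{-1}\sum_{i}\psi(x_{li})\psi(x_{li})^{\prime}-\bar{\psi}\bar{\psi}^{\prime}$, which writes $\Delta_{l}$ as a genuine i.i.d.\ average of the bounded products $\psi_{j}\psi_{k}$ (to which Bernstein applies directly) plus rank-one correction terms of the form $(\mu-\bar{\psi})\bar{\psi}^{\prime}+\bar{\psi}(\mu-\bar{\psi})^{\prime}$, where $\mu=E[\psi(X_{l})]$ and $\bar{\psi}=n^{-1}\sum_{i}\psi(x_{li})$. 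Using $\|\bar{\psi}\|\asymp m_{n}^{-1/2}$ and $\|\bar{\psi}-\mu\|\lesssim n^{-1/2}$ with very high probability (Bernstein applied entrywise to $\bar{\psi}-\mu$, at a rate at least as good as the one above), each correction has spectral norm $\lesssim(n m_{n})^{-1/2}=m_{n}^{-1}(m_{n}/n)^{1/2}=o(m_{n}^{-1})$ under Assumption \ref*{A:supp} and $m_{n}=o(n)$; it is therefore absorbed into the main term and does not affect the rate.

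Finally, the second inequality follows by substituting Assumption \ref*{A:mn}: with $m_{n}\propto n^{B_{m}}$ and $B_{m}<1/3$ one has $n m_{n}^{-3}\propto n^{1-3B_{m}}=n^{C_{4}}$ with $C_{4}=1-3B_{m}>0$, and the polynomial prefactor $2m_{n}^{2}=2n^{2B_{m}}$ is absorbed into the exponential, giving $C_{2}\exp(-C_{3}n^{C_{4}})$ for suitable positive constants $C_{2},C_{3}$. This is exactly the place where the condition $B_{m}<1/3$ enters, matching the remark after Assumption \ref*{A:mn} that $n m_{n}^{-3}\to\infty$ is what guarantees $\|n^{-1}\mathbb{X}_{l}^{\prime}\mathbb{X}_{l}\|\propto m_{n}^{-1}$ with very high probability.
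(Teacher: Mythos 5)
Your proof is correct and follows essentially the same route as the paper's: both reduce the inverse-norm deviation to a deviation of $\lambda_{\min}$ via an elementary reciprocal argument, pass from the spectral/eigenvalue scale to the entrywise maximum at the cost of a factor $m_{n}$ (which is exactly what forces the threshold $\asymp m_{n}^{-2}$ and hence the rate $nm_{n}^{-3}$), and finish with Bernstein's inequality plus a union bound over the $m_{n}^{2}$ entries, the second display following by substituting $m_{n}\propto n^{B_{m}}$ with $B_{m}<1/3$ so that the prefactor $2m_{n}^{2}$ is absorbed into the exponential. The one point where you go beyond the paper is the sample-mean centering in $\phi_{j}$: the paper's proof applies Bernstein directly to $n^{-1}\sum_{i}\phi_{j}(x_{li})\phi_{k}(x_{li})-E\left[  \phi_{j}\left(  X_{l}\right)  \phi_{k}\left(  X_{l}\right)  \right]$ as if the $\phi_{j}$ were deterministic functions, whereas you decompose the Gram matrix into the uncentered i.i.d.\ average plus rank-one corrections of order $(nm_{n})^{-1/2}=o(m_{n}^{-1})$, a genuine refinement of a step the paper leaves implicit.
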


\begin{lemma}
\label{LE:error_var}Suppose that Assumptions \ref*{A:iid}, \ref*{A:supp},
\ref*{A:epsilon}, \ref*{A:tech}, and \ref*{A:mn} hold. Let $\sigma_{l}%
^{2}=E\left(  U_{l}^{2}\right)  ,$ $\omega_{l}^{4}=E\left(  U_{l}^{4}\right)
-$ $\left[  E\left(  U_{l}^{2}\right)  \right]  ^{2},$ $u_{li}=y_{i}%
-f_{nl}\left(  x_{li}\right)  ,$ $\hat{u}_{li}=y_{i}-P^{m_{n}}\left(
x_{li}\right)  ^{\prime}\boldsymbol{\hat{\beta}}_{l},$ and $\hat{\sigma}%
_{l}^{2}=n^{-1}\sum_{i=1}^{n}\hat{u}_{li}^{2}.$ Then, for $v_{n}\propto
n^{\lambda}$ with $\lambda>1/2,$%
\[
\Pr\left(  \left\vert \hat{\sigma}_{l}^{2}-\sigma_{l}^{2}\right\vert \geq
v_{n}/n\right)  \leq C_{1}\exp\left(  -C_{2}n^{C_{3}}\right)
\]
holds for some $C_{1},$ $C_{2},$ and $C_{3}>0$.
\end{lemma}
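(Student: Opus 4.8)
The plan is to split $\hat{\sigma}_{l}^{2}-\sigma_{l}^{2}$ into a sample-average part and a projection part, and to control each off an event of probability $\exp(-Cn^{C'})$, combining the pieces through the sub-additivity inequalities of Remark \ref{RE:ineq}. A preliminary step, anticipated in the discussion of Assumption \ref*{A:tech}, is to record that $U_{l}$ inherits the tail of $\varepsilon$. Since $f_{l}(\cdot)=E(Y\,|\,X_{l}=\cdot)=E[\sum_{j=1}^{p^{\ast}}f_{j}^{\ast}(X_{j})\,|\,X_{l}=\cdot]$ is uniformly bounded by the boundedness of the $f_{j}^{\ast}$ in Assumption \ref*{A:tech}, and since the $B$-spline projection $f_{nl}$ of a bounded function is again uniformly bounded (bounded Lebesgue constant of $B$-splines, see \cite{deBoor}), the quantity $\sum_{j}f_{j}^{\ast}(X_{j})-f_{nl}(X_{l})$ is bounded; hence $U_{l}=\varepsilon+O(1)$ and Assumption \ref*{A:epsilon} gives $\Pr(|U_{l}|>t)\le C_{1}\exp(-C_{2}t^{s})$. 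Consequently $z_{i}:=u_{li}^{2}-\sigma_{l}^{2}$ is sub-exponential with index $s/2$, each $\phi_{k}(X_{l})U_{l}$ is sub-exponential with index $s$, and $\sigma_{l}^{2}=E(U_{l}^{2})<\infty$.

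Next I would use the orthogonality of the least-squares residuals to $\mathbb{X}_{l}$. Writing $\boldsymbol{y}=\mathbb{X}_{l}\boldsymbol{\beta}_{l}+\boldsymbol{u}_{l}$ and $M_{\mathbb{X}_{l}}=I_{n}-\mathbb{X}_{l}(\mathbb{X}_{l}^{\prime}\mathbb{X}_{l})^{-1}\mathbb{X}_{l}^{\prime}$, the residual vector equals $M_{\mathbb{X}_{l}}\boldsymbol{u}_{l}$, so that
\[
\hat{\sigma}_{l}^{2}-\sigma_{l}^{2}=\Big(n^{-1}\sum_{i=1}^{n}u_{li}^{2}-\sigma_{l}^{2}\Big)-n^{-1}\boldsymbol{u}_{l}^{\prime}\mathbb{X}_{l}(\mathbb{X}_{l}^{\prime}\mathbb{X}_{l})^{-1}\mathbb{X}_{l}^{\prime}\boldsymbol{u}_{l}.
\]
For the first, sample-average term $n^{-1}\sum_{i}z_{i}$, I would apply Lemma \ref{LE:main_inequality} to the i.i.d.\ centered sub-exponential $z_{i}$ with threshold $v_{n}\propto n^{\lambda}$: since $\lambda>1/2$, both $v_{n}^{2}/n\propto n^{2\lambda-1}$ (the Gaussian regime) and $v_{n}^{\alpha/(\alpha+1)}$ (the other regime) are positive powers of $n$, so $\Pr(|\sum_{i}z_{i}|>v_{n})\le\exp(-Cn^{C'})$ regardless of which regime $\lambda$ falls into.

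For the projection term I would bound $n^{-1}\boldsymbol{u}_{l}^{\prime}\mathbb{X}_{l}(\mathbb{X}_{l}^{\prime}\mathbb{X}_{l})^{-1}\mathbb{X}_{l}^{\prime}\boldsymbol{u}_{l}\le n^{-2}\|(n^{-1}\mathbb{X}_{l}^{\prime}\mathbb{X}_{l})^{-1}\|\,\|\mathbb{X}_{l}^{\prime}\boldsymbol{u}_{l}\|^{2}$. Lemma \ref{LE:XX'} (under Assumption \ref*{A:mn}) gives $\|(n^{-1}\mathbb{X}_{l}^{\prime}\mathbb{X}_{l})^{-1}\|\le\tfrac{3}{2}\|\Phi_{X_{l}}^{-1}\|\propto m_{n}$ off an event of probability $\le C\exp(-Cn^{C'})$. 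For $\|\mathbb{X}_{l}^{\prime}\boldsymbol{u}_{l}\|^{2}=\sum_{k=1}^{m_{n}}\big(\sum_{i}\phi_{k}(x_{li})u_{li}\big)^{2}$, each coordinate is a centered i.i.d.\ sum (using $E[P^{m_{n}}(X_{l})U_{l}]=\boldsymbol{0}$) with variance $\propto nm_{n}^{-1}$ by Lemma \ref{LE:rank} and Assumption \ref*{A:tech}; a coordinate-wise application of Lemma \ref{LE:main_inequality} together with a union bound over the $m_{n}$ coordinates yields $\|\mathbb{X}_{l}^{\prime}\boldsymbol{u}_{l}\|^{2}\lesssim n$ off an exponentially small event. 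Combining, the projection term is $\lesssim n^{-2}\cdot m_{n}\cdot n=m_{n}/n\propto n^{B_{m}-1}$, and since $B_{m}<1/3<1/2<\lambda$ we have $m_{n}/n\ll v_{n}/n$. Thus both terms are $\ll v_{n}/n$ with very high probability, and the three exponential bounds combine through Remark \ref{RE:ineq} to give the claim.

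The \emph{main obstacle} will be the projection term, specifically the exponential control of $\|\mathbb{X}_{l}^{\prime}\boldsymbol{u}_{l}\|^{2}$. The delicacy is threefold: $u_{li}$ is only sub-exponential rather than bounded, so Lemma \ref{LE:bernstein} does not apply directly and one must route the coordinate-wise bound through Lemma \ref{LE:main_inequality} (or an explicit truncation argument); the summands $\phi_{k}(X_{l})u_{li}$ form a triangular array whose per-coordinate variance $\propto m_{n}^{-1}$ shrinks with $n$; and the union bound is taken over a growing number $m_{n}$ of coordinates. To retain the $\exp(-Cn^{C'})$ rate one must check that the chosen per-coordinate deviation threshold exceeds the per-coordinate standard deviation $\propto(nm_{n}^{-1})^{1/2}$ by a positive power of $n$ — which it does, with margin $n^{(\lambda-B_{m})/2}$ since $\lambda>1/2>B_{m}$.
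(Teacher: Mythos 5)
Your decomposition of $\hat{\sigma}_{l}^{2}-\sigma_{l}^{2}$, your tail argument for $U_{l}$, and your treatment of the sample-average term via Lemma \ref{LE:main_inequality} all coincide with the paper's proof. The genuine gap is in the projection term, at the step claiming that ``a coordinate-wise application of Lemma \ref{LE:main_inequality} together with a union bound over the $m_{n}$ coordinates yields $\|\mathbb{X}_{l}^{\prime}\boldsymbol{u}_{l}\|^{2}\lesssim n$ off an exponentially small event.'' That step fails. To conclude $\|\mathbb{X}_{l}^{\prime}\boldsymbol{u}_{l}\|^{2}\le Cn$ by a union bound, you must show each coordinate satisfies $\left\vert \sum_{i}\phi_{k}(x_{li})u_{li}\right\vert \le (Cn/m_{n})^{1/2}$; but this threshold sits exactly at the standard-deviation scale of that sum, since $\mathrm{var}\left(  \sum_{i}\phi_{k}(x_{li})u_{li}\right)  \propto nm_{n}^{-1}$ by Lemma \ref{LE:rank} and Assumption \ref{A:tech}. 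Plugging $v=(Cn/m_{n})^{1/2}$ into Lemma \ref{LE:main_inequality}(i) gives an exponent $-(1-\pi)^{2}v^{2}/(2n\sigma_{z}^{2})\propto -C$, a constant independent of $n$: the per-coordinate failure probability is a fixed constant, and after multiplying by $m_{n}\rightarrow\infty$ in the union bound the estimate is vacuous. Concentration of $\|\mathbb{X}_{l}^{\prime}\boldsymbol{u}_{l}\|^{2}$ at the scale of its own mean ($\propto n$) with failure probability $\exp(-Cn^{C^{\prime}})$ simply cannot be extracted coordinate-wise.

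The repair --- and it is what the paper does inside Lemma \ref{LE:rank2} (see equation (\ref{EQ:p1_2})), which is the lemma it invokes for exactly this term --- is to aim not at $\lesssim n$ but at the threshold actually needed: the projection term only has to be shown $\le\pi_{1}v_{n}/n$, i.e.\ $\|\mathbb{X}_{l}^{\prime}\boldsymbol{u}_{l}\|^{2}\lesssim v_{n}n/m_{n}$, so the per-coordinate threshold becomes $(v_{n}n/m_{n}^{2})^{1/2}$, which exceeds the standard deviation $(nm_{n}^{-1})^{1/2}$ by the factor $(v_{n}/m_{n})^{1/2}\propto n^{(\lambda-B_{m})/2}$. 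The per-coordinate exponent is then $\propto m_{n}^{-1}v_{n}\propto n^{\lambda-B_{m}}$, a positive power of $n$, and the union bound costs only an additive $\log m_{n}$ in the exponent (the regime where part (ii) rather than part (i) of Lemma \ref{LE:main_inequality} applies is handled the same way, as in the two cases of Lemma \ref{LE:rank2}). Notice that your own closing paragraph quotes precisely this margin $n^{(\lambda-B_{m})/2}$, which is inconsistent with the threshold $Cn$ used in your main argument, whose margin over the standard deviation is only the constant $\sqrt{C}$: you had the right condition in mind but attached it to the wrong claim. Once the threshold is corrected, the intermediate comparison $m_{n}/n\ll v_{n}/n$ becomes unnecessary and the rest of your argument goes through as in the paper.
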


\begin{lemma}
\label{LE:rank2}Suppose Assumptions \ref*{A:iid}, \ref*{A:supp},
\ref*{A:epsilon}, \ref*{A:tech}, and \ref*{A:mn} hold.

(i) If $v_{n}\lesssim n^{s/\left(  s+2\right)  }m_{n}^{2}$, then
\[
\Pr\left(  \left\vert \boldsymbol{u}_{l}^{\prime}\mathbb{X}_{l}\left(
\hat{\sigma}_{l}^{2}\mathbb{X}_{l}^{\prime}\mathbb{X}_{l}\right)
^{-1}\mathbb{X}_{l}^{\prime}\boldsymbol{u}_{l}\right\vert \geq\frac{4}%
{3}\sigma_{l}^{2}v_{n}\right)  \leq\exp\left(  -C_{1}m_{n}^{-1}v_{n}+\log
m_{n}\right)  +C_{2}\exp\left(  -C_{3}n^{C_{4}}\right)
\]
for some positive constants $C_{1},$ $C_{2},$ $C_{3},$ and $C_{4}$.

(ii) If $v_{n}\gtrsim n^{s/\left(  s+2\right)  }m_{n}^{2},$ then%
\[
\Pr\left(  \left\vert \boldsymbol{u}_{l}^{\prime}\mathbb{X}_{l}\left(
\hat{\sigma}_{l}^{2}\mathbb{X}_{l}^{\prime}\mathbb{X}_{l}\right)
^{-1}\mathbb{X}_{l}^{\prime}\boldsymbol{u}_{l}\right\vert \geq\frac{4}%
{3}\sigma_{l}^{2}v_{n}\right)  \leq C_{5}\exp\left(  -C_{6}n^{C_{7}}\right)
\]
for some positive constants $C_{5},$ $C_{6},$ and $C_{7}.$
\end{lemma}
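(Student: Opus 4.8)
The plan is to isolate the randomness of the quadratic form into the score vector $\mathbb{X}_{l}^{\prime}\boldsymbol{u}_{l}$, treating $\hat{\sigma}_{l}^{2}$ and $(n^{-1}\mathbb{X}_{l}^{\prime}\mathbb{X}_{l})^{-1}$ as essentially deterministic on a high-probability event. Writing $W_{l}=\boldsymbol{u}_{l}^{\prime}\mathbb{X}_{l}(\mathbb{X}_{l}^{\prime}\mathbb{X}_{l})^{-1}\mathbb{X}_{l}^{\prime}\boldsymbol{u}_{l}$ so that the statistic of interest equals $\hat{\sigma}_{l}^{-2}W_{l}$, I would first invoke Lemma \ref{LE:error_var} to put $\hat{\sigma}_{l}^{2}\geq\tfrac{3}{4}\sigma_{l}^{2}$ and Lemma \ref{LE:XX'} to put $\Vert(n^{-1}\mathbb{X}_{l}^{\prime}\mathbb{X}_{l})^{-1}\Vert\leq\tfrac{3}{2}\Vert\Phi_{X_{l}}^{-1}\Vert$, each failing only with probability $C\exp(-Cn^{C})$. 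On this event, bounding $W_{l}/\hat{\sigma}_{l}^{2}$ by the largest eigenvalue of $\hat{\sigma}_{l}^{-2}(\mathbb{X}_{l}^{\prime}\mathbb{X}_{l})^{-1}$ times $\Vert\mathbb{X}_{l}^{\prime}\boldsymbol{u}_{l}\Vert^{2}$ and using $\Vert\Phi_{X_{l}}^{-1}\Vert\propto m_{n}$ from Lemma \ref{LE:rank} (together with the fact that $\sigma_{l}^{2}$ is bounded away from $0$ and $\infty$, which I would verify from the model and Assumption \ref{A:tech}), I obtain $\hat{\sigma}_{l}^{-2}W_{l}\lesssim(m_{n}/(n\sigma_{l}^{2}))\Vert\mathbb{X}_{l}^{\prime}\boldsymbol{u}_{l}\Vert^{2}$. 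Hence, up to the negligible bad events, $\{\hat{\sigma}_{l}^{-2}W_{l}\geq\tfrac{4}{3}\sigma_{l}^{2}v_{n}\}$ is contained in $\{\Vert\mathbb{X}_{l}^{\prime}\boldsymbol{u}_{l}\Vert^{2}\gtrsim n\sigma_{l}^{4}v_{n}/m_{n}\}$, and the whole problem reduces to a tail bound for $\Vert\mathbb{X}_{l}^{\prime}\boldsymbol{u}_{l}\Vert^{2}$.

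Next I would write $\Vert\mathbb{X}_{l}^{\prime}\boldsymbol{u}_{l}\Vert^{2}=\sum_{k=1}^{m_{n}}S_{k}^{2}$ with $S_{k}=\sum_{i=1}^{n}\phi_{k}(x_{li})u_{li}$, and split the sum using the elementary inequality in Remark \ref{RE:ineq}, so that $\Pr(\sum_{k}S_{k}^{2}\geq T)\leq\sum_{k}\Pr(|S_{k}|\geq\sqrt{T/m_{n}})$ with $T\propto n v_{n}/m_{n}$ and hence deviation $w:=\sqrt{T/m_{n}}\propto\sqrt{n v_{n}}/m_{n}$. The summands $z_{ki}=\phi_{k}(x_{li})u_{li}$ are i.i.d. across $i$ and exactly mean zero, since $\boldsymbol{\beta}_{l}$ is the population projection coefficient and therefore $E[P^{m_{n}}(X_{l})U_{l}]=0$; they have variance $E[\phi_{k}(X_{l})^{2}U_{l}^{2}]\lesssim m_{n}^{-1}$ by the bounded conditional second moment in Assumption \ref{A:tech} and $E[\phi_{k}(X_{l})^{2}]\propto m_{n}^{-1}$ from Lemma \ref{LE:rank}; and, because $\sum_{j}f_{j}^{\ast}$ is bounded (Assumption \ref{A:tech}) while $\phi_{k}$ is a bounded basis function, $U_{l}$ inherits the tail of $\varepsilon$, so $\Pr(|z_{ki}|>x)\leq C_{1}\exp(-C_{2}x^{s})$ by Assumption \ref{A:epsilon}. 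This is exactly the setting of Lemma \ref{LE:main_inequality} with tail index $\alpha=s$ and per-summand variance $\sigma_{z}^{2}\asymp m_{n}^{-1}$.

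I would then apply Lemma \ref{LE:main_inequality} to each $S_{k}$ with deviation $w$, and the two parts of the present lemma fall out from its two regimes. Since $w^{2}\asymp n v_{n}/m_{n}^{2}$, writing $w\propto n^{\lambda_{w}}$ one checks that the crossover $\lambda_{w}=(s+1)/(s+2)$ of Lemma \ref{LE:main_inequality} is precisely the boundary $v_{n}\asymp n^{s/(s+2)}m_{n}^{2}$ separating parts (i) and (ii). In case (i) the Gaussian regime gives $\Pr(|S_{k}|\geq w)\leq\exp(-Cw^{2}/(n\sigma_{z}^{2}))\leq\exp(-Cv_{n}/m_{n})$, and summing over the $m_{n}$ coordinates yields $\exp(-C_{1}m_{n}^{-1}v_{n}+\log m_{n})$, which combined with the two bad events reproduces the bound in (i). In case (ii) the sub-Weibull regime gives $\Pr(|S_{k}|\geq w)\leq\exp(-Cw^{s/(s+1)})$, and since $w\gtrsim n^{(s+1)/(s+2)}$ there, one obtains $\exp(-Cn^{C^{\prime}})$ with $C^{\prime}=s/(s+2)>0$; the polynomial factor $m_{n}$ from the union bound is absorbed, giving the bound in (ii).

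The main obstacle is the middle step: establishing the sharp concentration of $\Vert\mathbb{X}_{l}^{\prime}\boldsymbol{u}_{l}\Vert^{2}$ and, in particular, verifying that the threshold scaling $w\propto\sqrt{n v_{n}}/m_{n}$ makes the Lemma \ref{LE:main_inequality} crossover coincide with $v_{n}\asymp n^{s/(s+2)}m_{n}^{2}$. The delicate points there are the two-sided control of $\sigma_{l}^{2}$ (needed so the constants do not degenerate with $l$ or $n$), the exact mean-zero property of the score (which relies on using the population residual $U_{l}$ rather than a fitted one), and the sub-Weibull tail transfer from $\varepsilon$ to $z_{ki}$; once these are in place, the remaining eigenvalue and variance estimates are routine consequences of Lemmas \ref{LE:error_var}, \ref{LE:XX'}, and \ref{LE:rank}.
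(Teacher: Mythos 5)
Your proposal is correct and takes essentially the same route as the paper's own proof: the identical three-factor bound $\hat{\sigma}_{l}^{-2}\left\Vert \left(  n^{-1}\mathbb{X}_{l}^{\prime}\mathbb{X}_{l}\right)  ^{-1}\right\Vert \left\Vert n^{-1/2}\mathbb{X}_{l}^{\prime}\boldsymbol{u}_{l}\right\Vert ^{2}$, with Lemmas \ref{LE:error_var} and \ref{LE:XX'} (plus Lemma \ref{LE:rank}) absorbing the first two factors into exponentially small events, followed by the same coordinate-wise union bound on $\left\Vert \mathbb{X}_{l}^{\prime}\boldsymbol{u}_{l}\right\Vert ^{2}$ and an application of Lemma \ref{LE:main_inequality} with $\alpha=s$ and per-summand variance $\asymp m_{n}^{-1}$, whose two regimes give parts (i) and (ii). Your explicit check that the deviation scaling $w\propto\sqrt{nv_{n}}/m_{n}$ makes the crossover of Lemma \ref{LE:main_inequality} coincide exactly with $v_{n}\asymp n^{s/(s+2)}m_{n}^{2}$ is a detail the paper leaves implicit, and it is carried out correctly.
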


\begin{lemma}
\label{LE:hidden}Suppose that Assumptions \ref*{A:supp}, \ref*{A:fl}', and
\ref*{A:mn} hold and that we pre-select $\boldsymbol{Z}=\left(  Z_{1}%
,\ldots,Z_{\iota_{n}}\right)  ^{\prime}$ in previous stages and all $Z$s are
either signals or pseudo-signals.

(i) Assume that $b$ signals, without loss of generality say $\boldsymbol{X}%
_{1}^{b}\equiv\left(  X_{1},\ldots,X_{b}\right)  ^{\prime},$ are not in the
list of the pre-selected variables. Let%
\[
\boldsymbol{\eta}_{1}^{b}=E\left\{  P^{m_{n}}\left(  \boldsymbol{X}_{1}%
^{b}\right)  \left\{  Y-P^{m_{n}}\left(  \boldsymbol{Z}\right)  ^{\prime
}\left\{  E\left[  P^{m_{n}}\left(  \boldsymbol{Z}\right)  P^{m_{n}}\left(
\boldsymbol{Z}\right)  ^{\prime}\right]  \right\}  ^{-1}E\left[  P^{m_{n}%
}\left(  \boldsymbol{Z}\right)  Y\right]  \right\}  \right\}  .
\]
If there exist a $j,$ $1\leq j\leq b,$ such that $\left\{  E\left[
f_{nj}^{\ast}\left(  X_{j}\right)  ^{2}\right]  \right\}  ^{1/2}\gtrsim
\kappa_{n}\log\left(  m_{n}\right)  ^{1/2}\left(  m_{n}/n\right)  ^{1/2}$ for
some $\kappa_{n}\rightarrow\infty$, we have%
\[
\left\Vert \boldsymbol{\eta}_{1}^{b}\right\Vert \gtrsim\kappa_{n}\log\left(
m_{n}\right)  ^{1/2}n^{-1/2}.
\]

(ii) For stage 1 when $Z$ is empty, similarly let%
\[
\boldsymbol{\eta}_{1}^{p^{\ast}}=E\left[  P^{m_{n}}\left(  \boldsymbol{X}%
_{1}^{p^{\ast}}\right)  Y\right]  .
\]
If there exist a $j,$ $1\leq j\leq p^{\ast},$ such that $\left\{  E\left[
f_{nj}^{\ast}\left(  X_{j}\right)  ^{2}\right]  \right\}  ^{1/2}\gtrsim
\kappa_{n}\log\left(  m_{n}\right)  ^{1/2}\left(  m_{n}/n\right)  ^{1/2}$ for
some $\kappa_{n}\rightarrow\infty$, we have%
\[
\left\Vert \boldsymbol{\eta}_{1}^{p^{\ast}}\right\Vert \gtrsim\kappa_{n}%
\log\left(  m_{n}\right)  ^{1/2}n^{-1/2}.
\]

\end{lemma}

\begin{lemma}
\label{LE:XXQQ}Suppose Assumptions \ref*{A:iid}, \ref*{A:supp}, and
\ref*{A:full_rank2} hold. Further suppose all covariates in $\boldsymbol{Z}$
($\iota_{n}\times1$) are either signals or pseudo signals. For a random
variable $X_{l}$ that is not included in $\boldsymbol{Z},$ it holds that%
\[
\Pr\left(  \left\vert \left\Vert \left(  n^{-1}\left(
\begin{array}
[c]{c}%
\mathbb{Z}^{\prime}\\
\mathbb{X}_{l}^{\prime}%
\end{array}
\right)  \left(
\begin{array}
[c]{cc}%
\mathbb{Z} & \mathbb{X}_{l}%
\end{array}
\right)  \right)  ^{-1}\right\Vert -\left\Vert \Phi^{-1}\right\Vert
\right\vert >\left\Vert \Phi^{-1}\right\Vert /2\right)  \leq2m_{n}^{2}%
\iota_{n}^{2}\exp\left\{  -C_{1}nm_{n}^{-3}\iota_{n}^{-1}\right\}
\]
for some positive constant $C_{1},$ where $\Phi=\left[
\begin{array}
[c]{cc}%
\Phi_{\boldsymbol{Z}} & \Phi_{\boldsymbol{Z}X_{l}}\\
\Phi_{\boldsymbol{Z}X_{l}}^{\prime} & \Phi_{X_{l}}%
\end{array}
\right]  ,$ $\Phi_{\boldsymbol{Z}X_{l}}=E\left[  P^{m_{n}}(\boldsymbol{Z}%
)^{\prime}P^{m_{n}}(X_{l})\right]  ,$ and recall that $\Phi_{\boldsymbol{Z}%
}=E[P^{m_{n}}\left(  \boldsymbol{Z}\right)  P^{m_{n}}\left(  \boldsymbol{Z}%
\right)  ^{\prime}]$ and $\Phi_{X_{l}}=E\left[  P^{m_{n}}\left(  X_{l}\right)
P^{m_{n}}\left(  X_{l}\right)  ^{\prime}\right]  .$ If in addition Assumptions
\ref*{A:p}' and \ref*{A:mn}\ hold, the right-hand side of the inequality can
be replaced by $C_{2}\exp\left(  -C_{3}n^{C_{4}}\right)  .$
\end{lemma}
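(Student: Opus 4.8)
The plan is to follow the template of Lemma \ref{LE:XX'}, treating the stacked design $(\mathbb{Z},\mathbb{X}_l)$ exactly as $\mathbb{X}_l$ was treated there, but now carrying the enlarged dimension $d\equiv m_n(\iota_n+1)$ through every estimate. Write $\hat{\Phi}\equiv n^{-1}\binom{\mathbb{Z}'}{\mathbb{X}_l'}(\mathbb{Z},\mathbb{X}_l)$ for the sample Gram matrix, so that $\Phi=E(\hat{\Phi})$ is the block matrix in the statement. The first step is to pin down the population eigenvalues of $\Phi$. Since every component of $\boldsymbol{Z}$ is a signal or a pseudo-signal and $X_l$ is one further covariate, the matrix $\Phi$ is a principal submatrix of $\Phi_{\boldsymbol{X}_1^{p^{\ast}+p^{\ast\ast}}}$ (augmented by $X_l$ when $l>p^{\ast}+p^{\ast\ast}$). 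By the Cauchy interlacing theorem together with Assumption \ref{A:full_rank2}, this yields $B_{X1}m_n^{-1}\le\lambda_{\min}(\Phi)\le\lambda_{\max}(\Phi)\le B_{X2}m_n^{-1}$, hence $\|\Phi^{-1}\|\propto m_n$.

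Next I would reduce the claim to a spectral bound on $\Delta\equiv\hat{\Phi}-\Phi$. Using the resolvent identity $\hat{\Phi}^{-1}-\Phi^{-1}=-\hat{\Phi}^{-1}\Delta\,\Phi^{-1}$ together with the reverse triangle inequality, on the event $\{\|\Delta\|\le\lambda_{\min}(\Phi)/2\}$ one has $\|\hat{\Phi}^{-1}\|\le 2\|\Phi^{-1}\|$ and therefore $\big|\,\|\hat{\Phi}^{-1}\|-\|\Phi^{-1}\|\,\big|\le\|\hat{\Phi}^{-1}-\Phi^{-1}\|\le 2\|\Phi^{-1}\|^{2}\|\Delta\|$. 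Since $\lambda_{\min}(\Phi)\propto m_n^{-1}$, the target probability is dominated by $\Pr(\|\Delta\|>c\,m_n^{-1})$ for a suitable small constant $c$; this is the identical reduction used for Lemma \ref{LE:XX'}, now with $\Phi$ in place of $\Phi_{X_l}$.

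The core step is the entrywise concentration of $\Delta$. Each entry is a centered sample average $n^{-1}\sum_{i=1}^{n}\{\phi_a(\cdot_i)\phi_b(\cdot_i)-E[\phi_a\phi_b]\}$ of a product of two normalized B-spline functions; these products are uniformly bounded (the basis is bounded) and, by the local-support properties in Lemma \ref{LE:rank}, have variance $O(m_n^{-1})$. Bernstein's inequality (Lemma \ref{LE:bernstein}) with variance proxy $\propto nm_n^{-1}$ then controls each entry, and a union bound over the $\asymp m_n^{2}\iota_n^{2}$ entries produces the prefactor $2m_n^{2}\iota_n^{2}$. Converting the entrywise bound into a spectral bound through $\|\Delta\|\le d\cdot\max_{a,b}|\Delta_{ab}|$ with $d=m_n(\iota_n+1)$ forces the per-entry threshold down to order $m_n^{-2}\iota_n^{-1}$ to reach the $m_n^{-1}$ eigenvalue scale; feeding this back into the Bernstein exponent $nt^{2}/m_n^{-1}$ delivers the exponential factor, giving the first displayed inequality. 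Under the additional Assumptions \ref{A:p}' and \ref{A:mn} I would then note $m_n\propto n^{B_m}$ with $B_m<1/3$ and $\iota_n$ polynomially bounded (at most the $O(p^{\ast}+p^{\ast\ast})$ variables that can be pre-selected), so the exponent is $\gtrsim n^{C_4}$ for some $C_4>0$ while the polynomial prefactor $m_n^{2}\iota_n^{2}$ is absorbed into the exponential, yielding $C_2\exp(-C_3 n^{C_4})$.

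The hard part is precisely this entrywise-to-spectral conversion and the resulting power of $\iota_n$. Because the off-diagonal (cross-variable) blocks of $\Delta$ are dense rather than banded, the crude conversion $\|\Delta\|\le d\,\|\Delta\|_{\max}$ drives the $\iota_n$-dependence, and the honest bookkeeping produces an exponent of order $nm_n^{-3}\iota_n^{-2}$; sharpening this requires exploiting structure, namely that diagonal blocks are banded (only $O(1)$ terms per row) and that independent off-diagonal blocks carry the smaller variance $O(m_n^{-2})$, handled via a block-wise or maximum-row-sum bound rather than the global dimension factor. It is worth emphasizing that the strengthened condition $B_{p^{\ast\ast}}<(1-3B_m)/2$ imposed in Assumption \ref{A:p}' is calibrated exactly so that $n^{\,1-3B_m}\iota_n^{-2}$ remains a positive power of $n$, which is what makes the simplified bound $C_2\exp(-C_3n^{C_4})$ usable in the downstream multiple-stage arguments; tracking this $\iota_n$ power correctly is the one genuinely delicate point of the proof, the remainder being a routine adaptation of Lemma \ref{LE:XX'}.
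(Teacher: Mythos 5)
Your proposal is correct and follows essentially the same route as the paper: eigenvalue bounds for $\Phi$ obtained from Assumption \ref{A:full_rank2} (you make the interlacing step explicit, while the paper simply asserts the assumption applies to the augmented vector after absorbing $X_{l}$ into $\boldsymbol{Z}$), followed by entrywise Bernstein bounds, a union bound over the $m_{n}^{2}(\iota_{n}+1)^{2}$ entries, and the dimension-factor conversion $\|\hat{\Phi}-\Phi\|\leq m_{n}(\iota_{n}+1)\max_{a,b}|(\hat{\Phi}-\Phi)_{ab}|$. Your reduction via the resolvent identity and Weyl's inequality replaces the paper's reduction via the elementary perturbation bound for $\lambda_{\min}$ and a scalar inverse-function argument (exactly as in Lemma \ref{LE:XX'}); the two are interchangeable.

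Your ``delicate point'' about the power of $\iota_{n}$ is a genuine catch, but it is not a gap in your argument relative to the paper's: substituting the paper's own choice $v_{n}=nm_{n}^{-2}(\iota_{n}+1)^{-1}B_{X1}/3$ into its Bernstein bound $2m_{n}^{2}(\iota_{n}+1)^{2}\exp\{-v_{n}^{2}/[2(B_{4}nm_{n}^{-1}+v_{n}/3)]\}$ gives, since $v_{n}\ll nm_{n}^{-1}$, an exponent proportional to $nm_{n}^{-3}(\iota_{n}+1)^{-2}$, i.e., exactly your $\iota_{n}^{-2}$; the $\iota_{n}^{-1}$ appearing in the lemma statement and in the final display of the paper's proof is an arithmetic slip. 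As you observe, this is harmless: the only form used downstream (in Lemmas \ref{LE:lambdamax} and \ref{LE:error_bound}) is the simplified bound $C_{2}\exp(-C_{3}n^{C_{4}})$, and with $\iota_{n}\lesssim n^{B_{p^{\ast\ast}}}$ the exponent satisfies $nm_{n}^{-3}\iota_{n}^{-2}\gtrsim n^{1-3B_{m}-2B_{p^{\ast\ast}}}$, a positive power of $n$ precisely under Assumption \ref*{A:p}'. One caveat: your suggested sharpening to $\iota_{n}^{-1}$ via $O(m_{n}^{-2})$ variances for the off-diagonal blocks would require independence across the covariates in $\boldsymbol{Z}$, which is not assumed (signals and pseudo-signals may be arbitrarily correlated), so that route should not be relied upon; but since the weaker exponent suffices, nothing is lost.
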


\begin{lemma}
\label{LE:lambdamax}Suppose Assumptions \ref*{A:iid}, \ref*{A:p}',
\ref*{A:supp}, \ref*{A:epsilon}, \ref*{A:tech}, \ref*{A:mn},\ and
\ref*{A:full_rank2} hold. Further suppose all covariates in $\boldsymbol{Z}$
($\iota_{n}\times1$) are either signals or pseudo signals. For a random
variable $X_{l}$ that is not included in $\boldsymbol{Z},$ then
\begin{align*}
\Pr\left(  \lambda_{\max}\left\{  \left(  \hat{\sigma}_{l,\boldsymbol{Z}}%
^{2}n^{-1}\mathbb{X}_{l}^{\prime}M_{\mathbb{Z}}\mathbb{X}_{l}\right)
^{-1}\right\}  \geq2\sigma^{-2}B_{X1}^{-1}m_{n}\right)   &  \leq C_{1}%
\exp(-C_{2}n^{C_{3}})\text{ and}\\
\Pr\left(  \lambda_{\min}\left\{  \left(  \hat{\sigma}_{l,\boldsymbol{Z}}%
^{2}n^{-1}\mathbb{X}_{l}^{\prime}M_{\mathbb{Z}}\mathbb{X}_{l}\right)
^{-1}\right\}  \leq\frac{1}{4}\sigma^{-2}B_{X2}^{-1}m_{n}\right)   &  \leq
C_{4}\exp(-C_{5}n^{C_{6}}),
\end{align*}
for some positive $C_{1},\ldots,C_{6},$ where $\sigma^{2}$ is defined in Lemma
\ref{LE:error_var_Z}.
\end{lemma}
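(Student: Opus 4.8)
The plan is to reduce both inequalities to two-sided eigenvalue control of the sample Schur complement $S_{n}\equiv n^{-1}\mathbb{X}_{l}^{\prime}M_{\mathbb{Z}}\mathbb{X}_{l}$ together with the consistency of the scale factor $\hat{\sigma}_{l,\boldsymbol{Z}}^{2}$. Write $A\equiv n^{-1}(\mathbb{Z},\mathbb{X}_{l})^{\prime}(\mathbb{Z},\mathbb{X}_{l})$ for the full sample second-moment matrix and $\Phi$ for its population analogue, as in Lemma \ref{LE:XXQQ}. By the partitioned-inverse formula (equivalently, Frisch--Waugh--Lovell), $S_{n}^{-1}$ is exactly the lower-right $m_{n}\times m_{n}$ block of $A^{-1}$. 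Since $(\hat{\sigma}_{l,\boldsymbol{Z}}^{2}S_{n})^{-1}=\hat{\sigma}_{l,\boldsymbol{Z}}^{-2}S_{n}^{-1}$, we have $\lambda_{\max}((\hat{\sigma}_{l,\boldsymbol{Z}}^{2}S_{n})^{-1})=(\hat{\sigma}_{l,\boldsymbol{Z}}^{2}\lambda_{\min}(S_{n}))^{-1}$ and $\lambda_{\min}((\hat{\sigma}_{l,\boldsymbol{Z}}^{2}S_{n})^{-1})=(\hat{\sigma}_{l,\boldsymbol{Z}}^{2}\lambda_{\max}(S_{n}))^{-1}$, so it suffices to (a) lower-bound $\lambda_{\min}(S_{n})$, (b) upper-bound $\lambda_{\max}(S_{n})$, and (c) pin down $\hat{\sigma}_{l,\boldsymbol{Z}}^{2}$, each with probability at least $1-C\exp(-Cn^{C})$.

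For the population bounds, note that $\boldsymbol{Z}$ is a sub-vector of the signals-and-pseudo-signals vector $\boldsymbol{X}_{1}^{p^{\ast}+p^{\ast\ast}}$ and $X_{l}$ is the augmenting variable, so $\Phi$ (and also $\Phi_{X_{l}}$) is a principal submatrix of the population second-moment matrix of $\boldsymbol{X}_{1}^{p^{\ast}+p^{\ast\ast}}$ (augmented by $X_{l}$ when $X_{l}$ is a noise variable). By Cauchy eigenvalue interlacing, the bounds $B_{X1}m_{n}^{-1}\le\lambda_{\min}(\Phi)\le\lambda_{\max}(\Phi)\le B_{X2}m_{n}^{-1}$ and $\lambda_{\max}(\Phi_{X_{l}})\le B_{X2}m_{n}^{-1}$ carry over from Assumption \ref{A:full_rank2}; in particular $\|\Phi^{-1}\|\le B_{X1}^{-1}m_{n}$.

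For (a), Lemma \ref{LE:XXQQ} gives $\|A^{-1}\|\le\tfrac{3}{2}\|\Phi^{-1}\|\le\tfrac{3}{2}B_{X1}^{-1}m_{n}$ with probability $1-C\exp(-Cn^{C})$; since $S_{n}^{-1}$ is a principal submatrix of $A^{-1}$, the Rayleigh-quotient characterisation yields $\lambda_{\max}(S_{n}^{-1})\le\lambda_{\max}(A^{-1})=\|A^{-1}\|$, hence $\lambda_{\min}(S_{n})\ge\tfrac{2}{3}B_{X1}m_{n}^{-1}$. For (b), because $M_{\mathbb{Z}}\preceq I_{n}$ we have $S_{n}\preceq n^{-1}\mathbb{X}_{l}^{\prime}\mathbb{X}_{l}$, so $\lambda_{\max}(S_{n})\le\lambda_{\max}(n^{-1}\mathbb{X}_{l}^{\prime}\mathbb{X}_{l})$; the spectral concentration of $n^{-1}\mathbb{X}_{l}^{\prime}\mathbb{X}_{l}$ around $\Phi_{X_{l}}$ established in the proof of Lemma \ref{LE:XX'} (Assumption \ref{A:mn} forces $nm_{n}^{-3}\rightarrow\infty$, so $\|n^{-1}\mathbb{X}_{l}^{\prime}\mathbb{X}_{l}-\Phi_{X_{l}}\|\ll m_{n}^{-1}$, in particular $\le\tfrac{1}{2}B_{X2}m_{n}^{-1}$ eventually), combined with Lemma \ref{LE:rank}, gives $\lambda_{\max}(n^{-1}\mathbb{X}_{l}^{\prime}\mathbb{X}_{l})\le\tfrac{3}{2}B_{X2}m_{n}^{-1}$, hence $\lambda_{\max}(S_{n})\le\tfrac{3}{2}B_{X2}m_{n}^{-1}$. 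For (c), Lemma \ref{LE:error_var_Z} delivers $\hat{\sigma}_{l,\boldsymbol{Z}}^{2}\rightarrow\sigma^{2}$ at the required exponential rate, so on a set of probability $1-C\exp(-Cn^{C})$ we have $\tfrac{3}{4}\sigma^{2}\le\hat{\sigma}_{l,\boldsymbol{Z}}^{2}\le2\sigma^{2}$. Intersecting these events and substituting,
\[
\lambda_{\max}\big((\hat{\sigma}_{l,\boldsymbol{Z}}^{2}S_{n})^{-1}\big)\le\Big(\tfrac{3}{4}\sigma^{2}\cdot\tfrac{2}{3}B_{X1}m_{n}^{-1}\Big)^{-1}=2\sigma^{-2}B_{X1}^{-1}m_{n},
\]
\[
\lambda_{\min}\big((\hat{\sigma}_{l,\boldsymbol{Z}}^{2}S_{n})^{-1}\big)\ge\Big(2\sigma^{2}\cdot\tfrac{3}{2}B_{X2}m_{n}^{-1}\Big)^{-1}=\tfrac{1}{3}\sigma^{-2}B_{X2}^{-1}m_{n}\ge\tfrac{1}{4}\sigma^{-2}B_{X2}^{-1}m_{n},
\]
and a union bound over the three exceptional events (each of order $\exp(-Cn^{C})$) delivers the claim.

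The main obstacle, and the reason the two inequalities are not symmetric, is that Lemma \ref{LE:XXQQ} controls only the spectral norm of $A^{-1}$, i.e.\ $\lambda_{\max}(A^{-1})=1/\lambda_{\min}(A)$; this directly yields the lower bound on $\lambda_{\min}(S_{n})$ needed for the first inequality but says nothing about $\lambda_{\max}(S_{n})$. Recovering the latter requires the separate monotonicity step $S_{n}\preceq n^{-1}\mathbb{X}_{l}^{\prime}\mathbb{X}_{l}$ and an upper bound on $\lambda_{\max}(n^{-1}\mathbb{X}_{l}^{\prime}\mathbb{X}_{l})$, which is exactly where Assumption \ref{A:mn} ($nm_{n}^{-3}\rightarrow\infty$) enters to make the spectral deviation of $n^{-1}\mathbb{X}_{l}^{\prime}\mathbb{X}_{l}$ from $\Phi_{X_{l}}$ negligible relative to the $m_{n}^{-1}$ scale of the eigenvalues. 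Care is also needed to verify that the constants $B_{X1},B_{X2}$ genuinely transfer to the submatrix $\Phi$ via interlacing; the loose target factors $2$ and $\tfrac{1}{4}$ are precisely what absorb both the $\tfrac{3}{2}$ concentration slack and the fluctuation of $\hat{\sigma}_{l,\boldsymbol{Z}}^{2}$.
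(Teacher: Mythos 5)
Your proof is correct and follows essentially the same route as the paper's: split off the variance estimator via Lemma \ref{LE:error_var_Z}, identify $\left(n^{-1}\mathbb{X}_{l}^{\prime}M_{\mathbb{Z}}\mathbb{X}_{l}\right)^{-1}$ as the lower-right block of the inverse of the full sample Gram matrix $A=n^{-1}\left(\mathbb{Z},\mathbb{X}_{l}\right)^{\prime}\left(\mathbb{Z},\mathbb{X}_{l}\right)$, control that block through Lemma \ref{LE:XXQQ} (whose bound is of order $C\exp(-Cn^{C})$ under Assumptions \ref*{A:p}' and \ref*{A:mn}) together with Assumption \ref*{A:full_rank2}, and finish with a union bound over the exceptional events. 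The only divergence is in the second inequality: the paper is terse there, implicitly reusing the entrywise concentration inside the proof of Lemma \ref{LE:XXQQ} to control $\lambda_{\max}(A)$ and then interlacing on $A^{-1}$, whereas you obtain the required upper bound on $\lambda_{\max}\left(n^{-1}\mathbb{X}_{l}^{\prime}M_{\mathbb{Z}}\mathbb{X}_{l}\right)$ from the monotonicity $M_{\mathbb{Z}}\preceq I_{n}$ plus the concentration of the $m_{n}\times m_{n}$ matrix $n^{-1}\mathbb{X}_{l}^{\prime}\mathbb{X}_{l}$ established in the proof of Lemma \ref{LE:XX'}; both variants rest on the same max-norm concentration, and yours has the merit of making explicit a step the paper leaves unstated---your remark that Lemma \ref{LE:XXQQ} as literally stated controls only $\left\Vert A^{-1}\right\Vert =1/\lambda_{\min}(A)$, and hence does not by itself deliver the second inequality, is accurate.
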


\begin{lemma}
\label{LE:error_bound}Suppose Assumptions \ref*{A:iid}, \ref*{A:p}',
\ref*{A:supp}, \ref*{A:epsilon}, \ref*{A:fl}', \ref*{A:tech}, \ref*{A:mn}, and
\ref*{A:full_rank2} hold. Further suppose all covariates in $\boldsymbol{Z}$
($\iota_{n}\times1$) are either signals or pseudo signals. For a random
variable $X_{l}$ that is not included in $\boldsymbol{Z},$then%
\begin{align*}
&  \Pr\left(  \left\Vert n^{-1/2}\left(  \boldsymbol{u}_{X_{l},\boldsymbol{Z}%
}^{\prime}\boldsymbol{u}_{Y,\boldsymbol{Z}}-\boldsymbol{\eta}%
_{l,\boldsymbol{Z}}\right)  \right\Vert ^{2}\geq C_{1}m_{n}^{-1}v_{n}\right)
\\
&  \leq\left\{
\begin{array}
[c]{c}%
2m_{n}\exp\left\{  -nm_{n}^{-2}v_{n}/[C_{2}(nm_{n}^{-1}+\iota_{n}n^{1/2}%
v_{n}^{1/2})]\right\}  +m_{n}\exp\left(  -C_{3}m_{n}^{-1}v_{n}\right) \\
2m_{n}\exp\left\{  -nm_{n}^{-2}v_{n}/[C_{2}(nm_{n}^{-1}+\iota_{n}n^{1/2}%
v_{n}^{1/2})]\right\}  +C_{4}\exp\left(  -C_{5}n^{C_{6}}\right)
\end{array}
\right.
\begin{array}
[c]{c}%
\text{if }v_{n}\lesssim n^{s/\left(  s+2\right)  }m_{n}^{2}\\
\text{if }v_{n}\gtrsim n^{s/\left(  s+2\right)  }m_{n}^{2}%
\end{array}
,
\end{align*}
and%
\begin{align*}
&  \Pr\left(  \left\Vert n^{-1/2}\boldsymbol{u}_{X_{l},\boldsymbol{Z}}%
^{\prime}\mathbb{Z}\left(  \mathbb{Z}^{\prime}\mathbb{Z}\right)
^{-1}\mathbb{Z}^{\prime}\boldsymbol{u}_{Y,\boldsymbol{Z}}\right\Vert ^{2}\geq
C_{1}m_{n}^{-1}v_{n}\right) \\
&  \leq\left\{
\begin{array}
[c]{l}%
C_{13}\exp\left(  -C_{14}n^{C_{15}}\right)  +m_{n}^{2}\iota_{n}\exp\left(
-C_{16}n^{1/2}m_{n}^{-1/6}\iota_{n}^{-1}\left(  m_{n}^{-1}v_{n}\right)
^{1/2}\right) \\
C_{17}\exp\left(  -C_{18}n^{C_{19}}\right)
\end{array}
\right.
\begin{array}
[c]{c}%
\text{if }v_{n}\lesssim n^{\left(  s-2\right)  /\left[  4\left(  s+2\right)
\right]  }m_{n}^{5/6}\iota_{n}^{2}\\
\text{if }v_{n}\gtrsim n^{\left(  s-2\right)  /\left[  4\left(  s+2\right)
\right]  }m_{n}^{5/6}\iota_{n}^{2}%
\end{array}
,
\end{align*}
for some positive $C_{2},C_{3},C_{4},C_{5},$ and $C_{6}$. If we set
$v_{n}=\varsigma_{n},$ then%
\[
\Pr\left(  \left\Vert n^{-1/2}\boldsymbol{\tilde{u}}_{l,\boldsymbol{Z}%
}\right\Vert ^{2}\geq C_{1}m_{n}^{-1}\varsigma_{n}\right)  \leq n^{-M}%
+C_{20}\exp\left(  -C_{21}n^{C_{22}}\right)
\]
for any large positive constant $M$ and some positive constants $C_{20}%
,C_{21},$ and $C_{22}.$
\end{lemma}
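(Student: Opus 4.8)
The plan is to prove the three displayed bounds in turn, exploiting the decomposition of $\boldsymbol{\tilde{u}}_{l,\boldsymbol{Z}}$ in equation (\ref{EQ:utidelz}) into the \emph{centered cross-product} $\boldsymbol{u}_{X_{l},\boldsymbol{Z}}^{\prime}\boldsymbol{u}_{Y,\boldsymbol{Z}}-n\boldsymbol{\eta}_{l,\boldsymbol{Z}}$ and the \emph{projection correction} $\boldsymbol{u}_{X_{l},\boldsymbol{Z}}^{\prime}Q_{\mathbb{Z}}\boldsymbol{u}_{Y,\boldsymbol{Z}}$, which are exactly the objects bounded by the first two displays. For each of these vectors I would reduce the $\ell_{2}$-norm bound to a family of coordinatewise scalar tail bounds: if every coordinate of $n^{-1/2}(\cdot)$ is below $(C_{1}m_{n}^{-2}v_{n})^{1/2}$ in absolute value, then summing the squared coordinates yields $\|n^{-1/2}(\cdot)\|^{2}<C_{1}m_{n}^{-1}v_{n}$, so a union bound over the $m_{n}$ coordinates (the prefactor $m_{n}$ in the stated bounds) reduces everything to scalar concentration, handled by Lemma \ref{LE:bernstein} and Lemma \ref{LE:main_inequality} together with the inequalities collected in Remark \ref{RE:ineq}.

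For the centered cross-product I would write it as $\sum_{i=1}^{n}\boldsymbol{\xi}_{i}$ with $\boldsymbol{\xi}_{i}\equiv\boldsymbol{u}_{X_{l},\boldsymbol{Z},i}u_{Y,\boldsymbol{Z},i}-\boldsymbol{\eta}_{l,\boldsymbol{Z}}$, which is i.i.d.\ and mean zero by equation (\ref{EQ:ita}). The two ingredients are (a) the per-coordinate variance, which is $\propto m_{n}^{-1}$ because the $k$-th coordinate of $\boldsymbol{u}_{X_{l},\boldsymbol{Z},i}$ is a projection residual of the locally supported $\phi_{k}(X_{l})$ and hence has second moment no larger than $E[\phi_{k}(X_{l})^{2}]\propto m_{n}^{-1}$ by Lemma \ref{LE:rank}, while $u_{Y,\boldsymbol{Z},i}$ has a bounded conditional second moment under Assumptions \ref*{A:tech} and \ref*{A:fl}'; and (b) the tail of $\boldsymbol{\xi}_{i}$, which inherits the exponential decay of $\varepsilon$ (Assumption \ref*{A:epsilon}) through $u_{Y,\boldsymbol{Z},i}$, the B-spline factor being bounded. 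A truncation then splits each coordinate sum into a bounded part handled by Bernstein's inequality---yielding the first term, whose numerator $nm_{n}^{-2}v_{n}$ equals $x^{2}$ for $x\propto n^{1/2}m_{n}^{-1}v_{n}^{1/2}$ and whose denominator $nm_{n}^{-1}+\iota_{n}n^{1/2}v_{n}^{1/2}$ combines the variance $\propto nm_{n}^{-1}$ with the effective amplitude $Mx$ of the truncated summand---and a tail part handled by the exponential bound, giving the second term; the two regimes reflect the Gaussian-versus-heavy-tail crossover in Lemma \ref{LE:main_inequality}.

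For the projection correction I would factor $\boldsymbol{u}_{X_{l},\boldsymbol{Z}}^{\prime}Q_{\mathbb{Z}}\boldsymbol{u}_{Y,\boldsymbol{Z}}=(n^{-1/2}\boldsymbol{u}_{X_{l},\boldsymbol{Z}}^{\prime}\mathbb{Z})(n^{-1}\mathbb{Z}^{\prime}\mathbb{Z})^{-1}(n^{-1/2}\mathbb{Z}^{\prime}\boldsymbol{u}_{Y,\boldsymbol{Z}})$ and treat the three factors separately via Remark \ref{RE:ineq}. The crucial point is that, by the population orthogonality $E[P^{m_{n}}(\boldsymbol{Z})\boldsymbol{U}_{X_{l},\boldsymbol{Z}}^{\prime}]=\boldsymbol{0}$ and $E[P^{m_{n}}(\boldsymbol{Z})U_{Y,\boldsymbol{Z}}]=\boldsymbol{0}$ recorded after equation (\ref{EQ:utidelz}), both outer factors are normalized sums of mean-zero i.i.d.\ vectors and can be bounded coordinatewise exactly as in the previous step, while the inner factor satisfies $\|(n^{-1}\mathbb{Z}^{\prime}\mathbb{Z})^{-1}\|\propto m_{n}$ with very high probability by Lemma \ref{LE:XXQQ} under Assumption \ref*{A:full_rank2}. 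Multiplying the three bounds and tracking the $\iota_{n}m_{n}$-dimension of $\mathbb{Z}$ produces the $\iota_{n}$-dependent exponents and the two regimes in the second display.

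Finally, the last display follows by combining: since $\boldsymbol{\tilde{u}}_{l,\boldsymbol{Z}}$ is the difference of the two terms, the inclusion $\{\|a-b\|^{2}\geq c\}\subseteq\{\|a\|^{2}\geq c/4\}\cup\{\|b\|^{2}\geq c/4\}$ (a form of Remark \ref{RE:ineq}) reduces the bound to the two already established, evaluated at $v_{n}=\varsigma_{n}\propto\kappa_{n}\log(m_{n})m_{n}$. Under Assumptions \ref*{A:mn} and \ref*{A:p}' this choice lands in the regime where the Bernstein term dominates and can be made smaller than $n^{-M}$ for any prescribed $M$ through the constants, leaving the $\exp(-Cn^{C^{\prime}})$ residual. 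The main obstacle I anticipate is controlling the effective amplitude $M$ of the truncated summands once the projection onto the $\iota_{n}m_{n}$-dimensional column space of $\mathbb{Z}$ enters: the residual coordinate is no longer supported on a single short interval, so bounding its sup-norm---and hence the $Mx\propto\iota_{n}n^{1/2}v_{n}^{1/2}$ contribution to the Bernstein denominator---requires Assumption \ref*{A:full_rank2} to control $\|\Phi_{\boldsymbol{Z}}^{-1}\|\propto m_{n}$ and the magnitude of the coefficients $\boldsymbol{\gamma}_{X_{l},\boldsymbol{Z}}$, after which one must check that the crossover thresholds (e.g.\ $n^{s/(s+2)}m_{n}^{2}$) are consistent with Lemma \ref{LE:main_inequality}.
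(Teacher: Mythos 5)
Your proposal is correct and follows essentially the same route as the paper's proof: the same decomposition of $\boldsymbol{\tilde{u}}_{l,\boldsymbol{Z}}$ into the centered cross-product and the projection correction, the same coordinatewise union bound reducing the $\ell_{2}$ bounds to scalar tails handled by Lemma \ref{LE:bernstein} and Lemma \ref{LE:main_inequality}, the same three-factor treatment of $\boldsymbol{u}_{X_{l},\boldsymbol{Z}}^{\prime}Q_{\mathbb{Z}}\boldsymbol{u}_{Y,\boldsymbol{Z}}$ via Lemma \ref{LE:XXQQ}, and the same combination of the two bounds at $v_{n}=\varsigma_{n}$. The only cosmetic difference is that what you call a ``truncation'' is realized in the paper as the exact decomposition $U_{Y,\boldsymbol{Z}}=C_{Y,\boldsymbol{Z}}+\varepsilon$ (bounded systematic part plus error), with $\left\vert C_{Y,\boldsymbol{Z}}\right\vert \lesssim\iota_{n}m_{n}$ obtained precisely as you anticipate from Assumption \ref{A:full_rank2} and the boundedness of the projection coefficients, so the Bernstein amplitude $\iota_{n}n^{1/2}v_{n}^{1/2}$ you identify is exactly the paper's.
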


\begin{lemma}
\label{LE:error_var_Z}Suppose Assumptions \ref*{A:iid}, \ref*{A:p}',
\ref*{A:supp}, \ref*{A:epsilon}, \ref*{A:fl}', \ref*{A:mn}, and
\ref*{A:full_rank2} hold. Suppose all covariates in $\boldsymbol{Z}$
($\iota_{n}\times1$) are either signals or pseudo signals and $X_{l}$ is a
variable that is not included in $\boldsymbol{Z}.$ Define%
\[
U=Y-\left(
\begin{array}
[c]{c}%
P^{m_{n}}\left(  \boldsymbol{Z}\right) \\
P^{m_{n}}\left(  X_{l}\right)
\end{array}
\right)  ^{\prime}\Phi^{-1}E\left[  \left(
\begin{array}
[c]{c}%
P^{m_{n}}\left(  \boldsymbol{Z}\right) \\
P^{m_{n}}\left(  X_{l}\right)
\end{array}
\right)  Y\right]  ,
\]
Let $\sigma^{2}=E\left(  U^{2}\right)  ,$ $\omega^{4}=E\left(  U^{4}\right)
-\left[  E\left(  U^{2}\right)  \right]  ^{2}$. Then, for $v_{n}\propto
n^{\lambda}$ with $\lambda>1/2,$ we have%
\[
\Pr\left(  \left\vert \hat{\sigma}_{l,\boldsymbol{Z}}^{2}-\sigma
^{2}\right\vert \geq v_{n}/n\right)  \leq C_{1}\exp\left(  -C_{2}n^{C_{3}%
}\right)  \text{ for some }C_{1},C_{2},C_{3}>0.
\]

\end{lemma}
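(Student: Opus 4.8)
The plan is to mirror the proof of Lemma~\ref{LE:error_var} for the one-stage case, with the single-covariate design replaced by the augmented design $\mathbb{W}\equiv(\mathbb{Z},\mathbb{X}_l)$ and $M_{\mathbb{W}}\equiv I_n-\mathbb{W}(\mathbb{W}^\prime\mathbb{W})^{-1}\mathbb{W}^\prime$. Writing $W\equiv(P^{m_n}(\boldsymbol{Z})^\prime,P^{m_n}(X_l)^\prime)^\prime$ and $\boldsymbol{\gamma}\equiv\Phi^{-1}E[WY]$, so that $U=Y-W^\prime\boldsymbol{\gamma}$, $\boldsymbol{y}=\mathbb{W}\boldsymbol{\gamma}+\boldsymbol{u}$ with $\boldsymbol{u}=(U_1,\dots,U_n)^\prime$, the normal equations give $E[WU]=0$. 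Since $\hat\sigma_{l,\boldsymbol{Z}}^2=n^{-1}\boldsymbol{y}^\prime M_{\mathbb{W}}\boldsymbol{y}=n^{-1}\boldsymbol{u}^\prime M_{\mathbb{W}}\boldsymbol{u}$, I would use the decomposition
\[
\hat\sigma_{l,\boldsymbol{Z}}^2-\sigma^2=\Big(n^{-1}\textstyle\sum_{i=1}^n(U_i^2-\sigma^2)\Big)-n^{-1}\boldsymbol{u}^\prime\mathbb{W}(\mathbb{W}^\prime\mathbb{W})^{-1}\mathbb{W}^\prime\boldsymbol{u}\equiv \mathrm{I}-\mathrm{II},
\]
and bound $\Pr(|\mathrm{I}|\ge v_n/(2n))$ and $\Pr(|\mathrm{II}|\ge v_n/(2n))$ separately.

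First I would establish the tail behavior of $U$. Writing $U=\varepsilon+\big(\sum_{j=1}^{p^*}f_j^*(X_j)-W^\prime\boldsymbol{\gamma}\big)$, the term $\sum_j f_j^*(X_j)$ is bounded by Assumption~\ref*{A:tech}, and $W^\prime\boldsymbol{\gamma}$, being the least-squares projection of the bounded smooth target $E(Y\,|\,\boldsymbol{Z},X_l)$ onto the $B$-spline sieve, is uniformly bounded in sup-norm under Assumptions~\ref*{A:supp} and~\ref*{A:fl}'. Hence $U$ equals $\varepsilon$ plus a bounded term, so by Assumption~\ref*{A:epsilon} it satisfies $\Pr(|U|>t)\le C_1\exp(-C_2 t^{s'})$ for some $s'>0$; this is exactly the argument used in Lemma~\ref{LE:error_var}. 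Consequently $z_i\equiv U_i^2-\sigma^2$ are i.i.d., mean-zero, and exponentially tailed, so for $v_n\propto n^\lambda$ with $\lambda>1/2$, Lemma~\ref{LE:main_inequality} gives $\Pr(|\sum_i z_i|\ge v_n/2)\le\exp(-C n^{C'})$, which controls term $\mathrm{I}$.

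For term $\mathrm{II}$ I would write $\mathrm{II}\le n^{-2}\|(n^{-1}\mathbb{W}^\prime\mathbb{W})^{-1}\|\,\|\mathbb{W}^\prime\boldsymbol{u}\|^2$. On the high-probability event of Lemma~\ref{LE:XXQQ} (valid under Assumptions~\ref*{A:full_rank2},~\ref*{A:p}', and~\ref*{A:mn}) one has $\|(n^{-1}\mathbb{W}^\prime\mathbb{W})^{-1}\|\lesssim m_n$ off a set of probability $C\exp(-Cn^{C'})$. Each component of $\mathbb{W}^\prime\boldsymbol{u}=\sum_iW_iU_i$ is a mean-zero i.i.d. sum whose summands are exponentially tailed and whose \emph{per-summand} variance is $O(m_n^{-1})$, since $E[\phi_k^2]\propto m_n^{-1}$ by Lemma~\ref{LE:rank} and $E(U^2\,|\,\text{regressors})$ is bounded (Assumption~\ref*{A:tech} together with $E[\varepsilon\,|\,X_1,\dots,X_{p_n}]=0$). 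A union bound over the $m_n(\iota_n+1)$ components, with Lemma~\ref{LE:main_inequality} applied using this shrinking variance, then gives $\|\mathbb{W}^\prime\boldsymbol{u}\|^2\lesssim nv_n m_n^{-1}$ with probability $1-\exp(-Cn^{C'})$, whence $\mathrm{II}\lesssim v_n/n$. Here the sharpened restriction $B_{p^{**}}<(1-3B_m)/2$ in Assumption~\ref*{A:p}' is essential: it forces $m_n\iota_n\lesssim n^{B_m+B_{p^{**}}}\ll n^{1/2}\le v_n$, so the mean $E\|\mathbb{W}^\prime\boldsymbol{u}\|^2\lesssim n\iota_n$ is dominated by the threshold $nv_nm_n^{-1}$ by a diverging factor, making the large-deviation probability exponentially small.

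Combining the two bounds through $\Pr(|\hat\sigma_{l,\boldsymbol{Z}}^2-\sigma^2|\ge v_n/n)\le\Pr(|\mathrm{I}|\ge v_n/(2n))+\Pr(|\mathrm{II}|\ge v_n/(2n))$ yields the claim. The main obstacle, I expect, is the uniform control of term $\mathrm{II}$: one must show the quadratic form built from the divergent-dimensional design $\mathbb{W}$ is negligible relative to $v_n/n$ with exponentially small exceptional probability. This hinges precisely on pairing the eigenvalue control of Lemma~\ref{LE:XXQQ} with the shrinking component variance $O(m_n^{-1})$ from Lemma~\ref{LE:rank}, the dimension restriction $B_{p^{**}}<(1-3B_m)/2$, and the requirement $\lambda>1/2$, which jointly guarantee the exponent $\lambda-B_m-B_{p^{**}}>0$. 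A secondary technical point is verifying that the projection residual $U$ genuinely inherits the exponential tail of $\varepsilon$ despite the growing coefficient vector $\boldsymbol{\gamma}$, i.e.\ the sup-norm boundedness of the additive sieve projection.
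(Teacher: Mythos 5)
Your proposal is correct and follows essentially the same route as the paper's proof, which likewise absorbs $X_{l}$ into the design, writes $\hat{\sigma}_{l,\boldsymbol{Z}}^{2}=n^{-1}\boldsymbol{u}^{\prime}M_{\mathbb{Z}}\boldsymbol{u}$ for the residual $\boldsymbol{u}$ of the population projection on the augmented sieve, and then repeats the two-term decomposition from Lemma \ref{LE:error_var}, handling the sample-variance term by the tail argument plus Lemma \ref{LE:main_inequality} and the projection quadratic form by the eigenvalue control together with the moment bounds it delegates to Lemma \ref{LE:error_bound} (whose internals are exactly the union-bound argument you redo directly with Lemma \ref{LE:XXQQ}). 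The one caution concerns your claim that $W^{\prime}\boldsymbol{\gamma}$ is uniformly sup-norm bounded: the paper never asserts $L_{\infty}$-stability of the $L_{2}$ projection onto an additive spline space with $\iota_{n}$ growing components, but instead bounds the non-$\varepsilon$ part of $U$ by $C\iota_{n}m_{n}$ and runs Bernstein's inequality with that polynomially growing bound, which suffices for the same conclusion since $\iota_{n}m_{n}\lesssim n^{B_{m}+B_{p^{\ast\ast}}}\ll n^{1/2}\lesssim v_{n}$ under Assumption \ref*{A:p}'.
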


\begin{lemma}
\label{LE:bound_error}Suppose Assumptions \ref*{A:iid}, \ref*{A:p}',
\ref*{A:supp}, \ref*{A:epsilon}, \ref*{A:tech}, \ref*{A:mn},\ and
\ref*{A:full_rank2} hold. Further suppose all covariates in $\boldsymbol{Z}$
($\iota_{n}\times1$) are either signals or pseudo signals. For a random
variable $X_{l}$ that is not included in $\boldsymbol{Z},$ we have%
\[
\Pr\left(  \boldsymbol{\tilde{u}}_{l,\boldsymbol{Z}}^{\prime}\left(
\hat{\sigma}_{l,\boldsymbol{Z}}^{2}\mathbb{X}_{l}^{\prime}M_{\mathbb{Z}%
}\mathbb{X}_{l}\right)  ^{-1}\boldsymbol{\tilde{u}}_{l,\boldsymbol{Z}}%
\geq\varsigma_{n}\right)  \leq\exp\left(  -C_{1}m_{n}^{-1}\varsigma_{n}+\log
m_{n}\right)  +C_{2}\exp(-C_{3}n^{C_{4}}).
\]

\end{lemma}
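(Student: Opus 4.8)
The plan is to treat $\boldsymbol{\tilde{u}}_{l,\boldsymbol{Z}}^{\prime}(\hat{\sigma}_{l,\boldsymbol{Z}}^{2}\mathbb{X}_{l}^{\prime}M_{\mathbb{Z}}\mathbb{X}_{l})^{-1}\boldsymbol{\tilde{u}}_{l,\boldsymbol{Z}}$ as the multiple-stage analogue of the pure-error term $\boldsymbol{u}_{l}^{\prime}\mathbb{X}_{l}(\hat{\sigma}_{l}^{2}\mathbb{X}_{l}^{\prime}\mathbb{X}_{l})^{-1}\mathbb{X}_{l}^{\prime}\boldsymbol{u}_{l}$ controlled in Lemma~\ref{LE:rank2}(i), with $\boldsymbol{\tilde{u}}_{l,\boldsymbol{Z}}$ in the role of $\mathbb{X}_{l}^{\prime}\boldsymbol{u}_{l}$. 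First I would rescale, writing
\[
\boldsymbol{\tilde{u}}_{l,\boldsymbol{Z}}^{\prime}\left(\hat{\sigma}_{l,\boldsymbol{Z}}^{2}\mathbb{X}_{l}^{\prime}M_{\mathbb{Z}}\mathbb{X}_{l}\right)^{-1}\boldsymbol{\tilde{u}}_{l,\boldsymbol{Z}}=\left(n^{-1/2}\boldsymbol{\tilde{u}}_{l,\boldsymbol{Z}}\right)^{\prime}\left(\hat{\sigma}_{l,\boldsymbol{Z}}^{2}n^{-1}\mathbb{X}_{l}^{\prime}M_{\mathbb{Z}}\mathbb{X}_{l}\right)^{-1}\left(n^{-1/2}\boldsymbol{\tilde{u}}_{l,\boldsymbol{Z}}\right),
\]
and then bound the right-hand side by $\lambda_{\max}\{(\hat{\sigma}_{l,\boldsymbol{Z}}^{2}n^{-1}\mathbb{X}_{l}^{\prime}M_{\mathbb{Z}}\mathbb{X}_{l})^{-1}\}\cdot\Vert n^{-1/2}\boldsymbol{\tilde{u}}_{l,\boldsymbol{Z}}\Vert^{2}$. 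This separates a spectral factor, which is $O(m_{n})$ with very high probability, from the squared length of $\boldsymbol{\tilde{u}}_{l,\boldsymbol{Z}}$, which should be $O(m_{n}^{-1}\varsigma_{n})$; their product is $O(\varsigma_{n})$, which is exactly the scale we must beat.

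Applying the product inequality of Remark~\ref{RE:ineq} with split point $2\sigma^{-2}B_{X1}^{-1}m_{n}$ (where $\sigma^{2}$ is as in Lemma~\ref{LE:error_var_Z}) yields
\[
\Pr\left(\boldsymbol{\tilde{u}}_{l,\boldsymbol{Z}}^{\prime}\left(\hat{\sigma}_{l,\boldsymbol{Z}}^{2}\mathbb{X}_{l}^{\prime}M_{\mathbb{Z}}\mathbb{X}_{l}\right)^{-1}\boldsymbol{\tilde{u}}_{l,\boldsymbol{Z}}\geq\varsigma_{n}\right)\leq\Pr\left(\lambda_{\max}\{\cdots\}\geq2\sigma^{-2}B_{X1}^{-1}m_{n}\right)+\Pr\left(\Vert n^{-1/2}\boldsymbol{\tilde{u}}_{l,\boldsymbol{Z}}\Vert^{2}\geq\tfrac{1}{2}\sigma^{2}B_{X1}m_{n}^{-1}\varsigma_{n}\right).
\]
The first term on the right is $\leq C\exp(-C^{\prime}n^{C^{\prime\prime}})$ immediately by Lemma~\ref{LE:lambdamax}. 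For the second term, the threshold $\tfrac{1}{2}\sigma^{2}B_{X1}m_{n}^{-1}\varsigma_{n}$ is of the form $Cm_{n}^{-1}\varsigma_{n}$, and with $v_{n}=\varsigma_{n}$ we have $\varsigma_{n}\propto\kappa_{n}\log(m_{n})m_{n}\ll m_{n}^{2}\lesssim n^{s/(s+2)}m_{n}^{2}$ by Assumption~\ref{A:xi_n}, so we are in the first regime of both displays of Lemma~\ref{LE:error_bound}. Rather than invoke the final display of that lemma (which reports the looser $n^{-M}$ form), I would go one step inside: using $\boldsymbol{\tilde{u}}_{l,\boldsymbol{Z}}=\boldsymbol{u}_{X_{l},\boldsymbol{Z}}^{\prime}\boldsymbol{u}_{Y,\boldsymbol{Z}}-n\boldsymbol{\eta}_{l,\boldsymbol{Z}}-\boldsymbol{u}_{X_{l},\boldsymbol{Z}}^{\prime}Q_{\mathbb{Z}}\boldsymbol{u}_{Y,\boldsymbol{Z}}$ from (\ref{EQ:utidelz}) and $\Vert a+b\Vert^{2}\leq2\Vert a\Vert^{2}+2\Vert b\Vert^{2}$, the event splits into the two events already bounded by Lemma~\ref{LE:error_bound}. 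The dominant contribution is the summand $m_{n}\exp(-C_{3}m_{n}^{-1}\varsigma_{n})=\exp(-C_{3}m_{n}^{-1}\varsigma_{n}+\log m_{n})$ coming from the first display, which is precisely the leading term of the assertion.

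The main obstacle will be the bookkeeping showing that every remaining exponential tail collapses into a single $C_{2}\exp(-C_{3}n^{C_{4}})$. This requires verifying, under Assumptions~\ref{A:mn} and~\ref{A:p}' (i.e.\ $B_{m}<1/3$ and $B_{p^{\ast\ast}}<(1-3B_{m})/2$, with $\iota_{n}\lesssim p^{\ast}+p^{\ast\ast}\lesssim n^{B_{p^{\ast\ast}}}$ at the relevant stages), that the factor $2m_{n}\exp\{-nm_{n}^{-2}\varsigma_{n}/[C_{2}(nm_{n}^{-1}+\iota_{n}n^{1/2}\varsigma_{n}^{1/2})]\}$ and the $Q_{\mathbb{Z}}$-part terms (carrying prefactors such as $m_{n}^{2}\iota_{n}$ and exponents like $-C_{16}n^{1/2}m_{n}^{-1/6}\iota_{n}^{-1}(m_{n}^{-1}\varsigma_{n})^{1/2}$) all have exponents that are genuine positive powers of $n$. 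When $nm_{n}^{-1}$ dominates the denominator of the first factor its exponent is again $\propto m_{n}^{-1}\varsigma_{n}$ and merges into the leading term; when $\iota_{n}n^{1/2}\varsigma_{n}^{1/2}$ dominates, the exponent is $\propto n^{1/2-3B_{m}/2-B_{p^{\ast\ast}}}$, which is positive exactly because $B_{p^{\ast\ast}}<(1-3B_{m})/2$, so that the polynomial prefactors are absorbed into a slightly smaller $C_{3}$. Collecting the leading term together with these collapsed tails then produces the stated bound $\exp(-C_{1}m_{n}^{-1}\varsigma_{n}+\log m_{n})+C_{2}\exp(-C_{3}n^{C_{4}})$.
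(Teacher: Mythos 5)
Your proposal is correct and follows essentially the same route as the paper's proof: bound the quadratic form by $\lambda_{\max}\left\{  \left(  \hat{\sigma}_{l,\boldsymbol{Z}}^{2}n^{-1}\mathbb{X}_{l}^{\prime}M_{\mathbb{Z}}\mathbb{X}_{l}\right)  ^{-1}\right\}  \left\Vert n^{-1/2}\boldsymbol{\tilde{u}}_{l,\boldsymbol{Z}}\right\Vert ^{2}$, split the event via the product inequality at the threshold $2\sigma^{-2}B_{X1}^{-1}m_{n}$, and then invoke Lemma \ref{LE:lambdamax} for the eigenvalue factor and Lemma \ref{LE:error_bound} for the norm factor. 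The one place you deviate --- unpacking the decomposition $\boldsymbol{\tilde{u}}_{l,\boldsymbol{Z}}=\boldsymbol{u}_{X_{l},\boldsymbol{Z}}^{\prime}\boldsymbol{u}_{Y,\boldsymbol{Z}}-n\boldsymbol{\eta}_{l,\boldsymbol{Z}}-\boldsymbol{u}_{X_{l},\boldsymbol{Z}}^{\prime}Q_{\mathbb{Z}}\boldsymbol{u}_{Y,\boldsymbol{Z}}$ and tracking the two displays of Lemma \ref{LE:error_bound} directly, instead of citing that lemma's final statement as the paper does --- is actually a point in your favor: the stated final part of Lemma \ref{LE:error_bound} only asserts the bound $n^{-M}+C_{20}\exp\left(  -C_{21}n^{C_{22}}\right)  $, whose polynomial first term is strictly larger than the superpolynomially small $\exp\left(  -C_{1}m_{n}^{-1}\varsigma_{n}+\log m_{n}\right)  $ claimed here (since $m_{n}^{-1}\varsigma_{n}\propto\kappa_{n}\log m_{n}$ with $\kappa_{n}\rightarrow\infty$), so the sharper form can only be recovered from the intermediate line inside that lemma's proof, which is exactly what your bookkeeping step does.
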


\begin{lemma}
\label{LE:allsignals}Suppose Assumptions \ref*{A:iid}, \ref*{A:p}',
\ref*{A:supp}, \ref*{A:epsilon}, \ref*{A:fl}', \ref*{A:tech}, \ref*{A:mn},
\ref*{A:xi_n}, and \ref*{A:full_rank2} hold. Then
\[
\Pr\left(  \mathcal{D}_{k}\right)  \geq1-kn^{-M_{1}}-kC_{4}\exp\left(
-C_{5}n^{C_{6}}\right)
\]
for any finite fixed positive integer $k,$ any fixed large positive constant
$M_{1},$ and some positive constants $C_{4},C_{5},$ and $C_{6}.$ When $k$ is
fixed or divergent to infinity at a rate no faster than $n^{a}$ for some
$a>0,$ we can write
\[
\Pr\left(  \mathcal{D}_{k}\right)  \geq1-n^{-M_{2}}-C_{7}\exp\left(
-C_{8}n^{C_{9}}\right)  \text{ for any }k\leq n^{a},
\]
for some large positive constant $M_{2}$ and some positive constants
$C_{7},C_{8},$ and $C_{9}.$
\end{lemma}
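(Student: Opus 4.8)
The plan is to treat $\mathcal{D}_{k}$ as the ``good event'' on which the multiple-stage procedure behaves correctly through its first $k$ stages: every variable selected up to stage $k$ is a signal or a pseudo-signal, and every signal whose joint effect satisfies Assumption \ref*{A;no_hidden}' is picked up by the time the procedure would otherwise terminate. Let $G_{j}$ denote the event that stages $1,\ldots,j$ are all correct in this sense, so that $\mathcal{D}_{k}=G_{k}$ and $G_{j}=G_{j-1}\cap H_{j}$, where $H_{j}$ is the event that stage $j$ neither selects a noise variable nor misses a detectable signal, given that the predecessors behaved correctly. Boole's inequality gives $\Pr(\mathcal{D}_{k}^{c})\le\sum_{j=1}^{k}\Pr(H_{j}^{c}\cap G_{j-1})\le\sum_{j=1}^{k}\Pr(H_{j}^{c}\mid G_{j-1})$, which is where the leading factor $k$ in the stated bound originates. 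It therefore suffices to bound each conditional probability $\Pr(H_{j}^{c}\mid G_{j-1})$ by $n^{-M_{1}}+C_{4}\exp(-C_{5}n^{C_{6}})$ uniformly in $j$.

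First I would fix a stage $j$ and work on $G_{j-1}$, so that the pre-selected set $\boldsymbol{Z}_{(j-1)}$ consists entirely of signals and pseudo-signals and Proposition \ref{TH:main1_ms} is applicable (with Proposition \ref{TH:main1} covering the first stage, where $\boldsymbol{Z}_{(0)}=\emptyset$). For each candidate $X_{l}$ with $l\in\Psi_{(j)}$ I would split according to Table 2: if $X_{l}$ is a noise variable, Assumption \ref*{A:noisevariable} guarantees $\theta_{l,\boldsymbol{Z}_{(j-1)}}\lesssim\log(m_{n})^{1/2}(m_{n}/n)^{1/2}$, so part (i) of Proposition \ref{TH:main1_ms} bounds the probability of a false selection by $n^{-M}+C_{2}\exp(-C_{3}n^{C_{4}})$; if $X_{l}$ is an as-yet-unselected signal, Lemma \ref{LE:hidden} converts the joint-effect lower bound in Assumption \ref*{A;no_hidden}' into the marginal lower bound $\theta_{l,\boldsymbol{Z}_{(j-1)}}\gtrsim\kappa_{n}\log(m_{n})^{1/2}(m_{n}/n)^{1/2}$, so part (ii) bounds the probability of a missed detection by $n^{-M}+C_{5}\exp(-C_{6}n^{C_{7}})$. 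A union bound over the at most $p_{n}$ candidates in $\Psi_{(j)}$ yields $\Pr(H_{j}^{c}\mid G_{j-1})\le p_{n}[n^{-M}+C\exp(-Cn^{C})]$; since $p_{n}\propto n^{B_{p_{n}}}$ by Assumption \ref*{A:p}', the polynomial factor is absorbed by taking $M$ large relative to $B_{p_{n}}$, giving the per-stage bound $n^{-M_{1}}+C_{4}\exp(-C_{5}n^{C_{6}})$ for arbitrarily large $M_{1}$. Summing over $j\le k$ then delivers the first displayed inequality.

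The refined bound for $k\le n^{a}$ follows by absorbing the prefactor into the decay rates: $k\,n^{-M_{1}}\le n^{a-M_{1}}\le n^{-M_{2}}$ for $M_{1}$ chosen large, and $kC_{4}\exp(-C_{5}n^{C_{6}})\le C_{4}n^{a}\exp(-C_{5}n^{C_{6}})\le C_{7}\exp(-C_{8}n^{C_{9}})$, since the exponential factor dominates any fixed power of $n$. I expect the main obstacle to be the dependence between the \emph{random}, data-determined pre-selection path $\boldsymbol{Z}_{(j-1)}$ and the stage-$j$ statistics $\mathcal{\hat{X}}_{l,\boldsymbol{Z}_{(j-1)}}$: Proposition \ref{TH:main1_ms} is stated for a \emph{fixed} collection of signals and pseudo-signals, whereas here $\boldsymbol{Z}_{(j-1)}$ is itself an output of the procedure. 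The way around this is precisely the inductive conditioning on $G_{j-1}$, which forces $\boldsymbol{Z}_{(j-1)}$ into the signals/pseudo-signals class, combined with the observation that the concentration lemmas underlying Proposition \ref{TH:main1_ms} (Lemmas \ref{LE:XXQQ}, \ref{LE:lambdamax}, \ref{LE:error_var_Z}, and \ref{LE:error_bound}) hold uniformly over all such pre-selection sets, whose total pool of distinct variables is bounded under Assumption \ref*{A:full_rank2}. Verifying this uniformity carefully is the crux of the argument.
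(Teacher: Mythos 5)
Your overall architecture---a ``first failure'' decomposition across stages, conditioning each stage on the correctness of all previous ones so that Proposition \ref{TH:main1_ms}(i) applies, a union bound over the at most $p_{n}$ candidates, and absorption of the polynomial factor $p_{n}\propto n^{B_{p_{n}}}$ and of the prefactor $k\leq n^{a}$---is exactly the paper's inductive proof. There is, however, one genuine defect: you have enlarged the event. In Table \ref{Table7}, $\mathcal{D}_{k}$ is \emph{only} the event that no noise variable has been selected up to stage $k$ (indeed $\mathcal{D}_{k}=\mathcal{N}_{k}^{c}$); it says nothing about signals being detected. Your $G_{k}$ additionally requires that no detectable signal is missed, and to control that missed-detection probability you invoke Lemma \ref{LE:hidden} and Proposition \ref{TH:main1_ms}(ii), which need the joint-effect lower bound of Assumption \ref*{A;no_hidden}'. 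That assumption is \emph{not} among the hypotheses of Lemma \ref{LE:allsignals}. So, as written, you prove a stronger statement under strictly stronger assumptions; it is not a proof of the lemma as stated. (Since $G_{k}\subseteq\mathcal{D}_{k}$, your bound would still imply the lemma's conclusion, but only on the extra assumption, which the lemma does not grant; note also that your opening identification $\mathcal{D}_{k}=G_{k}$ is therefore incorrect, although the inclusion goes in the direction your inequality needs.)

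The fix is simply to delete the missed-detection component of $H_{j}$. Then $G_{k}$ coincides with $\mathcal{D}_{k}$, your decomposition becomes the paper's recursion $\Pr\left(  \mathcal{N}_{k}\right)  \leq\Pr\left(  \mathcal{N}_{k-1}\right)  +\Pr\left(  \cup_{l=p^{\ast}+p^{\ast\ast}+1}^{p_{n}}\mathcal{B}_{l,k}|\mathcal{D}_{k-1}\right)  ,$ and each conditional term is bounded exactly as in your noise-variable case: on $\mathcal{D}_{k-1}$, Assumption \ref*{A:noisevariable} gives $\theta_{l,\boldsymbol{Z}_{\left(  k-1\right)  }}\lesssim\log\left(  m_{n}\right)  ^{1/2}\left(  m_{n}/n\right)  ^{1/2}$ for every noise variable, Proposition \ref{TH:main1_ms}(i) gives the per-variable bound (Proposition \ref{TH:main1}(i) at stage 1, where nothing is pre-selected), and the union bound over at most $p_{n}$ noise variables plus $p_{n}\propto n^{B_{p_{n}}}$ finishes the per-stage estimate; summing over stages and absorbing $k$ yields both displays. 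Two further remarks. First, your reliance on Assumption \ref*{A:noisevariable} is in line with the paper: its own proof of this lemma invokes that assumption even though it is not listed in the lemma's hypotheses, so this is an inconsistency of the paper rather than of your argument. Second, your closing concern about applying Proposition \ref{TH:main1_ms} with a data-dependent pre-selected set $\boldsymbol{Z}_{\left(  j-1\right)  }$ is a real subtlety that the paper's proof also passes over by conditioning on $\mathcal{D}_{j-1}$; flagging it is appropriate, but your claim that the pool of admissible pre-selection sets is ``bounded'' is not right, since $p^{\ast\ast}$ may diverge under Assumption \ref*{A:p}', so the required uniformity must come from the concentration bounds holding for each fixed admissible set rather than from finiteness of their number.
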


\section{Proofs of the Main Results\label{APP:real_main_proof}}

\begin{proof}
[Proof of Proposition \ref{TH:main1}]To see the intuition, we begin with the
case where $\theta_{l}=0$. When $\theta_{l}=0,$ $\left\Vert \boldsymbol{\beta
}_{l}\right\Vert =0$ and $\mathcal{\hat{X}}_{l}=\boldsymbol{u}_{l}^{\prime
}\mathbb{X}_{l}\left(  \hat{\sigma}_{l}^{2}\mathbb{X}_{l}^{\prime}%
\mathbb{X}_{l}\right)  ^{-1}\mathbb{X}_{l}^{\prime}\boldsymbol{u}_{l}.$ Then,%
\begin{align*}
\Pr\left(  \mathcal{\hat{X}}_{l}\geq\varsigma_{n}\right)   &  \leq\Pr\left(
\boldsymbol{u}_{l}^{\prime}\mathbb{X}_{l}\left(  \hat{\sigma}_{l}%
^{2}\mathbb{X}_{l}^{\prime}\mathbb{X}_{l}\right)  ^{-1}\mathbb{X}_{l}^{\prime
}\boldsymbol{u}_{l}\geq\varsigma_{n}\right) \\
&  =\Pr\left(  \boldsymbol{u}_{l}^{\prime}\mathbb{X}_{l}\left(  \hat{\sigma
}_{l}^{2}\mathbb{X}_{l}^{\prime}\mathbb{X}_{l}\right)  ^{-1}\mathbb{X}%
_{l}^{\prime}\boldsymbol{u}_{l}\geq\frac{4}{3}\sigma_{l}^{2}\left(  \frac
{3}{4}\sigma_{l}^{-2}\varsigma_{n}\right)  \right) \\
&  \leq\exp\left(  -C_{1}m_{n}^{-1}\varsigma_{n}+\log m_{n}\right)  +C_{2}%
\exp\left(  -C_{3}n^{C_{4}}\right)
\end{align*}
for some positive constants $C_{1},C_{2},$ and $C_{3},$ where the last line
holds by applying the first part of Lemma \ref{LE:rank2} with $v_{n}=\frac
{3}{4}\sigma_{l}^{-2}\varsigma_{n}$. Since $\varsigma_{n}\propto\kappa_{n}%
\log\left(  m_{n}\right)  m_{n},$ $C_{1}m_{n}^{-1}\varsigma_{n}-\log m_{n}\gg
M\log m_{n}$ for any fixed positive $M.$ This shows the first part of the
proposition when $\theta_{l}=0.$

Note by Lemma \ref{LE:rank}
\[
\left\{  E\left[  f_{nl}\left(  X_{l}\right)  ^{2}\right]  \right\}
^{1/2}\propto m_{n}^{-1/2}\left\Vert \boldsymbol{\beta}_{l}\right\Vert .
\]
If $\theta_{l}\lesssim\log\left(  m_{n}\right)  ^{1/2}\left(  m_{n}/n\right)
^{1/2},$ by the triangle inequality, equation (\ref{EQ:fnl_appro}) and
Assumption \ref*{A:mn},
\begin{align*}
\left\{  E\left[  f_{nl}\left(  X_{l}\right)  ^{2}\right]  \right\}  ^{1/2}
&  \leq\left\{  E\left[  f_{l}\left(  X_{l}\right)  ^{2}\right]  \right\}
^{1/2}+\left\{  E\left[  \left\vert f_{nl}\left(  X_{l}\right)  -f_{l}\left(
X_{l}\right)  \right\vert ^{2}\right]  \right\}  ^{1/2}\\
&  \lesssim\log\left(  m_{n}\right)  ^{1/2}\left(  m_{n}/n\right)
^{1/2}+C_{1}m_{n}^{-d}\lesssim\log\left(  m_{n}\right)  ^{1/2}\left(
m_{n}/n\right)  ^{1/2},
\end{align*}
because the approximation bias is asymptotically negligible under Assumption
\ref*{A:mn}. Therefore, $\left\Vert \boldsymbol{\beta}_{l}\right\Vert
\lesssim\log\left(  m_{n}\right)  ^{1/2}m_{n}n^{-1/2}.$ Then,
\begin{align*}
&  \Pr\left(  \mathcal{\hat{X}}_{l}\geq\varsigma_{n}\right) \\
&  =\Pr\left(  \left(  n^{1/2}\boldsymbol{\beta}_{l}+n^{1/2}\left(
\mathbb{X}_{l}^{\prime}\mathbb{X}_{l}\right)  ^{-1}\mathbb{X}_{l}^{\prime
}\boldsymbol{u}_{l}\right)  ^{\prime}\left(  \hat{\sigma}_{l}^{-2}%
n^{-1}\mathbb{X}_{l}^{\prime}\mathbb{X}_{l}\right)  \left(  n^{1/2}%
\boldsymbol{\beta}_{l}+n^{1/2}\left(  \mathbb{X}_{l}^{\prime}\mathbb{X}%
_{l}\right)  ^{-1}\mathbb{X}_{l}^{\prime}\boldsymbol{u}_{l}\right)
\geq\varsigma_{n}\right) \\
&  \leq\Pr\left(  \left\Vert n^{1/2}\boldsymbol{\beta}_{l}+n^{1/2}\left(
\mathbb{X}_{l}^{\prime}\mathbb{X}_{l}\right)  ^{-1}\mathbb{X}_{l}^{\prime
}\boldsymbol{u}_{l}\right\Vert ^{2}\hat{\sigma}_{l}^{-2}\lambda_{\max}\left\{
n^{-1}\mathbb{X}_{l}^{\prime}\mathbb{X}_{l}\right\}  \geq\varsigma_{n}\right)
\\
&  \leq\Pr\left(  \left\Vert n^{1/2}\boldsymbol{\beta}_{l}+n^{1/2}\left(
\mathbb{X}_{l}^{\prime}\mathbb{X}_{l}\right)  ^{-1}\mathbb{X}_{l}^{\prime
}\boldsymbol{u}_{l}\right\Vert ^{2}\geq\frac{1}{3}\sigma_{l}^{2}B_{2}%
^{-1}m_{n}\varsigma_{n}\right)  +\Pr\left(  \hat{\sigma}_{l}^{-2}\geq
2\sigma_{l}^{-2}\right) \\
&  +\Pr\left(  \lambda_{\max}\left\{  \left(  \mathbb{X}_{l}^{\prime
}\mathbb{X}_{l}\right)  \right\}  \geq\frac{3}{2}B_{2}m_{n}^{-1}\right) \\
&  \equiv A_{1}+A_{2}+A_{3}.
\end{align*}
By Lemma \ref{LE:error_var} and Lemma \ref{LE:XX'}, $A_{l}=\frac{1}{4}%
C_{2}\exp\left(  -C_{3}n^{C_{4}}\right)  $ for some constants $C_{2},C_{3}$
and $C_{4}$ and $l=2,3.$ For $A_{1},$ we have with $C\equiv\frac{1}{3}%
\sigma_{l}^{2}B_{2}^{-1},$%
\begin{align*}
A_{1}  &  \leq\Pr\left(  \left\Vert n^{1/2}\left(  \mathbb{X}_{l}^{\prime
}\mathbb{X}_{l}\right)  ^{-1}\mathbb{X}_{l}^{\prime}\boldsymbol{u}%
_{l}\right\Vert +\left\Vert n^{1/2}\boldsymbol{\beta}_{l}\right\Vert \geq
Cm_{n}^{1/2}\varsigma_{n}^{1/2}\right) \\
&  =\Pr\left(  \left\Vert \left(  n^{-1}\mathbb{X}_{l}^{\prime}\mathbb{X}%
_{l}\right)  ^{-1}n^{-1/2}\mathbb{X}_{l}^{\prime}\boldsymbol{u}_{l}\right\Vert
\geq\frac{1}{2}Cm_{n}^{1/2}\varsigma_{n}^{1/2}+\frac{1}{2}Cm_{n}%
^{1/2}\varsigma_{n}^{1/2}-\left\Vert n^{1/2}\boldsymbol{\beta}_{l}\right\Vert
\right) \\
&  \leq\Pr\left(  \left\Vert \left(  n^{-1}\mathbb{X}_{l}^{\prime}%
\mathbb{X}_{l}\right)  ^{-1}n^{-1/2}\mathbb{X}_{l}^{\prime}\boldsymbol{u}%
_{l}\right\Vert \geq\frac{1}{2}Cm_{n}^{1/2}\varsigma_{n}^{1/2}\right) \\
&  \leq\Pr\left(  \left\Vert n^{-1/2}\mathbb{X}_{l}^{\prime}\boldsymbol{u}%
_{l}\right\Vert \geq\frac{1}{2}Cm_{n}^{-1/2}\varsigma_{n}^{1/2}\right)
+\Pr\left(  \left\Vert \lambda_{\max}\left(  n^{-1}\mathbb{X}_{l}^{\prime
}\mathbb{X}_{l}\right)  ^{-1}\right\Vert \geq\frac{3}{2}B_{1}^{-1}m_{n}\right)
\\
&  \leq n^{-M}+\frac{1}{2}C_{2}\exp\left(  -C_{3}n^{C_{4}}\right)
\end{align*}
for any arbitrarily large constant $M$ and some constants $C_{2},C_{3}$ and
$C_{4},$ where for the fourth line holds by the fact that $m_{n}%
^{1/2}\varsigma_{n}^{1/2}=\kappa_{n}^{1/2}\log\left(  m_{n}\right)
^{1/2}m_{n}\gg\left\Vert n^{1/2}\boldsymbol{\beta}_{l}\right\Vert ,$ and the
last line holds by equation (\ref{EQ:p1_2}), the fact that $C_{1}m_{n}%
^{-1}\varsigma_{n}-\log m_{n}\gg M\log m_{n},$ and Lemma \ref{LE:XX'}. In sum,
we have $\Pr\left(  \mathcal{\hat{X}}_{l}\geq\varsigma_{n}\right)  \leq
n^{-M}+C_{2}\exp\left(  -C_{3}n^{C_{4}}\right)  .$

We turn to the second part of the proposition. If $\theta_{l}\gtrsim\kappa
_{n}\log\left(  m_{n}\right)  ^{1/2}\left(  m_{n}/n\right)  ^{1/2},$ by
equation (\ref{EQ:fnl_appro}) and Assumption \ref*{A:mn},
\[
\left\{  E\left[  f_{nl}\left(  X_{l}\right)  ^{2}\right]  \right\}
^{1/2}\gtrsim\kappa_{n}\log\left(  m_{n}\right)  ^{1/2}\left(  m_{n}/n\right)
^{1/2},
\]
where we use the fact that the approximation bias is asymptotically negligible
under Assumption \ref*{A:mn}. Therefore, $\left\Vert \boldsymbol{\beta}%
_{l}\right\Vert \gtrsim\kappa_{n}\log\left(  m_{n}\right)  ^{1/2}m_{n}%
n^{-1/2}.$

We bound $\Pr\left(  \mathcal{\hat{X}}_{l}<\varsigma_{n}\right)  $ as follows:%
\begin{align*}
&  \Pr\left(  \mathcal{\hat{X}}_{l}<\varsigma_{n}\right) \\
&  =\Pr\left(  \left(  n^{1/2}\boldsymbol{\beta}_{l}+n^{1/2}\left(
\mathbb{X}_{l}^{\prime}\mathbb{X}_{l}\right)  ^{-1}\mathbb{X}_{l}^{\prime
}\boldsymbol{u}_{l}\right)  ^{\prime}\left(  \hat{\sigma}_{l}^{-2}%
n^{-1}\mathbb{X}_{l}^{\prime}\mathbb{X}_{l}\right)  \left(  n^{1/2}%
\boldsymbol{\beta}_{l}+n^{1/2}\left(  \mathbb{X}_{l}^{\prime}\mathbb{X}%
_{l}\right)  ^{-1}\mathbb{X}_{l}^{\prime}\boldsymbol{u}_{l}\right)
<\varsigma_{n}\right) \\
&  \leq\Pr\left(  \left\Vert n^{1/2}\boldsymbol{\beta}_{l}+n^{1/2}\left(
\mathbb{X}_{l}^{\prime}\mathbb{X}_{l}\right)  ^{-1}\mathbb{X}_{l}^{\prime
}\boldsymbol{u}_{l}\right\Vert ^{2}\lambda_{\min}\left\{  \left(  \hat{\sigma
}_{l}^{-2}n^{-1}\mathbb{X}_{l}^{\prime}\mathbb{X}_{l}\right)  \right\}
<\varsigma_{n}\right) \\
&  \leq\Pr\left(  \left\Vert n^{1/2}\boldsymbol{\beta}_{l}+n^{1/2}\left(
\mathbb{X}_{l}^{\prime}\mathbb{X}_{l}\right)  ^{-1}\mathbb{X}_{l}^{\prime
}\boldsymbol{u}_{l}\right\Vert ^{2}<4\sigma_{l}^{2}B_{1}^{-1}m_{n}%
\varsigma_{n}\right) \\
&  +\Pr\left(  \lambda_{\min}\left\{  \left(  n^{-1}\mathbb{X}_{l}^{\prime
}\mathbb{X}_{l}\right)  \right\}  <\frac{1}{2}B_{1}m_{n}^{-1}\right)
+\Pr\left(  \hat{\sigma}_{l}^{-2}<\sigma_{l}^{-2}/2\right) \\
&  \equiv A_{4}+A_{5}+A_{6}.
\end{align*}
By Lemma \ref{LE:error_var} and Lemma \ref{LE:XX'}, $A_{l}=\frac{1}{4}%
C_{5}\exp\left(  -C_{6}n^{C_{7}}\right)  $ for some positive constants
$C_{5},C_{6},$ and $C_{7}$ and $l=5,6.$ For $A_{4}$, we have with
$C=2\sigma_{l}B_{1}^{-1/2},$%
\begin{align*}
A_{4}  &  =\Pr\left(  \left\Vert n^{1/2}\boldsymbol{\beta}_{l}+n^{1/2}\left(
\mathbb{X}_{l}^{\prime}\mathbb{X}_{l}\right)  ^{-1}\mathbb{X}_{l}^{\prime
}\boldsymbol{u}_{l}\right\Vert <Cm_{n}^{1/2}\varsigma_{n}^{1/2}\right) \\
&  \leq\Pr\left(  \left\Vert n^{1/2}\boldsymbol{\beta}_{l}\right\Vert
-\left\Vert n^{1/2}\left(  \mathbb{X}_{l}^{\prime}\mathbb{X}_{l}\right)
^{-1}\mathbb{X}_{l}^{\prime}\boldsymbol{u}_{l}\right\Vert <Cm_{n}%
^{1/2}\varsigma_{n}^{1/2}\right) \\
&  =\Pr\left(  \left\Vert \left(  n^{-1}\mathbb{X}_{l}^{\prime}\mathbb{X}%
_{l}\right)  ^{-1}n^{-1/2}\mathbb{X}_{l}^{\prime}\boldsymbol{u}_{l}\right\Vert
>\frac{1}{2}\left\Vert n^{1/2}\boldsymbol{\beta}_{l}\right\Vert +\frac{1}%
{2}\left\Vert n^{1/2}\boldsymbol{\beta}_{l}\right\Vert -Cm_{n}^{1/2}%
\varsigma_{n}^{1/2}\right) \\
&  \leq\Pr\left(  \left\Vert \left(  n^{-1}\mathbb{X}_{l}^{\prime}%
\mathbb{X}_{l}\right)  ^{-1}n^{-1/2}\mathbb{X}_{l}^{\prime}\boldsymbol{u}%
_{l}\right\Vert >\frac{1}{2}\left\Vert n^{1/2}\boldsymbol{\beta}%
_{l}\right\Vert \right) \\
&  \leq\Pr\left(  \left\Vert n^{-1/2}\mathbb{X}_{l}^{\prime}\boldsymbol{u}%
_{l}\right\Vert >\frac{1}{3}B_{1}m_{n}^{-1}\left\Vert n^{1/2}\boldsymbol{\beta
}_{l}\right\Vert \right)  +\Pr\left(  \lambda_{\max}\left\{  \left(
n^{-1}\mathbb{X}_{l}^{\prime}\mathbb{X}_{l}\right)  ^{-1}\right\}  >\frac
{3}{2}B_{1}^{-1}m_{n}\right) \\
&  \leq n^{-M}+\frac{1}{2}C_{5}\exp\left(  -C_{6}n^{C_{7}}\right)
\end{align*}
for any arbitrarily large constant $M$ and some positive constants
$C_{5},C_{6},$ and $C_{7},$ where the second inequality follows from the fact
that $\frac{1}{2}\left\Vert n^{1/2}\boldsymbol{\beta}_{l}\right\Vert \gg
m_{n}^{1/2}\varsigma_{n}^{1/2}$, and the last line holds by taking
$v_{n}\propto\left(  m_{n}^{-1/2}\left\Vert n^{1/2}\boldsymbol{\beta}%
_{l}\right\Vert \right)  ^{2}\gtrsim\kappa_{n}^{2}\log\left(  m_{n}\right)
m_{n}$ in equation (\ref{EQ:p1_2}) and applying Lemma \ref{LE:XX'}. In sum, we
have
\[
\Pr\left(  \mathcal{\hat{X}}_{l}<\varsigma_{n}\right)  \leq n^{-M}+C_{5}%
\exp\left(  -C_{6}n^{C_{7}}\right)  ,
\]
or equivalently, $\Pr\left(  \mathcal{\hat{X}}_{l}\geq\varsigma_{n}\right)
>1-n^{-M}-C_{5}\exp\left(  -C_{6}n^{C_{7}}\right)  .\medskip$
\end{proof}

\begin{proof}
[Proof of Theorem \ref{TH:main2}]Proposition \ref{TH:main1} will be used
repeatedly in the proof. First, we study TPR$_{n}.$ Note that%
\begin{align*}
E\left(  \text{TPR}_{n}\right)   &  =p^{\ast-1}\sum_{l=1}^{p_{n}}E\left[
1\left(  \widehat{\mathcal{J}}_{l}=1\text{ and }\left\{  E\left[  f_{l}^{\ast
}\left(  X_{l}\right)  ^{2}\right]  \right\}  ^{1/2}\neq0\right)  \right] \\
&  =p^{\ast-1}\sum_{l=1}^{p^{\ast}}\Pr\left(  \mathcal{\hat{X}}_{l}%
\geq\varsigma_{n}\right)  \geq1-n^{-M}-C_{1}\exp\left(  -C_{2}n^{C_{3}%
}\right)
\end{align*}
for some positive constants $M,$ $C_{1},C_{2},$ and $C_{3},$ where the
inequality holds by the second part of Proposition \ref{TH:main1} and
Assumption \ref*{A;no_hidden}.

Next, we study FPR$_{n}.$
\begin{align*}
E\left(  \text{FPR}_{n}\right)   &  =\left(  p_{n}-p^{\ast}\right)  ^{-1}%
\sum_{l=p^{\ast}+1}^{p_{n}}E\left[  1\left(  \widehat{\mathcal{J}}_{l}=1\text{
and }\left\{  E\left[  f_{l}^{\ast}\left(  X_{l}\right)  ^{2}\right]
\right\}  ^{1/2}=0\right)  \right] \\
&  =\left(  p_{n}-p^{\ast}\right)  ^{-1}\sum_{l=p^{\ast}+1}^{p^{\ast}%
+p^{\ast\ast}}E\left[  1\left(  \widehat{\mathcal{J}}_{l}=1\text{ and
}\left\{  E\left[  f_{l}^{\ast}\left(  X_{l}\right)  ^{2}\right]  \right\}
^{1/2}=0\right)  \right] \\
&  +\left(  p_{n}-p^{\ast}\right)  ^{-1}\sum_{l=p^{\ast}+p^{\ast\ast}%
+1}^{p_{n}}E\left[  1\left(  \widehat{\mathcal{J}}_{l}=1\text{ and }\left\{
E\left[  f_{l}^{\ast}\left(  X_{l}\right)  ^{2}\right]  \right\}
^{1/2}=0\right)  \right] \\
&  \leq p^{\ast\ast}/\left(  p_{n}-p^{\ast}\right)  +\left(  p_{n}-p^{\ast
}\right)  ^{-1}\sum_{l=p^{\ast}+p^{\ast\ast}+1}^{p_{n}}E\left[  1\left(
\widehat{\mathcal{J}}_{l}=1\text{ and }\left\{  E\left[  f_{l}^{\ast}\left(
X_{l}\right)  ^{2}\right]  \right\}  ^{1/2}=0\right)  \right] \\
&  \leq p^{\ast\ast}/\left(  p_{n}-p^{\ast}\right)  +C_{4}n^{-M}+C_{5}%
\exp\left(  -C_{6}n^{C_{7}}\right)
\end{align*}
for some positive $C_{4},C_{5},$ and $C_{6}$ and any large positive constant
$M,$ and the last inequality holds by the first part of Proposition
\ref{TH:main1} and Assumption \ref*{A;no_hidden}.

Now, we turn to FDR$_{n}.$ Note that%
\begin{align*}
&  E\left[  \sum_{l=1}^{p_{n}}1\left(  \widehat{\mathcal{J}}_{l}=1,\text{
}\left\{  E\left[  f_{l}^{\ast}\left(  X_{l}\right)  ^{2}\right]  \right\}
^{1/2}=0,\text{ and }\theta_{l}\lesssim\log\left(  m_{n}\right)  ^{1/2}\left(
m_{n}/n\right)  ^{1/2}\right)  \right] \\
&  =\sum_{l=p^{\ast}+p^{\ast\ast}+1}^{p_{n}}\Pr\left(  \mathcal{\hat{X}}%
_{l}\geq\varsigma_{n}\right) \\
&  \leq\left(  p_{n}-p^{\ast}-p^{\ast\ast}\right)  \left[  n^{-M}+C_{8}%
\exp\left(  -C_{9}n^{C_{10}}\right)  \right]
\end{align*}
for any large positive number $M$ and some positive constants $C_{8},C_{9},$
and $C_{10}$ by the first part of Proposition \ref{TH:main1}$.$ Taking a $M$
sufficiently large, we have
\[
E\left[  \sum_{l=1}^{p_{n}}1\left(  \widehat{\mathcal{J}}_{l}=1\text{,
}\left\{  E\left[  f_{l}^{\ast}\left(  X_{l}\right)  ^{2}\right]  \right\}
^{1/2}=0,\text{ and }\theta_{l}\lesssim\log\left(  m_{n}\right)  ^{1/2}\left(
m_{n}/n\right)  ^{1/2}\right)  \right]  \rightarrow0.
\]
Then
\[
\text{FDR}_{n}=\frac{\sum_{l=1}^{p_{n}}1\left(  \widehat{\mathcal{J}}%
_{l}=1,\text{ }\left\{  E\left[  f_{l}^{\ast}\left(  X_{l}\right)
^{2}\right]  \right\}  ^{1/2}=0,\text{ and }\theta_{l}\lesssim\log\left(
m_{n}\right)  ^{1/2}\left(  m_{n}/n\right)  ^{1/2}\right)  }{\sum_{l=1}%
^{p_{n}}\widehat{\mathcal{J}}_{l}+1}\overset{P}{\rightarrow}0
\]
by Markov inequality and the fact that $\sum_{l=1}^{p_{n}}\widehat{\mathcal{J}%
}_{l}+1\geq1$.\medskip
\end{proof}

\begin{proof}
[Proof of Proposition \ref{TH:main1_ms}]Note that $\left\Vert \boldsymbol{\eta
}_{l,\boldsymbol{Z}}\right\Vert \propto m_{n}^{-1/2}\left\Vert \theta
_{l,\boldsymbol{Z}}\right\Vert $ by equation (\ref{EQ:theta_ita}). It is
equivalent to showing the results for the case when $\boldsymbol{\eta
}_{l,\boldsymbol{Z}}\lesssim\log\left(  m_{n}\right)  ^{1/2}n^{-1/2}$ and the
case when $\boldsymbol{\eta}_{l,\boldsymbol{Z}}\gtrsim\kappa_{n}\left[
\log\left(  m_{n}\right)  \right]  ^{1/2}n^{-1/2}$.

For clarity, we begin with the case when $\boldsymbol{\eta}_{l,\boldsymbol{Z}%
}=0.$ In this case, the $\mathcal{\hat{X}}_{l,\boldsymbol{Z}}$ in equation
(\ref{EQ:chi_moreS}) reduces to $\mathcal{\hat{X}}_{l,\boldsymbol{Z}%
}=\boldsymbol{\tilde{u}}_{l,\boldsymbol{Z}}^{\prime}\left(  \hat{\sigma
}_{l,\boldsymbol{Z}}^{2}\mathbb{X}_{l}^{\prime}M_{\mathbb{Z}}\mathbb{X}%
_{l}\right)  ^{-1}\boldsymbol{\tilde{u}}_{l,\boldsymbol{Z}},$ where
$M_{\mathbb{Z}}=I_{n}-\mathbb{Z}\left(  \mathbb{Z}^{\prime}\mathbb{Z}\right)
^{-1}\mathbb{Z}^{\prime}.$ Then by Lemma \ref{LE:bound_error},
\begin{align*}
\Pr\left(  \mathcal{\hat{X}}_{l,\boldsymbol{Z}}\geq\varsigma_{n}\right)   &
\leq\exp\left(  -C_{1}m_{n}^{-1}\varsigma_{n}+\log m_{n}\right)  +C_{2}%
\exp(-C_{3}n^{C_{4}})\\
&  =\exp\left(  -C_{5}\kappa_{n}\log\left(  m_{n}\right)  +\log m_{n}\right)
+C_{2}\exp(-C_{3}n^{C_{4}})\\
&  \leq n^{-M}+C_{2}\exp\left(  -C_{3}n^{C_{4}}\right)  ,
\end{align*}
for any arbitrarily large positive constant $M,$ where the second line holds
by the fact that $\varsigma_{n}\propto\kappa_{n}\log\left(  m_{n}\right)
m_{n}$ and the last line holds because $C_{5}\kappa_{n}\log\left(
m_{n}\right)  -\log m_{n}\gg M\log m_{n}.$

When $\boldsymbol{\eta}_{l,\boldsymbol{Z}}\neq0,$ we analyze $\mathcal{\hat
{X}}_{l,\boldsymbol{Z}}$ similarly as we do for $\mathcal{\hat{X}}_{l}$ in the
proof of Proposition \ref{TH:main1}. We first consider the case where
$\boldsymbol{\eta}_{l,\boldsymbol{Z}}\lesssim\log\left(  m_{n}\right)
^{1/2}n^{-1/2}.$ Note that
\begin{align}
\Pr\left(  \mathcal{\hat{X}}_{l,\boldsymbol{Z}}\geq\varsigma_{n}\right)   &
=\Pr\left(  \left(  n^{1/2}\boldsymbol{\eta}_{l,\boldsymbol{Z}}+n^{-1/2}%
\boldsymbol{\tilde{u}}_{l,\boldsymbol{Z}}\right)  ^{\prime}\left(  \hat
{\sigma}_{l,\boldsymbol{Z}}^{2}n^{-1}\mathbb{X}_{l}^{\prime}M_{\mathbb{Z}%
}\mathbb{X}_{l}\right)  ^{-1}\left(  n^{1/2}\boldsymbol{\eta}%
_{l,\boldsymbol{Z}}+n^{-1/2}\boldsymbol{\tilde{u}}_{l,\boldsymbol{Z}}\right)
\geq\varsigma_{n}\right) \nonumber\\
&  \leq\Pr\left(  \left\Vert n^{1/2}\boldsymbol{\eta}_{l,\boldsymbol{Z}%
}+n^{-1/2}\boldsymbol{\tilde{u}}_{l,\boldsymbol{Z}}\right\Vert ^{2}%
\lambda_{\max}\left\{  \left(  \hat{\sigma}_{l,\boldsymbol{Z}}^{2}%
n^{-1}\mathbb{X}_{l}^{\prime}M_{\mathbb{Z}}\mathbb{X}_{l}\right)
^{-1}\right\}  \geq\varsigma_{n}\right) \nonumber\\
&  \leq\Pr\left(  \left\Vert n^{1/2}\boldsymbol{\eta}_{l,\boldsymbol{Z}%
}+n^{-1/2}\boldsymbol{\tilde{u}}_{l,\boldsymbol{Z}}\right\Vert ^{2}\geq
\frac{1}{2}\sigma^{2}B_{X1}m_{n}^{-1}\varsigma_{n}\right) \nonumber\\
&  +\Pr\left(  \lambda_{\max}\left\{  \left(  \hat{\sigma}_{l,\boldsymbol{Z}%
}^{2}n^{-1}\mathbb{X}_{l}^{\prime}M_{\mathbb{Z}}\mathbb{X}_{l}\right)
^{-1}\right\}  \geq2\sigma^{-2}B_{X1}^{-1}m_{n}\right) \nonumber\\
&  \equiv A_{7}+A_{8}. \label{EQ:chi1}%
\end{align}
For $A_{7},$ we have%
\begin{align}
A_{7}  &  =\Pr\left(  \left\Vert n^{1/2}\boldsymbol{\eta}_{l,\boldsymbol{Z}%
}+n^{-1/2}\boldsymbol{\tilde{u}}_{l,\boldsymbol{Z}}\right\Vert \geq
2^{-1/2}\sigma B_{X1}^{1/2}m_{n}^{-1/2}\varsigma_{n}^{1/2}\right) \nonumber\\
&  \leq\Pr\left(  \left\Vert n^{1/2}\boldsymbol{\eta}_{l,\boldsymbol{Z}%
}\right\Vert +\left\Vert n^{-1/2}\boldsymbol{\tilde{u}}_{l,\boldsymbol{Z}%
}\right\Vert \geq2^{-1/2}\sigma B_{X1}^{1/2}m_{n}^{-1/2}\varsigma_{n}%
^{1/2}\right) \nonumber\\
&  =\Pr\left(  \left\Vert n^{-1/2}\boldsymbol{\tilde{u}}_{l,\boldsymbol{Z}%
}\right\Vert \geq\frac{1}{2}2^{-1/2}\sigma B_{X1}^{1/2}m_{n}^{-1/2}%
\varsigma_{n}^{1/2}+\frac{1}{2}2^{-1/2}\sigma B_{X1}^{1/2}m_{n}^{-1/2}%
\varsigma_{n}^{1/2}-\left\Vert n^{1/2}\boldsymbol{\eta}_{l,\boldsymbol{Z}%
}\right\Vert \right) \nonumber\\
&  \leq\Pr\left(  \left\Vert n^{-1/2}\boldsymbol{\tilde{u}}_{l,\boldsymbol{Z}%
}\right\Vert \geq\frac{1}{2}2^{-1/2}\sigma B_{X1}^{1/2}m_{n}^{-1/2}%
\varsigma_{n}^{1/2}\right) \nonumber\\
&  \leq n^{-M}+\frac{1}{2}C_{2}\exp\left(  -C_{3}n^{C_{4}}\right)  ,
\label{EQ:chi1_1}%
\end{align}
where the second inequality holds by the fact that $m_{n}^{-1/2}\varsigma
_{n}^{1/2}\propto\kappa_{n}^{1/2}\left[  \log\left(  m_{n}\right)  \right]
^{1/2}\gg\left[  \log\left(  m_{n}\right)  \right]  ^{1/2}\gtrsim\left\Vert
n^{1/2}\boldsymbol{\eta}_{l,\boldsymbol{Z}}\right\Vert ,$ and the last
inequality holds by Lemma \ref{LE:error_bound}. By Lemma \ref{LE:lambdamax},
\begin{equation}
A_{8}\leq\frac{1}{2}C_{2}\exp\left(  -C_{3}n^{C_{4}}\right)  .
\label{EQ:chi1_2}%
\end{equation}
Combining (\ref{EQ:chi1_1}), (\ref{EQ:chi1_2}) and (\ref{EQ:chi1}) yields
$\Pr\left(  \mathcal{\hat{X}}_{l,\boldsymbol{Z}}\geq\varsigma_{n}\right)  \leq
n^{-M}+\frac{1}{2}C_{2}\exp\left(  -C_{3}n^{C_{4}}\right)  .$

Now, we consider the case where $\boldsymbol{\eta}_{l,\boldsymbol{Z}}%
\gtrsim\kappa_{n}\left[  \log\left(  m_{n}\right)  \right]  ^{1/2}n^{-1/2}.$
Note that%
\begin{align}
\Pr\left(  \mathcal{\hat{X}}_{l,\boldsymbol{Z}}<\varsigma_{n}\right)   &
=\Pr\left(  \left(  n\boldsymbol{\eta}_{l,\boldsymbol{Z}}+\boldsymbol{\tilde
{u}}_{l,\boldsymbol{Z}}\right)  ^{\prime}\left(  \hat{\sigma}%
_{l,\boldsymbol{Z}}^{2}\mathbb{X}_{l}^{\prime}M_{\mathbb{Z}}\mathbb{X}%
_{l}\right)  ^{-1}\left(  n\boldsymbol{\eta}_{l,\boldsymbol{Z}}%
+\boldsymbol{\tilde{u}}_{l,\boldsymbol{Z}}\right)  <\varsigma_{n}\right)
\nonumber\\
&  \leq\Pr\left(  \left\Vert n^{1/2}\boldsymbol{\eta}_{l,\boldsymbol{Z}%
}+n^{-1/2}\boldsymbol{\tilde{u}}_{l,\boldsymbol{Z}}\right\Vert ^{2}%
\lambda_{\min}\left\{  \left(  \hat{\sigma}_{l,\boldsymbol{Z}}^{2}%
n^{-1}\mathbb{X}_{l}^{\prime}M_{\mathbb{Z}}\mathbb{X}_{l}\right)
^{-1}\right\}  <\varsigma_{n}\right) \nonumber\\
&  \leq\Pr\left(  \left\Vert n^{1/2}\boldsymbol{\eta}_{l,\boldsymbol{Z}%
}+n^{-1/2}\boldsymbol{\tilde{u}}_{l,\boldsymbol{Z}}\right\Vert ^{2}%
<4\sigma^{2}B_{X2}m_{n}^{-1}\varsigma_{n}\right) \nonumber\\
&  +\Pr\left(  \lambda_{\min}\left\{  \left(  \hat{\sigma}_{l,\boldsymbol{Z}%
}^{2}n^{-1}\mathbb{X}_{l}^{\prime}M_{\mathbb{Z}}\mathbb{X}_{l}\right)
^{-1}\right\}  <\frac{1}{4}\sigma^{-2}B_{X2}^{-1}m_{n}\right) \\
&  \equiv A_{9}+A_{10}. \label{EQ:sosoP}%
\end{align}
Noting that $\left\Vert n^{1/2}\boldsymbol{\eta}_{l,\boldsymbol{Z}}%
+n^{-1/2}\boldsymbol{\tilde{u}}_{l,\boldsymbol{Z}}\right\Vert \geq\left\Vert
n^{1/2}\boldsymbol{\eta}_{l,\boldsymbol{Z}}\right\Vert -\left\Vert
n^{-1/2}\boldsymbol{\tilde{u}}_{l,\boldsymbol{Z}}\right\Vert ,$ we have%
\begin{align}
A_{9}  &  =\Pr\left(  \left\Vert n^{1/2}\boldsymbol{\eta}_{l,\boldsymbol{Z}%
}+n^{-1/2}\boldsymbol{\tilde{u}}_{l,\boldsymbol{Z}}\right\Vert <2\sigma
B_{X2}^{1/2}m_{n}^{-1/2}\varsigma_{n}^{1/2}\right) \nonumber\\
&  \leq\Pr\left(  \left\Vert n^{-1/2}\boldsymbol{\tilde{u}}_{l,\boldsymbol{Z}%
}\right\Vert >\frac{1}{2}\left\Vert n^{1/2}\boldsymbol{\eta}_{l,\boldsymbol{Z}%
}\right\Vert +\frac{1}{2}\left\Vert n^{1/2}\boldsymbol{\eta}_{l,\boldsymbol{Z}%
}\right\Vert -2\sigma B_{X2}^{1/2}m_{n}^{-1/2}\varsigma_{n}^{1/2}\right)
\nonumber\\
&  \leq\Pr\left(  \left\Vert n^{-1/2}\boldsymbol{\tilde{u}}_{l,\boldsymbol{Z}%
}\right\Vert >\frac{1}{2}\left\Vert n^{1/2}\boldsymbol{\eta}_{l,\boldsymbol{Z}%
}\right\Vert \right) \nonumber\\
&  \leq\Pr\left(  \left\Vert n^{-1/2}\boldsymbol{\tilde{u}}_{l,\boldsymbol{Z}%
}\right\Vert >C\kappa_{n}^{1/2}m_{n}^{-1/2}\varsigma_{n}^{1/2}\right)
\nonumber\\
&  \leq n^{-M}+\frac{1}{2}C_{5}\exp\left(  -C_{6}n^{C_{7}}\right)  ,
\label{EQ:soso2}%
\end{align}
for any arbitrarily large positive constant $M$ and some positive constants
$C_{5},C_{7},$ and $C_{7},$ where the second and third inequalities follow
from the fact that $\left\Vert n^{1/2}\boldsymbol{\eta}_{l,\boldsymbol{Z}%
}\right\Vert \gtrsim\kappa_{n}\left[  \log\left(  m_{n}\right)  \right]
^{1/2}\propto\kappa_{n}^{1/2}m_{n}^{-1/2}\varsigma_{n}^{1/2}\gg2\sigma
^{-1}B_{X2}^{1/2}m_{n}^{-1/2}\varsigma_{n}^{1/2},$ and the last inequality
holds by Lemma \ref{LE:error_bound} to get the last inequality. By the second
part of Lemma \ref{LE:lambdamax}, $A_{10}\leq\frac{1}{2}C_{5}\exp\left(
-C_{6}n^{C_{7}}\right)  .$ It follows that
\[
\Pr\left(  \mathcal{\hat{X}}_{l,\boldsymbol{Z}}<\varsigma_{n}\right)
<n^{-M}+C_{5}\exp\left(  -C_{6}n^{C_{7}}\right)  .
\]
That is, $\Pr\left(  \mathcal{\hat{X}}_{l,\boldsymbol{Z}}\geq\varsigma
_{n}\right)  \geq1-n^{-M}-C_{5}\exp\left(  -C_{6}n^{C_{7}}\right)  .$
\end{proof}

\bigskip

To prove Theorem \ref{TH:stoppingFDR}, we introduce some notations presented
in Table \ref{Table7} below.

\begin{table}[h]
\caption{Notations used in the proof of Theorem \ref{TH:stoppingFDR}}%
\label{Table7}%
\centering{}\centering{ }
\begin{tabular}
[c]{cc}\hline
Notation & Meaning\\\hline
\multicolumn{1}{l}{$\mathcal{B}_{l,k}$} & \multicolumn{1}{l}{variable $l$ is
selected at the $k$th stage of the OCMT procedure.}\\
\multicolumn{1}{l}{$\mathcal{L}_{l,k}=\cup_{h=1}^{k}B_{l,h}$} &
\multicolumn{1}{l}{variable $l$ is selected up to and including the $k$th
stage}\\
\multicolumn{1}{l}{$\mathcal{N}_{k}=\cup_{l=p^{\ast}+p^{\ast\ast}+1}^{p_{n}%
}L_{l,k}$} & \multicolumn{1}{l}{one or more noise variables are selected up to
the $k$th stage}\\
\multicolumn{1}{l}{$\mathcal{A}_{k}=\cap_{l=1}^{p^{\ast}}L_{l,k}$} &
\multicolumn{1}{l}{all signal variables are selected up to the $k$th stage}\\
\multicolumn{1}{l}{$\mathcal{H}_{k}=\cap_{l=p^{\ast}+1}^{p^{\ast}+p^{\ast\ast
}}L_{l,k}$} & \multicolumn{1}{l}{all pseudo-signal variable are selected up to
the $k$th stage.}\\
\multicolumn{1}{l}{$\mathcal{D}_{k}$} & \multicolumn{1}{l}{variables selected
up to the $k$th stage are signals or pseudo signals}\\
\multicolumn{1}{l}{$\mathcal{T}_{k}$} & \multicolumn{1}{l}{The OCMT procedure
concludes at or before stage $k$}\\\hline
\end{tabular}
\end{table}

Apparently, $\mathcal{D}_{k}=\mathcal{N}_{k}^{c},$ the complement of
$\mathcal{N}_{k}.\medskip$

\begin{proof}
[Proof of Theorem \ref{TH:stoppingFDR}]Recall that the constants $C_{l}$'s and
$M_{l}$'s may vary across lines. By the definition in Table 2, all
pseudo-signals have strong net effects on $Y$ in the first stage with
$\theta_{l}\gtrsim\kappa_{n}\log\left(  m_{n}\right)  ^{1/2}\left(
m_{n}/n\right)  ^{1/2}$ for a slowly divergent series $\kappa_{n}%
\rightarrow\infty.$ Under this condition, the second part of Proposition
\ref{TH:main1} implies all pseudo-signals can be selected in stage 1 with very
high probability. That is, $\mathcal{H}_{1}$ happens with very high
probability. Specifically,%
\begin{align}
\Pr\left(  \mathcal{H}_{1}\right)   &  =\Pr\left(  \cap_{l=p^{\ast}%
+1}^{p^{\ast}+p^{\ast\ast}}\mathcal{B}_{l,1}\right)  =1-\Pr\left(
\cup_{l=p^{\ast}+1}^{p^{\ast}+p^{\ast\ast}}\mathcal{B}_{l,1}^{c}\right)
\geq1-\sum_{l=p^{\ast}+1}^{p^{\ast}+p^{\ast\ast}}\Pr\left(  \mathcal{B}%
_{l,1}^{c}\right) \nonumber\\
&  \geq1-p^{\ast\ast}\left(  n^{-M}+C_{1}\exp\left(  -C_{2}n^{C_{3}}\right)
\right) \nonumber\\
&  \geq1-n^{-M_{1}}-C_{4}\exp\left(  -C_{5}n^{C_{6}}\right)  \label{EQ:h1}%
\end{align}
for some arbitrarily large constant $M$, some $M_{1}<M-B_{p^{\ast\ast}},$ and
some positive constants $C_{1},\ldots,C_{6},$\ where the second inequality
holds by the second part of Proposition \ref{TH:main1}, and the last
inequality holds by Assumption \ref*{A:p}' by restricting $p^{\ast\ast}\propto
n^{B_{p^{\ast\ast}}}.$ Since $M$ can be arbitrarily large, $M_{1}$ can be
arbitrarily large too. The above result, in conjunction with the fact that
$\mathcal{H}_{1}\subset\mathcal{H}_{k}$ for any any $k>1,$ implies that
\begin{equation}
\Pr\left(  \mathcal{H}_{k}\right)  \geq\Pr\left(  \mathcal{H}_{1}\right)
\geq1-n^{-M_{1}}-C_{4}\exp\left(  -C_{5}n^{C_{6}}\right)  . \label{EQ:hk}%
\end{equation}
That is, the probability of that all pseudo-signals are selected up to stage
$k$ is larger than the probability of that all pseudo-signals are selected up
to stage 1.

Recall that $p^{\ast}$ is the number of signals that is assumed to be fixed in
Assumption \ref*{A:p}. We consider the probability of $\mathcal{A}_{p^{\ast}%
}\cap\mathcal{H}_{p^{\ast}},$ the event that all signals and pseudo-signals
are selected up to stage $p^{\ast}.$ The key to bound this probability is
Proposition \ref{TH:main1_ms}. To apply the proposition, we need the condition
that all pre-selected variables are signals or pseudo-signals, which is the
event $\mathcal{D}_{p^{\ast}-1}$. In view of this observation, we conduct the
analysis as follows. First,
\begin{align}
\Pr\left(  \mathcal{A}_{p^{\ast}}\cap\mathcal{H}_{p^{\ast}}\right)   &
\geq\Pr\left(  \mathcal{A}_{p^{\ast}}\cap\mathcal{H}_{p^{\ast}}\cap
\mathcal{D}_{p^{\ast}-1}\right) \label{EQ:aphp}\\
&  =\Pr\left(  \mathcal{A}_{p^{\ast}}|\mathcal{H}_{p^{\ast}}\cap
\mathcal{D}_{p^{\ast}-1}\right)  \Pr\left(  \mathcal{H}_{p^{\ast}}%
\cap\mathcal{D}_{p^{\ast}-1}\right)  .\nonumber
\end{align}
For $\Pr\left(  \mathcal{H}_{p^{\ast}}\cap\mathcal{D}_{p^{\ast}-1}\right)  ,$
we have%
\begin{align}
\Pr\left(  \mathcal{H}_{p^{\ast}}\cap\mathcal{D}_{p^{\ast}-1}\right)   &
=1-\Pr\left(  \mathcal{H}_{p^{\ast}}^{c}\cup\mathcal{D}_{p^{\ast}}^{c}\right)
\nonumber\\
&  \geq\left[  1-\Pr\left(  \mathcal{H}_{p^{\ast}}^{c}\right)  \right]
-\Pr\left(  \mathcal{D}_{p^{\ast}-1}^{c}\right) \nonumber\\
&  \geq1-n^{-M_{2}}-C_{7}\exp\left(  -C_{8}n^{C_{9}}\right)  \label{EQ:hp}%
\end{align}
for an arbitrarily large constant $M_{2}$ and some positive constants
$C_{7},C_{8}$ and $C_{9}$, where the last inequality follows by equation
(\ref{EQ:hk}) and Lemma \ref{LE:allsignals}. Since we only have $p^{\ast}$
signals, with Assumption \ref*{A;no_hidden}', Lemma \ref{LE:hidden} implies
that the population effect of at least one signal on $Y$ conditional on
$\mathcal{D}_{k},$ $k\leq p^{\ast}-1$, would become large enough to be picked
up by our procedure with very high probability. For the $l$th signal$,$ $1\leq
l\leq p^{\ast}$, we denote the stage where its effect on $Y$ becomes large
enough to be picked up by $k_{l}^{\ast},$ that is $\boldsymbol{\theta
}_{l,\boldsymbol{Z}_{\left(  k_{l}^{\ast}-1\right)  }}\gtrsim\kappa_{n}%
\log\left(  m_{n}\right)  ^{1/2}m_{n}^{1/2}n^{-1/2}$. Without loss of
generality, we assume $k_{1}^{\ast}\leq k_{2}^{\ast}\leq k_{3}^{\ast}%
\ldots\leq k_{p^{\ast}}^{\ast}$. By Lemma \ref{LE:hidden} again, $k_{1}^{\ast
}=1$ and $k_{l}^{\ast}-k_{l-1}^{\ast}\leq1.$ So the OCMT procedure with very
high probability does not stop until all the signals are selected.

We now bound the probability formally. By definition,
\[
\mathcal{A}_{p^{\ast}}=\cap_{l=1}^{p^{\ast}}\mathcal{L}_{l,p^{\ast}}%
=\cap_{l=1}^{p^{\ast}}\left(  \cup_{h=1}^{p^{\ast}}\mathcal{B}_{l,h}\right)
\supseteq\cap_{l=1}^{p^{\ast}}\mathcal{B}_{l,k_{l}^{\ast}}.
\]
Noting that $k_{1}^{\ast}=1$, $k_{l}^{\ast}-k_{l-1}^{\ast}\leq1,$ and
$k_{1}^{\ast}\leq k_{2}^{\ast}\leq k_{3}^{\ast}\ldots\leq k_{p^{\ast}}^{\ast}$
if $\cap_{l=1}^{p^{\ast}}\mathcal{B}_{l,k_{l}^{\ast}}$ occurs, the OCMT
procedure does not stop before stage $k_{p^{\ast}}^{\ast}$. Then$,$%
\begin{align}
\Pr\left(  \mathcal{A}_{p^{\ast}}|\mathcal{H}_{p^{\ast}}\cap\mathcal{D}%
_{p^{\ast}-1}\right)   &  \geq\Pr\left(  \cap_{l=1}^{p^{\ast}}\mathcal{B}%
_{l,k_{l}^{\ast}}|\mathcal{H}_{p^{\ast}}\cap\mathcal{D}_{p^{\ast}-1}\right)
\nonumber\\
&  =1-\Pr\left(  \cup_{l=1}^{p^{\ast}}\mathcal{B}_{l,k_{l}^{\ast}}%
^{c}|\mathcal{H}_{p^{\ast}}\cap\mathcal{D}_{p^{\ast}-1}\right) \nonumber\\
&  \geq1-\sum_{l=1}^{p^{\ast}}\Pr\left(  \mathcal{B}_{l,k_{l}^{\ast}}%
^{c}|\mathcal{H}_{p^{\ast}}\cap\mathcal{D}_{p^{\ast}-1}\right)  .
\label{EQ:ap1}%
\end{align}
Conditional on the event $\mathcal{H}_{p^{\ast}}\cap\mathcal{D}_{p^{\ast}-1},$
all variables in $\boldsymbol{Z}_{\left(  k_{l}^{\ast}-1\right)  }$ are either
signals or pseudo-signals with $\boldsymbol{\theta}_{l,\boldsymbol{Z}_{\left(
k_{l}^{\ast}-1\right)  }}\gtrsim\kappa_{n}\log\left(  m_{n}\right)
^{1/2}m_{n}^{1/2}n^{-1/2}.$ Then we can apply the second part of Proposition
\ref{TH:main1_ms} on $\Pr\left(  \mathcal{B}_{l,k_{l}^{\ast}}^{c}%
|\mathcal{H}_{p^{\ast}}\cap\mathcal{D}_{p^{\ast}-1}\right)  $ to obtain%
\[
\Pr\left(  \mathcal{B}_{l,k_{l}^{\ast}}^{c}|\mathcal{H}_{p^{\ast}}%
\cap\mathcal{D}_{p^{\ast}-1}\right)  \leq n^{-M}+C_{1}\exp\left(
-C_{2}n^{C_{3}}\right)  .
\]
Substituting this into equation (\ref{EQ:ap1}) yields%
\begin{align}
\Pr\left(  \mathcal{A}_{p^{\ast}}|\mathcal{H}_{p^{\ast}}\cap\mathcal{D}%
_{p^{\ast}-1}\right)   &  \geq1-\sum_{l=1}^{p^{\ast}}\left[  n^{-M}+C_{1}%
\exp\left(  -C_{2}n^{C_{3}}\right)  \right] \nonumber\\
&  \geq1-n^{-M_{3}}-C_{10}\exp\left(  -C_{11}n^{C_{12}}\right)  \label{EQ:apf}%
\end{align}
for some arbitrarily large constant $M_{3}$ and some positive constants
$C_{10},C_{11}$ and $C_{12}.$ Substituting equations (\ref{EQ:hp}) and
(\ref{EQ:apf}) into equation (\ref{EQ:aphp}) yields%
\begin{align}
\Pr\left(  \mathcal{A}_{p^{\ast}}\cap\mathcal{H}_{p^{\ast}}\right)   &
\geq\left(  1-n^{-M_{3}}-C_{10}\exp\left(  -C_{11}n^{C_{12}}\right)  \right)
\left(  1-n^{-M_{2}}-C_{7}\exp\left(  -C_{8}n^{C_{9}}\right)  \right)
\nonumber\\
&  \geq1-n^{-M_{3}}-C_{10}\exp\left(  -C_{11}n^{C_{12}}\right)  -n^{-M_{2}%
}-C_{7}\exp\left(  -C_{8}n^{C_{9}}\right) \nonumber\\
&  \geq1-n^{-M_{4}}-C_{13}\exp\left(  -C_{14}n^{C_{15}}\right)
\label{EQ:aphpf}%
\end{align}
for some large positive constant $M_{4},$ and some positive constants
$C_{13},$ $C_{14},$ and $C_{15}$.

With (\ref{EQ:aphpf}), we are ready to complete the proof of the first part of
the theorem. Notice that%
\begin{align}
\Pr\left(  \mathcal{T}_{p^{\ast}}\right)   &  \geq\Pr\left(  \mathcal{T}%
_{p^{\ast}}\cap\mathcal{A}_{p^{\ast}}\cap\mathcal{H}_{p^{\ast}}\cap
\mathcal{D}_{p^{\ast}}\right) \nonumber\\
&  =\Pr\left(  \mathcal{T}_{p^{\ast}}|\mathcal{A}_{p^{\ast}}\cap
\mathcal{H}_{p^{\ast}}\cap\mathcal{D}_{p^{\ast}}\right)  \Pr\left(
\mathcal{A}_{p^{\ast}}\cap\mathcal{H}_{p^{\ast}}\cap\mathcal{D}_{p^{\ast}%
}\right) \nonumber\\
&  =\Pr\left(  \mathcal{T}_{p^{\ast}}|\mathcal{A}_{p^{\ast}}\cap
\mathcal{H}_{p^{\ast}}\cap\mathcal{D}_{p^{\ast}}\right)  \left[  1-\Pr\left(
\mathcal{A}_{p^{\ast}}^{c}\cup\mathcal{H}_{p^{\ast}}^{c}\cup\mathcal{D}%
_{p^{\ast}}^{c}\right)  \right] \nonumber\\
&  \geq\Pr\left(  \mathcal{T}_{p^{\ast}}|\mathcal{A}_{p^{\ast}}\cap
\mathcal{H}_{p^{\ast}}\cap\mathcal{D}_{p^{\ast}}\right)  \left[  1-\Pr\left(
\mathcal{A}_{p^{\ast}}^{c}\cup\mathcal{H}_{p^{\ast}}^{c}\right)  -\Pr\left(
\mathcal{D}_{p^{\ast}}^{c}\right)  \right]  . \label{EQ:tp*}%
\end{align}
Note that $\mathcal{T}_{p^{\ast}}^{c}|\left\{  \mathcal{A}_{p^{\ast}}%
\cap\mathcal{H}_{p^{\ast}}\cap\mathcal{D}_{p^{\ast}}\right\}  $ is the event
that the OCMT procedure does not stop after $p^{\ast}$ stages and all signals
and pseudo-signals have been selected. It is equivalent to one or more noise
variables being selected at stage $p^{\ast}+1$ conditional on $\mathcal{A}%
_{p^{\ast}}\cap\mathcal{H}_{p^{\ast}}\cap\mathcal{D}_{p^{\ast}}$, which is
$\cup_{l=p^{\ast}+p^{\ast\ast}+1}^{p_{n}}\mathcal{B}_{l,p^{\ast}+1}|\left\{
\mathcal{A}_{p^{\ast}}\cap\mathcal{H}_{p^{\ast}}\cap\mathcal{D}_{p^{\ast}%
}\right\}  $. By the analysis in Lemma \ref{LE:allsignals},%
\begin{align}
\Pr\left(  \mathcal{T}_{p^{\ast}}|\mathcal{A}_{p^{\ast}}\cap\mathcal{H}%
_{p^{\ast}}\cap\mathcal{D}_{p^{\ast}}\right)   &  =1-\Pr\left(  \cup
_{l=p^{\ast}+p^{\ast\ast}+1}^{p_{n}}\mathcal{B}_{l,p^{\ast}+1}|\left\{
\mathcal{A}_{p^{\ast}}\cap\mathcal{H}_{p^{\ast}}\cap\mathcal{D}_{p^{\ast}%
}\right\}  \right) \nonumber\\
&  \geq1-\sum_{l=p^{\ast}+p^{\ast\ast}+1}^{p_{n}}\Pr\left(  \mathcal{B}%
_{l,p^{\ast}+1}|\left\{  \mathcal{A}_{p^{\ast}}\cap\mathcal{H}_{p^{\ast}}%
\cap\mathcal{D}_{p^{\ast}}\right\}  \right) \nonumber\\
&  \geq1-p_{n}n^{-M}-p_{n}C_{1}\exp\left(  -C_{2}n^{C_{3}}\right) \nonumber\\
&  \geq1-n^{-M_{5}}-C_{16}\exp\left(  -C_{17}n^{C_{18}}\right)  ,
\label{EQ:tpc}%
\end{align}
where the last line holds for some arbitrarily large positive number $M_{5}$
and some positive constants $C_{16},$ $C_{17}$ and $C_{18}$ by the fact that
$p_{n}\propto n^{B_{p}}$\ in Assumption \ref*{A:p}'. Substituting equations
(\ref{EQ:aphpf}) and (\ref{EQ:tpc}) into equation (\ref{EQ:tp*}) and applying
Lemma \ref{LE:allsignals} on $\Pr\left(  \mathcal{D}_{p^{\ast}}^{c}\right)  $,
we obtain%
\begin{align}
\Pr\left(  \mathcal{T}_{p^{\ast}}\right)   &  \geq\left[  1-n^{-M_{5}}%
-C_{16}\exp(-C_{17}n^{C_{18}})\right] \nonumber\\
&  \text{ \ \ \ }\times\left[  1-n^{-M_{4}}-C_{13}\exp(-C_{14}n^{C_{15}%
})-n^{-M}-C_{1}\exp(-C_{2}n^{C_{3}})\right] \nonumber\\
&  \geq1-n^{-M_{6}}-C_{19}\exp(-C_{20}n^{C_{21}}), \label{EQ:tpcf}%
\end{align}
for some arbitrarily large positive number $M_{6}$ and some positive constants
$C_{19},$ $C_{20},$ and $C_{21}.$ Consequently,%
\[
\Pr\left(  \hat{k}_{s}>p^{\ast}\right)  =\Pr\left(  \mathcal{T}_{p^{\ast}}%
^{c}\right)  =1-\Pr\left(  \mathcal{T}_{p^{\ast}}\right)  .
\]

\medskip

The second part of the theorem is quite straightforward given the analysis so
far. First, we show the result for TPR$_{n}$. We conduct the analysis
conditional on $\mathcal{D}_{p^{\ast}}$ that all selected variables up to
stage $p^{\ast}$ are either signals or pseudo-signals. In the first stage, by
Lemma \ref{LE:hidden}, there exists at least one signal with $\theta
_{l}\gtrsim\kappa_{n}\log\left(  m_{n}\right)  ^{1/2}m_{n}^{1/2}n^{-1/2}$. Let
$\Psi_{\left(  1\right)  }^{0}=\left\{  l:1\leq l\leq p^{\ast}\text{ and
}\theta_{l}\gtrsim\kappa_{n}\log\left(  m_{n}\right)  ^{1/2}m_{n}%
^{1/2}n^{-1/2}\right\}  $, the collection of signals that have large effects
on $Y$ in the first stage. By the same logic as we have used for the proof to
the first part of the theorem,
\begin{align*}
\Pr\left(  \cap_{l\in\Psi_{\left(  1\right)  }^{0}}\left\{
\widehat{\mathcal{J}}_{l,\left(  1\right)  }=1\right\}  \right)   &
=1-\Pr\left(  \cup_{l\in\Psi_{\left(  1\right)  }^{0}}\left\{
\widehat{\mathcal{J}}_{l,\left(  1\right)  }=0\right\}  \right) \\
&  \geq1-n^{-M}-C_{1}\exp\left(  -C_{2}n^{C_{3}}\right)
\end{align*}
for some large positive constant $M$ and some positive constants $C_{1},C_{2}$
and $C_{3},$ where we use the fact that $p^{\ast}$ is a fixed number and the
number of elements in $\Psi_{\left(  1\right)  }^{0}$ is $p^{\ast}$ at most.
Conditional on $\mathcal{D}_{p^{\ast}},$ denote the pre-selected variables as
$\boldsymbol{Z}_{1}$ and the index set of $\boldsymbol{Z}_{1}$ as $S_{\left(
1\right)  }$\ for stage 2. Note that there may be some hidden signals
accidentally selected in stage 1. We include all those in $\boldsymbol{Z}_{1}$
too. As long as we conduct the analysis conditional on $\mathcal{D}_{p^{\ast}%
}$ that no noise variables are selected, we can proceed the analysis as
usual.\ Let $\Psi_{\left(  2\right)  }^{0}=\left\{  l:1\leq l\leq p^{\ast
},\text{ }l\notin S_{\left(  1\right)  }\text{\ and }\theta_{l,\boldsymbol{Z}%
_{1}}\gtrsim\kappa_{n}\log\left(  m_{n}\right)  ^{1/2}m_{n}^{1/2}%
n^{-1/2}\right\}  ,$ the set of signals that have large effects on $Y$ with
pre-selected variable $\boldsymbol{Z}_{1}.$ By Lemma \ref{LE:hidden},
$\Psi_{\left(  2\right)  }^{0}$ is not empty as long as $S_{\left(  1\right)
}\neq\left\{  1,2,\ldots,p^{\ast}\right\}  $ that there are some hidden
signals. For the same reason,%
\[
\Pr\left(  \left.  \cap_{l\in\Psi_{\left(  2\right)  }^{0}}\left\{
\widehat{\mathcal{J}}_{l,\left(  2\right)  }=1\right\}  \right\vert \cap
_{l\in\Psi_{\left(  1\right)  }^{0}}\left\{  \widehat{\mathcal{J}}_{l,\left(
1\right)  }=1\right\}  ,\mathcal{D}_{p^{\ast}}\right)  \geq1-n^{-M}-C_{1}%
\exp\left(  -C_{2}n^{C_{3}}\right)  .
\]
So on and so forth. Suppose there are some remaining signals as $k_{s}^{\ast}$
in the last stage. Because we only have $p^{\ast}$ signals, $k_{s}^{\ast}\leq
p^{\ast}$. We similarly use $\Psi_{\left(  k\right)  }^{0}$ denote set of the
signals that have large effects on $Y$ with pre-selected variables
$\boldsymbol{Z}_{k-1}$ and conditional on $\mathcal{D}_{p^{\ast}}.$ Then%
\[
\Pr\left(  \left.  \cap_{l\in\Psi_{\left(  k_{s}^{\ast}\right)  }^{0}}\left\{
\widehat{\mathcal{J}}_{l,\left(  k_{s}^{\ast}\right)  }=1\right\}  \right\vert
\cap_{k=1}^{k_{s}^{\ast}-1}\cap_{l\in\Psi_{\left(  k\right)  }^{0}}\left\{
\widehat{\mathcal{J}}_{l,\left(  k\right)  }=1\right\}  ,\mathcal{D}_{p^{\ast
}}\right)  \geq1-n^{-M}-C_{1}\exp\left(  -C_{2}n^{C_{3}}\right)  .
\]
Further,%
\begin{align*}
\Pr\left(  \cap_{k=1}^{k_{s}^{\ast}}\cap_{l\in\Psi_{\left(  k\right)  }^{0}%
}\left\{  \widehat{\mathcal{J}}_{l,\left(  k\right)  }=1\right\}  \right)   &
\geq\Pr\left(  \cap_{k=1}^{k_{s}^{\ast}}\cap_{l\in\Psi_{\left(  k\right)
}^{0}}\left\{  \widehat{\mathcal{J}}_{l,\left(  k\right)  }=1\right\}
\cap\mathcal{D}_{p^{\ast}}\right) \\
&  =\Pr\left(  \left.  \cap_{k=1}^{k_{s}^{\ast}}\cap_{l\in\Psi_{\left(
k\right)  }^{0}}\left\{  \widehat{\mathcal{J}}_{l,\left(  k\right)
}=1\right\}  \right\vert \mathcal{D}_{p^{\ast}}\right)  \Pr\left(
\mathcal{D}_{p^{\ast}}\right) \\
&  =\Pr\left(  \left.  \cap_{l\in\Psi_{\left(  k_{s}^{\ast}\right)  }^{0}%
}\left\{  \widehat{\mathcal{J}}_{l,\left(  k_{s}^{\ast}\right)  }=1\right\}
\right\vert \cap_{k=1}^{k_{s}^{\ast}-1}\cap_{l\in\Psi_{\left(  k\right)  }%
^{0}}\left\{  \widehat{\mathcal{J}}_{l,\left(  k\right)  }=1\right\}
,\mathcal{D}_{p^{\ast}}\right) \\
&  \times\Pr\left(  \left.  \cap_{l\in\Psi_{\left(  k_{s}^{\ast}-1\right)
}^{0}}\left\{  \widehat{\mathcal{J}}_{l,\left(  k_{s}^{\ast}-1\right)
}=1\right\}  \right\vert \cap_{k=1}^{k_{s}^{\ast}-2}\cap_{l\in\Psi_{\left(
k\right)  }^{0}}\left\{  \widehat{\mathcal{J}}_{l,\left(  k\right)
}=1\right\}  ,\mathcal{D}_{p^{\ast}}\right) \\
&  \times\cdots\\
&  \times\Pr\left(  \left.  \cap_{l\in\Psi_{\left(  1\right)  }^{0}}\left\{
\widehat{\mathcal{J}}_{l,\left(  1\right)  }=1\right\}  \right\vert
\mathcal{D}_{p^{\ast}}\right)  \Pr\left(  \mathcal{D}_{p^{\ast}}\right) \\
&  \geq\left[  1-n^{-M}-C_{1}\exp\left(  -C_{2}n^{C_{3}}\right)  \right]
^{k_{s}^{\ast}+1}\\
&  \geq1-\left(  k_{s}^{\ast}+1\right)  n^{-M}-\left(  k_{s}^{\ast}+1\right)
C_{1}\exp\left(  -C_{2}n^{C_{3}}\right) \\
&  \geq1-n^{-M_{7}}-C_{22}\exp\left(  -C_{23}n^{C_{24}}\right)
\end{align*}
for some large positive constant $M_{7}$ and some positive constants
$C_{22},C_{23},$ and $C_{24},$ where we use the inequalities developed right
before this equation and Lemma \ref{LE:allsignals} to obtain the second
inequality.\ Consequently, we have%
\begin{align*}
E\left(  \text{TPR}_{n}\right)   &  =p^{\ast-1}\sum_{l=1}^{p_{n}}E\left\{
1\left(  \widehat{\mathcal{J}}_{l}=1\text{ and }\left\{  E\left[  f_{l}^{\ast
}\left(  X_{l}\right)  ^{2}\right]  \right\}  ^{1/2}\neq0\right)  \right\} \\
&  =p^{\ast-1}\sum_{l=1}^{p^{\ast}}\Pr\left(  \widehat{\mathcal{J}}%
_{l}=1\right) \\
&  \geq p^{\ast-1}\sum_{l=1}^{p^{\ast}}\Pr\left(  \left\{
\widehat{\mathcal{J}}_{l}=1\right\}  \cap\left\{  \cap_{k=1}^{k_{s}^{\ast}%
}\cap_{l\in\Psi_{\left(  k\right)  }^{0}}\left\{  \widehat{\mathcal{J}%
}_{l,\left(  k\right)  }=1\right\}  \right\}  \right) \\
&  =p^{\ast-1}\sum_{l=1}^{p^{\ast}}\Pr\left(  \cap_{k=1}^{k_{s}^{\ast}}%
\cap_{l\in\Psi_{\left(  k\right)  }^{0}}\left\{  \widehat{\mathcal{J}%
}_{l,\left(  k\right)  }=1\right\}  \right) \\
&  \geq1-n^{-M_{7}}-C_{22}\exp\left(  -C_{23}n^{C_{24}}\right)  ,
\end{align*}
where the fourth line follows from the fact that
\[
\left\{  \widehat{\mathcal{J}}_{l}=1\right\}  \cap\left\{  \cap_{k=1}%
^{k_{s}^{\ast}}\cap_{l\in\Psi_{\left(  k\right)  }^{0}}\left\{
\widehat{\mathcal{J}}_{l,\left(  k\right)  }=1\right\}  \right\}  =\left\{
\cap_{k=1}^{k_{s}^{\ast}}\cap_{l\in\Psi_{\left(  k\right)  }^{0}}\left\{
\widehat{\mathcal{J}}_{l,\left(  k\right)  }=1\right\}  \right\}
\]
for $1\leq l\leq p^{\ast}$. The above equality holds by the way we define
$k_{s}^{\ast}$\ to ensure the effect of each signal on $Y$ becomes large
enough at a certain stage.

Next, we turn to FPR$_{n}.$%
\begin{align*}
E\left(  \text{FPR}_{n}\right)   &  =\left(  p_{n}-p^{\ast}\right)  ^{-1}%
\sum_{l=p^{\ast}+1}^{p_{n}}E\left\{  1\left(  \widehat{\mathcal{J}}%
_{l}=1\text{ and }\left\{  E\left[  f_{l}^{\ast}\left(  X_{l}\right)
^{2}\right]  \right\}  ^{1/2}=0\right)  \right\} \\
&  =\left(  p_{n}-p^{\ast}\right)  ^{-1}\sum_{l=p^{\ast}+1}^{p^{\ast}%
+p^{\ast\ast}}E\left[  1\left(  \widehat{\mathcal{J}}_{l}=1\right)  \right]
+\left(  p_{n}-p^{\ast}\right)  ^{-1}\sum_{l=p^{\ast}+p^{\ast\ast}+1}^{n}%
\Pr\left(  \widehat{\mathcal{J}}_{l}=1\right) \\
&  \leq\frac{p^{\ast\ast}}{p_{n}-p^{\ast}}+\left(  p_{n}-p^{\ast}\right)
^{-1}\sum_{l=p^{\ast}+p^{\ast\ast}+1}^{p_{n}}\left[  \Pr\left(
(\widehat{\mathcal{J}}_{l}=1)\cap\mathcal{T}_{p^{\ast}}\right)  +\Pr\left(
(\widehat{\mathcal{J}}_{l}=1)\cap\mathcal{T}_{p^{\ast}}^{c}\right)  \right] \\
&  \leq\frac{p^{\ast\ast}}{p_{n}-p^{\ast}}+\left(  p_{n}-p^{\ast}\right)
^{-1}\sum_{l=p^{\ast}+p^{\ast\ast}+1}^{p_{n}}\left[  \Pr\left(  \left.
(\widehat{\mathcal{J}}_{l}=1)\right\vert \mathcal{T}_{p^{\ast}}\right)
\Pr\left(  \mathcal{T}_{p^{\ast}}\right)  +\Pr\left(  \mathcal{T}_{p^{\ast}%
}^{c}\right)  \right] \\
&  \leq\frac{p^{\ast\ast}}{p_{n}-p^{\ast}}+\Pr\left(  \mathcal{N}_{p^{\ast}%
}|\mathcal{T}_{p^{\ast}}\right)  \Pr\left(  \mathcal{T}_{p^{\ast}}\right)
+\Pr\left(  \mathcal{T}_{p^{\ast}}^{c}\right) \\
&  =\frac{p^{\ast\ast}}{p_{n}-p^{\ast}}+\left[  1-\Pr\left(  \mathcal{D}%
_{p^{\ast}}|\mathcal{T}_{p^{\ast}}\right)  \right]  \Pr\left(  \mathcal{T}%
_{p^{\ast}}\right)  +\Pr\left(  \mathcal{T}_{p^{\ast}}^{c}\right) \\
&  \leq\frac{p^{\ast\ast}}{p_{n}-p^{\ast}}+n^{-M_{7}}+C_{22}\exp\left(
-C_{23}n^{C_{24}}\right)  ,
\end{align*}
for some large positive constant $M_{7}$ and positive constants $C_{22}%
,C_{23},$ and $C_{24},$ where the fifth line holds by the fact that
conditioning on $\mathcal{T}_{p^{\ast}},$ $\cup_{l=p^{\ast}+p^{\ast\ast}%
+1}^{p_{n}}\left\{  \widehat{\mathcal{J}}_{l}=1\right\}  \subseteq
\mathcal{N}_{p^{\ast}}$ (conditional on the OCMT procedure stops before stage
$p^{\ast},$ whether one or more noise variables selected by OCMT is a subset
of whether one or more noise variables selected before stage $p^{\ast}$), and
the last inequality holds by applying Lemma \ref{LE:allsignals} and equation
(\ref{EQ:tpc}).

Now, we turn to FDR$_{n}.$ By the same analysis for FPR$_{n},$ we have%
\begin{align*}
&  E\left[  \sum_{l=1}^{p_{n}}1\left(  \widehat{\mathcal{J}}_{l}=1,\text{
}\left\{  E\left[  f_{l}^{\ast}\left(  X_{l}\right)  ^{2}\right]  \right\}
^{1/2}=0,\text{ and }\theta_{l}\lesssim\log\left(  m_{n}\right)  ^{1/2}\left(
m_{n}/n\right)  ^{1/2}\right)  \right] \\
&  =\sum_{l=p^{\ast}+p^{\ast\ast}+1}^{p_{n}}E\left(  \widehat{\mathcal{J}}%
_{l}=1\right) \\
&  \leq\sum_{l=p^{\ast}+p^{\ast\ast}+1}^{p_{n}}\left[  \Pr\left(
(\widehat{\mathcal{J}}_{l}=1)\cap\mathcal{T}_{p^{\ast}}\right)  +\Pr\left(
(\widehat{\mathcal{J}}_{l}=1)\cap\mathcal{T}_{p^{\ast}}^{c}\right)  \right] \\
&  \leq n^{-M_{7}}+C_{22}\exp\left(  -C_{23}n^{C_{24}}\right)  =o(1).
\end{align*}
Then%
\[
\text{FDR}_{n}=\frac{\sum_{l=1}^{p_{n}}1\left(  \widehat{\mathcal{J}}%
_{l}=1,\text{ }\left\{  E\left[  f_{l}^{\ast}\left(  X_{l}\right)
^{2}\right]  \right\}  ^{1/2}=0,\text{ and }\theta_{l}\lesssim\log\left(
m_{n}\right)  ^{1/2}\left(  m_{n}/n\right)  ^{1/2}\right)  }{\sum_{l=1}%
^{p_{n}}\widehat{\mathcal{J}}_{l}+1}\overset{P}{\rightarrow}0.
\]
by Markov inequality and the fact that $\sum_{l=1}^{p_{n}}\widehat{\mathcal{J}%
}_{l}+1\geq1$.\medskip
\end{proof}

\begin{proof}
[Proof of Theorem \ref{TH:AGLasso}]For (i), we denote the event of the
desirable results from the OCMT procedure as
\[
\mathcal{M}_{\text{OCMT}}=\mathcal{T}_{p^{\ast}}\cap\mathcal{A}_{p^{\ast}}%
\cap\mathcal{H}_{p^{\ast}}\cap\mathcal{D}_{p^{\ast}},
\]
which is the event that the OCMT concludes at or before stage $p^{\ast},\ $all
the signals and pseudo signals are selected, and none of the noise variables
are included.$\ $By the proof of Theorem \ref{TH:stoppingFDR} (see esp.
equations (\ref{EQ:aphpf}) and (\ref{EQ:tpc})) and by applying Lemma
\ref{LE:allsignals} on $\Pr\left(  \mathcal{D}_{p^{\ast}}^{c}\right)  ,$ we
have%
\begin{equation}
\Pr\left(  \mathcal{M}_{\text{OCMT}}\right)  =1-o\left(  1\right)  .
\label{EQ:MOCMT}%
\end{equation}
Conditional on $\mathcal{M}_{\text{OCMT}},$ we claim that all the technical
conditions in Theorem 3 of \cite{HuangEtal2010} (HHW) are satisfied. For a
better exposition, we defer the proof of this claim to the end of the proof.

Now, we denote the desirable event of the adaptive group Lasso as%
\[
\mathcal{M}_{\text{AGLASSO}}=\left\{  \text{All signals are selected, no
pseudo-signals or noise variables are selected}\right\}  \text{.}%
\]
Theorem 3 of HHW implies that
\begin{equation}
\Pr\left(  \left.  \mathcal{M}_{\text{AGLASSO}}\right\vert \mathcal{M}%
_{\text{OCMT}}\right)  =1-o\left(  1\right)  . \label{EQ:MAGLasso}%
\end{equation}
Consequently we have%
\begin{align*}
\Pr\left(  \mathcal{M}_{\text{AGLASSO}}\right)   &  =\Pr\left(  \left.
\mathcal{M}_{\text{AGLASSO}}\right\vert \mathcal{M}_{\text{OCMT}}\right)
\Pr\left(  \mathcal{M}_{\text{OCMT}}\right)  +\Pr\left(  \left.
\mathcal{M}_{\text{AGLASSO}}\right\vert \mathcal{M}_{\text{OCMT}}^{c}\right)
\Pr\left(  \mathcal{M}_{\text{OCMT}}^{c}\right) \\
&  \geq\Pr\left(  \left.  \mathcal{M}_{\text{AGLASSO}}\right\vert
\mathcal{M}_{\text{OCMT}}\right)  \Pr\left(  \mathcal{M}_{\text{OCMT}}\right)
+o(1)\\
&  =1-o\left(  1\right)  ,
\end{align*}
where we use the results in equations (\ref{EQ:MOCMT}) and (\ref{EQ:MAGLasso}).

We turn to (ii). By the result in (i), $\Pr\left(  \mathcal{M}_{\text{AGLASSO}%
}\right)  >1-\varepsilon/2$ for any fixed small positive value $\varepsilon$
after some large $n$. Conditional on $\mathcal{M}_{\text{AGLASSO}},$ the post
OCMT estimation is simply a special case of \cite{Stone} because $p^{\ast}$ is
fixed. Further, the bias term is of the order $m_{n}^{-d}$. Since $m_{n}%
^{-d}\ll\left(  m_{n}/n\right)  ^{1/2}$ by Assumption \ref{A:mn}. That is, the
bias term is asymptotically negligible. Then the results in \cite{Stone}, we
have conditional on $\mathcal{M}_{\text{AGLASSO}},$%
\begin{equation}
P^{m_{n}}\left(  \boldsymbol{Z}_{\text{AGLASSO}}\right)  ^{\prime
}\boldsymbol{\hat{\beta}}_{\text{post}}-%
{\displaystyle\sum_{j=1}^{p^{\ast}}}
f_{j}^{\ast}\left(  X_{j}\right)  =O_{P}\left(  \left(  m_{n}/n\right)
^{1/2}\right)  . \label{EQ:state_Iwant}%
\end{equation}
Then unconditionally, we have $P^{m_{n}}\left(  \boldsymbol{Z}_{\text{AGLASSO}%
}\right)  ^{\prime}\boldsymbol{\hat{\beta}}_{\text{post}}-%
{\displaystyle\sum_{j=1}^{p^{\ast}}}
f_{j}^{\ast}\left(  X_{j}\right)  =O_{P}\left(  \left(  m_{n}/n\right)
^{1/2}\right)  $ as we can make $\varepsilon$ arbitrarily small.\footnote{This
holds because for any events $A$ and $B\ $with $\Pr\left(  B\right)
>1-\varepsilon/2$ and $\Pr\left(  A|B\right)  >1-\varepsilon/2,$ we have%
\[
\Pr\left(  A\right)  =\Pr\left(  A|B\right)  \Pr\left(  B\right)  +\Pr\left(
A|B^{c}\right)  \Pr\left(  B^{c}\right)  \geq\Pr\left(  A|B\right)  \Pr\left(
B\right)  >1-\varepsilon.
\]
}.

We complete the proof by demonstrating the initial claim. To achieve this, we
need to verify the conditions for the tuning parameters and conditions A1--A4
as specified in HHW.

To facilitate reading, we present A1--A4 in HHW using our notation.

\begin{enumerate}
\item[A1] The number of nonzero components $p^{\ast}$ is fixed, and there is a
constant $C_{f}>0$ such that $\min_{1\leq j\leq p^{\ast}}\left\{  E\left[
f_{j}^{\ast}\left(  X_{j}\right)  ^{2}\right]  \right\}  ^{1/2}\geq C_{f}$.

\item[A2] The random variables $\varepsilon_{1},\ldots,\varepsilon_{n}$ are
independent and identically distributed with $E\left(  \varepsilon_{i}\right)
=0$ and $\operatorname{Var}\left(  \varepsilon_{i}\right)  =\sigma^{2}$.
Furthermore, their tail probabilities satisfy $\Pr\left(  \left\vert
\varepsilon_{i}\right\vert >t\right)  \leq K\exp\left(  -Ct^{2}\right)  ,$
$i=1,\ldots,n$, for all $t\geq0$ and for constants $K$ and $C$.

\item[A3] $Ef_{j}^{\ast}\left(  X_{j}\right)  =0,$ $j=1,\ldots,p^{\ast}$.

\item[A4] The covariate vector $X$ has a continuous density and there exist
constants $C_{1}$ and $C_{2}$ such that the density function $g_{j}$ of
$X_{j}$ satisfies $0<C_{1}\leq g_{j}(x)\leq C_{2}<$ $\infty$ on $[0,1]$ for
every $1\leq j\leq p_{n}.$
\end{enumerate}

We verify those conditions as follows.

\begin{itemize}
\item First, conditional on $\mathcal{M}_{\text{OCMT}}$, all signals and
pseudo signals are selected, while none of the noise variables are included.
Consequently, the regular rank condition, as imposed in Assumption
\ref{A:full_rank2}, holds. Furthermore, the number of covariates is $p^{\ast
}+p^{\ast\ast}$, which is of the same order as $p^{\ast\ast}$ when
$p^{\ast\ast}>0$ because $p^{\ast}$ is fixed.

\item Second, we discuss the tuning parameters. The condition $\lambda
_{n1}\geq C\sqrt{n\log(p^{\ast\ast}m_{n})}$ was directly imposed in Theorem 1
of HHW. The condition $\lambda_{n1}\ll\sqrt{n/m_{n}}$ is stronger than the one
imposed in part (ii) of Theorem 1 of HHW. Thus, $\lambda_{n1}$ satisfies the
requirements in HHW. For $\lambda_{n2}$, we only need to verify whether it
satisfies condition (B2) in HHW. The condition $\lambda_{n2}\ll nm_{n}^{-1/4}$
clearly satisfies B2(a) in HHW. For B2(b) in HHW, firstly, $r_{n}\propto
\sqrt{n/[m_{n}\log(p^{\ast\ast}m_{n})]}$ in our case (the convergence rate
ensured by Theorem 1 of HHW), because $\lambda_{n1}\ll\sqrt{n/m_{n}}$ and the
bias term $m_{n}^{-d}\ll\sqrt{m_{n}/n}$ as ensured by Assumption \ref{A:mn}.
Using the rate of $r_{n}$, some simple calculations show that $\lambda_{n2}\gg
m_{n}^{1/2}\log(p^{\ast\ast}m_{n})$ satisfies B2(b), again using Assumption
\ref{A:mn}. We have verified the conditions required for the tuning parameters.

\item Third, condition A3 in HHW is merely a normalization, and we conduct
such a normalization as well. The major parts of conditions A1, A2, and A4 in
HHW have been imposed in this paper except for two conditions:

\begin{enumerate}
\item HHW impose $\min_{1\leq j\leq p^{\ast}}\left\Vert f_{j}^{\ast
}\right\Vert \geq C>0$, while our counterpart is $\left\{  E[f_{j}^{\ast
}(X_{j})^{2}]\right\}  ^{1/2}\gtrsim\kappa_{n}\log(m_{n})^{1/2}(m_{n}%
/n)^{1/2}$ for $j=1,2,...,p^{\ast}$ as in Assumption 9'.

\item HHW assume $\Pr(\left\vert \varepsilon_{i}\right\vert >t)\leq
K\exp(-Ct^{2})$ for some positive constants $K$ and $C$ and for all $t$, while
we assume $\Pr(\left\vert \varepsilon_{i}\right\vert >t)\leq K\exp(-Ct^{s})$
for some constant $s>0$ in Assumption \ref{A:epsilon}. Clearly, our conditions
are weaker.
\end{enumerate}

We are going to show that these two conditions have no impact on the final results.

\begin{enumerate}
\item[For 1.] As stated in the second bullet point, the rate of convergence of
the Lasso estimator is $r_{n}^{-1}\propto\sqrt{m_{n}\log(p^{\ast\ast}m_{n}%
)/n}$. Note that $\log(p^{\ast\ast}m_{n})\propto\log(m_{n})$ under our
framework, because $m_{n}\propto n^{B_{m}}$ and $p^{\ast\ast}\propto
n^{B_{p^{\ast\ast}}}$. Therefore, $\{E[f_{j}^{\ast}(X_{j})^{2}]\}^{1/2}\gg
r_{n}^{-1}$ for $j=1,2,...,p^{\ast}$, due to the fact that $\kappa
_{n}\rightarrow\infty$. This result implies that the Lasso estimator of
signals will not be zero with very high probability because the signal
strength is much greater than the convergence rate. Therefore, this weaker
condition has no impact on the Lasso shrinkage result.

\item[For 2.] Our tail condition slightly generalizes the one in HHW. The
major use of this condition in HHW is the probability bounds derived in Lemma
2 of HHW. In particular, only the second part of this lemma is useful for us,
viz., the result when $\frac{m_{n}\log(pm_{m})}{n}\rightarrow0$, the case in
this paper. To guarantee this result for our case, we can apply Lemma A.2 in
this paper by setting $v_{n}=Cn^{1/2}m_{m}^{-1/2}\sqrt{\log(p_{n}m_{n})}$ for
a sufficiently large constant $C$ on the series $\sum_{i=1}^{n}\phi_{j}%
(x_{ki})\varepsilon_{i}$ for $j=1,...,m_{n}$ and $k=1,...,p_{n}$. The
probability bound from this lemma is $\exp\left[  -\frac{C^{2}(1-\pi)^{2}%
\log(p_{n}m_{n})}{2C_{1}}\right]  $ for another uniformly bounded $C_{1}$, for
$j=1,...,m_{n}$ and $k=1,...,p_{n}$. We set $C$ large enough so that the
probability bound is smaller than $(p_{n}m_{n})^{-M}$ for some $M>1$. Then
\begin{align*}
\Pr\left(  \max_{j=1,...,m_{n},k=1,...,p_{n}}\left\vert \sum_{i=1}^{n}\phi
_{j}(x_{ki})\varepsilon_{i}\right\vert >v_{n}\right)   &  \leq\sum
_{j=1}^{m_{n}}\sum_{k=1}^{p_{n}}\Pr\left(  \left\vert \sum_{i=1}^{n}\phi
_{j}(x_{ki})\varepsilon_{i}\right\vert >v_{n}\right) \\
&  \leq m_{n}p_{n}(p_{n}m_{n})^{-M}\rightarrow0,
\end{align*}
which implies $\max_{j=1,...,m_{n},k=1,...,p_{n}}\left\vert \sum_{i=1}^{n}%
\phi_{j}(x_{ki})\varepsilon_{i}\right\vert =O_{P}\left[  n^{1/2}m_{m}%
^{-1/2}\sqrt{\log(p_{n}m_{n})}\right]  $, an equivalent result to the second
part of Lemma 2 in HHW.
\end{enumerate}
\end{itemize}
\end{proof}

\begin{proof}
[Proof of Theorem \ref{TH:stoppingFDR2}]The uniform boundedness of $\sum
_{j=1}^{p^{\ast}}f_{j}\left(  X_{j}\right)  $ implies that $U_{l}$ also
satisfies the tail condition in Assumption \ref*{A:epsilon}. [For details, see
the proof of Lemma \ref{LE:error_var}.] This, in conjunction with the
additional conditions in Assumption \ref*{A:p}\textquotedblright, implies that
all technical lemmas can go through. Consequently, we only need to show that
the probability bounds for the events like $\mathcal{T}_{p^{\ast}}%
,\mathcal{A}_{p^{\ast}},\mathcal{H}_{p^{\ast}},$ and $\mathcal{D}_{p^{\ast}}$
are still $1-o\left(  1\right)  $ with a diverging $p^{\ast}.$ This statement
holds due to the fact that the error bounds obtained before are of the order
either $n^{-M}$ or $\exp\left(  -n^{C}\right)  $ for some large positive
constant $M$ and some $C>0,$ and the error probabilities accumulated for a
diverging $p^{\ast}$\ are of the order $p^{\ast}\left(  n^{-M}+\exp\left(
-n^{C}\right)  \right)  =o\left(  1\right)  $. We provide some details below.

We continue to use $M$ to denote some large positive constant and $C$ to
denote some generic positive constant. For $\mathcal{H}_{p^{\ast}},$ equations
(\ref{EQ:h1}) and (\ref{EQ:hk}) imply that
\begin{align}
\Pr\left(  \mathcal{H}_{p^{\ast}}\right)   &  \geq\Pr\left(  \mathcal{H}%
_{1}\right)  =\Pr\left(  \cap_{l=p^{\ast}+1}^{p^{\ast}+p^{\ast\ast}%
}\mathcal{B}_{l,1}\right)  =1-\Pr\left(  \cup_{l=p^{\ast}+1}^{p^{\ast}%
+p^{\ast\ast}}\mathcal{B}_{l,1}^{c}\right)  \geq1-\sum_{l=p^{\ast}+1}%
^{p^{\ast}+p^{\ast\ast}}\Pr\left(  \mathcal{B}_{l,1}^{c}\right) \nonumber\\
&  \geq1-p^{\ast\ast}\left(  n^{-M}+C_{1}\exp\left(  -C_{2}n^{C_{3}}\right)
\right)  \geq1-n^{-M_{1}}-C_{4}\exp\left(  -C_{5}n^{C_{6}}\right)  .\nonumber
\end{align}
For $\mathcal{D}_{p^{\ast}},$\
\[
\Pr\left(  \mathcal{D}_{p^{\ast}}\right)  \geq1-n^{-M_{2}}-C_{7}\exp\left(
-C_{8}n^{C_{9}}\right)
\]
holds by Lemma \ref{LE:allsignals} because $p^{\ast}\lesssim n^{B_{p^{\ast}}}%
$. Similarly,
\[
\Pr\left(  \mathcal{A}_{p^{\ast}}\cap\mathcal{H}_{p^{\ast}}\right)
\geq1-n^{-M_{3}}-C_{10}\exp\left(  -C_{11}n^{C_{12}}\right)
\]
for the same reason as we obtain equation (\ref{EQ:aphpf}) and $p^{\ast
}\lesssim n^{B_{p^{\ast}}}$. The probability bound for $\mathcal{T}_{p^{\ast}%
}$ was derived based on the bounds for $\mathcal{A}_{p^{\ast}},\mathcal{H}%
_{p^{\ast}},$ and $\mathcal{D}_{p^{\ast}}.$ Since the probability bounds for
$\mathcal{A}_{p^{\ast}},\mathcal{H}_{p^{\ast}},$ and $\mathcal{D}_{p^{\ast}}$
have been shown, the probability bound for $\mathcal{T}_{p^{\ast}}$ can be
obtained using the same idea as we get equation (\ref{EQ:tpcf}). That is,%
\[
\Pr\left(  \mathcal{T}_{p^{\ast}}\right)  \geq1-n^{-M_{4}}-C_{13}\exp\left(
-C_{14}n^{C_{15}}\right)  .
\]
Then the first result is reached by observing that $\Pr\left(  \hat{k}%
_{s}>p^{\ast}\right)  =1-\Pr\left(  \mathcal{T}_{p^{\ast}}\right)  .$ The
results of TPR, FPR, and FDR can be proved similarly as in the proof of
Theorem \ref{TH:stoppingFDR} and thus omitted.
\end{proof}

%

%

\newpage\appendix\setcounter{footnote}{0} \setcounter{table}{0}
\setcounter{figure}{0} \setcounter{section}{0}
\renewcommand{\thesection}{S\arabic{section}}
\numberwithin{equation}{section}
\renewcommand{\thefigure}{S\arabic{figure}}
\renewcommand{\thetable}{S\arabic{table}}
\renewcommand{\thelemma}{S\arabic{lemma}}
\setcounter{page}{1}%
%

\setlength{\baselineskip}{15pt}%

\begin{center}
{\Large Online Appendix to}$\vspace{0.08in}$

{\Large \textquotedblleft A One Covariate at a Time Multiple Testing Approach
to Variable Selection in Additive Models\textquotedblright}

(NOT for Publication)

$\medskip$

Liangjun Su\footnote{School of Economics and Management, Tsinghua University},
Thomas Tao Yang\footnote{Research School of Economics, Australian National
University}, Yonghui Zhang\footnote{School of Economics, Renmin University of
China}, and Qiankun Zhou\footnote{Department of Economics, Louisiana State
University}

{\small \ \ \ \ \ \ \ \ \ }
\end{center}

\noindent This\ supplement is composed of two parts. Section
\ref{APP:techproof} contains the proofs of Lemmas A.4-A.12. Section
\ref{APP:tables} provides some additional simulation\ and application results.

\section{Proofs of the Technical Lemmas\label{APP:techproof}}

\begin{proof}
[Proof of Lemma \ref{LE:XX'}]To prove the lemma, we first present two
inequalities: for any $m_{n}\times m_{n}$\ symmetric matrices $\boldsymbol{A}$
and $\boldsymbol{B}$, we have%
\begin{equation}
\max\left\{  \left\vert \lambda_{\min}\left(  \boldsymbol{A}\right)
\right\vert ,\left\vert \lambda_{\max}\left(  \boldsymbol{A}\right)
\right\vert \right\}  \leq m_{n}\left\Vert \boldsymbol{A}\right\Vert _{\infty
}, \label{EQ:lambda2}%
\end{equation}
and%
\begin{equation}
\left\vert \lambda_{\min}\left(  \boldsymbol{A}\right)  -\lambda_{\min}\left(
\boldsymbol{B}\right)  \right\vert \leq\max\left\{  \left\vert \lambda_{\min
}\left(  \boldsymbol{A-B}\right)  \right\vert ,\left\vert \lambda_{\min
}\left(  \boldsymbol{B-A}\right)  \right\vert \right\}  . \label{EQ:lambda1}%
\end{equation}
We will prove them at the end of this proof.

Recall that $\Phi_{X_{l}}=E\left[  P^{m_{n}}\left(  X_{l}\right)  P^{m_{n}%
}\left(  X_{l}\right)  ^{\prime}\right]  .$ Let $\Phi_{n,l}=n^{-1}%
\mathbb{X}_{l}^{\prime}\mathbb{X}_{l}.$ The $\left(  j,k\right)  $-th entry of
$\Phi_{n,l}-\Phi_{l}$ is $\xi_{jk,l}\equiv n^{-1}\sum_{i=1}^{n}\phi_{j}\left(
x_{li}\right)  \phi_{k}\left(  x_{li}\right)  -E\left[  \phi_{j}\left(
X_{l}\right)  \phi_{k}\left(  X_{l}\right)  \right]  .$ Notice that%
\[
\text{var}\left(  \phi_{j}\left(  X_{l}\right)  \phi_{k}\left(  X_{l}\right)
\right)  \leq E\left[  \phi_{j}\left(  X_{l}\right)  ^{2}\phi_{k}\left(
X_{l}\right)  ^{2}\right]  \leq E\left[  \phi_{j}\left(  X_{l}\right)
^{2}\right]  \leq B_{4}m_{n}^{-1},
\]
where the second inequality holds by $\left\Vert \phi_{k}\right\Vert _{\infty
}\leq1,$ and the last inequality holds by Lemma \ref{LE:rank}. By Lemma
\ref{LE:bernstein},
\[
\Pr\left(  \left\vert \xi_{jk,l}\right\vert \geq v_{n}/n\right)  \leq
2\exp\left\{  -v_{n}^{2}/\left[  2\left(  B_{4}nm_{n}^{-1}+v_{n}/3\right)
\right]  \right\}  .
\]
Then by the union bound,%
\begin{align}
\Pr\left(  \left\Vert \Phi_{n,l}-\Phi_{X_{l}}\right\Vert _{\infty}\geq
v_{n}/n\right)   &  =\Pr\left(  \max_{j,k=1,2,\ldots,m_{n}}\left\vert
\xi_{jk,l}\right\vert \geq v_{n}/n\right) \nonumber\\
&  \leq\sum_{j=1}^{m_{n}}\sum_{k=1}^{m_{n}}\Pr\left(  \left\vert \xi
_{jk,l}\right\vert \geq v_{n}/n\right) \nonumber\\
&  \leq2m_{n}^{2}\exp\left\{  -v_{n}^{2}/[2\left(  B_{4}nm_{n}^{-1}%
+v_{n}/3\right)  ]\right\}  . \label{EQ:bound1}%
\end{align}

Note that$\left\Vert \Phi_{n,l}^{-1}\right\Vert -\left\Vert \Phi_{X_{l}}%
^{-1}\right\Vert =\left[  \lambda_{\min}\left(  \Phi_{n,l}\right)  \right]
^{-1}-\left[  \lambda_{\min}\left(  \Phi_{X_{l}}\right)  \right]  ^{-1}.$ By
(\ref{EQ:lambda2}), (\ref{EQ:lambda1}), and (\ref{EQ:bound1}),
\begin{align}
&  \Pr\left(  \left\vert \lambda_{\min}\left(  \Phi_{n,l}\right)
-\lambda_{\min}\left(  \Phi_{X_{l}}\right)  \right\vert \geq m_{n}%
v_{n}/n\right) \nonumber\\
&  \leq\Pr\left(  \max\left\{  \left\vert \lambda_{\min}\left(  \Phi
_{n,l}-\Phi_{X_{l}}\right)  \right\vert ,\left\vert \lambda_{\min}\left(
\Phi_{X_{l}}-\Phi_{n,l}\right)  \right\vert \right\}  \geq m_{n}v_{n}/n\right)
\nonumber\\
&  \leq\Pr\left(  \left\Vert \Phi_{n,l}-\Phi_{X_{l}}\right\Vert _{\infty}\geq
v_{n}/n\right) \nonumber\\
&  \leq2m_{n}^{2}\exp\left\{  -v_{n}^{2}/\left[  2\left(  B_{4}nm_{n}%
^{-1}+v_{n}/3\right)  \right]  \right\}  . \label{EQ:upto}%
\end{align}
Set $v_{n}=nm_{n}^{-2}B_{1}/3.$ Then the above inequality becomes%
\begin{equation}
\Pr\left(  \left\vert \lambda_{\min}\left(  \Phi_{n,l}\right)  -\lambda_{\min
}\left(  \Phi_{X_{l}}\right)  \right\vert \geq B_{1}m_{n}^{-1}/3\right)
\leq2m_{n}^{2}\exp\left\{  -C_{1}nm_{n}^{-3}\right\}  \label{EQ:bound2}%
\end{equation}
for some constant $C_{1}$ because $nm_{n}^{-1}\gg v_{n}=nm_{n}^{-2}B_{1}/3.$
This, in conjunction with the fact that $B_{1}m_{n}^{-1}\leq\lambda_{\min
}\left(  \Phi_{l}\right)  \leq B_{2}m_{n}^{-1}\ $by Lemma \ref{LE:rank},
implies%
\begin{equation}
\Pr\left(  \left\vert \lambda_{\min}\left(  \Phi_{n,l}\right)  -\lambda_{\min
}\left(  \Phi_{X_{l}}\right)  \right\vert \geq\lambda_{\min}\left(
\Phi_{X_{l}}\right)  /3\right)  \leq2m_{n}^{2}\exp\left\{  -C_{1}nm_{n}%
^{-3}\right\}  . \label{EQ:bound3a}%
\end{equation}
Note that for two positive random variables $a$ and $b,$ $\left\vert
a-b\right\vert \geq b/2$ is equivalent to $\{a-3b/2\geq0$ or $a-b/2$
$\leq0\},$ which is equivalent to
\[
b^{-1}-a^{-1}\geq b^{-1}/3\text{ \ \ or \ \ }b^{-1}-a^{-1}\leq-b^{-1}.
\]
Therefore, $\left\{  \left\vert a-b\right\vert \geq b/2\right\}  $ implies
$\left\{  \left\vert b^{-1}-a^{-1}\right\vert \geq b^{-1}/3\right\}  ,$ and
\[
\Pr\left(  \left\{  \left\vert a-b\right\vert \geq b/2\right\}  \right)
\leq\Pr\left(  \left\vert b^{-1}-a^{-1}\right\vert \geq b^{-1}/3\right)  .
\]
Taking $a=\left\{  \lambda_{\min}\left(  \Phi_{n,l}\right)  \right\}  ^{-1}$
and $b=\left\{  \lambda_{\min}\left(  \Phi_{X_{l}}\right)  \right\}  ^{-1},$
we have
\begin{align*}
&  \Pr\left(  \left\vert \left[  \lambda_{\min}\left(  \Phi_{n,l}\right)
\right]  ^{-1}-\left[  \lambda_{\min}\left(  \Phi_{X_{l}}\right)  \right]
^{-1}\right\vert \geq\left\{  \lambda_{\min}\left(  \Phi_{l}\right)  \right\}
^{-1}/2\right) \\
&  \leq\Pr\left\{  \left\vert \lambda_{\min}\left(  \Phi_{n,l}\right)
-\lambda_{\min}\left(  \Phi_{X_{l}}\right)  \right\vert \geq\lambda_{\min
}\left(  \Phi_{l}\right)  /3\right\} \\
&  \leq2m_{n}^{2}\exp\left\{  -C_{2}nm_{n}^{-3}\right\}  ,
\end{align*}
where the last inequality holds by (\ref{EQ:bound3a}). This shows the first
part of the lemma.

If Assumption \ref*{A:mn} also holds, then%
\begin{align*}
m_{n}^{2}\exp\left\{  -C_{2}nm_{n}^{-3}\right\}   &  =\exp\left\{
-C_{2}nm_{n}^{-3}+2\log m_{n}\right\} \\
&  =\exp\left\{  -C_{2}n^{1-3B_{m}}+2B_{m}\log n\right\}  \leq\exp\left\{
-C_{3}n^{C_{4}}\right\}
\end{align*}
for some $C_{4}\in(0,1-3B_{m})$ and $C_{3}\in(0,C_{2}]$. Then the second part
of the lemma follows.

To complete the proof the lemma, we now show (\ref{EQ:lambda1}) and
(\ref{EQ:lambda2}). To see equation (\ref{EQ:lambda1}), note that for any
vector $\boldsymbol{x=}\left(  x_{1},...,x_{m_{n}}\right)  ^{\prime}$ with
$\left\Vert \boldsymbol{x}\right\Vert =1,$%
\[
\min_{\left\Vert \boldsymbol{x}\right\Vert =1}\boldsymbol{x}^{\prime
}\boldsymbol{Ax}=\min_{\left\Vert \boldsymbol{x}\right\Vert =1}\left(
\boldsymbol{x}^{\prime}\boldsymbol{Bx+x}^{\prime}\left(  \boldsymbol{A}%
-\boldsymbol{B}\right)  \boldsymbol{x}\right)  \geq\min_{\left\Vert
\boldsymbol{x}\right\Vert =1}\boldsymbol{x}^{\prime}\boldsymbol{Bx}%
+\min_{\left\Vert \boldsymbol{x}\right\Vert =1}\boldsymbol{x}^{\prime
}\boldsymbol{\left(  \boldsymbol{A}-\boldsymbol{B}\right)  x},
\]
which is equivalent to
\[
\lambda_{\min}\left(  \boldsymbol{A}\right)  \geq\lambda_{\min}\left(
\boldsymbol{B}\right)  +\lambda_{\min}\left(  \boldsymbol{A-B}\right)
\ \text{or equivalently }\lambda_{\min}\left(  \boldsymbol{A-B}\right)
\leq\lambda_{\min}\left(  \boldsymbol{A}\right)  -\lambda_{\min}\left(
\boldsymbol{B}\right)  .
\]
Switching $\boldsymbol{A}$ and $\boldsymbol{B}$ yields $\lambda_{\min}\left(
\boldsymbol{B-A}\right)  \leq\lambda_{\min}\left(  \boldsymbol{B}\right)
-\lambda_{\min}\left(  \boldsymbol{A}\right)  .$ Therefore%
\[
\lambda_{\min}\left(  \boldsymbol{A-B}\right)  \leq\lambda_{\min}\left(
\boldsymbol{A}\right)  -\lambda_{\min}\left(  \boldsymbol{B}\right)
\leq-\lambda_{\min}\left(  \boldsymbol{B-A}\right)  ,
\]
and equation (\ref{EQ:lambda1}) follows. For (\ref{EQ:lambda2}), we have by
Jensen inequality
\[
\left\vert \lambda_{\max}\left(  \boldsymbol{A}\right)  \right\vert
=\left\vert \max_{\left\Vert \boldsymbol{x}\right\Vert =1}\boldsymbol{x}%
^{\prime}\boldsymbol{Ax}\right\vert \leq\left\Vert \boldsymbol{A}\right\Vert
_{\infty}\max_{\left\Vert \boldsymbol{x}\right\Vert =1}\left(  \sum
_{j=1}^{m_{n}}\left\vert x_{j}\right\vert \right)  ^{2}\leq m_{n}\left\Vert
\boldsymbol{A}\right\Vert _{\infty}\max_{\left\Vert \boldsymbol{x}\right\Vert
=1}\sum_{j=1}^{m_{n}}x_{j}^{2}=m_{n}\left\Vert \boldsymbol{A}\right\Vert
_{\infty},
\]
and similarly
\[
\left\vert \lambda_{\min}\left(  \boldsymbol{A}\right)  \right\vert
=\left\vert \min_{\left\Vert \boldsymbol{x}\right\Vert =1}\boldsymbol{x}%
^{\prime}\boldsymbol{Ax}\right\vert \leq\left\Vert \boldsymbol{A}\right\Vert
_{\infty}\min_{\left\Vert \boldsymbol{x}\right\Vert =1}\left(  \sum
_{j=1}^{m_{n}}\left\vert x_{j}\right\vert \right)  ^{2}\leq m_{n}\left\Vert
\boldsymbol{A}\right\Vert _{\infty}\min_{\left\Vert \boldsymbol{x}\right\Vert
=1}\sum_{j=1}^{m_{n}}x_{j}^{2}=m_{n}\left\Vert \boldsymbol{A}\right\Vert
_{\infty}.
\]
This completes the proof of the lemma.\medskip
\end{proof}

\begin{proof}
[Proof of Lemma \ref{LE:error_var}]Noting that $\hat{\sigma}_{l}^{2}%
=n^{-1}\boldsymbol{u}_{l}^{\prime}(I_{n}-\mathbb{X}_{l}\left(  \mathbb{X}%
_{l}^{\prime}\mathbb{X}_{l}\right)  ^{-1}\mathbb{X}_{l}^{\prime}%
)\boldsymbol{u}_{l},$ we have%
\[
\hat{\sigma}_{l}^{2}-\sigma_{l}^{2}=n^{-1}\boldsymbol{u}_{l}^{\prime
}\boldsymbol{u}_{l}-\sigma_{l}^{2}-n^{-1}\boldsymbol{u}_{l}^{\prime}%
\mathbb{X}_{l}\left(  \mathbb{X}_{l}^{\prime}\mathbb{X}_{l}\right)
^{-1}\mathbb{X}_{l}^{\prime}\boldsymbol{u}_{l}.
\]
It follows that%
\begin{align}
\Pr\left(  \left\vert \hat{\sigma}_{l}^{2}-\sigma_{l}^{2}\right\vert \geq
v_{n}\right)   &  \leq\Pr\left(  \left\vert n^{-1}\boldsymbol{u}_{l}^{\prime
}\boldsymbol{u}_{l}-\sigma_{l}^{2}\right\vert \geq\left(  1-\pi_{1}\right)
v_{n}/n\right) \label{EQ:pr_sigma}\\
&  +\Pr\left(  \left\vert n^{-1}\boldsymbol{u}_{l}^{\prime}\mathbb{X}%
_{l}\left(  \mathbb{X}_{l}^{\prime}\mathbb{X}_{l}\right)  ^{-1}\mathbb{X}%
_{l}^{\prime}\boldsymbol{u}_{l}\right\vert \geq\pi_{1}v_{n}/n\right) \nonumber
\end{align}
for any $\pi_{1}\in\left(  0,1\right)  .$

We first bound the first term on the right hand side of (\ref{EQ:pr_sigma}).
By Assumption \ref{A:tech} and
\begin{align*}
U_{l}  &  =Y-P^{m_{n}}\left(  X_{l}\right)  ^{\prime}\boldsymbol{\beta}_{l}\\
&  =\sum_{j=1}^{p^{\ast}}f_{j}^{\ast}\left(  X_{j}\right)  -P^{m_{n}}\left(
X_{l}\right)  ^{\prime}\boldsymbol{\beta}_{l}+\varepsilon
\end{align*}
with%
\[
\boldsymbol{\beta}_{l}=\left[  E\left(  P^{m_{n}}\left(  X_{l}\right)
P^{m_{n}}\left(  X_{l}\right)  ^{\prime}\right)  \right]  ^{-1}E\left(
P^{m_{n}}\left(  X_{l}\right)  Y\right)  ,
\]
we can say that $U_{l}$\ is just a uniformly bounded random term plus
$\varepsilon$ because elements in $\boldsymbol{\beta}_{l}$ are uniformly
bounded due to Lemma \ref{LE:rank} and the uniform boundedness of $\sum
_{j=1}^{p^{\ast}}f_{j}^{\ast}\left(  X_{j}\right)  ,$ and for each $x$ only a
finite elements of $P^{m_{n}}\left(  x\right)  $ are nonzero due to the usage
of finite order B-splines. This implies that $U_{l}$ shares the same tail
behavior as $\varepsilon$ and satisfies Assumption \ref{A:epsilon}. Then, the
conditions in Lemma \ref{LE:main_inequality} hold for $n^{-1}\boldsymbol{u}%
_{l}^{\prime}\boldsymbol{u}_{l}-\sigma_{l}^{2}$ with $\alpha=s/2$ by
Assumption \ref*{A:epsilon}. Applying Lemma \ref{LE:main_inequality} yields
that for any $v_{n}\propto n^{\lambda}$ with $1/2<\lambda\leq\left(
1+s/2\right)  /(2+s/2),$ $\pi<\pi_{1}$%
\begin{align}
\Pr\left(  \left\vert n^{-1}\boldsymbol{u}_{l}^{\prime}\boldsymbol{u}%
_{l}-\sigma_{l}^{2}\right\vert \geq\left(  1-\pi_{1}\right)  v_{n}/n\right)
&  \leq\exp\left[  \left.  -\left(  1-\pi\right)  ^{2}v_{n}^{2}\right/
\left(  2n\omega_{l}^{4}\right)  \right] \label{EQ:t1p1}\\
&  =\frac{C_{1}}{2}\exp\left(  -C_{2}v_{n}^{2}/n\right)  =\frac{C_{1}}{2}%
\exp\left(  -C_{2}n^{C_{3}}\right) \nonumber
\end{align}
with $C_{1}=2,$ $C_{2}=\left.  \left(  1-\pi\right)  ^{2}\right/  \left(
2\omega_{l}^{4}\right)  ,$ $C_{3}=2\lambda-1$, and last line holds by the fact
that $\lambda>1/2$. For $v_{n}\propto n^{\lambda}$ with $\lambda>\left(
1+s/2\right)  /(2+s/2),$ Lemma \ref{LE:main_inequality} implies%
\begin{equation}
\Pr\left(  \left\vert n^{-1}\boldsymbol{u}_{l}^{\prime}\boldsymbol{u}%
_{l}-\sigma_{l}^{2}\right\vert \geq\left(  1-\pi_{1}\right)  v_{n}/n\right)
\leq\frac{C_{1}}{2}\exp\left(  -C_{2}n^{C_{3}}\right)  , \label{EQ:t1p2}%
\end{equation}
for some $C_{1},C_{2},C_{3}>0.$

We turn to the second term on the right hand side of (\ref{EQ:pr_sigma}). Note
that Lemma \ref{LE:rank2} (to be shown below) still holds if we remove
$\hat{\sigma}_{l}^{2}$ and $\sigma_{l}^{2}$. Then by Lemma \ref{LE:rank2},
\begin{equation}
\Pr\left(  \left\vert n^{-1}\boldsymbol{u}_{l}^{\prime}\mathbb{X}_{l}\left(
\mathbb{X}_{l}^{\prime}\mathbb{X}_{l}\right)  ^{-1}\mathbb{X}_{l}^{\prime
}\boldsymbol{u}_{l}\right\vert \geq\pi_{1}v_{n}/n\right)  \leq\frac{C_{1}}%
{2}\exp\left(  -C_{2}n^{C_{3}}\right)  \label{EQ:t2}%
\end{equation}
for some positive constants $C_{1},C_{2}$ and $C_{3}$ for $v_{n}\propto
n^{\lambda}$ with $\lambda>1/2.$\footnote{It holds no matter which part of
Lemma \ref{LE:rank2} we apply.}

Combining (\ref{EQ:pr_sigma}), (\ref{EQ:t1p1}), (\ref{EQ:t1p2}), and
(\ref{EQ:t2}) completes the proof.\medskip
\end{proof}

\begin{proof}
[Proof of Lemma \ref{LE:rank2}]We show the first part first. Note that%
\begin{equation}
\left\vert \boldsymbol{u}_{l}^{\prime}\mathbb{X}_{l}\left(  \hat{\sigma}%
_{l}^{2}\mathbb{X}_{l}^{\prime}\mathbb{X}_{l}\right)  ^{-1}\mathbb{X}%
_{l}^{\prime}\boldsymbol{u}_{l}\right\vert \leq\hat{\sigma}_{l}^{-2}\left\Vert
\left(  n^{-1}\mathbb{X}_{l}^{\prime}\mathbb{X}_{l}\right)  ^{-1}\right\Vert
\left\Vert n^{-1/2}\mathbb{X}_{l}^{\prime}\boldsymbol{u}_{l}\right\Vert ^{2},
\label{EQ:p1}%
\end{equation}
and we bound the three terms on the right hand side of (\ref{EQ:p1}) in turn.
For the first term, with $v_{n}=\frac{1}{4}n\sigma_{l}^{2},$ Lemma
\ref{LE:error_var} implies%
\begin{equation}
\Pr\left(  \hat{\sigma}_{l}^{-2}>\frac{4}{3}\sigma_{l}^{-2}\right)  \leq
\frac{C_{2}}{2}\exp\left(  -C_{3}n^{C_{4}}\right)  , \label{EQ:p1_extra}%
\end{equation}
for some positive constants $C_{2},$ $C_{3},$ and $C_{4}$. For the second
term, equations (\ref{EQ:lam1}) and (\ref{EQ:xx1}) and Assumption
\ref*{A:mn}\ imply that%
\begin{equation}
\Pr\left(  \left\Vert \left(  n^{-1}\mathbb{X}_{l}^{\prime}\mathbb{X}%
_{l}\right)  ^{-1}\right\Vert \geq\frac{4}{3}B_{1}^{-1}m_{n}\right)
\leq2m_{n}^{2}\exp\left(  -C_{3}nm_{n}^{-3}\right)  \leq\frac{C_{2}}{2}%
\exp\left(  -C_{3}n^{C_{4}}\right)  , \label{EQ:p1_1}%
\end{equation}
for some positive constants $C_{2},$ $C_{3},$ and $C_{4}$, and $C_{4}%
\leq1-3B_{m}.$\ For the third term,
\[
\left\Vert n^{-1/2}\mathbb{X}_{l}^{\prime}\boldsymbol{u}_{l}\right\Vert
^{2}=\sum_{j=1}^{m_{n}}\left\{  \sum_{i=1}^{n}n^{-1/2}\phi_{j}\left(
x_{li}\right)  u_{li}\right\}  ^{2}.
\]
As we discuss in the proof of Lemma \ref{LE:error_var}, $U_{l}$ also satisfies
Assumption \ref*{A:epsilon}. Because $\phi_{j}\left(  x\right)  $ is uniformly
bounded for all $j,$ $\phi_{j}\left(  X_{l}\right)  U_{l}$ also satisfies
Assumption \ref*{A:epsilon}. So%
\begin{align}
\Pr\left(  \left\Vert n^{-1/2}\mathbb{X}_{l}^{\prime}\boldsymbol{u}%
_{l}\right\Vert ^{2}\geq\left(  \frac{4}{3}B_{1}^{-1}m_{n}\right)  ^{-1}%
v_{n}\right)   &  =\Pr\left(  \sum_{j=1}^{m_{n}}\left\{  \sum_{i=1}%
^{n}n^{-1/2}\phi_{j}\left(  x_{li}\right)  u_{li}\right\}  ^{2}\geq\frac{3}%
{4}B_{1}m_{n}^{-1}v_{n}\right) \nonumber\\
&  \leq\sum_{j=1}^{m_{n}}\Pr\left(  \left\{  n^{-1/2}\sum_{i=1}^{n}\phi
_{j}\left(  x_{li}\right)  u_{li}\right\}  ^{2}\geq\frac{1}{m_{n}}\frac{3}%
{4}B_{1}m_{n}^{-1}v_{n}\right) \nonumber\\
&  =\sum_{j=1}^{m_{n}}\Pr\left(  \left\vert \sum_{i=1}^{n}\phi_{j}\left(
x_{li}\right)  u_{li}\right\vert \geq\left(  \frac{3}{4}B_{1}\right)
^{1/2}m_{n}^{-1}v_{n}^{1/2}n^{1/2}\right) \nonumber\\
&  \leq m_{n}\exp\left(  -\left.  \left(  1-\pi\right)  ^{2}\frac{3}{4}%
B_{1}m_{n}^{-2}v_{n}n\right/  \left[  2n\left(  C_{8}m_{n}^{-1}\right)
\right]  \right) \nonumber\\
&  \leq\exp\left(  -\left(  1-\pi\right)  ^{2}\frac{3}{8}B_{1}C_{8}^{-1}%
m_{n}^{-1}v_{n}+\log m_{n}\right)  \label{EQ:p1_2}%
\end{align}
for any $\pi\in\left(  0,1\right)  ,$\ where the second inequality holds by
the first part of Lemma \ref{LE:main_inequality} and Assumption
\ref*{A:epsilon}\ with $\alpha=s$\ and the fact that
\[
\max_{j}\left\{  E\left[  \phi_{j}\left(  X_{l}\right)  ^{2}U_{l}^{2}\right]
\right\}  \leq\max_{j}\left\{  E\left[  \phi_{j}\left(  X_{l}\right)
^{2}E(U_{l}^{2}|X_{l})\right]  \right\}  \leq C\max_{j}\left\{  E\left[
\phi_{j}\left(  X_{l}\right)  ^{2}\right]  \right\}  \leq C_{8}m_{n}^{-1}%
\]
for some constants $C$ and $C_{8}$ by Assumption \ref{A:tech}. Here the second
inequality in the last displayed line follows from the fact that that $U_{l}$
is $\varepsilon$ plus a uniformly bounded term, and the last inequality holds
by Lemma \ref{LE:rank}.

By (\ref{EQ:p1}), (\ref{EQ:p1_extra}), (\ref{EQ:p1_1}), and (\ref{EQ:p1_2}),
we have%
\begin{align*}
\Pr\left(  \left\vert \boldsymbol{u}_{l}^{\prime}\mathbb{X}_{l}\left(
\hat{\sigma}_{l}^{2}\mathbb{X}_{l}^{\prime}\mathbb{X}_{l}\right)
^{-1}\mathbb{X}_{l}^{\prime}\boldsymbol{u}_{l}\right\vert \geq\frac{4}%
{3}\sigma_{l}^{-2}v_{n}\right)   &  \leq\Pr\left(  \hat{\sigma}_{l}%
^{-2}\left\Vert n^{-1/2}\mathbb{X}_{l}^{\prime}\boldsymbol{u}_{l}\right\Vert
^{2}\left\Vert \left(  n^{-1}\mathbb{X}_{l}^{\prime}\mathbb{X}_{l}\right)
^{-1}\right\Vert \geq\frac{4}{3}\sigma_{l}^{-2}v_{n}\right) \\
&  \leq\Pr\left(  \hat{\sigma}_{l}^{-2}>\frac{4}{3}\sigma_{l}^{-2}\right) \\
&  +\Pr\left(  \left\Vert \left(  n^{-1}\mathbb{X}_{l}^{\prime}\mathbb{X}%
_{l}\right)  ^{-1}\right\Vert \geq\frac{4}{3}B_{1}^{-1}m_{n}\right) \\
&  +\Pr\left(  \left\Vert n^{-1/2}\mathbb{X}_{l}^{\prime}\boldsymbol{u}%
_{l}\right\Vert ^{2}\geq\left(  \frac{4}{3}B_{1}^{-1}m_{n}\right)  ^{-1}%
v_{n}\right) \\
&  \leq\exp\left(  -C_{1}m_{n}^{-1}v_{n}+\log m_{n}\right)  +C_{2}\exp\left(
-C_{3}n^{C_{4}}\right)
\end{align*}
for positive constants $C_{1}=\left(  1-\pi\right)  ^{2}\frac{3}{8}B_{1}%
C_{8}^{-1},$ $C_{2}$, $C_{3}$ and $C_{4}.$ This completes the proof of the
first part of the lemma.

The second part holds similarly. The only difference is that we apply the
second part of Lemma \ref{LE:main_inequality} to the fourth line of equation
(\ref{EQ:p1_2}).\medskip
\end{proof}

\begin{proof}
[Proof of Lemma \ref{LE:hidden}]\textbf{(i)} With Assumption \ref*{A:fl}',
\cite{Stone} implies that there exists $\boldsymbol{\breve{\beta}}_{j}$ such
that
\begin{equation}
\sup_{x\in\left[  0,1\right]  }\left\vert P^{m_{n}}\left(  x\right)  ^{\prime
}\boldsymbol{\breve{\beta}}_{j}-f_{j}^{\ast}\left(  x\right)  \right\vert
=O\left(  m_{n}^{-d}\right)  \label{EQ:Rstronger}%
\end{equation}
for $j=1,2,...,p^{\ast}$. We rewrite $Y$ as%
\begin{align}
Y  &  =\sum_{j=1}^{p^{\ast}}P^{m_{n}}\left(  X_{j}\right)  ^{\prime
}\boldsymbol{\breve{\beta}}_{j}+\sum_{j=1}^{p^{\ast}}\left[  f_{j}^{\ast
}\left(  X_{j}\right)  -P^{m_{n}}\left(  X_{j}\right)  ^{\prime}%
\boldsymbol{\breve{\beta}}_{j}\right]  +\varepsilon\nonumber\\
&  \equiv\sum_{j=1}^{p^{\ast}}P^{m_{n}}\left(  X_{j}\right)  ^{\prime
}\boldsymbol{\breve{\beta}}_{j}+\breve{R}_{n}+\varepsilon.
\label{EQ:yapproStronger}%
\end{align}
Note that $\boldsymbol{\breve{\beta}}_{j}$ is similar to $\boldsymbol{\tilde
{\beta}}_{j}$ defined Section \ref{SEC:multi_tech}. We employ
$\boldsymbol{\breve{\beta}}_{j}$ here (and only here) because $\breve{R}%
_{n}=O\left(  m_{n}^{-d}\right)  $ by equation (\ref{EQ:Rstronger}) instead of
$R_{n}=O_{P}\left(  m_{n}^{-d}\right)  $ implied by equation (\ref{EQ:appro}).
The stronger result on $\breve{R}_{n}$ facilitates our proof below; see, e.g.,
the inequality in equation (\ref{EQ:Tn1_R}). Moreover, if $\left\{  E\left[
f_{nj}^{\ast}\left(  X_{j}\right)  ^{2}\right]  \right\}  ^{1/2}\gtrsim
\kappa_{n}\log\left(  m_{n}\right)  \left(  m_{n}/n\right)  ^{1/2}$ for some
$j$, $\left\{  E\left[  P^{m_{n}}\left(  X_{j}\right)  ^{\prime}%
\boldsymbol{\breve{\beta}}_{j}\right]  ^{2}\right\}  ^{1/2}\approx\left\{
E\left[  f_{nj}^{\ast}\left(  X_{j}\right)  ^{2}\right]  \right\}  ^{1/2}$
because $R_{n}=O_{P}\left(  m_{n}^{-d}\right)  $ and $\breve{R}_{n}=O\left(
m_{n}^{-d}\right)  $ and the bias term is asymptotically negligible. Thus, by
the third part of Lemma \ref{LE:rank},%
\begin{equation}
\left\Vert \boldsymbol{\breve{\beta}}_{j}\right\Vert \propto m_{n}%
^{1/2}\left\{  E\left[  f_{nj}^{\ast}\left(  X_{j}\right)  ^{2}\right]
\right\}  ^{1/2}\gtrsim\kappa_{n}m_{n}\log\left(  m_{n}\right)  n^{-1/2}.
\label{EQ:betaj_tide2}%
\end{equation}

Let $\Phi_{\boldsymbol{Z}}=E[P^{m_{n}}\left(  \boldsymbol{Z}\right)  P^{m_{n}%
}\left(  \boldsymbol{Z}\right)  ^{\prime}],$ $\Phi_{\boldsymbol{X}_{1}^{b}%
}=E[P^{m_{n}}(\boldsymbol{X}_{1}^{b})P^{m_{n}}(\boldsymbol{X}_{1}^{b}%
)^{\prime}],$ and $\Phi_{\boldsymbol{X}_{1}^{b}\boldsymbol{Z}}=E[P^{m_{n}%
}(\boldsymbol{X}_{1}^{b})$ $\cdot P^{m_{n}}(\boldsymbol{Z})^{\prime}].$
Substituting equation (\ref{EQ:yapproStronger}) into $\boldsymbol{\eta}%
_{1}^{b},$ we have%
\begin{align}
\boldsymbol{\eta}_{1}^{b}  &  =E\left[  P^{m_{n}}\left(  \boldsymbol{X}%
_{1}^{b}\right)  \left(  P^{m_{n}}\left(  \boldsymbol{X}_{1}^{b}\right)
^{\prime}\boldsymbol{\breve{\beta}}_{1}^{b}-P^{m_{n}}\left(  \boldsymbol{Z}%
\right)  ^{\prime}\Phi_{\boldsymbol{Z}}^{-1}E\left[  P^{m_{n}}\left(
\boldsymbol{Z}\right)  P^{m_{n}}\left(  \boldsymbol{X}_{1}^{b}\right)
^{\prime}\boldsymbol{\breve{\beta}}_{1}^{b}\right]  \right)  \right]
\nonumber\\
&  +E\left[  P^{m_{n}}\left(  \boldsymbol{X}_{1}^{b}\right)  \left(  \breve
{R}_{n}-P^{m_{n}}\left(  \boldsymbol{Z}\right)  ^{\prime}\Phi_{\boldsymbol{Z}%
}^{-1}E\left[  P^{m_{n}}\left(  \boldsymbol{Z}\right)  \breve{R}_{n}\right]
\right)  \right] \nonumber\\
&  =\left(  \Phi_{\boldsymbol{X}_{1}^{b}}-\Phi_{\boldsymbol{X}_{1}%
^{b}\boldsymbol{Z}}\Phi_{\boldsymbol{Z}}^{-1}\Phi_{\boldsymbol{X}_{1}%
^{b}\boldsymbol{Z}}^{\prime}\right)  \boldsymbol{\breve{\beta}}_{1}%
^{b}+\left\{  E\left[  P^{m_{n}}\left(  \boldsymbol{X}_{1}^{b}\right)
\breve{R}_{n}\right]  -\Phi_{\boldsymbol{X}_{1}^{b}\boldsymbol{Z}}%
\Phi_{\boldsymbol{Z}}^{-1}E\left[  P^{m_{n}}\left(  \boldsymbol{Z}\right)
\breve{R}_{n}\right]  \right\}  , \label{EQ:ita1b_decomposed}%
\end{align}
where $\boldsymbol{\breve{\beta}}_{1}^{b}\equiv\left(  \boldsymbol{\breve
{\beta}}_{1}^{\prime},\ldots,\boldsymbol{\breve{\beta}}_{b}^{\prime}\right)
^{\prime},$ the terms associated with $P^{m_{n}}\left(  \boldsymbol{X}%
_{b+1}^{p^{\ast}}\right)  ^{\prime}\boldsymbol{\breve{\beta}}_{b+1}^{p^{\ast}%
}$ are dropped out because $\boldsymbol{X}_{b+1}^{p^{\ast}}$ are included in
$\boldsymbol{Z}$, and the terms associated with $\varepsilon$ have zero
expectation due to Assumption \ref*{A:iid}. We analyze the above two terms in
$\boldsymbol{\eta}_{1}^{b}$ one by one.

We first study the first term in equation (\ref{EQ:ita1b_decomposed}). Let
$\Phi=\left[
\begin{array}
[c]{cc}%
\Phi_{\boldsymbol{X}_{1}^{b}} & \Phi_{\boldsymbol{X}_{1}^{b}\boldsymbol{Z}}\\
\Phi_{\boldsymbol{X}_{1}^{b}\boldsymbol{Z}}^{\prime} & \Phi_{\boldsymbol{Z}}%
\end{array}
\right]  .$ Noting that $\boldsymbol{X}_{1}^{b}$ and $\boldsymbol{Z}$ are
signals or pseudo-signals, we have by Assumption \ref{A:full_rank2} that%
\begin{equation}
\lambda_{\min}\left(  \Phi\right)  \propto m_{n}^{-1}\text{ and }\lambda
_{\max}\left(  \Phi\right)  \propto m_{n}^{-1}. \label{EQ:PXZ}%
\end{equation}
This implies that $\lambda_{\max}\left(  \Phi^{-1}\right)  =\left[
\lambda_{\min}\left(  \Phi\right)  \right]  ^{-1}\propto m_{n}$ and
$\lambda_{\min}\left(  \Phi^{-1}\right)  =\left[  \lambda_{\max}\left(
\Phi\right)  \right]  ^{-1}\propto m_{n},$ which along with the
\textit{inclusion principle} (e.g., Theorem 8.4.5 in Bernstein (2005)) further
implies that
\[
\lambda_{\min}\left(  \Phi_{\boldsymbol{X}_{1}^{b}}-\Phi_{\boldsymbol{X}%
_{1}^{b}\boldsymbol{Z}}\Phi_{\boldsymbol{Z}}^{-1}\Phi_{\boldsymbol{X}_{1}%
^{b}\boldsymbol{Z}}^{\prime}\right)  =\left[  \lambda_{\max}\left(
(\Phi_{\boldsymbol{X}_{1}^{b}}-\Phi_{\boldsymbol{X}_{1}^{b}\boldsymbol{Z}}%
\Phi_{\boldsymbol{Z}}^{-1}\Phi_{\boldsymbol{X}_{1}^{b}\boldsymbol{Z}}^{\prime
})^{-1}\right)  \right]  ^{-1}\propto m_{n}^{-1},
\]
because $(\Phi_{\boldsymbol{X}_{1}^{b}}-\Phi_{\boldsymbol{X}_{1}%
^{b}\boldsymbol{Z}}\Phi_{\boldsymbol{Z}}^{-1}\Phi_{\boldsymbol{X}_{1}%
^{b}\boldsymbol{Z}}^{\prime})^{-1}$ is the leading principal submatrix of
$\Phi^{-1}$. Then%
\begin{equation}
\left\Vert \left(  \Phi_{\boldsymbol{X}_{1}^{b}}-\Phi_{\boldsymbol{X}_{1}%
^{b}\boldsymbol{Z}}\Phi_{\boldsymbol{Z}}^{-1}\Phi_{\boldsymbol{X}_{1}%
^{b}\boldsymbol{Z}}^{\prime}\right)  \boldsymbol{\breve{\beta}}_{1}%
^{b}\right\Vert \gtrsim m_{n}^{-1}\left\Vert \boldsymbol{\breve{\beta}}%
_{1}^{b}\right\Vert \gtrsim m_{n}^{-1}\left\Vert \boldsymbol{\breve{\beta}%
}_{j}\right\Vert \label{EQ:Rankfirst}%
\end{equation}
for any $1\leq j\leq b$.

For the second term in equation (\ref{EQ:ita1b_decomposed}), we write
$E[P^{m_{n}}(\boldsymbol{X}_{1}^{b})\breve{R}_{n}]-\Phi_{\boldsymbol{X}%
_{1}^{b}\boldsymbol{Z}}\Phi_{\boldsymbol{Z}}^{-1}E\left[  P^{m_{n}}\left(
\boldsymbol{Z}\right)  \breve{R}_{n}\right]  $ $\equiv T_{n1}-T_{n2}.$ For
$T_{n1},$
\begin{equation}
\left\Vert T_{n1}\right\Vert \leq\left\Vert E\left[  \left\vert P^{m_{n}%
}\left(  \boldsymbol{X}_{1}^{b}\right)  \right\vert \left\vert \breve{R}%
_{n}\right\vert \right]  \right\Vert \lesssim m_{n}^{-d}\left\Vert E\left[
\left\vert P^{m_{n}}\left(  \boldsymbol{X}_{1}^{b}\right)  \right\vert
\right]  \right\Vert =m_{n}^{-d}\left(  \sum_{l=1}^{b}\sum_{k=1}^{m_{n}%
}\left\{  E\left\vert \phi_{k}\left(  X_{l}\right)  \right\vert \right\}
^{2}\right)  ^{1/2}, \label{EQ:Tn1_R}%
\end{equation}
where $\lesssim$ holds by equation (\ref{EQ:Rstronger}), and the equality
holds by the definitions of $P^{m_{n}}(\boldsymbol{X}_{1}^{b})$ and
$\left\Vert \cdot\right\Vert $. Here, $\left\vert A\right\vert \equiv\left(
\left\vert a_{1}\right\vert ,...,\left\vert a_{l}\right\vert \right)
^{\prime}$ for a vector $A=\left(  a_{1},...,a_{l}\right)  ^{\prime}.$ By the
property of B-splines in Lemma \ref{LE:rank}), $E\left\vert \phi_{k}\left(
X_{l}\right)  \right\vert \propto m_{n}^{-1}$. Then%
\begin{equation}
\left\Vert T_{n1}\right\Vert \lesssim m_{n}^{-d}\left(  \sum_{l=1}^{b}%
\sum_{k=1}^{m_{n}}\left\{  E\left\vert \phi_{k}\left(  X_{l}\right)
\right\vert \right\}  ^{2}\right)  ^{1/2}\propto m_{n}^{-d-1/2},
\label{EQ:Tn1}%
\end{equation}
where we use the fact that $b$ is finite. For $T_{n2},$ we first claim
\begin{equation}
\left\Vert \Phi_{\boldsymbol{X}_{1}^{b}\boldsymbol{Z}}\right\Vert \lesssim
m_{n}^{-1}. \label{EQ:ZXXZ}%
\end{equation}
Then by the submultiplicative property of $\left\Vert \cdot\right\Vert ,$%
\begin{equation}
\left\Vert T_{n2}\right\Vert \leq\left\Vert E\left[  P^{m_{n}}\left(
\boldsymbol{Z}\right)  \breve{R}_{n}\right]  \right\Vert \left\Vert
\Phi_{\boldsymbol{Z}}^{-1}\right\Vert \left\Vert \Phi_{\boldsymbol{X}_{1}%
^{b}\boldsymbol{Z}}\right\Vert \lesssim m_{n}^{-d-1/2}m_{n}m_{n}^{-1}%
=m_{n}^{-d-1/2}, \label{EQ:Tn2}%
\end{equation}
where second inequality holds by (\ref{EQ:Tn1}), (\ref{EQ:ZXXZ}), and
Assumption \ref{A:full_rank2}. Consequently, we have%
\begin{equation}
\left\Vert T_{n1}-T_{n2}\right\Vert \lesssim m_{n}^{-d-1/2}. \label{EQ:Rank2}%
\end{equation}

If we have $\left\{  E\left[  f_{nj}^{\ast}\left(  X_{j}\right)  ^{2}\right]
\right\}  ^{1/2}\gtrsim\kappa_{n}\log\left(  m_{n}\right)  \left(
m_{n}/n\right)  ^{1/2}$ for some $j,$ then by (\ref{EQ:Rankfirst}) and
(\ref{EQ:betaj_tide2}),%
\[
\left\Vert \left(  \Phi_{\boldsymbol{X}_{1}^{b}}-\Phi_{\boldsymbol{X}_{1}%
^{b}\boldsymbol{Z}}\Phi_{\boldsymbol{Z}}^{-1}\Phi_{\boldsymbol{X}_{1}%
^{b}\boldsymbol{Z}}^{\prime}\right)  \boldsymbol{\breve{\beta}}_{1}%
^{b}\right\Vert \gtrsim m_{n}^{-1}\left\Vert \boldsymbol{\breve{\beta}}%
_{j}\right\Vert \propto m_{n}^{-1/2}\left\{  E\left[  f_{nj}^{\ast}\left(
X_{j}\right)  ^{2}\right]  \right\}  ^{1/2}\gtrsim\kappa_{n}\log\left(
m_{n}\right)  n^{-1/2}.
\]
This, in conjunction with (\ref{EQ:Rank2}) and the fact that $\kappa_{n}%
\log\left(  m_{n}\right)  n^{-1/2}\gg m_{n}^{-d-1/2}$ by Assumption
\ref{A:mn}, implies that $\left\Vert T_{n1}-T_{n2}\right\Vert $ is of smaller
order than $\left\Vert (\Phi_{\boldsymbol{X}_{1}^{b}}-\Phi_{\boldsymbol{X}%
_{1}^{b}\boldsymbol{Z}}\Phi_{\boldsymbol{Z}}^{-1}\Phi_{\boldsymbol{X}_{1}%
^{b}\boldsymbol{Z}}^{\prime})\boldsymbol{\breve{\beta}}_{1}^{b}\right\Vert $
and thus%
\[
\left\Vert \boldsymbol{\eta}_{1}^{b}\right\Vert \gtrsim\kappa_{n}\log\left(
m_{n}\right)  ^{1/2}n^{-1/2}.
\]

Now we show equation (\ref{EQ:ZXXZ}). Recall that $\Phi=\left[
\begin{array}
[c]{cc}%
\Phi_{\boldsymbol{X}_{1}^{b}} & \Phi_{\boldsymbol{X}_{1}^{b}\boldsymbol{Z}}\\
\Phi_{\boldsymbol{X}_{1}^{b}\boldsymbol{Z}}^{\prime} & \Phi_{\boldsymbol{Z}}%
\end{array}
\right]  .$ By equation (\ref{EQ:PXZ}), we have%
\begin{equation}
\left\Vert \Phi\right\Vert =\max_{\substack{\left\Vert \left(  \boldsymbol{a}%
_{1}^{\prime},\boldsymbol{a}_{2}^{\prime}\right)  ^{\prime}\right\Vert
=1,\\\boldsymbol{a}_{1}\in\mathbb{R}^{bm_{n}},\boldsymbol{a}_{2}\in
\mathbb{R}^{\iota_{n}m_{n}}}}\left(  \boldsymbol{a}_{1}^{\prime}%
,\boldsymbol{a}_{2}^{\prime}\right)  \Phi\left(
\begin{array}
[c]{c}%
\boldsymbol{a}_{1}\\
\boldsymbol{a}_{2}%
\end{array}
\right)  \propto m_{n}^{-1}. \label{EQ:aXZa1}%
\end{equation}
Noting that $\left(  \boldsymbol{a}_{1}^{\prime},\boldsymbol{a}_{2}^{\prime
}\right)  \Phi\left(
\begin{array}
[c]{c}%
\boldsymbol{a}_{1}\\
\boldsymbol{a}_{2}%
\end{array}
\right)  =\boldsymbol{a}_{1}^{\prime}\Phi_{\boldsymbol{X}_{1}^{b}%
}\boldsymbol{a}_{1}+\boldsymbol{a}_{2}^{\prime}\Phi_{\boldsymbol{Z}%
}\boldsymbol{a}_{2}+2\boldsymbol{a}_{1}^{\prime}\Phi_{\boldsymbol{X}_{1}%
^{b}\boldsymbol{Z}}\boldsymbol{a}_{2},$ we have by the nonnegative
definiteness of $\Phi_{\boldsymbol{X}_{1}^{b}}$ and $\Phi_{\boldsymbol{Z}},$
\begin{equation}
\max_{\substack{\left\Vert \left(  \boldsymbol{a}_{1}^{\prime},\boldsymbol{a}%
_{2}^{\prime}\right)  ^{\prime}\right\Vert =1,\\\boldsymbol{a}_{1}%
\in\mathbb{R}^{bm_{n}},\boldsymbol{a}_{2}\in\mathbb{R}^{\iota_{n}m_{n}}%
}}\left\vert \boldsymbol{a}_{1}^{\prime}\Phi_{\boldsymbol{X}_{1}%
^{b}\boldsymbol{Z}}\boldsymbol{a}_{2}\right\vert \leq\max
_{\substack{\left\Vert \left(  \boldsymbol{a}_{1}^{\prime},\boldsymbol{a}%
_{2}^{\prime}\right)  ^{\prime}\right\Vert =1,\\\boldsymbol{a}_{1}%
\in\mathbb{R}^{bm_{n}},\boldsymbol{a}_{2}\in\mathbb{R}^{\iota_{n}m_{n}}%
}}\left(  \boldsymbol{a}_{1}^{\prime},\boldsymbol{a}_{2}^{\prime}\right)
\Phi\left(
\begin{array}
[c]{c}%
\boldsymbol{a}_{1}\\
\boldsymbol{a}_{2}%
\end{array}
\right)  \propto m_{n}^{-1}, \label{EQ:PMZPMX}%
\end{equation}
Then
\begin{align*}
\left\Vert \Phi_{\boldsymbol{X}_{1}^{b}\boldsymbol{Z}}\right\Vert  &
=\max_{\substack{\left\Vert \boldsymbol{a}_{1}\right\Vert =1,\left\Vert
\boldsymbol{a}_{2}^{\prime}\right\Vert =1,\\\boldsymbol{a}_{1}\in
\mathbb{R}^{bm_{n}},\boldsymbol{a}_{2}\in\mathbb{R}^{\iota_{n}m_{n}}%
}}\left\vert \boldsymbol{a}_{1}^{\prime}\Phi_{\boldsymbol{X}_{1}%
^{b}\boldsymbol{Z}}\boldsymbol{a}_{2}\right\vert \leq\max
_{\substack{\left\Vert \left(  \boldsymbol{a}_{1}^{\prime},\boldsymbol{a}%
_{2}^{\prime}\right)  ^{\prime}\right\Vert \leq\sqrt{2}\\\boldsymbol{a}_{1}%
\in\mathbb{R}^{bm_{n}},\boldsymbol{a}_{2}\in\mathbb{R}^{\iota_{n}m_{n}}%
}}\left\vert \boldsymbol{a}_{1}^{\prime}\Phi_{\boldsymbol{X}_{1}%
^{b}\boldsymbol{Z}}\boldsymbol{a}_{2}\right\vert \\
&  =2\max_{\substack{\left\Vert \left(  \boldsymbol{a}_{1}^{\prime
},\boldsymbol{a}_{2}^{\prime}\right)  ^{\prime}\right\Vert =1\\\boldsymbol{a}%
_{1}\in\mathbb{R}^{bm_{n}},\boldsymbol{a}_{2}\in\mathbb{R}^{\iota_{n}m_{n}}%
}}\left\vert \boldsymbol{a}_{1}^{\prime}\Phi_{\boldsymbol{X}_{1}%
^{b}\boldsymbol{Z}}\boldsymbol{a}_{2}\right\vert \propto m_{n}^{-1}%
\end{align*}
where the first equality follows from Fact 9.11.2 in \cite{Bernstein2005}, and
the first inequality holds because $\left\{  \left\Vert \boldsymbol{a}%
_{1}\right\Vert =1,\left\Vert \boldsymbol{a}_{2}^{\prime}\right\Vert
=1\right\}  \subset\{\left\Vert \left(  \boldsymbol{a}_{1}^{\prime
},\boldsymbol{a}_{2}^{\prime}\right)  \right\Vert \leq\sqrt{2}\}.$

(ii) The result is straightforward given (i). Similar to
(\ref{EQ:ita1b_decomposed}),%
\begin{equation}
\boldsymbol{\eta}_{1}^{p^{\ast}}=\Phi_{\boldsymbol{X}_{1}^{p^{\ast}}%
}\boldsymbol{\breve{\beta}}_{1}^{p^{\ast}}+E\left[  P^{m_{n}}\left(
\boldsymbol{X}_{1}^{p^{\ast}}\right)  \breve{R}_{n}\right]  .\nonumber
\end{equation}
Due to the rank condition, similar to (\ref{EQ:Rankfirst}),%
\[
\left\Vert \Phi_{\boldsymbol{X}_{1}^{p^{\ast}}}\boldsymbol{\breve{\beta}}%
_{1}^{p^{\ast}}\right\Vert \gtrsim m_{n}^{-1}\left\Vert \boldsymbol{\breve
{\beta}}_{1}^{p^{\ast}}\right\Vert \gtrsim m_{n}^{-1}\left\Vert
\boldsymbol{\breve{\beta}}_{j}\right\Vert .
\]
The bound on $E\left[  P^{m_{n}}\left(  \boldsymbol{X}_{1}^{p^{\ast}}\right)
\breve{R}_{n}\right]  $ is a direct result from (\ref{EQ:Tn1_R}) and
(\ref{EQ:Tn1}), specifically,
\[
\left\Vert E\left[  \left\vert P^{m_{n}}\left(  \boldsymbol{X}_{1}^{p^{\ast}%
}\right)  \right\vert \left\vert \breve{R}_{n}\right\vert \right]  \right\Vert
\lesssim m_{n}^{-d}\left\Vert E\left[  \left\vert P^{m_{n}}\left(
\boldsymbol{X}_{1}^{p^{\ast}}\right)  \right\vert \right]  \right\Vert \propto
m_{n}^{-d-1/2}.
\]
Using a similar argument as in (i), if $\left\{  E\left[  f_{nj}^{\ast}\left(
X_{j}\right)  ^{2}\right]  \right\}  ^{1/2}\gtrsim\kappa_{n}\log\left(
m_{n}\right)  \left(  m_{n}/n\right)  ^{1/2}$ for some $j,$%
\[
\left\Vert \Phi_{\boldsymbol{X}_{1}^{p^{\ast}}}\boldsymbol{\breve{\beta}}%
_{1}^{p^{\ast}}\right\Vert \gtrsim m_{n}^{-1}\left\Vert \boldsymbol{\breve
{\beta}}_{j}\right\Vert \gtrsim\kappa_{n}\log\left(  m_{n}\right)  n^{-1/2},
\]
and $\left\Vert E\left[  \left\vert P^{m_{n}}\left(  \boldsymbol{X}%
_{1}^{p^{\ast}}\right)  \right\vert \left\vert \breve{R}_{n}\right\vert
\right]  \right\Vert $ is a small order term due to the fact that $\kappa
_{n}\log\left(  m_{n}\right)  n^{-1/2}\gg m_{n}^{-d-1/2}$ by Assumption
\ref{A:mn}. Thus%
\[
\left\Vert \boldsymbol{\eta}_{1}^{p^{\ast}}\right\Vert \gtrsim\kappa_{n}%
\log\left(  m_{n}\right)  ^{1/2}n,
\]
as desired.
\end{proof}

\begin{proof}
[Proof of Lemma \ref{LE:XXQQ}]As in the proof of the last lemma, let
$\Phi_{\boldsymbol{Z}}=E\left[  P^{m_{n}}\left(  \boldsymbol{Z}\right)
P^{m_{n}}\left(  \boldsymbol{Z}\right)  ^{\prime}\right]  .$ We prove the
lemma by showing that
\[
\Pr\left(  \left\vert \left\Vert \left(  n^{-1}\mathbb{Z}^{\prime}%
\mathbb{Z}\right)  ^{-1}\right\Vert -\left\Vert \Phi_{\boldsymbol{Z}}%
^{-1}\right\Vert \right\vert >\left.  \left\Vert \Phi_{\boldsymbol{Z}}%
^{-1}\right\Vert \right/  2\right)  \leq2m_{n}^{2}\iota_{n}^{2}\exp\left\{
-C_{1}nm_{n}^{-3}\iota_{n}^{-1}\right\}  ,
\]
for a $\left(  \iota_{n}+1\right)  \times1$ random vector $\boldsymbol{Z}$,
where we have absorbed $X_{l}$ into $\boldsymbol{Z}.$ Assumption
\ref*{A:full_rank2} applies to $\boldsymbol{Z}$. Because of this, the proof is
essentially the same as that for Lemma \ref{LE:XX'}. The only difference is
that the dimension of the matrix here is $m_{n}\left(  \iota_{n}+1\right)
\times m_{n}\left(  \iota_{n}+1\right)  $.

Following the proof in Lemma \ref{LE:XX'} up to equation (\ref{EQ:upto}), we
have%
\begin{align*}
&  \Pr\left(  \left\vert \lambda_{\min}\left(  n^{-1}\mathbb{Z}^{\prime
}\mathbb{Z}\right)  -\lambda_{\min}\left(  \Phi_{\boldsymbol{Z}}\right)
\right\vert \geq m_{n}\left(  \iota_{n}+1\right)  v_{n}/n\right) \\
&  \leq\Pr\left(  \max\left\{  \left\vert \lambda_{\min}\left(  n^{-1}%
\mathbb{Z}^{\prime}\mathbb{Z}-\Phi_{\boldsymbol{Z}}\right)  \right\vert
,\left\vert \lambda_{\min}\left(  \Phi_{\boldsymbol{Z}}-n^{-1}\mathbb{Z}%
^{\prime}\mathbb{Z}\right)  \right\vert \right\}  \geq m_{n}\left(  \iota
_{n}+1\right)  v_{n}/n\right) \\
&  \leq\Pr\left(  \left\Vert n^{-1}\mathbb{Z}^{\prime}\mathbb{Z}%
-\Phi_{\boldsymbol{Z}}\right\Vert _{\infty}\geq v_{n}/n\right)  \leq2m_{n}%
^{2}\left(  \iota_{n}+1\right)  ^{2}\exp\left\{  -v_{n}^{2}/2\left(
B_{4}nm_{n}^{-1}+v_{n}/3\right)  \right\}  .
\end{align*}
Set $v_{n}=nm_{n}^{-2}\left(  \iota_{n}+1\right)  ^{-1}B_{X1}/3,$ the above
inequality becomes
\[
\Pr\left(  \left\vert \lambda_{\min}\left(  n^{-1}\mathbb{Z}^{\prime
}\mathbb{Z}\right)  -\lambda_{\min}\left(  \Phi_{\boldsymbol{Z}}\right)
\right\vert \geq B_{X1}m_{n}^{-1}/3\right)  \leq2m_{n}^{2}\iota_{n}^{2}%
\exp\left\{  -C_{1}nm_{n}^{-3}\iota_{n}^{-1}\right\}
\]
for some positive constant $C_{1}.$

If in addition Assumption \ref*{A:p}' and \ref*{A:mn}\ holds, $nm_{n}%
^{-3}\iota_{n}^{-1}\geq nm_{n}^{-3}p_{n}^{\ast\ast-1}\propto n^{1-3B_{m}%
-B_{p^{\ast\ast}}}=n^{C_{4}}$ and $C_{4}>0.$ The conclusion is
immediate.\medskip
\end{proof}

\begin{proof}
[Proof of Lemma \ref{LE:lambdamax}]First
\begin{align}
&  \Pr\left(  \lambda_{\max}\left\{  \left(  \hat{\sigma}_{l,\boldsymbol{Z}%
}^{2}n^{-1}\mathbb{X}_{l}^{\prime}M_{\mathbb{Z}}\mathbb{X}_{l}\right)
^{-1}\right\}  \geq2\sigma^{-2}B_{X1}^{-1}m_{n}\right) \label{EQ:l1}\\
&  \leq\Pr\left(  \hat{\sigma}_{l,\boldsymbol{Z}}^{-2}\geq\frac{4}{3}%
\sigma^{-2}\right)  +\Pr\left(  \lambda_{\max}\left\{  \left(  n^{-1}%
\mathbb{X}_{l}^{\prime}M_{\mathbb{Z}}\mathbb{X}_{l}\right)  ^{-1}\right\}
\geq\frac{3}{2}B_{X1}^{-1}m_{n}\right)  .\nonumber
\end{align}
Taking $v_{n}=\frac{1}{4}n\sigma^{2}$\ in Lemma \ref{LE:error_var_Z} yields%
\begin{equation}
\Pr\left(  \hat{\sigma}_{l,\boldsymbol{Z}}^{-2}\geq\frac{4}{3}\sigma
^{-2}\right)  =\Pr\left(  \hat{\sigma}_{l,\boldsymbol{Z}}^{2}\leq\frac{3}%
{4}\sigma^{2}\right)  \leq\frac{1}{2}C_{1}\exp(-C_{2}n^{C_{3}}) \label{EQ:l2}%
\end{equation}
for some positive $C_{1},C_{2},$ and $C_{3}.$ Noting that $\left(
\mathbb{X}_{l}^{\prime}M_{\mathbb{Z}}\mathbb{X}_{l}\right)  ^{-1}$ is the
lower block diagonal of $\left(  \left(
\begin{array}
[c]{c}%
\mathbb{Z}^{\prime}\\
\mathbb{X}_{l}^{\prime}%
\end{array}
\right)  \left(
\begin{array}
[c]{cc}%
\mathbb{Z} & \mathbb{X}_{l}%
\end{array}
\right)  \right)  ^{-1}$, by Lemma \ref{LE:XXQQ} and Assumption
\ref*{A:full_rank2}, we have%
\[
\Pr\left(  \lambda_{\max}\left\{  \left(  n^{-1}\mathbb{X}_{l}^{\prime
}M_{\mathbb{Z}}\mathbb{X}_{l}\right)  ^{-1}\right\}  \geq\frac{3}{2}%
B_{X1}^{-1}m_{n}\right)  \leq2m_{n}^{2}\iota_{n}^{2}\exp\left\{  -C_{1}%
nm_{n}^{-3}\iota_{n}^{-1}\right\}  .
\]
Since $\iota_{n}\leq p^{\ast\ast},$ $nm_{n}^{-3}\iota_{n}\leq n^{1-3B_{m}%
+B_{p^{\ast\ast}}}=n^{C_{3}}$ and $C_{3}>0$\ by Assumption \ref*{A:p}'. Then
\begin{equation}
\Pr\left(  \lambda_{\max}\left\{  \left(  n^{-1}\mathbb{X}_{l}^{\prime
}M_{\mathbb{Z}}\mathbb{X}_{l}\right)  ^{-1}\right\}  \geq\frac{3}{2}%
B_{X1}^{-1}m_{n}\right)  \leq\frac{1}{2}C_{1}\exp(-C_{2}n^{C_{3}}).
\label{EQ:l3}%
\end{equation}
Combining (\ref{EQ:l1}), (\ref{EQ:l2}), and (\ref{EQ:l3}) yields%
\[
\Pr\left(  \lambda_{\max}\left\{  \left(  \hat{\sigma}_{l,\boldsymbol{Z}}%
^{2}n^{-1}\mathbb{X}_{l}^{\prime}M_{\mathbb{Z}}\mathbb{X}_{l}\right)
^{-1}\right\}  \geq2\sigma^{-2}B_{X1}^{-1}m_{n}\right)  \leq C_{1}\exp
(-C_{2}n^{C_{3}}).
\]

For the second part, we first make the following decomposition:%
\begin{align*}
&  \Pr\left(  \lambda_{\min}\left\{  \left(  \hat{\sigma}_{l,\boldsymbol{Z}%
}^{2}n^{-1}\mathbb{X}_{l}^{\prime}M_{\mathbb{Z}}\mathbb{X}_{l}\right)
^{-1}\right\}  \leq\frac{1}{4}\sigma^{-2}B_{X2}^{-1}m_{n}\right) \\
&  \leq\Pr\left(  \hat{\sigma}_{l,\boldsymbol{Z}}^{-2}\leq\frac{1}{2}%
\sigma^{-2}\right)  +\Pr\left(  \lambda_{\min}\left\{  \left(  n^{-1}%
\mathbb{X}_{l}^{\prime}M_{\mathbb{Z}}\mathbb{X}_{l}\right)  ^{-1}\right\}
\leq\frac{1}{2}\sigma^{-2}B_{X2}^{-1}m_{n}\right)  .
\end{align*}
Then we apply Lemma \ref{LE:error_var_Z} and Lemma \ref{LE:XXQQ} to the two
terms on the right hand side of the last displayed equation. The conclusion
follows as in the proof of the first part. \medskip
\end{proof}

\begin{proof}
[Proof of Lemma \ref{LE:error_bound}]We deal with $n^{-1/2}\left(
\boldsymbol{u}_{X_{l},\boldsymbol{Z}}^{\prime}\boldsymbol{u}_{Y,\boldsymbol{Z}%
}-\boldsymbol{\eta}_{l,\boldsymbol{Z}}\right)  $ first. Note%
\[
n^{-1/2}\left(  \boldsymbol{u}_{X_{l},\boldsymbol{Z}}^{\prime}\boldsymbol{u}%
_{Y,\boldsymbol{Z}}-\boldsymbol{\eta}_{l,\boldsymbol{Z}}\right)  =n^{-1/2}%
\sum_{i=1}^{n}\left(  \boldsymbol{u}_{X_{l},\boldsymbol{Z},i}%
u_{Y,\boldsymbol{Z},i}-\boldsymbol{\eta}_{l,\boldsymbol{Z}}\right)
\]
Denote the $j$-th element of $\boldsymbol{U}_{X_{l},\boldsymbol{Z}}$ by
$U_{\phi_{j}\left(  X_{l}\right)  ,\boldsymbol{Z}}.$\ By equation
(\ref{EQ:u_xlz}) and Lemma \ref{LE:rank}, $E\left(  U_{\phi_{j}\left(
X_{l}\right)  ,\boldsymbol{Z}}^{2}\right)  $ $\propto m_{n}^{-1},$ and
$U_{\phi_{j}\left(  X_{l}\right)  ,\boldsymbol{Z}}^{{}}$ is uniformly bounded
due to the uniform boundedness of $\phi_{j}\left(  X_{l}\right)  .$

By equation (\ref{EQ:u_xlz}),
\[
U_{Y,\boldsymbol{Z}}=\sum_{j=1}^{p^{\ast}}f_{j}^{\ast}\left(  X_{j}\right)
+\varepsilon-\boldsymbol{\gamma}_{Y,\boldsymbol{Z}}^{\prime}P^{m_{n}}\left(
\boldsymbol{Z}\right)  ,
\]
where $\boldsymbol{\gamma}_{Y,\boldsymbol{Z}}\equiv\Phi_{\boldsymbol{Z}}%
^{-1}E\left[  P^{m_{n}}\left(  \boldsymbol{Z}\right)  Y\right]  $. The rank
conditions in Assumption \ref{A:full_rank2} and uniform boundedness of
$\sum_{j=1}^{p^{\ast}}f_{j}^{\ast}\left(  X_{j}\right)  $ imply that each
element in $\boldsymbol{\gamma}_{Y,\boldsymbol{Z}}$ is uniformly bounded. Note
$p^{\ast}$ is fixed and the dimension of $\boldsymbol{\gamma}%
_{Y,\boldsymbol{Z}}$ is $\iota_{n}m_{n}$, and thus
\[
U_{Y,\boldsymbol{Z}}=C_{Y,\boldsymbol{Z}}+\varepsilon,
\]
with $C_{Y,\boldsymbol{Z}}\equiv\sum_{j=1}^{p^{\ast}}f_{j}^{\ast}\left(
X_{j}\right)  -\boldsymbol{\gamma}_{Y,\boldsymbol{Z}}^{\prime}P^{m_{n}}\left(
\boldsymbol{Z}\right)  $ and $\left\vert C_{Y,\boldsymbol{Z}}\right\vert \leq
C\iota_{n}m_{n}$ for a positive $C$. The $j$-th element of $\boldsymbol{u}%
_{X_{l},\boldsymbol{Z}}^{\prime}\boldsymbol{u}_{Y,\boldsymbol{Z}%
}-\boldsymbol{\eta}_{l,\boldsymbol{Z}}$ is%
\[
\sum_{i=1}^{n}\left(  u_{\phi_{j}\left(  X_{l}\right)  ,\boldsymbol{Z}%
,i}u_{Y,\boldsymbol{Z},i}-\boldsymbol{\eta}_{l,\boldsymbol{Z,}j}\right)
=\sum_{i=1}^{n}\left(  u_{\phi_{j}\left(  X_{l}\right)  ,\boldsymbol{Z}%
,i}C_{Y,\boldsymbol{Z,}i}-\boldsymbol{\eta}_{l,\boldsymbol{Z,}j}\right)
+\sum_{i=1}^{n}u_{\phi_{j}\left(  X_{l}\right)  ,\boldsymbol{Z},i}%
\varepsilon_{i}.
\]
Since $u_{\phi_{j}\left(  X_{l}\right)  ,\boldsymbol{Z},i}$ is uniformly
bounded, $\left\vert u_{\phi_{j}\left(  X_{l}\right)  ,\boldsymbol{Z}%
,i}C_{Y,\boldsymbol{Z,}i}\right\vert \leq C\iota_{n}m_{n}$ for a positive $C,$
and $u_{\phi_{j}\left(  X_{l}\right)  ,\boldsymbol{Z},i}\varepsilon_{i}$
satisfies the tail restriction in Assumption \ref{A:epsilon}. Apply the
inequalities in Lemmas \ref{LE:bernstein} on the first term in the above, we
obtain%
\begin{align}
&  \Pr\left(  \left\vert \sum_{i=1}^{n}\left(  u_{\phi_{j}\left(
X_{l}\right)  ,\boldsymbol{Z},i}u_{Y,\boldsymbol{Z},i}-\boldsymbol{\eta
}_{l,\boldsymbol{Z,}j}\right)  \right\vert >v_{n}\right) \nonumber\\
&  \leq\Pr\left(  \left\vert \sum_{i=1}^{n}\left(  u_{\phi_{j}\left(
X_{l}\right)  ,\boldsymbol{Z},i}C_{Y,\boldsymbol{Z,}i}-\boldsymbol{\eta
}_{l,\boldsymbol{Z,}j}\right)  \right\vert >\frac{v_{n}}{2}\right)
+\Pr\left(  \left\vert \sum_{i=1}^{n}u_{\phi_{j}\left(  X_{l}\right)
,\boldsymbol{Z},i}\varepsilon_{i}\right\vert >\frac{v_{n}}{2}\right)
\nonumber\\
&  \leq2\exp\left\{  -v_{n}^{2}/\left(  C\left(  nm_{n}^{-1}+\iota_{n}%
m_{n}v_{n}\right)  \right)  \right\}  +\Pr\left(  \left\vert \sum_{i=1}%
^{n}u_{\phi_{j}\left(  X_{l}\right)  ,\boldsymbol{Z},i}\varepsilon
_{i}\right\vert >\frac{v_{n}}{2}\right)  . \label{EQ:uuita_sep}%
\end{align}

Using the above identity, we have%
\begin{align}
&  \Pr\left(  \left\Vert n^{-1/2}\sum_{i=1}^{n}\left(  \boldsymbol{u}%
_{X_{l},\boldsymbol{Z},i}u_{Y,\boldsymbol{Z},i}-\boldsymbol{\eta
}_{l,\boldsymbol{Z}}\right)  \right\Vert ^{2}\geq C_{1}m_{n}^{-1}v_{n}\right)
\nonumber\\
&  =\Pr\left(  \sum_{j=1}^{m_{n}}\left\{  n^{-1/2}\sum_{i=1}^{n}\left(
u_{\phi_{j}\left(  X_{l}\right)  ,\boldsymbol{Z},i}u_{Y,\boldsymbol{Z}%
,i}-\boldsymbol{\eta}_{l,\boldsymbol{Z,}j}\right)  \right\}  ^{2}\geq
C_{1}m_{n}^{-1}v_{n}\right) \nonumber\\
&  \leq\sum_{j=1}^{m_{n}}\Pr\left(  \left\vert \sum_{i=1}^{n}\left(
u_{\phi_{j}\left(  X_{l}\right)  ,\boldsymbol{Z},i}u_{Y,\boldsymbol{Z}%
,i}-\boldsymbol{\eta}_{l,\boldsymbol{Z,}j}\right)  \right\vert \geq
C_{1}n^{1/2}m_{n}^{-1}v_{n}^{1/2}\right) \nonumber\\
&  \leq\sum_{j=1}^{m_{n}}\left\{  \Pr\left(  \left\vert \sum_{i=1}^{n}\left(
u_{\phi_{j}\left(  X_{l}\right)  ,\boldsymbol{Z},i}C_{Y,\boldsymbol{Z,}%
i}-\boldsymbol{\eta}_{l,\boldsymbol{Z,}j}\right)  \right\vert \geq\frac{1}%
{2}C_{1}n^{1/2}m_{n}^{-1}v_{n}^{1/2}\right)  \right. \nonumber\\
&  \text{ \ \ \ \ }\left.  +\Pr\left(  \left\vert \sum_{i=1}^{n}u_{\phi
_{j}\left(  X_{l}\right)  ,\boldsymbol{Z},i}\varepsilon_{i}\right\vert
\geq\frac{1}{2}C_{1}n^{1/2}m_{n}^{-1}v_{n}^{1/2}\right)  \right\} \nonumber\\
&  \leq\left\{
\begin{array}
[c]{c}%
2m_{n}\exp\left\{  -nm_{n}^{-2}v_{n}/[C_{2}(nm_{n}^{-1}+\iota_{n}n^{1/2}%
v_{n}^{1/2})]\right\}  +m_{n}\exp\left(  -C_{3}m_{n}^{-1}v_{n}\right) \\
2m_{n}\exp\left\{  -nm_{n}^{-2}v_{n}/[C_{2}(nm_{n}^{-1}+\iota_{n}n^{1/2}%
v_{n}^{1/2})]\right\}  +C_{4}\exp\left(  -C_{5}n^{C_{6}}\right)
\end{array}
\right.
\begin{array}
[c]{c}%
\text{if }v_{n}\lesssim n^{s/\left(  s+2\right)  }m_{n}^{2}\\
\text{if }v_{n}\gtrsim n^{s/\left(  s+2\right)  }m_{n}^{2}%
\end{array}
, \label{EQ:u1}%
\end{align}
for any positive constant $C_{1}$ and some positive constants $C_{2}%
,C_{3},C_{4}$ and $C_{5}$, where the last inequality follows from the same
arguments as used to obtain (\ref{EQ:p1_2}) along with applying equation
(\ref{EQ:uuita_sep}).

We turn to $n^{-1/2}\boldsymbol{u}_{X_{l},\boldsymbol{Z}}^{\prime}%
\mathbb{Z}\left(  \mathbb{Z}^{\prime}\mathbb{Z}\right)  ^{-1}\mathbb{Z}%
^{\prime}\boldsymbol{u}_{Y,\boldsymbol{Z}}.$ Note this is a $m_{n}\times1$
vector whose $j$-th element is $n^{-1/2}\boldsymbol{u}_{\phi_{j}\left(
X_{l}\right)  ,\boldsymbol{Z}}^{\prime}\mathbb{Z}\left(  \mathbb{Z}^{\prime
}\mathbb{Z}\right)  ^{-1}\mathbb{Z}^{\prime}\boldsymbol{u}_{Y,\boldsymbol{Z}}$
with the $n\times1$ vector $\boldsymbol{u}_{\phi_{j}\left(  X_{l}\right)
,\boldsymbol{Z}}$ being an collecting of the $n$ observations for $U_{\phi
_{j}\left(  X_{l}\right)  ,\boldsymbol{Z}}.$ Then%
\begin{align}
&  \Pr\left(  \left\Vert n^{-1/2}\boldsymbol{u}_{X_{l},\boldsymbol{Z}}%
^{\prime}\mathbb{Z}\left(  \mathbb{Z}^{\prime}\mathbb{Z}\right)
^{-1}\mathbb{Z}^{\prime}\boldsymbol{u}_{Y,\boldsymbol{Z}}\right\Vert ^{2}\geq
C_{1}m_{n}^{-1}v_{n}\right) \nonumber\\
&  =\Pr\left(  \sum_{j=1}^{m_{n}}\left(  n^{-1/2}\boldsymbol{u}_{\phi
_{j}\left(  X_{l}\right)  ,\boldsymbol{Z}}^{\prime}\mathbb{Z}\left(
\mathbb{Z}^{\prime}\mathbb{Z}\right)  ^{-1}\mathbb{Z}^{\prime}\boldsymbol{u}%
_{Y,\boldsymbol{Z}}\right)  ^{2}\geq C_{1}m_{n}^{-1}v_{n}\right) \nonumber\\
&  \leq\sum_{j=1}^{m_{n}}\Pr\left(  \left(  n^{-1/2}\boldsymbol{u}_{\phi
_{j}\left(  X_{l}\right)  ,\boldsymbol{Z}}^{\prime}\mathbb{Z}\left(
n^{-1}\mathbb{Z}^{\prime}\mathbb{Z}\right)  ^{-1}n^{-1/2}\mathbb{Z}^{\prime
}\boldsymbol{u}_{Y,\boldsymbol{Z}}\right)  ^{2}\geq C_{1}nm_{n}^{-2}%
v_{n}\right) \nonumber\\
&  \leq\sum_{j=1}^{m_{n}}\Pr\left(  \left\Vert n^{-1/2}\mathbb{Z}^{\prime
}\boldsymbol{u}_{\phi_{j}\left(  X_{l}\right)  ,\boldsymbol{Z}}\right\Vert
^{2}\left\Vert n^{-1/2}\mathbb{Z}^{\prime}\boldsymbol{u}_{Y,\boldsymbol{Z}%
}\right\Vert ^{2}\lambda_{\max}\left\{  \left(  n^{-1}\mathbb{Z}^{\prime
}\mathbb{Z}\right)  ^{-1}\right\}  ^{2}\geq C_{1}nm_{n}^{-2}v_{n}\right)  .
\label{EQ:c}%
\end{align}
We deal with the three random terms in the last line in turn. First, by
Assumptions \ref*{A:full_rank2} and \ref*{A:p}'\ and Lemma \ref{LE:XXQQ}, we
have%
\begin{equation}
\Pr\left(  \lambda_{\max}\left\{  \left(  n^{-1}\mathbb{Z}^{\prime}%
\mathbb{Z}\right)  ^{-1}\right\}  ^{2}\geq\left(  \frac{4}{3}B_{X1}^{-1}%
m_{n}\right)  ^{2}\right)  \leq C_{6}\exp\left(  -C_{7}n^{C_{8}}\right)
\label{EQ:c1}%
\end{equation}
for some positive constants $C_{6},C_{7}$ and $C_{8}.$ For the second term,%
\begin{align}
&  \Pr\left(  \left\Vert n^{-1/2}\mathbb{Z}^{\prime}\boldsymbol{u}_{\phi
_{j}\left(  X_{l}\right)  ,\boldsymbol{Z}}\right\Vert ^{2}\geq C_{1}%
^{1/2}\frac{3}{4}B_{X1}n^{1/2}m_{n}^{-2}v_{n}^{1/2}\right) \nonumber\\
&  =\Pr\left(  \sum_{k_{1}=1}^{\iota_{n}}\sum_{k_{2}=1}^{m_{n}}\left(
\sum_{i=1}^{n}n^{-1/2}\phi_{k_{2}}\left(  z_{k_{1},i}\right)  u_{\phi
_{j}\left(  X_{l}\right)  ,\boldsymbol{Z},i}\right)  ^{2}\geq C_{1}^{1/2}%
\frac{3}{4}B_{X1}n^{1/2}m_{n}^{-2}v_{n}^{1/2}\right) \nonumber\\
&  \leq\sum_{k_{1}=1}^{\iota_{n}}\sum_{k_{2}=1}^{m_{n}}\Pr\left(  \left(
\sum_{i=1}^{n}n^{-1/2}\phi_{k_{2}}\left(  z_{k_{1},i}\right)  u_{\phi
_{j}\left(  X_{l}\right)  ,\boldsymbol{Z},i}\right)  ^{2}\geq C_{1}^{1/2}%
\frac{3}{4}B_{X1}n^{1/2}m_{n}^{-3}\iota_{n}^{-1}v_{n}^{1/2}\right) \nonumber\\
&  =\sum_{k_{1}=1}^{\iota_{n}}\sum_{k_{2}=1}^{m_{n}}\Pr\left(  \left\vert
\sum_{i=1}^{n}\phi_{k_{2}}\left(  z_{k_{1},i}\right)  u_{\phi_{j}\left(
X_{l}\right)  ,\boldsymbol{Z},i}\right\vert \geq C_{9}n^{3/4}m_{n}^{-3/2}%
\iota_{n}^{-1/2}v_{n}^{1/4}\right) \nonumber\\
&  \leq\left\{
\begin{array}
[c]{c}%
m_{n}\iota_{n}\exp\left(  -C_{10}n^{1/2}m_{n}^{-3/2}\iota_{n}^{-1}\left(
m_{n}^{-1}v_{n}\right)  ^{1/2}\right) \\
C_{11}\exp\left(  -C_{12}n^{C_{13}}\right)
\end{array}
\right.
\begin{array}
[c]{c}%
\text{if }v_{n}\lesssim n^{\left(  s-2\right)  /\left[  4\left(  s+2\right)
\right]  }m_{n}^{3/2}\iota_{n}^{2}\\
\text{if }v_{n}\gtrsim n^{\left(  s-2\right)  /\left[  4\left(  s+2\right)
\right]  }m_{n}^{3/2}\iota_{n}^{2}%
\end{array}
, \label{EQ:c2}%
\end{align}
for some positive constants $C_{9},$ $C_{10},C_{11}$ and $C_{12}$, where in
the last line we apply Lemma \ref{LE:main_inequality} and we use the fact that
var$\left(  \phi_{k_{2}}\left(  z_{k_{1},i}\right)  u_{\phi_{j}\left(
X_{l}\right)  ,\boldsymbol{Z},i}\right)  \propto m_{n}^{-1}.$ For the third
term, similarly,%
\begin{align}
&  \Pr\left(  \left\Vert n^{-1/2}\mathbb{Z}^{\prime}\boldsymbol{u}%
_{Y,\boldsymbol{Z}}\right\Vert ^{2}\geq C_{1}^{1/2}\frac{3}{4}B_{X1}%
n^{1/2}m_{n}^{-2}v_{n}^{1/2}\right) \nonumber\\
&  =\Pr\left(  \sum_{k_{1}=1}^{\iota_{n}}\sum_{k_{2}=1}^{m_{n}}\left(
\sum_{i=1}^{n}n^{-1/2}\phi_{k_{2}}\left(  z_{k_{1},i}\right)
u_{Y,\boldsymbol{Z},i}\right)  ^{2}\geq C_{1}^{1/2}\frac{3}{4}B_{X1}%
n^{1/2}m_{n}^{-2}v_{n}^{1/2}\right) \nonumber\\
&  \leq\left\{
\begin{array}
[c]{c}%
m_{n}\iota_{n}\exp\left(  -C_{10}n^{1/2}m_{n}^{-3/2}\iota_{n}^{-1}\left(
m_{n}^{-1}v_{n}\right)  ^{1/2}\right) \\
C_{11}\exp\left(  -C_{12}n^{C_{13}}\right)
\end{array}
\right.
\begin{array}
[c]{c}%
\text{if }v_{n}\lesssim n^{\left(  s-2\right)  /\left[  4\left(  s+2\right)
\right]  }m_{n}^{3/2}\iota_{n}^{2}\\
\text{if }v_{n}\gtrsim n^{\left(  s-2\right)  /\left[  4\left(  s+2\right)
\right]  }m_{n}^{3/2}\iota_{n}^{2}%
\end{array}
, \label{EQ:c3}%
\end{align}
where without loss of generality we use the same constants in equation
(\ref{EQ:c2}).

Equation (\ref{EQ:c}) implies%
\begin{align*}
&  \Pr\left(  \left\Vert n^{-1/2}\boldsymbol{u}_{X_{l},\boldsymbol{Z}}%
^{\prime}\mathbb{Z}\left(  \mathbb{Z}^{\prime}\mathbb{Z}\right)
^{-1}\mathbb{Z}^{\prime}\boldsymbol{u}_{Y,\boldsymbol{Z}}\right\Vert ^{2}\geq
C_{1}m_{n}^{-1}v_{n}\right) \\
&  \leq\sum_{j=1}^{m_{n}}\left\{  \Pr\left(  \lambda_{\max}\left\{  \left(
n^{-1}\mathbb{Z}^{\prime}\mathbb{Z}\right)  ^{-1}\right\}  ^{2}\geq\left(
\frac{4}{3}B_{X1}^{-1}m_{n}\right)  ^{2}\right)  \right. \\
&  +\Pr\left(  \left\Vert n^{-1/2}\mathbb{Z}^{\prime}\boldsymbol{u}_{\phi
_{j}\left(  X_{l}\right)  ,\boldsymbol{Z}}\right\Vert ^{2}\geq C_{1}%
^{1/2}\frac{3}{4}B_{X1}n^{1/2}m_{n}^{-2}v_{n}^{1/2}\right) \\
&  \left.  +\Pr\left(  \left\Vert n^{-1/2}\mathbb{Z}^{\prime}\boldsymbol{u}%
_{\phi_{j}\left(  X_{l}\right)  ,\boldsymbol{Z}}\right\Vert ^{2}\geq
C_{1}^{1/2}\frac{3}{4}B_{X1}n^{1/2}m_{n}^{-2}v_{n}^{1/2}\right)  \right\}  .
\end{align*}
Substituting the results of equations (\ref{EQ:c1}), (\ref{EQ:c2}), and
(\ref{EQ:c3}) into the above yields%
\begin{align}
&  \Pr\left(  \left\Vert n^{-1/2}\boldsymbol{u}_{X_{l},\boldsymbol{Z}}%
^{\prime}\mathbb{Z}\left(  \mathbb{Z}^{\prime}\mathbb{Z}\right)
^{-1}\mathbb{Z}^{\prime}\boldsymbol{u}_{Y,\boldsymbol{Z}}\right\Vert ^{2}\geq
C_{1}m_{n}^{-1}v_{n}\right) \label{EQ:part2}\\
&  \leq\left\{
\begin{array}
[c]{l}%
\sum_{j=1}^{m_{n}}\left[  C_{6}\exp\left(  -C_{7}n^{C_{8}}\right)
+2m_{n}\iota_{n}\exp\left(  -C_{10}n^{1/2}m_{n}^{-3/2}\iota_{n}^{-1}\left(
m_{n}^{-1}v_{n}\right)  ^{1/2}\right)  \right] \\
\sum_{j=1}^{m_{n}}\left[  C_{6}\exp\left(  -C_{7}n^{C_{8}}\right)
+2C_{11}\exp\left(  -C_{12}n^{C_{13}}\right)  \right]
\end{array}
\right.
\begin{array}
[c]{c}%
\text{if }v_{n}\lesssim n^{\frac{s-2}{4\left(  s+2\right)  }}m_{n}^{3/2}%
\iota_{n}^{2}\\
\text{if }v_{n}\gtrsim n^{\frac{s-2}{4\left(  s+2\right)  }}m_{n}^{3/2}%
\iota_{n}^{2}%
\end{array}
\nonumber\\
&  \leq\left\{
\begin{array}
[c]{l}%
C_{13}\exp\left(  -C_{14}n^{C_{15}}\right)  +m_{n}^{2}\iota_{n}\exp\left(
-C_{16}n^{1/2}m_{n}^{-3/2}\iota_{n}^{-1}\left(  m_{n}^{-1}v_{n}\right)
^{1/2}\right) \\
C_{17}\exp\left(  -C_{18}n^{C_{19}}\right)
\end{array}
\right.
\begin{array}
[c]{c}%
\text{if }v_{n}\lesssim n^{\frac{s-2}{4\left(  s+2\right)  }}m_{n}^{3/2}%
\iota_{n}^{2}\\
\text{if }v_{n}\gtrsim n^{\frac{s-2}{4\left(  s+2\right)  }}m_{n}^{3/2}%
\iota_{n}^{2}%
\end{array}
,\nonumber
\end{align}
for some positive constants $C_{13},\ldots,C_{19}.$

Note that when $\upsilon_{n}=\varsigma_{n},$ $n^{1/2}m_{n}^{-3/2}\iota
_{n}^{-1}\left(  m_{n}^{-1}\varsigma_{n}\right)  ^{1/2}\gg n^{1/2-1/6B_{m}%
-B_{p^{\ast\ast}}}=n^{C_{22}}$ for a positive $C_{22}.$ $C_{22}$ exists due to
Assumption \ref*{A:p}'. Then,
\begin{align}
&  \Pr\left(  \left\Vert n^{-1/2}\boldsymbol{u}_{X_{l},\boldsymbol{Z}}%
^{\prime}\mathbb{Z}\left(  \mathbb{Z}^{\prime}\mathbb{Z}\right)
^{-1}\mathbb{Z}^{\prime}\boldsymbol{u}_{Y,\boldsymbol{Z}}\right\Vert ^{2}\geq
C_{1}m_{n}^{-1}\varsigma_{n}\right) \label{EQ:u2}\\
&  \leq C_{20}\exp\left(  -C_{21}n^{C_{22}}\right) \nonumber
\end{align}
for some positive constants $C_{20},C_{21},$ and $C_{22},$ due to equation
(\ref{EQ:part2}) and the above observation. Therefore, by equations
(\ref{EQ:u1}) and (\ref{EQ:u2}),
\begin{align*}
&  \Pr\left(  \left\Vert n^{-1/2}\boldsymbol{\tilde{u}}_{l,\boldsymbol{Z}%
}\right\Vert ^{2}\geq C_{1}m_{n}^{-1}\varsigma_{n}\right) \\
&  =\Pr\left(  \left\Vert n^{-1/2}\left(  \boldsymbol{u}_{X_{l},\boldsymbol{Z}%
}^{\prime}\boldsymbol{u}_{Y,\boldsymbol{Z}}-\boldsymbol{\eta}%
_{l,\boldsymbol{Z}}\right)  -n^{-1/2}\boldsymbol{u}_{X_{l},\boldsymbol{Z}%
}^{\prime}\mathbb{Z}\left(  \mathbb{Z}^{\prime}\mathbb{Z}\right)
^{-1}\mathbb{Z}^{\prime}\boldsymbol{u}_{Y,\boldsymbol{Z}}\right\Vert ^{2}\geq
C_{1}m_{n}^{-1}\varsigma_{n}\right) \\
&  \leq\Pr\left(  \left\Vert n^{-1/2}\left(  \boldsymbol{u}_{X_{l}%
,\boldsymbol{Z}}^{\prime}\boldsymbol{u}_{Y,\boldsymbol{Z}}-\boldsymbol{\eta
}_{l,\boldsymbol{Z}}\right)  \right\Vert ^{2}\geq\frac{1}{2}C_{1}m_{n}%
^{-1}\varsigma_{n}\right)  +\Pr\left(  \left\Vert n^{-1/2}\boldsymbol{u}%
_{X_{l},\boldsymbol{Z}}^{\prime}\mathbb{Z}\left(  \mathbb{Z}^{\prime
}\mathbb{Z}\right)  ^{-1}\mathbb{Z}^{\prime}\boldsymbol{u}_{Y,\boldsymbol{Z}%
}\right\Vert ^{2}\geq\frac{1}{2}C_{1}m_{n}^{-1}\varsigma_{n}\right) \\
&  \leq2\exp\left(  -C_{2}m_{n}^{-1}\varsigma_{n}+\log m_{n}\right)
+C_{20}\exp\left(  -C_{21}n^{C_{22}}\right) \\
&  \leq n^{-M}+C_{20}\exp\left(  -C_{21}n^{C_{22}}\right)  ,
\end{align*}
for an arbitrarily large constant $M,$ where in the fourth line we use the
fact that the two terms in equation are of the same order due to $\iota
_{n}n_{n}^{1/2}\varsigma_{n}^{1/2}\ll nm_{n}^{-1}$ by Assumption \ref*{A:p}'.
This completes the proof of the lemma.\medskip
\end{proof}

\begin{proof}
[Proof of Lemma \ref{LE:error_var_Z}]To save notations, we absorb $X_{l}$ into
$\boldsymbol{Z}$ which now becomes a $\left(  \iota_{n}+1\right)  \times1$
vector. Assumption \ref*{A:full_rank2} holds for $\boldsymbol{Z.}$ Recall
$\Phi_{\boldsymbol{Z}}=E\left[  P^{m_{n}}\left(  \boldsymbol{Z}\right)
P^{m_{n}}\left(  \boldsymbol{Z}\right)  ^{\prime}\right]  $ and $M_{\mathbb{Z}%
}=I_{n}-\mathbb{Z}\left(  \mathbb{Z}^{\prime}\mathbb{Z}\right)  ^{-1}%
\mathbb{Z}^{\prime}.$ By definition,
\[
U=Y-P^{m_{n}}\left(  \boldsymbol{Z}\right)  ^{\prime}\Phi_{\boldsymbol{Z}%
}^{-1}E\left[  P^{m_{n}}\left(  \boldsymbol{Z}\right)  Y\right]  .
\]
Let $u_{i}=Y_{i}-P^{m_{n}}\left(  \boldsymbol{z}_{i}\right)  ^{\prime}%
\Phi_{\boldsymbol{Z}}^{-1}E\left[  P^{m_{n}}\left(  \boldsymbol{Z}\right)
Y\right]  $ and $\boldsymbol{u}=\left(  u_{1},\ldots,u_{n}\right)  .$ By the
definition of $\hat{\sigma}_{l,\boldsymbol{Z}}^{2},$\footnote{We defined
$\hat{\sigma}_{l,\boldsymbol{Z}}^{2}$ based on $\hat{u}_{i}$ from partitioned
regressions. This $\hat{u}_{i}$ is the same as the residual from the joint
regression.}%
\[
\hat{\sigma}_{l,\boldsymbol{Z}}^{2}=n^{-1}\boldsymbol{u}^{\prime}%
M_{\mathbb{Z}}\boldsymbol{u}.
\]
Then we are able to reach the same conclusion as in the proof of Lemma
\ref{LE:error_var}, with the help of Lemma \ref{LE:error_bound}.\medskip
\end{proof}

\begin{proof}
[Proof of Lemma \ref{LE:bound_error}]Note that%
\begin{align*}
\Pr\left(  \boldsymbol{\tilde{u}}_{l,\boldsymbol{Z}}^{\prime}\left(
\hat{\sigma}_{l,\boldsymbol{Z}}^{2}\mathbb{X}_{l}^{\prime}M_{\mathbb{Z}%
}\mathbb{X}_{l}\right)  ^{-1}\boldsymbol{\tilde{u}}_{l,\boldsymbol{Z}}%
\geq\varsigma_{n}\right)   &  \leq\Pr\left(  \lambda_{\max}\left\{  \left(
\hat{\sigma}_{l,\boldsymbol{Z}}^{2}n^{-1}\mathbb{X}_{l}^{\prime}M_{\mathbb{Z}%
}\mathbb{X}_{l}\right)  ^{-1}\right\}  \left\Vert n^{-1/2}\boldsymbol{\tilde
{u}}_{l,\boldsymbol{Z}}\right\Vert ^{2}\geq\varsigma_{n}\right) \\
&  \leq\Pr\left(  \lambda_{\max}\left\{  \left(  \hat{\sigma}%
_{l,\boldsymbol{Z}}^{2}n^{-1}\mathbb{X}_{l}^{\prime}M_{\mathbb{Z}}%
\mathbb{X}_{l}\right)  ^{-1}\right\}  \geq2\sigma^{-2}B_{X1}^{-1}m_{n}\right)
\\
&  +\Pr\left(  \left\Vert n^{-1/2}\boldsymbol{\tilde{u}}_{l,\boldsymbol{Z}%
}\right\Vert ^{2}\geq\left(  2\sigma^{-2}\right)  ^{-1}B_{X1}m_{n}%
^{-1}\varsigma_{n}\right)  .
\end{align*}
Then we can reach the conclusion by applying Lemma \ref{LE:lambdamax} and the
last part of Lemma \ref{LE:error_bound}.\medskip
\end{proof}

\begin{proof}
[Proof of Lemma \ref{LE:allsignals}]We prove the lemma by mathematical
induction. First, we show that the conclusion holds when $k=1,$ viz., for the
first stage. We bound the probability of $\mathcal{N}_{1}$ (one or more noise
variables are selected at stage 1) as%
\begin{equation}
\Pr\left(  \mathcal{N}_{1}\right)  =\Pr\left(  \cup_{l=p^{\ast}+p^{\ast\ast
}+1}^{p_{n}}\mathcal{L}_{l,1}\right)  =\Pr\left(  \cup_{l=p^{\ast}+p^{\ast
\ast}+1}^{p_{n}}\mathcal{B}_{l,1}\right)  \leq\sum_{l=p^{\ast}+p^{\ast\ast}%
+1}^{p_{n}}\Pr\left(  \mathcal{B}_{l,1}\right)  . \label{EQ:n1}%
\end{equation}
By the definition in Table 2, $\theta_{l}\lesssim\log\left(  m_{n}\right)
^{1/2}\left(  m_{n}/n\right)  ^{1/2}$ for noise variables, viz., for $p^{\ast
}+p^{\ast\ast}+1\leq l\leq p_{n}$. Then we are able to apply the first part of
Proposition \ref{TH:main1} to bound $\Pr\left(  \mathcal{B}_{l,1}\right)  $
for each $l\in\lbrack p^{\ast}+p^{\ast\ast}+1,p_{n}]$ to obtain%
\[
\Pr\left(  \mathcal{B}_{l,1}\right)  \leq n^{-M}+C_{1}\exp\left(
-C_{2}n^{C_{3}}\right)
\]
for any arbitrarily large constant $M$ and some positive constants $C_{1},$
$C_{2}$ and $C_{3}.$ Then by Assumption \ref*{A:p} and (\ref{EQ:n1}), we have%
\begin{align}
\Pr\left(  \mathcal{N}_{1}\right)   &  \leq\sum_{l=p^{\ast}+p^{\ast\ast}%
+1}^{p_{n}}\left(  n^{-M}+C_{1}\exp\left(  -C_{2}n^{C_{3}}\right)  \right)
\nonumber\\
&  \leq p_{n}n^{-M}+p_{n}C_{1}\exp\left(  -C_{2}n^{C_{3}}\right) \nonumber\\
&  \leq n^{-M_{1}}+C_{7}\exp\left(  -C_{5}n^{C_{6}}\right)  , \label{EQ:n1_1}%
\end{align}
where the last equality holds by some $0<M_{1}\leq M-B_{p}$ and some constants
$C_{5},C_{6}$ and $C_{7}.$ Since $M$ is arbitrarily large and $B_{p}$ is a
fixed number, $M_{1}$ can also be arbitrarily large. Notice that $\Pr\left(
\mathcal{D}_{1}\right)  =1-\Pr\left(  \mathcal{N}_{1}\right)  ,$ we get%
\[
\Pr\left(  \mathcal{D}_{1}\right)  \geq1-n^{-M_{1}}-C_{7}\exp\left(
-C_{5}n^{C_{6}}\right)  .
\]

We now show the conclusion for $k=2.$ Notice that $\mathcal{N}_{2}%
=\mathcal{N}_{1}\cup\left\{  \cup_{l=p^{\ast}+p^{\ast\ast}+1}^{p_{n}%
}\mathcal{B}_{l,2}\right\}  .$ That is, some noise variables selected up to
and including stage 2 are either selected at stage 1 or at stage 2. To apply
Proposition \ref{TH:main1_ms}, we need $\mathcal{D}_{1}$: the pre-selected
variables are either signals or pseudo-signals at stage 2. In view of this and
the fact that $\mathcal{D}_{1}=\mathcal{N}_{1}^{c}$, we have%
\begin{align}
\Pr\left(  \mathcal{N}_{2}\right)   &  =\Pr\left(  \mathcal{N}_{1}%
\cup(\left\{  \cup_{l=p^{\ast}+p^{\ast\ast}+1}^{p_{n}}\mathcal{B}%
_{l,2}\right\}  \cap\mathcal{D}_{1})\right) \nonumber\\
&  \leq\Pr\left(  \mathcal{N}_{1}\right)  +\Pr\left(  \left\{  \cup
_{l=p^{\ast}+p^{\ast\ast}+1}^{p_{n}}\mathcal{B}_{l,2}\right\}  \cap
\mathcal{D}_{1}\right) \nonumber\\
&  =\Pr\left(  \mathcal{N}_{1}\right)  +\Pr\left(  \cup_{l=p^{\ast}%
+p^{\ast\ast}+1}^{p_{n}}\mathcal{B}_{l,2}\cap\mathcal{D}_{1}\right)
\nonumber\\
&  \leq\Pr\left(  \mathcal{N}_{1}\right)  +\Pr\left(  \cup_{l=p^{\ast}%
+p^{\ast\ast}+1}^{p_{n}}\mathcal{B}_{l,2}|\mathcal{D}_{1}\right)  \Pr\left(
\mathcal{D}_{1}\right) \nonumber\\
&  \leq\Pr\left(  \mathcal{N}_{1}\right)  +\Pr\left(  \cup_{l=p^{\ast}%
+p^{\ast\ast}+1}^{p_{n}}\mathcal{B}_{l,2}|\mathcal{D}_{1}\right)  .
\label{EQ:n2}%
\end{align}
where the first equality follows from the fact $\Pr\left(  A\cup B\right)
=\Pr\left(  A\cup\left(  B\cap A^{c}\right)  \right)  $ for any two events $A$
and $B$. Since the second term in (\ref{EQ:n2}) is conditional on
$\mathcal{D}_{1}$, the pre-selected variables in $\boldsymbol{Z}_{\left(
1\right)  }$ are either signals or pseudo signals. By Assumption
\ref*{A:noisevariable}, the net effect of the noise variable $X_{l}$ on $Y$
with $\boldsymbol{Z}_{\left(  1\right)  }$ as pre-selected variables satisfies
$\boldsymbol{\theta}_{l,\boldsymbol{Z}_{\left(  1\right)  }}\lesssim
\log\left(  m_{n}\right)  ^{1/2}m_{n}^{1/2}n^{-1/2}$. Then we are able to
apply the first part of Proposition \ref{TH:main1_ms} to obtain%
\begin{align}
\Pr\left(  \cup_{l=p^{\ast}+p^{\ast\ast}+1}^{p_{n}}\mathcal{B}_{l,2}%
|\mathcal{D}_{1}\right)   &  \leq\sum_{l=p^{\ast}+p^{\ast\ast}+1}^{p_{n}}%
\Pr\left(  \mathcal{B}_{l,2}|\mathcal{D}_{1}\right)  \leq\sum_{l=p^{\ast
}+p^{\ast\ast}+1}^{p_{n}}\left(  n^{-M}+C_{1}\exp\left(  -C_{2}n^{C_{3}%
}\right)  \right) \nonumber\\
&  \leq n^{-M_{1}}+C_{7}\exp\left(  -C_{5}n^{C_{6}}\right)  , \label{EQ:n3}%
\end{align}
where without loss of generality we use the same constants as in equation
(\ref{EQ:n1_1}). Combining (\ref{EQ:n1_1})--(\ref{EQ:n3}) yields
\[
\Pr\left(  \mathcal{N}_{2}\right)  \leq2n^{-M_{1}}+2C_{7}\exp\left(
-C_{5}n^{C_{6}}\right)
\]
and
\[
\Pr\left(  \mathcal{D}_{2}\right)  =1-\Pr\left(  \mathcal{N}_{2}\right)
\geq1-2n^{-M_{1}}-2C_{7}\exp\left(  -C_{5}n^{C_{6}}\right)  .
\]

Now suppose we have shown the results for stage $k-1$ with $k\geq3,$ and
obtained
\begin{equation}
\Pr\left(  \mathcal{N}_{k-1}\right)  \leq\left(  k-1\right)  n^{-M_{1}%
}+\left(  k-1\right)  C_{7}\exp\left(  -C_{5}n^{C_{6}}\right)  .
\label{EQ:nk-1}%
\end{equation}
Note that $\mathcal{N}_{k}=\mathcal{N}_{k-1}\cup\left\{  \cup_{l=p^{\ast
}+p^{\ast\ast}+1}^{p_{n}}\mathcal{B}_{l,k}\right\}  ,$ viz., some noise
variables selected up to and including stage $k$ are either selected at stage
$k$ or before stage $k$. Noting that $\mathcal{D}_{k-1}=\mathcal{N}_{k-1}^{c}$
and following the derivation of equation (\ref{EQ:n2}), we have%
\begin{align}
\Pr\left(  \mathcal{N}_{k}\right)   &  =\Pr\left(  \mathcal{N}_{k-1}%
\cup\left(  \left\{  \cup_{l=p^{\ast}+p^{\ast\ast}+1}^{p_{n}}\mathcal{B}%
_{l,k}\right\}  \cap\mathcal{D}_{k-1}\right)  \right) \nonumber\\
&  \leq\Pr\left(  \mathcal{N}_{k-1}\right)  +\Pr\left(  \cup_{l=p^{\ast
}+p^{\ast\ast}+1}^{p_{n}}\mathcal{B}_{l,k}|\mathcal{D}_{k-1}\right)  .
\label{EQ:nk}%
\end{align}
Conditioning on $\mathcal{D}_{k-1}$ implies that the pre-selected variables
$\boldsymbol{Z}_{\left(  k-1\right)  }$ are either signals or pseudo signals
for stage $k$. By Assumption \ref*{A:noisevariable} again, the net effect of
the noise variable $X_{l}$ on $Y$ with $\boldsymbol{Z}_{\left(  k-1\right)  }$
as pre-selected variables satisfies $\boldsymbol{\theta}_{l,\boldsymbol{Z}%
_{\left(  k-1\right)  }}\lesssim\log\left(  m_{n}\right)  ^{1/2}m_{n}%
^{1/2}n^{-1/2}$. Applying the first part of Proposition \ref{TH:main1_ms} to
the second term in (\ref{EQ:nk}) yields%
\begin{align}
\Pr\left(  \cup_{l=p^{\ast}+p^{\ast\ast}+1}^{p_{n}}\mathcal{B}_{l,k}%
|\mathcal{D}_{k-1}\right)   &  \leq\sum_{l=p^{\ast}+p^{\ast\ast}+1}^{p_{n}}%
\Pr\left(  \mathcal{B}_{l,k}|\mathcal{D}_{k-1}\right) \nonumber\\
&  \leq\sum_{l=p^{\ast}+p^{\ast\ast}+1}^{p_{n}}\left(  n^{-M}+C_{1}\exp\left(
-C_{2}n^{C_{3}}\right)  \right) \nonumber\\
&  \leq n^{-M_{1}}+C_{7}\exp\left(  -C_{5}n^{C_{6}}\right)  . \label{EQ:nk1}%
\end{align}
Combining (\ref{EQ:nk-1})--(\ref{EQ:nk1}), we obtain%
\[
\Pr\left(  \mathcal{N}_{k}\right)  \leq kn^{-M_{1}}+kC_{7}\exp\left(
-C_{5}n^{C_{6}}\right)  ,
\]
and\textbf{ }%
\[
\Pr\left(  \mathcal{D}_{k}\right)  =1-\Pr\left(  \mathcal{N}_{k}\right)
\geq1-kn^{-M_{1}}-kC_{7}\exp\left(  -C_{5}n^{C_{6}}\right)  .
\]
Apparently, when $k$ is fixed or divergent to infinity at a rate no faster
than $n^{a}$ for some $a>0,$ we can write
\[
\Pr\left(  \mathcal{D}_{k}\right)  \geq1-n^{-M_{2}}-C_{7}\exp\left(
-C_{8}n^{C_{9}}\right)  \text{ for any }k\leq n^{a},
\]
where $M_{2}$ is a large positive constant, $M_{2}\leq M_{1}-a.$
\end{proof}

\bigskip

\section{Additional Simulation and Application Results\label{APP:tables}}

In this section we present some additional results for the simulation and application.

\subsection{Additional Simulation Results}

In this subsection we report the simulation results for DGPs 2 - 5 and 7 - 10
in Tables \ref{tabledgp2}--\ref{tabledgp10}. The description of these DGPs and
the summary of the simulation results are given in Sections 4.1 and 4.3 of the
paper, respectively.

In order to investigate the finite sample performance of POST-OCMT procedure
compared to that of OCMT procedure under linear model, we consider the
following DGP.

\textbf{DGP 11.} \textbf{Four signals, many pseudo-signals, and one hidden
signal}. The covariates are generated as DGP 6$:$%
\begin{align*}
X_{j}  &  =W_{j},\text{ }j=1,2,\text{ }\\
X_{j}  &  =\frac{W_{j}+U_{1}}{2},\text{ }j=3,4,\\
X_{5}  &  =U_{1},\\
X_{j}  &  =\frac{4X_{1}+\left(  j-5\right)  W_{j-1}}{j-1},\text{
}j=6,10,14,18,\ldots,\\
X_{j}  &  =\frac{4X_{2}+\left(  j-5\right)  W_{j-1}}{j-1},\text{
}j=7,11,15,19,\ldots,\\
X_{j}  &  =\frac{4X_{3}+\left(  j-5\right)  W_{j-1}}{j-1},\text{
}j=8,12,16,20,\ldots,\text{ and}\\
X_{j}  &  =\frac{4X_{4}+\left(  j-5\right)  W_{j-1}}{j-1},\text{
}j=9,13,17,21,\ldots
\end{align*}
where $W_{j},$ $j=1,\ldots,p_{n}-1,$ and $U_{1}$\ are independent draws from
$U(0,1)$. The dependent variable $Y$ is generated as
\[
Y=3(X_{1}-0.5)+3(X_{2}-0.5)+6(X_{3}-0.25)+6\left(  X_{4}-0.25\right)
-3X_{5}+\varepsilon
\]
where $\varepsilon$ are independent draws from $N(0,1)$. Then, $p^{\ast}=5$
with one hidden signal $(X_{5})$. In this DGP, we only report the results of
our procedure (OCMT and Post-OCMT). The simulation results for DGP 11 are
reported in Table \ref{tabledgp11}. Clearly, from this table, we can see that
POST-OCMT works very well in picking the true signals including the hidden one
and has good out-sample forecasting performance, due to the fact that it
eliminates the impact of pseudo-signals.

\begin{table}[p]
\caption{DGP 2}%
\label{tabledgp2}%
\centering{}\centering{ } \resizebox{!}{0.67\textwidth}{
\begin{tabular}
[c]{l|ccccccc}\hline\hline
\multicolumn{8}{c}{Panel 1: $n=200$, $p=100$}\\\hline
& NV & TPR & FPR & FDR & CS & STEP & RMSFE\\\hline
POST OCMT & 4.1190 & 0.9670 & 0.0026 & 0.0452 & 0.7620 & - & 1.0795\\
OCMT & 5.2950 & 0.9862 & 0.0141 & 0.2061 & 0.0800 & 1.1330 & 1.1899\\
AGLASSO & 4.0710 & 0.9537 & 0.0027 & 0.0469 & 0.7490 & - & 1.0872\\
BAGGING   & -      & -      & -      & -      & -      & -      & 1.4335   \\
RF        & -      & -      & -      & -      & -      & -      & 1.4581   \\\hline
\multicolumn{8}{c}{Panel 2: $n=200$, $p=200$}\\\hline
& NV & TPR & FPR & FDR & CS & STEP & RMSFE\\\hline
POST OCMT & 4.1180 & 0.9690 & 0.0012 & 0.0438 & 0.7640 & - & 1.0809\\
OCMT & 5.3560 & 0.9908 & 0.0071 & 0.2107 & 0.0740 & 1.1510 & 1.2064\\
AGLASSO & 4.0350 & 0.9447 & 0.0013 & 0.0468 & 0.7470 & - & 1.0911\\
BAGGING   & -      & -      & -      & -      & -      & -      & 1.4621   \\
RF        & -      & -      & -      & -      & -      & -      & 1.4914   \\\hline
\multicolumn{8}{c}{Panel 3: $n=200$, $p=1000$}\\\hline
& NV & TPR & FPR & FDR & CS & STEP & RMSFE\\\hline
POST OCMT & 4.1200 & 0.9617 & 0.0003 & 0.0494 & 0.7420 & - & 1.0825\\
OCMT & 5.3350 & 0.9828 & 0.0014 & 0.2134 & 0.0710 & 1.1910 & 1.2387\\
AGLASSO & 3.8194 & 0.8898 & 0.0003 & 0.0471 & 0.6863 & - & 1.1267\\
BAGGING   & -      & -      & -      & -      & -      & -      & 1.5210   \\
RF        & -      & -      & -      & -      & -      & -      & 1.5591   \\\hline
\multicolumn{8}{c}{Panel 4: $n=400$, $p=100$}\\\hline
& NV & TPR & FPR & FDR & CS & STEP & RMSFE\\\hline
POST OCMT & 4.0490 & 0.9900 & 0.0009 & 0.0161 & 0.9130 & - & 1.0483\\
OCMT & 5.1380 & 0.9942 & 0.0121 & 0.1802 & 0.1380 & 1.1000 & 1.1152\\
AGLASSO & 4.0450 & 0.9892 & 0.0009 & 0.0161 & 0.9120 & - & 1.0468\\
BAGGING   & -      & -      & -      & -      & -      & -      & 1.3167   \\
RF        & -      & -      & -      & -      & -      & -      & 1.3406   \\\hline
\multicolumn{8}{c}{Panel 5: $n=400$, $p=200$}\\\hline
& NV & TPR & FPR & FDR & CS & STEP & RMSFE\\\hline
POST OCMT & 4.0500 & 0.9848 & 0.0006 & 0.0204 & 0.8930 & - & 1.0495\\
OCMT & 5.2130 & 0.9912 & 0.0064 & 0.1926 & 0.1040 & 1.0990 & 1.1244\\
AGLASSO & 4.0540 & 0.9888 & 0.0005 & 0.0180 & 0.9020 & - & 1.0530\\
BAGGING   & -      & -      & -      & -      & -      & -      & 1.3475   \\
RF        & -      & -      & -      & -      & -      & -      & 1.3752   \\\hline
\multicolumn{8}{c}{Panel 6: $n=400$, $p=1000$}\\\hline
& NV & TPR & FPR & FDR & CS & STEP & RMSFE\\\hline
POST OCMT & 4.0480 & 0.9878 & 0.0001 & 0.0177 & 0.9050 & - & 1.0535\\
OCMT & 5.1990 & 0.9948 & 0.0012 & 0.1875 & 0.1290 & 1.0980 & 1.1286\\
AGLASSO & 4.0889 & 0.9874 & 0.0001 & 0.0246 & 0.8666 & - &
1.0530\\
BAGGING   & -      & -      & -      & -      & -      & -      & 1.4060   \\
RF        & -      & -      & -      & -      & -      & -      & 1.4371   \\\hline\hline
\end{tabular}} \bigskip\end{table}

\begin{table}[p]
\caption{DGP 3}%
\label{tabledgp3}%
\centering{}\centering{ } \resizebox{!}{0.67\textwidth}{
\begin{tabular}
[c]{l|ccccccc}\hline\hline
\multicolumn{8}{c}{Panel 1: $n=200$, $p=100$}\\\hline
& NV & TPR & FPR & FDR & CS & STEP & RMSFE\\\hline
POST OCMT & 4.4790 & 0.8678 & 0.0015 & 0.0202 & 0.4730 & - & 1.1418\\
OCMT & 4.6100 & 0.8678 & 0.0028 & 0.0376 & 0.3940 & 1.5800 & 1.2230\\
AGLASSO & 3.5440 & 0.6950 & 0.0007 & 0.0110 & 0.0100 & - & 1.2593\\
BAGGING   & -      & -      & -      & -      & -      & -      & 1.4274   \\
RF        & -      & -      & -      & -      & -      & -      & 1.4668   \\\hline
\multicolumn{8}{c}{Panel 2: $n=200$, $p=200$}\\\hline
& NV & TPR & FPR & FDR & CS & STEP & RMSFE\\\hline
POST OCMT & 4.3010 & 0.8284 & 0.0008 & 0.0235 & 0.3490 & - & 1.1612\\
OCMT & 4.4600 & 0.8284 & 0.0016 & 0.0443 & 0.2710 & 1.4590 & 1.2429\\
AGLASSO & 3.2990 & 0.6442 & 0.0004 & 0.0128 & 0.0030 & - & 1.2679\\
BAGGING   & -      & -      & -      & -      & -      & -      & 1.4584   \\
RF        & -      & -      & -      & -      & -      & -      & 1.4971   \\\hline
\multicolumn{8}{c}{Panel 3: $n=200$, $p=1000$}\\\hline
& NV & TPR & FPR & FDR & CS & STEP & RMSFE\\\hline
POST OCMT & 3.4590 & 0.6730 & 0.0001 & 0.0170 & 0.0680 & - & 1.2349\\
OCMT & 3.4900 & 0.6730 & 0.0001 & 0.0215 & 0.0590 & 1.0930 & 1.2481\\
AGLASSO & 2.6024 & 0.5105 & 0.0001 & 0.0087 & 0.0000 & - & 1.3464\\
BAGGING   & -      & -      & -      & -      & -      & -      & 1.5109   \\
RF        & -      & -      & -      & -      & -      & -      & 1.5544   \\\hline
\multicolumn{8}{c}{Panel 4: $n=400$, $p=100$}\\\hline
& NV & TPR & FPR & FDR & CS & STEP & RMSFE\\\hline
POST OCMT & 5.0050 & 0.9990 & 0.0001 & 0.0014 & 0.9860 & - & 1.0446\\
OCMT & 5.0330 & 0.9990 & 0.0004 & 0.0052 & 0.9650 & 1.9950 & 1.0574\\
AGLASSO & 4.3790 & 0.8642 & 0.0006 & 0.0084 & 0.2800 & - & 1.1136\\
BAGGING   & -      & -      & -      & -      & -      & -      & 1.3272   \\
RF        & -      & -      & -      & -      & -      & -      & 1.3683   \\\hline
\multicolumn{8}{c}{Panel 5: $n=400$, $p=200$}\\\hline
& NV & TPR & FPR & FDR & CS & STEP & RMSFE\\\hline
POST OCMT & 5.0020 & 0.9972 & 0.0001 & 0.0022 & 0.9750 & - & 1.0460\\
OCMT & 5.0460 & 0.9972 & 0.0003 & 0.0077 & 0.9470 & 1.9960 & 1.0643\\
AGLASSO & 4.2520 & 0.8408 & 0.0002 & 0.0072 & 0.1850 & - & 1.1305\\
BAGGING   & -      & -      & -      & -      & -      & -      & 1.3574   \\
RF        & -      & -      & -      & -      & -      & -      & 1.4001   \\\hline
\multicolumn{8}{c}{Panel 6: $n=400$, $p=1000$}\\\hline
& NV & TPR & FPR & FDR & CS & STEP & RMSFE\\\hline
POST OCMT & 4.8820 & 0.9684 & 0.0000 & 0.0057 & 0.8550 & - & 1.0632\\
OCMT & 4.9460 & 0.9684 & 0.0001 & 0.0136 & 0.8180 & 1.8920 & 1.0924\\
AGLASSO & 4.0764 & 0.8044 & 0.0000 & 0.0087 & 0.0255 & - &
1.1478\\
BAGGING   & -      & -      & -      & -      & -      & -      & 1.4128   \\
RF        & -      & -      & -      & -      & -      & -      & 1.4531   \\\hline\hline
\end{tabular}}\end{table}

\begin{table}[p]
\caption{DGP 4}%
\label{tabledgp4}%
\centering{}\centering{ } \resizebox{!}{0.67\textwidth}{
\begin{tabular}
[c]{l|ccccccc}\hline\hline
\multicolumn{8}{c}{Panel 1: $n=200$, $p=100$}\\\hline
& NV & TPR & FPR & FDR & CS & STEP & RMSFE\\\hline
POST OCMT & 4.3760 & 0.7934 & 0.0043 & 0.0683 & 0.3210 & - & 1.1735\\
OCMT & 5.9320 & 0.8120 & 0.0195 & 0.2552 & 0.0000 & 1.5270 & 1.5330\\
AGLASSO & 3.1090 & 0.5736 & 0.0025 & 0.0477 & 0.0050 & - & 1.3469\\
BAGGING   & -      & -      & -      & -      & -      & -      & 1.4197   \\
RF        & -      & -      & -      & -      & -      & -      & 1.4362   \\\hline
\multicolumn{8}{c}{Panel 2: $n=200$, $p=200$}\\\hline
& NV & TPR & FPR & FDR & CS & STEP & RMSFE\\\hline
POST OCMT & 4.1070 & 0.7468 & 0.0019 & 0.0667 & 0.2580 & - & 1.1956\\
OCMT & 5.6420 & 0.7614 & 0.0094 & 0.2617 & 0.0010 & 1.4090 & 1.4935\\
AGLASSO & 2.7500 & 0.5052 & 0.0011 & 0.0457 & 0.0020 & - & 1.3299\\
BAGGING   & -      & -      & -      & -      & -      & -      & 1.4499   \\
RF        & -      & -      & -      & -      & -      & -      & 1.4663   \\\hline
\multicolumn{8}{c}{Panel 3: $n=200$, $p=1000$}\\\hline
& NV & TPR & FPR & FDR & CS & STEP & RMSFE\\\hline
POST OCMT & 3.3360 & 0.5968 & 0.0004 & 0.0745 & 0.0470 & - & 1.2652\\
OCMT & 4.5670 & 0.6164 & 0.0015 & 0.2472 & 0.0000 & 1.0850 & 1.3430\\
AGLASSO & 2.0833 & 0.3773 & 0.0002 & 0.0412 & 0.0000 & - & 1.3978\\
BAGGING   & -      & -      & -      & -      & -      & -      & 1.5041   \\
RF        & -      & -      & -      & -      & -      & -      & 1.5251   \\\hline
\multicolumn{8}{c}{Panel 4: $n=400$, $p=100$}\\\hline
& NV & TPR & FPR & FDR & CS & STEP & RMSFE\\\hline
POST OCMT & 5.1300 & 0.9946 & 0.0016 & 0.0227 & 0.8440 & - & 1.0472\\
OCMT & 7.0280 & 0.9984 & 0.0212 & 0.2531 & 0.0000 & 1.9940 & 1.5916\\
AGLASSO & 4.4680 & 0.8426 & 0.0027 & 0.0397 & 0.1610 & - & 1.1553\\
BAGGING   & -      & -      & -      & -      & -      & -      & 1.3231   \\
RF        & -      & -      & -      & -      & -      & -      & 1.3455   \\\hline
\multicolumn{8}{c}{Panel 5: $n=400$, $p=200$}\\\hline
& NV & TPR & FPR & FDR & CS & STEP & RMSFE\\\hline
POST OCMT & 5.0930 & 0.9924 & 0.0007 & 0.0190 & 0.8620 & - & 1.0476\\
OCMT & 7.0330 & 0.9960 & 0.0105 & 0.2545 & 0.0000 & 1.9930 & 1.6081\\
AGLASSO & 4.3460 & 0.8278 & 0.0011 & 0.0326 & 0.1050 & - & 1.1343\\
BAGGING   & -      & -      & -      & -      & -      & -      & 1.3522   \\
RF        & -      & -      & -      & -      & -      & -      & 1.3764   \\\hline
\multicolumn{8}{c}{Panel 6: $n=400$, $p=1000$}\\\hline
& NV & TPR & FPR & FDR & CS & STEP & RMSFE\\\hline
POST OCMT & 4.9700 & 0.9598 & 0.0002 & 0.0255 & 0.7410 & - & 1.0660\\
OCMT & 6.8910 & 0.9656 & 0.0021 & 0.2609 & 0.0000 & 1.8850 & 1.5604\\
AGLASSO & 4.1797 & 0.7901 & 0.0002 & 0.0383 & 0.0208 & - &
1.1478\\
BAGGING   & -      & -      & -      & -      & -      & -      & 1.4073   \\
RF        & -      & -      & -      & -      & -      & -      & 1.4309   \\\hline\hline
\end{tabular}} \bigskip\end{table}

\begin{table}[p]
\caption{DGP 5}%
\label{tabledgp5}%
\centering{}\centering{ } \resizebox{!}{0.67\textwidth}{
\begin{tabular}
[c]{l|ccccccc}\hline\hline
\multicolumn{8}{c}{Panel 1: $n=200$, $p=100$}\\\hline
& NV & TPR & FPR & FDR & CS & STEP & RMSFE\\\hline
POST OCMT & 3.4700 & 0.8465 & 0.0009 & 0.0141 & 0.3580 & - & 1.1198\\
OCMT & 3.5210 & 0.8465 & 0.0014 & 0.0223 & 0.3120 & 1.4090 & 1.3958\\
AGLASSO & 2.8930 & 0.7127 & 0.0004 & 0.0073 & 0.0800 & - & 1.1856\\
BAGGING   & -      & -      & -      & -      & -      & -      & 1.2298   \\
RF        & -      & -      & -      & -      & -      & -      & 1.2677   \\\hline
\multicolumn{8}{c}{Panel 2: $n=200$, $p=200$}\\\hline
& NV & TPR & FPR & FDR & CS & STEP & RMSFE\\\hline
POST OCMT & 3.4040 & 0.8283 & 0.0005 & 0.0154 & 0.2930 & - & 1.1235\\
OCMT & 3.4630 & 0.8283 & 0.0008 & 0.0249 & 0.2430 & 1.3510 & 1.3660\\
AGLASSO & 2.7600 & 0.6793 & 0.0002 & 0.0083 & 0.0490 & - & 1.1822\\
BAGGING   & -      & -      & -      & -      & -      & -      & 1.2512   \\
RF        & -      & -      & -      & -      & -      & -      & 1.2906   \\\hline
\multicolumn{8}{c}{Panel 3: $n=200$, $p=1000$}\\\hline
& NV & TPR & FPR & FDR & CS & STEP & RMSFE\\\hline
POST OCMT & 2.9990 & 0.7380 & 0.0000 & 0.0089 & 0.0500 & - & 1.1515\\
OCMT & 3.0080 & 0.7380 & 0.0001 & 0.0103 & 0.0440 & 1.0650 & 1.1970\\
AGLASSO & 2.4560 & 0.6030 & 0.0000 & 0.0087 & 0.0150 & - & 1.2163\\
BAGGING   & -      & -      & -      & -      & -      & -      & 1.2904   \\
RF        & -      & -      & -      & -      & -      & -      & 1.3293   \\\hline
\multicolumn{8}{c}{Panel 4: $n=400$, $p=100$}\\\hline
& NV & TPR & FPR & FDR & CS & STEP & RMSFE\\\hline
POST OCMT & 3.9800 & 0.9908 & 0.0002 & 0.0028 & 0.9460 & - & 1.0420\\
OCMT & 4.0300 & 0.9908 & 0.0007 & 0.0110 & 0.8990 & 1.9020 & 1.6559\\
AGLASSO & 3.8250 & 0.9460 & 0.0004 & 0.0068 & 0.7470 & - & 1.0555\\
BAGGING   & -      & -      & -      & -      & -      & -      & 1.1604   \\
RF        & -      & -      & -      & -      & -      & -      & 1.1922   \\\hline
\multicolumn{8}{c}{Panel 5: $n=400$, $p=200$}\\\hline
& NV & TPR & FPR & FDR & CS & STEP & RMSFE\\\hline
POST OCMT & 3.9670 & 0.9850 & 0.0001 & 0.0044 & 0.9150 & - & 1.0492\\
OCMT & 4.0260 & 0.9850 & 0.0004 & 0.0139 & 0.8630 & 1.9210 & 1.6745\\
AGLASSO & 3.7640 & 0.9300 & 0.0002 & 0.0075 & 0.6820 & - & 1.0676\\
BAGGING   & -      & -      & -      & -      & -      & -      & 1.1751   \\
RF        & -      & -      & -      & -      & -      & -      & 1.2147   \\\hline
\multicolumn{8}{c}{Panel 6: $n=400$, $p=1000$}\\\hline
& NV & TPR & FPR & FDR & CS & STEP & RMSFE\\\hline
POST OCMT & 3.7550 & 0.9300 & 0.0000 & 0.0060 & 0.6900 & - & 1.0663\\
OCMT & 3.8200 & 0.9300 & 0.0001 & 0.0164 & 0.6350 & 1.7200 & 1.5583\\
AGLASSO & 3.5100 & 0.8675 & 0.0000 & 0.0069 & 0.4500 & - &
1.0865\\
BAGGING   & -      & -      & -      & -      & -      & -      & 1.2045   \\
RF        & -      & -      & -      & -      & -      & -      & 1.2520   \\\hline\hline
\end{tabular}}\end{table}

\begin{table}[p]
\caption{DGP 7}%
\label{tabledgp7}%
\centering{}\centering{ } \resizebox{!}{0.67\textwidth}{
\begin{tabular}
[c]{l|ccccccc}\hline\hline
\multicolumn{8}{c}{Panel 1: $n=200$, $p=100$}\\\hline
& NV & TPR & FPR & FDR & CS & STEP & RMSFE\\\hline
POST OCMT & 4.0030 & 1.0000 & 0.0000 & 0.0005 & 0.9970 & - & 1.0793\\
OCMT & 4.0080 & 1.0000 & 0.0001 & 0.0013 & 0.9920 & 1.0240 & 1.0922\\
AGLASSO & 4.0470 & 0.9970 & 0.0006 & 0.0097 & 0.9400 & - & 1.0858\\
BAGGING   & -      & -      & -      & -      & -      & -      & 1.4393   \\
RF        & -      & -      & -      & -      & -      & -      & 1.4824   \\\hline
\multicolumn{8}{c}{Panel 2: $n=200$, $p=200$}\\\hline
& NV & TPR & FPR & FDR & CS & STEP & RMSFE\\\hline
POST OCMT & 4.0070 & 0.9992 & 0.0001 & 0.0017 & 0.9870 & - & 1.0833\\
OCMT & 4.0120 & 0.9992 & 0.0001 & 0.0025 & 0.9820 & 1.0450 & 1.1071\\
AGLASSO & 4.0500 & 0.9932 & 0.0004 & 0.0125 & 0.9240 & - & 1.0906\\
BAGGING   & -      & -      & -      & -      & -      & -      & 1.4758   \\
RF        & -      & -      & -      & -      & -      & -      & 1.5246   \\\hline
\multicolumn{8}{c}{Panel 3: $n=200$, $p=1000$}\\\hline
& NV & TPR & FPR & FDR & CS & STEP & RMSFE\\\hline
POST OCMT & 4.0050 & 0.9935 & 0.0000 & 0.0052 & 0.9500 & - & 1.0873\\
OCMT & 4.0300 & 0.9935 & 0.0001 & 0.0088 & 0.9360 & 1.0910 & 1.1260\\
AGLASSO & 3.9690 & 0.9643 & 0.0001 & 0.0180 & 0.8630 & - & 1.1108\\
BAGGING   & -      & -      & -      & -      & -      & -      & 1.5348   \\
RF        & -      & -      & -      & -      & -      & -      & 1.5919   \\\hline
\multicolumn{8}{c}{Panel 4: $n=400$, $p=100$}\\\hline
& NV & TPR & FPR & FDR & CS & STEP & RMSFE\\\hline
POST OCMT & 4.0010 & 1.0000 & 0.0000 & 0.0002 & 0.9990 & - & 1.0601\\
OCMT & 4.0010 & 1.0000 & 0.0000 & 0.0002 & 0.9990 & 1.0000 & 1.0601\\
AGLASSO & 4.0210 & 1.0000 & 0.0002 & 0.0035 & 0.9790 & - & 1.0608\\
BAGGING   & -      & -      & -      & -      & -      & -      & 1.3339   \\
RF        & -      & -      & -      & -      & -      & -      & 1.3755   \\\hline
\multicolumn{8}{c}{Panel 5: $n=400$, $p=200$}\\\hline
& NV & TPR & FPR & FDR & CS & STEP & RMSFE\\\hline
POST OCMT & 4.0000 & 1.0000 & 0.0000 & 0.0000 & 1.0000 & - & 1.0575\\
OCMT & 4.0000 & 1.0000 & 0.0000 & 0.0000 & 1.0000 & 1.0000 & 1.0575\\
AGLASSO & 4.0230 & 1.0000 & 0.0001 & 0.0038 & 0.9770 & - & 1.0585\\
BAGGING   & -      & -      & -      & -      & -      & -      & 1.3646   \\
RF        & -      & -      & -      & -      & -      & -      & 1.4083   \\\hline
\multicolumn{8}{c}{Panel 6: $n=400$, $p=1000$}\\\hline
& NV & TPR & FPR & FDR & CS & STEP & RMSFE\\\hline
POST OCMT & 4.0020 & 1.0000 & 0.0000 & 0.0003 & 0.9980 & - & 1.0582\\
OCMT & 4.0020 & 1.0000 & 0.0000 & 0.0003 & 0.9980 & 1.0000 & 1.0582\\
AGLASSO & 4.0530 & 1.0000 & 0.0001 & 0.0086 & 0.9520 & - &
1.0601\\
BAGGING   & -      & -      & -      & -      & -      & -      & 1.4165   \\
RF        & -      & -      & -      & -      & -      & -      & 1.4619   \\\hline\hline
\end{tabular}}\end{table}

\begin{table}[p]
\caption{DGP 8}%
\label{tabledgp8}%
\centering{}\centering{ } \resizebox{!}{0.67\textwidth}{
\begin{tabular}
[c]{l|ccccccc}\hline\hline
\multicolumn{8}{c}{Panel 1: $n=200$, $p=100$}\\\hline
& NV & TPR & FPR & FDR & CS & STEP & RMSFE\\\hline
POST OCMT & 3.5670 & 0.6606 & 0.0027 & 0.0527 & 0.1420 & - & 1.2378\\
OCMT & 4.7090 & 0.6784 & 0.0137 & 0.2056 & 0.0000 & 1.2270 & 1.3827\\
AGLASSO & 2.9240 & 0.5442 & 0.0021 & 0.0403 & 0.0010 & - & 1.3769\\
BAGGING   & -      & -      & -      & -      & -      & -      & 1.4590   \\
RF        & -      & -      & -      & -      & -      & -      & 1.4697   \\\hline
\multicolumn{8}{c}{Panel 2: $n=200$, $p=200$}\\\hline
& NV & TPR & FPR & FDR & CS & STEP & RMSFE\\\hline
POST OCMT & 3.4800 & 0.6370 & 0.0015 & 0.0605 & 0.1180 & - & 1.2535\\
OCMT & 4.5760 & 0.6556 & 0.0066 & 0.2076 & 0.0000 & 1.1760 & 1.3681\\
AGLASSO & 2.6250 & 0.4824 & 0.0011 & 0.0425 & 0.0010 & - & 1.3807\\
BAGGING   & -      & -      & -      & -      & -      & -      & 1.4883   \\
RF        & -      & -      & -      & -      & -      & -      & 1.4998   \\\hline
\multicolumn{8}{c}{Panel 3: $n=200$, $p=1000$}\\\hline
& NV & TPR & FPR & FDR & CS & STEP & RMSFE\\\hline
POST OCMT & 3.2590 & 0.5874 & 0.0003 & 0.0685 & 0.0420 & - & 1.2762\\
OCMT & 4.2030 & 0.6082 & 0.0012 & 0.1994 & 0.0000 & 1.0790 & 1.3341\\
AGLASSO & 1.9940 & 0.3668 & 0.0002 & 0.0336 & 0.0000 & - & 1.4378\\
BAGGING   & -      & -      & -      & -      & -      & -      & 1.5429   \\
RF        & -      & -      & -      & -      & -      & -      & 1.5574   \\\hline
\multicolumn{8}{c}{Panel 4: $n=400$, $p=100$}\\\hline
& NV & TPR & FPR & FDR & CS & STEP & RMSFE\\\hline
POST OCMT & 4.8870 & 0.9474 & 0.0016 & 0.0230 & 0.7160 & - & 1.0743\\
OCMT & 7.4610 & 0.9534 & 0.0281 & 0.3053 & 0.0000 & 1.8440 & 1.5024\\
AGLASSO & 4.3690 & 0.8278 & 0.0024 & 0.0363 & 0.1030 & - & 1.1364\\
BAGGING   & -      & -      & -      & -      & -      & -      & 1.3586   \\
RF        & -      & -      & -      & -      & -      & -      & 1.3840   \\\hline
\multicolumn{8}{c}{Panel 5: $n=400$, $p=200$}\\\hline
& NV & TPR & FPR & FDR & CS & STEP & RMSFE\\\hline
POST OCMT & 4.8820 & 0.9432 & 0.0008 & 0.0249 & 0.6890 & - & 1.0768\\
OCMT & 7.4170 & 0.9496 & 0.0136 & 0.3032 & 0.0000 & 1.8290 & 1.4988\\
AGLASSO & 4.2570 & 0.8090 & 0.0011 & 0.0351 & 0.0550 & - & 1.1455\\
BAGGING   & -      & -      & -      & -      & -      & -      & 1.3882   \\
RF        & -      & -      & -      & -      & -      & -      & 1.4130   \\\hline
\multicolumn{8}{c}{Panel 6: $n=400$, $p=1000$}\\\hline
& NV & TPR & FPR & FDR & CS & STEP & RMSFE\\\hline
POST OCMT & 4.8420 & 0.9338 & 0.0002 & 0.0262 & 0.6490 & - & 1.0825\\
OCMT & 7.3220 & 0.9382 & 0.0026 & 0.3020 & 0.0000 & 1.7920 & 1.4947\\
AGLASSO & 4.1570 & 0.7908 & 0.0002 & 0.0346 & 0.0120 & - &
1.1550\\
BAGGING   & -      & -      & -      & -      & -      & -      & 1.4532   \\
RF        & -      & -      & -      & -      & -      & -      & 1.4732   \\\hline\hline
\end{tabular}} \bigskip\end{table}

\begin{table}[p]
\caption{DGP 9}%
\label{tabledgp9}%
\centering{}\centering{ } \resizebox{!}{0.67\textwidth}{
\begin{tabular}
[c]{l|ccccccc}\hline\hline
\multicolumn{8}{c}{Panel 1: $n=200$, $p=100$}\\\hline
& NV & TPR & FPR & FDR & CS & STEP & RMSFE\\\hline
POST OCMT & 4.2770 & 0.9945 & 0.0005 & 0.0031 & 0.7380 & - & 1.0422\\
OCMT & 4.2950 & 0.9948 & 0.0005 & 0.0033 & 0.7300 & 1.1000 & 1.0436\\
AGLASSO & 5.0510 & 1.0000 & 0.0014 & 0.0109 & 0.4800 & - & 1.0537\\
BAGGING   & -      & -      & -      & -      & -      & -      & 1.3471   \\
RF        & -      & -      & -      & -      & -      & -      & 1.3909   \\\hline
\multicolumn{8}{c}{Panel 2: $n=200$, $p=200$}\\\hline
& NV & TPR & FPR & FDR & CS & STEP & RMSFE\\\hline
POST OCMT & 4.5000 & 0.9870 & 0.0004 & 0.0028 & 0.5780 & - & 1.0585\\
OCMT & 4.5310 & 0.9870 & 0.0004 & 0.0030 & 0.5650 & 1.1460 & 1.0585\\
AGLASSO & 5.7540 & 0.9995 & 0.0011 & 0.0090 & 0.3420 & - & 1.0697\\
BAGGING   & -      & -      & -      & -      & -      & -      & 1.3838   \\
RF        & -      & -      & -      & -      & -      & -      & 1.4351   \\\hline
\multicolumn{8}{c}{Panel 3: $n=200$, $p=1000$}\\\hline
& NV & TPR & FPR & FDR & CS & STEP & RMSFE\\\hline
POST OCMT & 4.8200 & 0.9165 & 0.0002 & 0.0012 & 0.2200 & - & 1.1207\\
OCMT & 4.8670 & 0.9170 & 0.0002 & 0.0012 & 0.2120 & 1.1410 & 1.1236\\
AGLASSO & 8.7040 & 0.9890 & 0.0004 & 0.0048 & 0.1060 & - & 1.1143\\
BAGGING   & -      & -      & -      & -      & -      & -      & 1.4450   \\
RF        & -      & -      & -      & -      & -      & -      & 1.5038   \\\hline
\multicolumn{8}{c}{Panel 4: $n=400$, $p=100$}\\\hline
& NV & TPR & FPR & FDR & CS & STEP & RMSFE\\\hline
POST OCMT & 4.0110 & 1.0000 & 0.0000 & 0.0001 & 0.9890 & - & 1.0220\\
OCMT & 4.0110 & 1.0000 & 0.0000 & 0.0001 & 0.9890 & 1.0080 & 1.0220\\
AGLASSO & 4.6660 & 1.0000 & 0.0010 & 0.0069 & 0.6240 & - & 1.0288\\
BAGGING   & -      & -      & -      & -      & -      & -      & 1.2348   \\
RF        & -      & -      & -      & -      & -      & -      & 1.2713   \\\hline
\multicolumn{8}{c}{Panel 5: $n=400$, $p=200$}\\\hline
& NV & TPR & FPR & FDR & CS & STEP & RMSFE\\\hline
POST OCMT & 4.0150 & 1.0000 & 0.0000 & 0.0001 & 0.9860 & - & 1.0237\\
OCMT & 4.0150 & 1.0000 & 0.0000 & 0.0001 & 0.9860 & 1.0080 & 1.0237\\
AGLASSO & 5.2000 & 1.0000 & 0.0008 & 0.0061 & 0.4460 & - & 1.0351\\
BAGGING   & -      & -      & -      & -      & -      & -      & 1.2620   \\
RF        & -      & -      & -      & -      & -      & -      & 1.3032   \\\hline
\multicolumn{8}{c}{Panel 6: $n=400$, $p=1000$}\\\hline
& NV & TPR & FPR & FDR & CS & STEP & RMSFE\\\hline
POST OCMT & 4.0550 & 1.0000 & 0.0000 & 0.0001 & 0.9490 & - & 1.0225\\
OCMT & 4.0570 & 1.0000 & 0.0000 & 0.0001 & 0.9480 & 1.0360 & 1.0225\\
AGLASSO & 7.7000 & 1.0000 & 0.0004 & 0.0037 & 0.1270 & - & 1.0551\\
BAGGING   & -      & -      & -      & -      & -      & -      & 1.3132   \\
RF        & -      & -      & -      & -      & -      & -      & 1.3576   \\\hline\hline
\end{tabular}}\end{table}

\begin{table}[p]
\caption{DGP 10}%
\label{tabledgp10}%
\centering{}\centering{ } \resizebox{!}{0.67\textwidth}{\begin{tabular}
[c]{l|ccccccc}\hline\hline
\multicolumn{8}{c}{Panel 1: $n=200$, $p=100$}\\\hline
& NV & TPR & FPR & FDR & CS & STEP & RMSFE\\\hline
POST OCMT & 4.0240 & 0.9668 & 0.0016 & 0.0260 & 0.7450 & - & 1.0329\\
OCMT & 4.0250 & 0.9668 & 0.0016 & 0.0261 & 0.7440 & 1.8780 & 1.0329\\
AGLASSO & 5.0160 & 0.9942 & 0.0108 & 0.1413 & 0.4370 & - & 1.0375\\
BAGGING   & -      & -      & -      & -      & -      & -      & 1.2856   \\
RF        & -      & -      & -      & -      & -      & -      & 1.3107   \\\hline
\multicolumn{8}{c}{Panel 2: $n=200$, $p=200$}\\\hline
& NV & TPR & FPR & FDR & CS & STEP & RMSFE\\\hline
POST OCMT & 4.0700 & 0.9535 & 0.0013 & 0.0403 & 0.6560 & - & 1.0421\\
OCMT & 4.0740 & 0.9535 & 0.0013 & 0.0409 & 0.6540 & 1.8320 & 1.0421\\
AGLASSO & 5.7520 & 0.9915 & 0.0091 & 0.2184 & 0.2760 & - & 1.0519\\
BAGGING   & -      & -      & -      & -      & -      & -      & 1.3212   \\
RF        & -      & -      & -      & -      & -      & -      & 1.3391   \\\hline
\multicolumn{8}{c}{Panel 3: $n=200$, $p=1000$}\\\hline
& NV & TPR & FPR & FDR & CS & STEP & RMSFE\\\hline
POST OCMT & 4.0410 & 0.8625 & 0.0006 & 0.0899 & 0.2740 & - & 1.0910\\
OCMT & 4.0490 & 0.8628 & 0.0006 & 0.0908 & 0.2740 & 1.5650 & 1.0922\\
AGLASSO & 7.7110 & 0.9473 & 0.0039 & 0.3834 & 0.0530 & - & 1.1009\\
BAGGING   & -      & -      & -      & -      & -      & -      & 1.3701   \\
RF        & -      & -      & -      & -      & -      & -      & 1.3878   \\\hline
\multicolumn{8}{c}{Panel 4: $n=400$, $p=100$}\\\hline
& NV & TPR & FPR & FDR & CS & STEP & RMSFE\\\hline
POST OCMT & 4.0060 & 0.9998 & 0.0001 & 0.0012 & 0.9920 & - & 1.0068\\
OCMT & 4.0060 & 0.9998 & 0.0001 & 0.0012 & 0.9920 & 1.9870 & 1.0068\\
AGLASSO & 4.6880 & 1.0000 & 0.0072 & 0.0972 & 0.5930 & - & 1.0135\\
BAGGING   & -      & -      & -      & -      & -      & -      & 1.1703   \\
RF        & -      & -      & -      & -      & -      & -      & 1.2185   \\\hline
\multicolumn{8}{c}{Panel 5: $n=400$, $p=200$}\\\hline
& NV & TPR & FPR & FDR & CS & STEP & RMSFE\\\hline
POST OCMT & 4.0050 & 0.9998 & 0.0000 & 0.0010 & 0.9930 & - & 1.0081\\
OCMT & 4.0050 & 0.9998 & 0.0000 & 0.0010 & 0.9930 & 1.9950 & 1.0081\\
AGLASSO & 5.2130 & 1.0000 & 0.0062 & 0.1571 & 0.4460 & - & 1.0195\\
BAGGING   & -      & -      & -      & -      & -      & -      & 1.1953   \\
RF        & -      & -      & -      & -      & -      & -      & 1.2466   \\\hline
\multicolumn{8}{c}{Panel 6: $n=400$, $p=1000$}\\\hline
& NV & TPR & FPR & FDR & CS & STEP & RMSFE\\\hline
POST OCMT & 4.0170 & 0.9985 & 0.0000 & 0.0038 & 0.9710 & - & 1.0089\\
OCMT & 4.0170 & 0.9985 & 0.0000 & 0.0038 & 0.9710 & 1.9920 & 1.0089\\
AGLASSO & 7.2690 & 0.9998 & 0.0033 & 0.3310 & 0.1700 & - &
1.0373\\
BAGGING   & -      & -      & -      & -      & -      & -      & 1.2492   \\
RF        & -      & -      & -      & -      & -      & -      & 1.2888   \\\hline\hline
\end{tabular}
} \bigskip\end{table}

\begin{table}[p]
\caption{DGP 11}%
\label{tabledgp11}%
\centering{}\centering{ } \resizebox{!}{0.5\textwidth}{\begin{tabular}
[c]{l|ccccccc}\hline\hline
\multicolumn{8}{c}{Panel 1: $n=200$, $p=100$}\\\hline
& NV & TPR & FPR & FDR & CS & STEP & RMSFE\\\hline
POST OCMT & 4.9320 & 0.9290 & 0.0445 & 0.0030 & 0.5490 & - & 1.0928\\
OCMT & 6.4460 & 0.9404 & 0.2162 & 0.0182 & 0.0560 & 1.5020 & 1.5047\\\hline
\multicolumn{8}{c}{Panel 2: $n=200$, $p=200$}\\\hline
& NV & TPR & FPR & FDR & CS & STEP & RMSFE\\\hline
POST OCMT & 4.9770 & 0.9328 & 0.0486 & 0.0016 & 0.5720 & - & 1.0888\\
OCMT & 6.5490 & 0.9444 & 0.2266 & 0.0093 & 0.0480 & 1.4550 & 1.4812\\\hline
\multicolumn{8}{c}{Panel 3: $n=200$, $p=1000$}\\\hline
& NV & TPR & FPR & FDR & CS & STEP & RMSFE\\\hline
POST OCMT & 4.9270 & 0.9250 & 0.0469 & 0.0003 & 0.5520 & - & 1.0960\\
OCMT &  6.5430 & 0.9382 & 0.2280 & 0.0019 & 0.0500 & 1.4590 & 1.4843\\\hline
\multicolumn{8}{c}{Panel 4: $n=400$, $p=100$}\\\hline
& NV & TPR & FPR & FDR & CS & STEP & RMSFE\\\hline
POST OCMT &5.0450 & 0.9870 & 0.0168 & 0.0011 & 0.8780 & - & 1.0298\\
OCMT & 6.6390 & 0.9894 & 0.2077 & 0.0176 & 0.0940 & 1.3610 & 1.3662\\\hline
\multicolumn{8}{c}{Panel 5: $n=400$, $p=200$}\\\hline
& NV & TPR & FPR & FDR & CS & STEP & RMSFE\\\hline
POST OCMT & 5.0350 & 0.9860 & 0.0159 & 0.0005 & 0.8660 & - & 1.0330\\
OCMT & 6.6150 & 0.9866 & 0.2056 & 0.0086 & 0.1050 & 1.3640 & 1.3687\\\hline
\multicolumn{8}{c}{Panel 6: $n=400$, $p=1000$}\\\hline
& NV & TPR & FPR & FDR & CS & STEP & RMSFE\\\hline
POST OCMT & 5.0320 & 0.9872 & 0.0145 & 0.0001 & 0.8810 & - & 1.0360\\
OCMT & 6.6970 & 0.9886 & 0.2061 & 0.0018 & 0.0890 & 1.3820 & 1.3857\\\hline\hline
\end{tabular}
} \bigskip\end{table}

\pagebreak

\subsection{Additional Application Results}

In this subsection, we first provide the description of the variables used in
the application and some summary statistics. Then we report the frequencies of
variables selected.

Table \ref{table:app_def} gives the description of the dependent variable and
covariates used in our application. Table \ref{table:summary} reports the
summary statistics for the variables. Table \ref{table:varselected} reports
the frequencies of covariates selected by the methods under investigation. For
those variables that are not listed in Table \ref{table:varselected}, they are
not selected by any of these methods. Table \ref{table:varselected} suggests
that among the 78 covariates in the application, three of them, namely, G102,
G133, and G137, are all frequently selected by the multiple-stage OCMT,
post-OCMT, group Lasso and adaptive group Lasso.

\begin{center}
{\small \begin{longtable}[p]{ll}
\caption[Definitions of Variables]{Definitions of Variables} \label{table:app_def} \\
\hline \hline
\endfirsthead
\multicolumn{2}{c}%
{{ \tablename\ \thetable{} -- continued from previous page}} \\
\hline
\endhead
\hline \multicolumn{2}{r}{{Continued on next page}} \\ \hline
\endfoot
\hline \hline
\endlastfoot
\multicolumn{2}{c}{Panel A: Dependent Variable}  \\\hline
$G136$ & Remittance to his/her original home in rural areas in 2007 \\\hline
\multicolumn{2}{c}{}  \\
\multicolumn{2}{c}{Panel B: Continuous Independent Variables}  \\\hline
$A05$  & Age \\
$A08$  & Rank of Siblings (1: the oldest, 2: the second oldest, and so on so forth) \\
$A25$  & Height (in cm)\\
$A26$  & Weight (in kg)\\
$A37$  & Cost of medical insurance  in 2007 (in RMB)\\
$A39$  & Total medical cost  in 2007 (in RMB)\\
$A40$  & Out-of-Pocket medical cost  in 2007 (in RMB)\\
$B103$  & Years of eduction \\
$C103$  & Year when starting current job\\
$C111$ & Number of employees in the company he/she is working \\
& (ranked from 1 (smallest) to 9 (largest)) \\
$C112$ & Average number of working hours per week \\
$C105$ & Date when working in current occupation \\
$C117$ & Average income in current occupation \\
$C126$ & Food compensation per month from current job (in RMB) \\
$C165$ & Date when start to work in urban areas \\
$C171$ & Time spent to find the first job (in days) \\
$C177$ & Average number of working hours per week for the fist job \\
$C178$ & The salary of the first month for the first job (in RMB) \\
$C179$ & The salary of the last month for the first job (in RMB)\\
$C183$ & Number of months for doing the first job \\
$C186$ & Number of cities in which he/she worked \\
$C188$ & Average month salary if he/she worked in his/her hometown\\
$E4\_11$ & Number of people he/she contacted during the last spring festival \\
$E4\_12$ & Number of relatives he/she contacted during the last spring festival \\
$E4\_13$ & Number of close friends he/she contacted during the last spring festival \\
$E4\_14$ & Number of people in major cities he/she contacted during the last spring festival \\
$E4\_16$ & Number of people who helped him/her in the last 12 months \\
$G119$ & Average cost of food per month including dining in restaurants (in RMB) \\
$G120$ & Average cost on clothes per month (in RMB) \\
$G121$ & Average cost of accommodation including utilities per month (in RMB) \\
$G122$ & Total living cost in 2007 (in RMB) \\
$G123$ & Total consumption of durable goods  in 2007 (in RMB) \\
$G124$ & Total consumption of non-durable goods or services   \\
& (e.g., dishes, beauty, hair cutting) in 2007 (in RMB) \\
$G125$ & Total consumption of medication, nutrition supplements, and physical therapies  \\
& in 2007 (in RMB) \\
$G126$ & Cost of transportation  in 2007 (in RMB) \\
$G127$ & Cost of phone services and postage  in 2007 (in RMB) \\
$G128$ & Cost of entertainment  in 2007 (in RMB) \\
$G132$ & Other consumption in 2007 (in RMB) \\
$G141$ & Cost per month for the minimum living standard in current city (in RMB) \\
$H219\_2$ & Value of his/her cell phones (in RMB) \\
$I101$ & Current total living area (in square meters) \\
$I112$ & Current rent per month (in RMB) \\
$J108$ & Average cost of hiring a handy man per day in his/her hometown (in RMB) \\
$J109$ & Percentage of workforce working in cities in his/her country \\
$J112$ & Value of houses owned in his/her hometown (in RMB) \\
$J113$ & Acres of land owned in his/her hometown \\
$G102$ & Monthly income (in RMB) \\
\multicolumn{2}{c}{}  \\
\multicolumn{2}{c}{Panel C: Continuous Independent Variables but Treated as Dummy Variables}  \\\hline
$C130$ & Compensation per month for accommodation from the current job  (in RMB) \\
$E4\_15$ & Number of people in possession of city Hukou he/she contacted during\\
&  the last spring festival \\
$G129$ & Cost of education for children excluding left-behind children  in 2007 (in RMB) \\
$G130$ & Cost of all non-saving insurances in 2007  (in RMB) \\
$G131$ & Other consumption in 2007 (in RMB) \\
$G133$ & Gifts to others including his/her parents in 2007 (in RMB) \\
$G134$ & Investments in stocks or bonds in 2007 (in RMB) \\
$G135$ & Cost of building new houses or renovation in his/her hometown  in 2007 (in RMB) \\
$G137$ & Education cost for left-behind children  in 2007 (in RMB) \\
$G138$ & Fines or interests charged in 2007 (in RMB) \\
$G139$ & Cost of loans in 2007 (in RMB) \\
$B110$ & Total marks of his/her college entrance exam (0 if not attended) \\
$B119$ & Cost of last job training (in RMB) \\\hline
\multicolumn{2}{c}{}  \\
\multicolumn{2}{c}{Panel D: Discrete Variables that Treated as Dummy Variables}  \\\hline
$A01$   & Number of months working outside of his/her hometown in 2007\\
$A10$   & Number of children \\
$A27$   & Health condition (excellent to bad ranking from 1 to 5) \\
$H219\_1$ & Number of cell phones owned \\
$I102$  & Number of people currently living together \\
\multicolumn{2}{c}{}  \\
\multicolumn{2}{c}{Panel E: Dummy Variables}  \\\hline
$A04$   & Gender (0: female 1: male) \\
$A09$   & Marital Status (0: not married  1:married) \\
$A15$   & Status of Hukou (0: rural 1: city) \\
$A21$   & Ownership of the insurance for unemployment (0: no 1: yes) \\
$A22$   & Ownership of the insurance for age care (0: no 1: yes) \\
$A23$   & Ownership of the insurance for injuries during work (0: no 1: yes) \\
$A24$   & Ownership of the programme of the deposit for house purchases (0: no 1: yes) \\
$A28$   & Disable or not (0: no 1:yes )\\
$B108$  & Whether he/she attended the college entrance exam or not  (0: no 1:yes )\\
$C127$  & Whether employer provides housing or not (0: no 1:yes)\\
$C184$  & Whether he/she ever lived in his/her hometown for over 3 months continuously \\
& after working in cities (0: no 1:yes) \\
Dongguan & Whether he/she works in Dong Guan city (0: no 1:yes) \\
Shenzhen & Whether he/she works in Shen Zhen city (0: no 1:yes) \\
\end{longtable}}

{\small \begin{longtable}[p]{llllllll}
\caption[Summary Statistics]{Summary Statistics} \label{table:summary} \\
\hline \hline
Variable & MEAN & STD & MIN & LQ & MD & UQ & MAX  \\\hline
\endfirsthead
\multicolumn{8}{c}%
{{ \tablename\ \thetable{} -- continued from previous page}} \\\hline
Variable & MEAN & STD & MIN & LQ & MD & UQ & MAX  \\
\hline
\endhead
\hline \multicolumn{8}{r}{{Continued on next page}} \\ \hline
\endfoot
\hline \hline
\endlastfoot
\multicolumn{8}{c}{Panel A: Dependent Variable}  \\\hline
$G136$ & 3374 & 4576 & 0 & 0 & 2000 & 5000 & 60000 \\
\multicolumn{2}{c}{}  \\
\multicolumn{8}{c}{Panel B: Continuous Independent Variables}  \\\hline
$A05$      & 30 & 9 & 17 & 23 & 27 & 36 & 66 \\
$A08$      & 2 & 1 & 1 & 1 & 2 & 3 & 8 \\
$A25$      & 166 & 7 & 147 & 160 & 168 & 171 & 186 \\
$A26$      & 59 & 9 & 36 & 52 & 60 & 65 & 82 \\
$A37$      & 109 & 413 & 0 & 0 & 10 & 10 & 5000 \\
$A39$      & 289 & 746 & 0 & 0 & 50 & 270 & 8000 \\
$A40$      & 261 & 676 & 0 & 0 & 50 & 200 & 8000 \\
$B103$     & 10 & 2 & 2 & 9 & 9 & 12 & 20 \\
$C103$     & 2006 & 3 & 1989 & 2005 & 2006 & 2007 & 2008 \\
$C111$     & 6 & 2 & 1 & 4 & 6 & 7 & 9 \\
$C112$     & 57 & 14 & 34 & 48 & 56 & 70 & 126 \\
$C105$     & 200516 & 314 & 198910 & 200404 & 200606 & 200709 & 200811 \\
$C117$     & 1620 & 795 & 500 & 1165 & 1500 & 1900 & 10000 \\
$C126$     & 102 & 112 & 0 & 0 & 95 & 160 & 450 \\
$C165$     & 199896 & 650 & 197607 & 199555 & 200012 & 200404 & 200803 \\
$C171$     & 14 & 26 & 0 & 2 & 7 & 15 & 365 \\
$C177$     & 58 & 19 & 0 & 48 & 60 & 70 & 114 \\
$C178$     & 724 & 423 & 0 & 500 & 700 & 900 & 5000 \\
$C179$     & 874 & 516 & 0 & 550 & 800 & 1100 & 5000 \\
$C183$     & 18 & 24 & 1 & 6 & 12 & 24 & 188 \\
$C186$     & 2 & 5 & 0 & 1 & 2 & 3 & 90 \\
$C188$     & 858 & 853 & 0 & 600 & 800 & 1000 & 10000 \\
$E4\_11$   & 38 & 55 & 0 & 15 & 29 & 40 & 800 \\
$E4\_12$   & 15 & 14 & 0 & 6 & 10 & 20 & 120 \\
$E4\_13$   & 24 & 50 & 0 & 5 & 11 & 20 & 700 \\
$E4\_14$   & 18 & 34 & 0 & 3 & 10 & 20 & 400 \\
$E4\_16$   & 4 & 11 & 0 & 0 & 2 & 3 & 200 \\
$G119$     & 520 & 324 & 0 & 300 & 500 & 700 & 3000 \\
$G120$     & 125 & 153 & 0 & 0 & 100 & 200 & 1500 \\
$G121$     & 207 & 302 & 0 & 0 & 100 & 300 & 2000 \\
$G122$     & 14630 & 9131 & 0 & 8795 & 12490 & 18425 & 75300 \\
$G123$     & 726 & 1385 & 0 & 0 & 200 & 1000 & 10000 \\
$G124$     & 876 & 2080 & 0 & 300 & 500 & 1000 & 37000 \\
$G125$     & 380 & 850 & 0 & 0 & 100 & 400 & 8500 \\
$G126$     & 701 & 843 & 0 & 200 & 500 & 1000 & 7200 \\
$G127$     & 971 & 989 & 0 & 515 & 900 & 1200 & 17400 \\
$G128$     & 302 & 664 & 0 & 0 & 100 & 300 & 5000 \\
$G132$     & 4813 & 6800 & 0 & 1000 & 3000 & 6000 & 74400 \\
$G141$     & 1031 & 748 & 50 & 500 & 900 & 1400 & 8700 \\
$H219\_2$  & 907 & 750 & 10 & 300 & 800 & 1200 & 5000 \\
$I101$     & 28 & 27 & 3 & 15 & 20 & 30 & 325 \\
$I112$     & 161 & 239 & 0 & 0 & 50 & 250 & 1800 \\
$J108$     & 38 & 13 & 0 & 30 & 40 & 50 & 80 \\
$J109$     & 59 & 17 & 0 & 50 & 60 & 70 & 90 \\
$J112$     & 52528 & 71220 & 0 & 10000 & 30000 & 80000 & 700000 \\
$J113$     & 1 & 2 & 0 & 1 & 1 & 1 & 20 \\
$G102$     & 1756 & 1055 & 0 & 1100 & 1500 & 2200 & 11000 \\
\multicolumn{2}{c}{}  \\
\multicolumn{8}{c}{Panel C: Continuous Independent Variables but Treated as Dummy Variables}  \\\hline
$C130$      & 72 & 97 & 0 & 0 & 0 & 150 & 500 \\
$E4\_15$    & 8 & 18 & 0 & 0 & 2 & 7 & 250 \\
$G129$      & 251 & 2164 & 0 & 0 & 0 & 0 & 40400 \\
$G130$      & 33 & 232 & 0 & 0 & 0 & 0 & 3000 \\
$G131$      & 194 & 1028 & 0 & 0 & 0 & 0 & 18800 \\
$G133$      & 422 & 1088 & 0 & 0 & 0 & 400 & 12000 \\
$G134$      & 80 & 1193 & 0 & 0 & 0 & 0 & 20000 \\
$G135$      & 164 & 2758 & 0 & 0 & 0 & 0 & 60000 \\
$G137$      & 535 & 2279 & 0 & 0 & 0 & 0 & 27000 \\
$G138$      & 149 & 625 & 0 & 0 & 0 & 0 & 7800 \\
$G139$      & 1776 & 4464 & 0 & 0 & 0 & 1000 & 45000 \\
$B110$      & 45 & 134 & 0 & 0 & 0 & 0 & 630 \\
$B119$      & 103 & 633 & 0 & 0 & 0 & 0 & 10000 \\
\multicolumn{2}{c}{}  \\
\multicolumn{8}{c}{Panel D: Discrete Variables that Treated as Dummy Variables}  \\\hline
$A01$       & 11 & 2 & 1 & 12 & 12 & 12 & 12 \\
$A10$       & 0.665 & 0.838 & 0 & 0 & 0 & 1 & 4 \\
$A27$       & 1.749 & 0.770 & 1 & 1 & 2 & 2 & 5 \\
$H219\_1$   & 1.214 & 0.563 & 0 & 1 & 1 & 1 & 4 \\
$I102$      & 4.278 & 3.268 & 1 & 2 & 3 & 6 & 20 \\
\multicolumn{2}{c}{}  \\
\multicolumn{8}{c}{Panel E: Dummy Variables}  \\\hline
$A04$       & 0.691 & 0.462 & 0 & 0 & 1 & 1 & 1 \\
$A09$       & 0.486 & 0.500 & 0 & 0 & 0 & 1 & 1 \\
$A15$       & 0.014 & 0.119 & 0 & 0 & 0 & 0 & 1 \\
$A21$       & 0.189 & 0.392 & 0 & 0 & 0 & 0 & 1 \\
$A22$       & 0.348 & 0.477 & 0 & 0 & 0 & 1 & 1 \\
$A23$       & 0.356 & 0.479 & 0 & 0 & 0 & 1 & 1 \\
$A24$       & 0.029 & 0.167 & 0 & 0 & 0 & 0 & 1 \\
$A28$       & 0.021 & 0.142 & 0 & 0 & 0 & 0 & 1 \\
$B108$      & 0.132 & 0.338 & 0 & 0 & 0 & 0 & 1 \\
$C127$      & 0.698 & 0.460 & 0 & 0 & 1 & 1 & 1 \\
$C184$      & 0.220 & 0.415 & 0 & 0 & 0 & 0 & 1 \\
Dongguan    & 0.317 & 0.466 & 0 & 0 & 0 & 1 & 1 \\
Shenzhen    & 0.374 & 0.484 & 0 & 0 & 0 & 1 & 1 \\
\end{longtable}Note: MEAN = sample mean, STD = standard deviation, MIN =
minimum, LQ = 25 percentile, MD = median, UQ = 75 percentile, MAX = maximum. }

\pagebreak
\end{center}

{\small \begin{table}[h]
\caption{Frequencies of Variables Selected}%
\label{table:varselected}%
{\small  \centering{}\centering{ }
\begin{tabular}
[c]{cccccc}%
\multicolumn{6}{c}{}\\\hline\hline
& One Stage & Multiple Stage & Post-OCMT & Group LASSO & Adaptive
GLASSO\\\hline
$G102$ & $100\%$ & $100\%$ & $100\%$ & $100\%$ & $100\%$\\
$G133$ & $0\%$ & $98\%$ & $91\%$ & $100\%$ & $100\%$\\
$G137$ & $12\%$ & $98\%$ & $91\%$ & $85\%$ & $85\%$\\
$A09$ & $0\%$ & $1\%$ & $0\%$ & $0\%$ & $0\%$\\
$C111$ & $0\%$ & $0\%$ & $0\%$ & $2\%$ & $0\%$\\
$C112$ & $0\%$ & $0\%$ & $0\%$ & $58\%$ & $55\%$\\
$C126$ & $0\%$ & $0\%$ & $0\%$ & $11\%$ & $10\%$\\
$C188$ & $0\%$ & $0\%$ & $0\%$ & $17\%$ & $15\%$\\
$A04$ & $0\%$ & $0\%$ & $0\%$ & $4\%$ & $2\%$\\
$C127$ & $0\%$ & $0\%$ & $0\%$ & $3\%$ & $2\%$\\
$C127$ & $0\%$ & $0\%$ & $0\%$ & $3\%$ & $2\%$\\
$I102$ & $0\%$ & $0\%$ & $0\%$ & $1\%$ & $0\%$\\
$G134$ & $0\%$ & $0\%$ & $0\%$ & $4\%$ & $3\%$\\
$G135$ & $0\%$ & $0\%$ & $0\%$ & $22\%$ & $20\%$\\
$G138$ & $0\%$ & $8\%$ & $6\%$ & $0\%$ & $0\%$\\
$G139$ & $0\%$ & $0\%$ & $0\%$ & $4\%$ & $0\%$\\\hline\hline
\end{tabular}
}\end{table}}

\bigskip

\bigskip%

%

\end{document}